\newtheorem{proposition}{Proposition}
\newtheorem{theorem}[proposition]{Theorem}
\newtheorem{lemma}[proposition]{Lemma}
\newtheorem{corollary}[proposition]{Corollary}
\newtheorem{claim}[proposition]{Claim}
\theoremstyle{remark}
\newtheorem{remark}[proposition]{Remark}
\newtheorem*{example*}{Example}
\theoremstyle{definition}
\numberwithin{equation}{section}
\numberwithin{proposition}{section}
\numberwithin{figure}{section}
\numberwithin{table}{section}
\newcommand{\Z}{\mathbb{Z}}
\newcommand{\N}{\mathbb{N}}
\newcommand{\Q}{\mathbb{Q}}
\newcommand{\R}{\mathbb{R}}
\newcommand{\E}{\mathbb{E}}
\renewcommand{\S}{{S}}
\newcommand{\eps}{\varepsilon}
\renewcommand{\le}{\leqslant}
\renewcommand{\ge}{\geqslant}
\renewcommand{\leq}{\leqslant}
\renewcommand{\geq}{\geqslant}
\renewcommand{\subset}{\subseteq}
\renewcommand{\bar}{\overline}
\renewcommand{\tilde}{\widetilde}
\renewcommand{\hat}{\widehat}
\newcommand{\Ll}{\left}
\newcommand{\Rr}{\right}
\renewcommand{\d}{\mathrm{d}}
\newcommand{\dr}{\partial}
\newcommand{\D}{D}
\newcommand{\mcl}{\mathcal}
\newcommand{\si}{\sigma}
\DeclareMathOperator{\supp}{supp}
\DeclareMathOperator*{\esssup}{ess\,sup}
\newcommand{\la}{\left\langle}
\newcommand{\ra}{\right\rangle}
\newcommand{\cH}{{L^2}}
\newcommand{\pert}{{\mathsf{pert}}}
\newcommand{\identity}{{\mathbf{Id}}}
\newcommand{\one}{{\boldsymbol{1}}}
\newcommand{\fR}{{\mathfrak{R}}}
\newcommand{\bsigma}{{\boldsymbol{\sigma}}}
\newcommand{\sP}{\mathscr{P}}
\renewcommand*{\dot}[1]{\accentset{\mbox{\large\bfseries .}}{#1}}
\newcommand{\ellipt}{\mathsf{Ellipt}}
\newcommand{\sS}{\mathscr{S}}
\renewcommand{\vec}{{\mathrm{vec}}}
\newcommand{\bxi}{\boldsymbol{\xi}}
\newcommand{\bq}{{\mathbf{q}}}
\newcommand{\bp}{{\mathbf{p}}}
\newcommand{\ba}{{\mathbf{a}}}
\newcommand{\bw}{\mathbf{w}}
\newcommand{\M}{{M}} \newcommand{\sfM}{\mathsf{M}}
\newcommand{\btau}{{\boldsymbol{\tau}}}
\newcommand{\seq}{{\mathsf{seq}}}
\begin{document}

\author{Hong-Bin Chen}
\address{Institut des Hautes \'Etudes Scientifiques, France}
\email{hbchen@ihes.fr}

\title{On free energy of non-convex multi-species spin glasses}

\begin{abstract}
In~\cite{HJ_critical_pts}, it was shown that if the limit of the free energy in a non-convex vector spin glass model exists, it must be a critical value of a certain functional. In this work, we extend this result to multi-species spin glass models with non-convex interactions, where spins from different species may lie in distinct vector spaces. Since the species proportions may be irrational and the existence of the limit of the free energy is not generally known, non-convex multi-species models cannot be approximated by vector spin models in a straightforward manner, necessitating more careful treatment.
\end{abstract}

\maketitle

\section{Introduction}

In mean-field vector spin glass models with general non-convex interactions, the main result in~\cite{HJ_critical_pts} shows that the limit of free energy, if exists, must be a critical value of some functional.
Moreover, up to a small perturbation, the limit distribution of the spin overlap is determined by the critical point.
In this work, we extend these results to the setting of multi-species spin glasses, where spins are grouped into different species and interactions are between species.
We consider the general setting where spins from different species can take values in different vector spaces. 
Hence, the model considered here can be described as a multi-species vector spin glass model with possibly non-convex interactions. Main results corresponding to those in~\cite{HJ_critical_pts} are stated as Theorems in Section~\ref{s.main_results} and some less important results from~\cite{HJ_critical_pts} are adapted and collected in Section~\ref{s.crti_pt_and_rel_results}.

We clarify the necessity for this separate work.
If species proportions are rational, then the multi-species model can be reduced to a vector spin glass model, to which results in~\cite{HJ_critical_pts} are directly applicable. If not all proportions are rational, it is tempting to simply take a sequence of models with rational proportions to approximate the non-rational one. This is indeed the argument used by Panchenko in~\cite{pan.multi} to treat the multi-species Sherrington--Kirkpatrick model (see the second paragraph in~\cite[Section~5]{pan.multi}). However, this argument relies on that the limit of free energy exists for each approximation model, which is not known if the interaction is non-convex. Therefore, results for non-convex multi-species models are not direct corollaries from those~\cite{HJ_critical_pts} for vector spin glasses. This justifies our pursuits here. 

Moreover, a direct motivation is from \cite{simul_rsb} which studies the simultaneous replica symmetry breaking (RSB) in vector and multi-species spin glasses. The relation satisfied by critical points is the key structure for the simultaneous RSB. This work supplies the relation in the multi-species setting and thus together with~\cite{HJ_critical_pts} forms the groundwork for~\cite{simul_rsb}.

Aside from results in the general non-convex setting, we include results in the convex case in Section~\ref{s.convex}.
In particular, we prove the Parisi formula (Proposition~\ref{p.mp-parisi}) for convex multi-species vector spin glass models; and we adapt the main results for non-convex models to the convex setting and summarize them in Corollary~\ref{c.convex}.

\subsection{Setting}

Before describing the multi-species model, we start with some basic notation used throughout this work. 

\subsubsection{Matrices, inner products, and paths} \label{s.matrices,inner_products,paths}

Throughout, we write $R_+=[0,+\infty)$.

For $m,n\in\N$, we denote by $\R^{m\times n}$ the space of all $m\times n$ real matrices.
For any $a=(a_{ij})_{1\leq i\leq m,\, 1\leq j\leq n}\in \R^{m\times n}$, we denote its $j$-th column vector as $a_{\bullet j}=(a_{ij})_{1\leq i\leq m}$ and its $i$-th row vector as $a_{i\bullet}=(a_{ij})_{1\leq j\leq n}$. If not specified, a vector is understood to be a column vector. For a matrix or vector $a$, we denote by $a^\intercal$ its transpose.
For $a,b\in \R^{m\times n}$, we write $a\cdot b= \sum_{ij}a_{ij}b_{ij}$, $|a|=\sqrt{a\cdot a}$, and similarly for vectors. More generally, for any finite set $I$ and $a,b\in \R^I$, we write
\begin{align}\label{e.dot_product}
    a\cdot b=\sum_{i\in I}a_ib_i; \qquad |a|=\sqrt{a\cdot a}.
\end{align}

For each $\D\in \N$, let $\S^\D$ be the linear space of $\D\times\D$ real symmetric matrices. For $a,b\in\S^\D$, viewing them as elements in $\R^{\D\times \D}$, we write $a\cdot b= \sum_{ij}a_{ij}b_{ij}$ and $|a|=\sqrt{a\cdot a}$. Let $\S^\D_+$ (resp.\ $\S^\D_{++}$) be the subset consisting of positive semi-definite (resp.\ definite) matrices. For $a,b \in \S^\D_+$, we write $a\geq b$ provided $a-b\in\S^\D_+$, which gives a natural partial order on $\S^\D_+$.
In particular, when $\D=1$, we have $\S^\D_+=\R_+$.

Let $\mcl Q(\D)$ be the collection of right-continuous increasing paths $q:[0,1)\to\S^\D_+$. Here, $q$ is said to be increasing provided that $q(r')\geq q(r)$ in $\S^\D_+$ for every $0\leq r\leq r'<1$. For $p\in [1,\infty)$, we write $\mcl Q_p(\D) = \mcl Q(\D) \cap L^p([0,1);\S^\D)$. For $q\in \mcl Q_\infty(\D)$, we set $q(1) = \lim_{s\nearrow 1} q(s)$ which exists by monotonicity.

\subsubsection{Species and proportions}

Fix a finite set $\sS$, the elements of which are interpreted as symbols for different species.
For each $N$, let $I_{N,s}$, for $s\in\sS$, be a partition of $\{1,\dots,N\}$. We interpret each $I_{N,s}$ as the set of indices for the $s$-species. 

For each $N$, we set
\begin{align}\label{e.lambda_N,s=}
    \lambda_{N,s} = |I_{N,s}|/N,\quad\forall s\in\sS;\qquad \lambda_N=\Ll(\lambda_{N,s}\Rr)_{s\in\sS}.
\end{align}
We interpret $\lambda_{N,s}$ as the proportion of the $s$-species at size $N$.
Introducing the notation for the discrete simplex
\begin{align}\label{e.simplex}
    \blacktriangle_N = \Big\{(\lambda_s)_{s\in\sS}\ \big|\ \lambda_s \in [0,1]\cap (\Z/N),\,\forall s\in\sS;\; \sum_{s\in\sS}\lambda_s =1\Big\}
\end{align}
we have $\lambda_N\in \blacktriangle_N$. We often assume that $(\lambda_N)_{N\in\N}$ converges to some $\lambda_\infty$ belonging to the continuous simplex
\begin{align}\label{e.cts_simplex}
    \blacktriangle_\infty = \Big\{(\lambda_s)_{s\in\sS}\ \big|\ \lambda_s \in [0,1],\,\forall s\in\sS;\; \sum_{s\in\sS}\lambda_s =1\Big\}.
\end{align}

\subsubsection{Spins, overlaps, and interaction}\label{s.spin}
We fix some $\kappa_s\in\N$ for each $s\in\sS$ we write $\kappa=(\kappa_s)_{s\in\sS}$. We assume that spins in the $s$-species are vectors in $\R^{\kappa_s}$. 
Denote by 
\begin{align}\label{e.Sigma}
    \Sigma=\sqcup_{s\in\sS} [-1,+1]^{\kappa_s}
\end{align}
the state space for a single spin (here $\sqcup$ stands for the disjoint union). 
For $s\in\sS$, let $\mu_s$ be a finite nonnegative measure supported on $[-1,+1]^{\kappa_s}$. We extend $\mu_s$ trivially to $\Sigma$. 
For each $N\in\N$, a spin configuration consisting of $N$ spins is denoted by $\sigma = (\sigma_{\bullet 1},\dots , \sigma_{\bullet N}) \in \Sigma^N$ where each $\sigma_{\bullet n} = (\sigma_{kn})_{1\leq k\leq \kappa_s}$ is a (column) vector in $\R^{\kappa_s}$ if $n\in I_{N,s}$.
We sample a configuration $\sigma = (\sigma_{\bullet 1},\dots \sigma_{\bullet N})$ by independently drawing each $\sigma_{\bullet n}$ according to $\mu_s$ if $n\in I_{N,s}$.
In other words, denoting $P_{N,\lambda_N}$ as the distribution of $\sigma$, we have
\begin{align}\label{e.P_N=}
    \d P_{N,\lambda_N}(\sigma) = \otimes_{s\in\sS}\otimes_{n \in I_{N,s}} \d\mu_s(\sigma_{\bullet n}).
\end{align}

For two spin configurations $\sigma,\sigma'$ of size $N$ and $s\in\sS$, we consider the $\R^{\kappa_s\times\kappa_s}$-valued overlap of the $s$-species:
\begin{align}\label{e.R_N,s=}
    R_{N,\lambda_N,s}(\sigma,\sigma') = \frac{1}{N}\sigma_{\bullet I_{N,s}} (\sigma'_{\bullet I_{N,s}})^\intercal \in [-1,+1]^{\kappa_s\times\kappa_s}
\end{align}
where
\begin{align}\label{e.sigma_bullet,I}
    \sigma_{\bullet I_{N,s}} = (\sigma_{kn})_{1\leq k\leq \kappa_s,\, n\in I_{N,s}}
\end{align}
is a $\kappa_s\times |I_{N,s}|$-matrix and similarly for $\sigma'_{\bullet I_{N,s}}$. Putting them together, we consider the $\prod_{s\in\sS}\R^{\kappa_s\times \kappa_s}$-valued overlap:
\begin{align*}
    R_{N,\lambda_N}(\sigma,\sigma') = \Ll(R_{N,\lambda_N,s}(\sigma,\sigma')\Rr)_{s\in\sS}.
\end{align*}
Notice that the overlap depends on the partition $(I_{N,s})_{s\in\sS}$. Here, we choose to only display the dependence on $\lambda_N$ because the distribution of the overlap under Gibbs measures to be introduced only depends on the proportion.

Let $\xi:\prod_{s\in\sS}\R^{\kappa_s\times \kappa_s}\to\R$ be a smooth function and assume the existence of a centered Gaussian process $(H_N(\sigma))_{\sigma\in \Sigma^N}$ with covariance
\begin{align}\label{e.EH_N(sigma)H_N(sigma')=}
    \E\Ll[ H_N(\sigma)H_N(\sigma')\Rr] = N\xi\Ll(R_{N,\lambda_N}(\sigma,\sigma')\Rr).
\end{align}
The form of $\xi$ and the construction of $H_N(\sigma)$ are similar to those in the vector spin case. We refer to~\cite[Section~1.5]{HJ_critical_pts} for the detail.

\begin{example*}[Multi-species Sherrington--Kirkpatrick model \cite{sherrington1975solvable}]
The model considered in~\cite{pan.multi} corresponds to the case where $\kappa_s=1$, $\mu_s = \delta_{-1}+\delta_{+1}$ for every $s\in\sS$, and $\xi$ is given by $\xi(a) =\sum_{s,s'\in\sS}\Delta^2_{ss'}a_sa_{s'}$ where $\Delta^2 = (\Delta^2_{ss'})_{s,s'\in\sS}$ is a matrix in $\S^{|\sS|}_+$.
\end{example*}

\subsubsection{Cascade}\label{s.cascade}
To describe the external fields in the model, we need first to recall the Ruelle probability cascade (RPC).
Let $\fR$ denote the RPC with overlap uniformly distributed over $[0,1]$ (see \cite[Theorem~2.17]{pan}). Precisely, $\fR$ is a random probability measure on the unit sphere of a separable Hilbert space, with the inner product denoted by $\alpha\wedge\alpha'$. Let $\alpha$ and $\alpha'$ be independent samples from $\fR$. Then, law of $\alpha\wedge\alpha'$ under $\E \fR^{\otimes 2}$ is the uniform distribution over $[0,1]$, where $\E$ integrates the randomness of $\fR$.
This overlap distribution uniquely determines $\fR$ (see~\cite[Theorem~2.13]{pan}).
Almost surely, the support of $\fR$ is ultrametric in the induced topology. For rigorous definitions and basic properties, we refer to \cite[Chapter2]{pan} (also see \cite[Chapter5]{HJbook}).
We also refer to~\cite[Section~4]{HJ_critical_pts} for the construction and properties of $\fR$ useful in this work.
Throughout, we denote by $\la\cdot\ra_\fR = \fR^{\otimes \N}$ the tensorized version of $\fR$.

\subsubsection{External fields}

Recall the space $\mcl Q(\D)$ of right-continuous increasing paths from Section~\ref{s.matrices,inner_products,paths}.
We use the following shorthand notation:
\begin{align}\label{e.Q^S(kappa)=}
    \mcl Q^\sS_\square (\kappa) = \prod_{s\in\sS} \mcl Q_\square(\kappa_s)
\end{align}
where $\square$ is a placeholder for subscripts.
We explicitly construct the external field parametrized by any $q =(q_s)_{s\in\sS} \in \mcl Q^\sS_\infty(\kappa)$.
For almost every realization of $\fR$, every $s\in\sS$, and every $n\in I_{N,s}$, let $(w^{q_s}_n(\alpha))_{\alpha\in\supp\fR}$ be the $\R^{\kappa_s}$-valued centered Gaussian process with covariance
\begin{align}\label{e.Ew^q_s_iw^q_s_i=}
    \E\Ll[ w^{q_s}_n (\alpha)w^{q_s}_n(\alpha')^\intercal\Rr]  = q_s(\alpha\wedge\alpha').
\end{align}
The existence of such a process and its properties are given in~\cite[Section~4]{HJ_critical_pts}.
Conditioned on $\fR$, we assume that all these processes, indexed by $s$ and $n$, are independent. For each $s$, we write $w^{q_s}_{I_{N_s}} = \Ll(w^{q_s}_n\Rr)_{n\in I_{N,s}}$. Recall the notation in~\eqref{e.sigma_bullet,I}. For each $N\in\N$ and $q\in \mcl Q^\sS_\infty(\kappa)$, we define
\begin{align}\label{e.W^q_N(sigma,alpha)=}
    W^q_N(\sigma,\alpha) = \sum_{s\in \sS}w^{q_s}_{I_{N,s}}(\alpha)\cdot \sigma_{\bullet I_{N,s}}
\end{align}
which, conditioned on $\fR$, is a centered Gaussian process with covariance
\begin{align}\label{e.cov_external_multi-sp}
    \E \Ll[ W^q_N(\sigma,\alpha) W^q_N(\sigma',\alpha')\Rr]\stackrel{\eqref{e.R_N,s=},\eqref{e.Ew^q_s_iw^q_s_i=}}{=} N q(\alpha\wedge\alpha')\cdot R_{N,\lambda_N}(\sigma,\sigma')
\end{align}
where the dot product follows the rule as in~\eqref{e.dot_product}.

\subsubsection{Hamiltonian and free energy}
Now, for $N\in \N$, $\lambda_N\in \blacktriangle_N$, $t\in\R_+$, and $q \in \mcl Q_\infty^\sS(\kappa)$, we consider the Hamiltonian
\begin{align}\label{e.H^t,q_N=}
    H^{t,q}_N(\sigma,\alpha)= \sqrt{2t}H_N(\sigma) - t N \xi \Ll(R_{N,\lambda_N}(\sigma,\sigma)\Rr)  
    + \sqrt{2}W^q_N(\sigma,\alpha) - Nq(1)\cdot R_{N,\lambda_N}(\sigma,\sigma)
\end{align}
where $q(1)= (q_s(1))_{s\in\sS}\in \prod_{s\in\sS}\S^{\kappa_s}_+$ and we understand $q(1)\cdot R_{N,\lambda_N}(\sigma,\sigma) = \sum_{s\in\mathscr{\sS}}q_s(1)\cdot R_{N,\lambda_N,s}(\sigma,\sigma)$ still in the sense of entry-wise product. 
Here, $R_{N,\lambda_N}(\sigma,\sigma)$ is the overlap between the same sample and is thus often called the self-overlap. The two terms in~\eqref{e.H^t,q_N=} involving the self-overlap are respectively the variances of $\sqrt{t}H_N(\sigma)$ and $W^q_N(\sigma,\alpha)$. These two terms are often called the self-overlap correction, which resembles the drift term in an exponential martingale.

We define the associated free energy and Gibbs measure
\begin{gather}
    \bar F_{N,\lambda_N}(t,q) = - \frac{1}{N}\E\log \iint  \exp\Ll( H^{t,q}_N(\sigma,\alpha)\Rr)\d P_{N,\lambda_N}(\sigma)\d \fR(\alpha),  \label{e.F_N(t,q)=}
    \\
    \la\cdot\ra_{N,\lambda_N} \propto \exp\Ll( H^{t,q}_N(\sigma,\alpha)\Rr)\d P_{N,\lambda_N}(\sigma)\d \fR(\alpha) \label{e.<>_N=}.
\end{gather}
Here, $\E$ first averages over all the Gaussian randomness in $H_N(\sigma)$ and $W^q_N(\sigma,\alpha)$ and then the randomness in $\fR$. This particular order of integration is needed to ensure that there is no measurability issues (see~\cite[Lemma~4.5]{HJ_critical_pts}).
Notice the additional minus sign on the right-hand side of~\eqref{e.F_N(t,q)=}.
We have omitted the dependence on $t$ and $q$ from the notation of $\la\cdot\ra_{N,\lambda_N}$, which should be clear from the context.
The dependence of $\bar F_{N,\lambda_N}(t,q)$ on the partition $(I_{N,s})_{s\in\sS}$ is only through $\lambda_N$, which is the reason for us to include it in the notation.
We omit the dependence on $\lambda_N$ in the notation of $H^{t,q}_N$, $H_N$, and $W^q_N$, which we prefer to keep implicit.

We can view $\bar F_{N,\lambda_N}$ as a function of $(t,q)\in\R_+\times\mcl Q^\sS_\infty(\kappa)$. By continuity in Proposition~\ref{p.F_N_smooth}, we can extend $\bar F_{N,\lambda_N}$ to the domain $\R_+\times \mcl Q^\sS_1(\kappa)$.

\subsubsection{Initial condition, functional, and critical points}
For each $\lambda\in\R_+^\sS$, there is a function $\psi_\lambda:\mcl Q^\sS_1(\kappa)\to\R$ (lemma~\ref{l.reg_initial}) such that $\bar F_{N,\lambda_N}(0,\cdot) =\psi_{\lambda_N}$ (see Lemma~\ref{l.initial_condition}).
For any $\lambda_\infty\in \blacktriangle_\infty$, $(t,q)\in \R_+\times \mcl Q^\sS_2(\kappa)$, we consider the functional
\begin{align}\label{e.mcJ=}
    \mcl J_{\lambda_\infty, t, q}(q',p) = \psi_{\lambda_\infty}(q') + \la p, q-q'\ra_{L^2}+t\int_0^1\xi(p)
\end{align}
defined for $q'\in \mcl Q^\sS_2(\kappa)$, $p\in L^2([0,1],\prod_{s\in\sS}\S^{\kappa_s})$. 
Here, $\la\cdot,\cdot\ra_{L^2}$ is the inner product in $L^2([0,1],\prod_{s\in\sS}\S^{\kappa_s})$.
Lastly integral is $\int_0^1\xi(p(s))\d s$.
We call~\eqref{e.mcJ=} the Hamilton--Jacobi functional due to its close connection to the Hamilton--Jacobi equation~\eqref{e.hj}. For instance, in Theorem~\ref{t.convex_2} to be stated, the variational formula representation for the limit of free energy is exactly the Hopf--Lax formula. 
We refer to~\cite[Section~1]{HJ_critical_pts} for more detail on the Hamilton--Jacobi equation approach to spin glass.

We say that a pair $(q',p) \in \mcl Q^\sS_2(\kappa)\times L^2([0,1],\prod_{s\in\sS}\S^{\kappa_s})$ is a \textbf{critical point} of the functional $\mcl J_{\lambda_\infty, t, q}$ if
\begin{align}\label{e.critical_rel}
    q=q'-t\nabla\xi(p) \qquad\text{and}\qquad p=\partial_q \psi_{\lambda_\infty}(q').
\end{align}
Here, the derivative $\partial_q \psi_{\lambda_\infty}$ is understood in the sense of Fr\'echet, which is rigorously defined in~\eqref{e.def_Frechet}. The differentiability of $\psi_{\lambda_\infty}$ is ensured by Lemma~\ref{l.reg_initial}.

Heuristically, at any critical point $(q',p)$, the derivatives of $\mcl J_{\lambda_\infty, t, q}$ in $q'$ and $p$ are both zero.
Critical points and the value of the functional at these points are important to our main results to be stated.

\subsubsection{Regular paths}We introduce subsets of $\mcl Q^\sS(\kappa)$ consisting of regular paths that play important roles in the differentiability of limits of free energy on $\mcl Q^\sS_2(\kappa)$.
For any $\D\in\N$ and any matrix $a\in\S^\D_+$, we define
\begin{align}\label{e.ellipt(a)=}
    \ellipt(a) = \frac{\max_{u\in\R^\D:\:|u|=1}u^\intercal a u}{\min_{u\in\R^\D:\:|u|=1}u^\intercal a u}.
\end{align}
In other word, $\ellipt(a)$ is the ratio of the largest eigenvalue over the smallest eigenvalue of $a$. We have $a\in\S^\D_{++}$ if and only if $\ellipt(a)<+\infty$.
For $\D\in\N$ and $c>0$, we define 
\begin{gather}
\begin{split}\label{e.Q_uparrow,c(D)=}
    \mcl Q_{\uparrow,c}(\D)=\big\{ q\in \mcl Q(\D)\ \big|\ q(0)=0\  \text{and} \  \forall r\leq r'\in[0,1),\  q(r')-q(r)\geq c(r'-r)\identity_\D
   \\
   \text{and} \  \ellipt(q(r')-q(r))\leq c^{-1}\big\};
\end{split}
    \\
    \mcl Q_{\uparrow}(\D) = \mcl \cup_{c>0}\mcl Q_{\uparrow,c}(\D);\qquad \mcl Q_{\infty,\uparrow}(\D) =  Q_{\infty}(\D)\cap  Q_{\uparrow}(\D).\label{e.Q_uparrow(D)=}
\end{gather}
Here, $\identity_\D$ is the $\D\times\D$ identity matrix.
Let $\mcl Q_{\uparrow,c}(\kappa)$, $\mcl Q_{\uparrow}(\kappa)$, and $\mcl Q_{\infty,\uparrow}(\kappa)$ be given as in~\eqref{e.Q^S(kappa)=}.

\subsection{Main results}\label{s.main_results}

In the results to be stated below, we say that $\Ll(\bar F_{N,\lambda_N}\Rr)_{N\in\N}$ converges to some $f$ if $(\bar F_{N,\lambda_N}(t,q))_{N\in\N}$ converges to $f(t,q)$ at every $(t,q)\in\R_+\times \mcl Q^\sS_1(\kappa)$. Namely, the convergence is pointwise.

The result below adapts \cite[Theorem~1.1]{HJ_critical_pts} partially. The missing part to recover the full version (Theorem~\ref{t.convex_2}) is formulated into Claim~\ref{c.free_energy_upper_bd}. We discuss this issue after the statements of four main theorems.

\begin{theorem}\label{t.convex}
If $\xi$ is a convex function on $\prod_{s\in\sS}\S^{\kappa_s}_+$ and $(\lambda_N)_{N\in\N}$ converges to some $\lambda_\infty$, then for every $(t,q)\in \R_+\times \mcl Q^\sS_2(\kappa)$, we have
\begin{align}\label{e.t.convex}
    \lim_{N\to\infty} \bar F_{N,\lambda_N}(t,q) = \inf_{p\in \mcl Q^\sS_\infty(\kappa)}\mcl J_{\lambda_\infty,t,q}\Ll(q+t\nabla\xi(p),p\Rr).
\end{align}
\end{theorem}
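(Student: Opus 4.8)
The plan is to adapt the Hamilton--Jacobi approach of~\cite{HJ_critical_pts} directly to the multi-species model, without first approximating it by a vector spin model; the latter reduction is unavailable when $\lambda_\infty$ has irrational entries. The backbone is provided by three facts recorded above: the $N$-uniform regularity of $\bar F_{N,\lambda_N}$ (Proposition~\ref{p.F_N_smooth}: Lipschitzness on $\R_+\times\mcl Q^\sS_1(\kappa)$ and Fr\'echet differentiability in $q$), the identity $\bar F_{N,\lambda_N}(0,\cdot)=\psi_{\lambda_N}$ (Lemma~\ref{l.initial_condition}), and the continuity of $\lambda\mapsto\psi_\lambda$ (Lemma~\ref{l.reg_initial}), which together with $\lambda_N\to\lambda_\infty$ gives $\psi_{\lambda_N}\to\psi_{\lambda_\infty}$ locally uniformly. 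I would first prove~\eqref{e.t.convex} for $q$ in the class of regular paths $\mcl Q_{\infty,\uparrow}(\kappa)$, where the Fr\'echet derivatives of $\bar F_{N,\lambda_N}$ and of the candidate limit behave well, and then extend it to all of $\mcl Q^\sS_2(\kappa)$ using the $N$-uniform Lipschitz bound, density of regular paths, and joint continuity of both sides of~\eqref{e.t.convex}.

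\textbf{Guerra's interpolation bound.} Fix $p=(p_s)_{s\in\sS}\in\mcl Q^\sS_\infty(\kappa)$ and put $q'=q+t\nabla\xi(p)$. One interpolates over $r\in[0,1]$ between $H^{t,q}_N$ at $r=1$ and, at $r=0$, the Hamiltonian $H^{0,q'}_N$ augmented by an independent Ruelle-cascade field driven by $p$; Gaussian integration by parts writes the $r$-derivative of the corresponding free energy as the Gibbs average, over two replicas $\sigma^1,\sigma^2$, of a Bregman remainder of $\xi$,
\[
\xi\bigl(R_{N,\lambda_N}(\sigma^1,\sigma^2)\bigr)-\xi\bigl(p(\cdot)\bigr)-\nabla\xi\bigl(p(\cdot)\bigr)\cdot\bigl(R_{N,\lambda_N}(\sigma^1,\sigma^2)-p(\cdot)\bigr)\ \geq\ 0 ,
\]
plus an $o_N(1)$ term. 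Nonnegativity needs only convexity of $\xi$ on $\prod_{s\in\sS}\S^{\kappa_s}_+$: the matrix arguments can be confined to that cone exactly as in the vector case, using that $\xi$ is the covariance structure of $H_N$ (cf.~\cite[Section~1.5]{HJ_critical_pts}). Monotonicity of the interpolated free energy and the value at $r=0$, which is $\psi_{\lambda_N}(q')+\la p,q-q'\ra_{L^2}+t\int_0^1\xi(p)=\mcl J_{\lambda_N,t,q}(q',p)$, then give $\bar F_{N,\lambda_N}(t,q)\leq\mcl J_{\lambda_N,t,q}(q',p)+o_N(1)$. Letting $N\to\infty$ (here using $\psi_{\lambda_N}\to\psi_{\lambda_\infty}$) and taking the infimum over $p$ yields ``$\leq$'' in~\eqref{e.t.convex}.

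\textbf{The matching inequality} I would obtain from the finite-$N$ Hamilton--Jacobi equation. Gaussian integration by parts gives $\partial_t\bar F_{N,\lambda_N}(t,q)=\E\la\xi(R_{N,\lambda_N}(\sigma^1,\sigma^2))\ra_{N,\lambda_N}$ and identifies $\partial_q\bar F_{N,\lambda_N}(t,q)$ with a functional of the law of $R_{N,\lambda_N}(\sigma^1,\sigma^2)$ under $\la\cdot\ra_{N,\lambda_N}$, read against the cascade overlap. Combining this with the Ghirlanda--Guerra and synchronization properties of $\fR$ recalled from~\cite[Section~4]{HJ_critical_pts}, with concentration of the overlap under $\la\cdot\ra_{N,\lambda_N}$, and --- for the sign of the error --- with Jensen's inequality for the convex $\xi$, one gets
\[
\partial_t\bar F_{N,\lambda_N}(t,q)\ =\ \int_0^1\xi\bigl(\partial_q\bar F_{N,\lambda_N}(t,q)\bigr)+\mathrm{err}_N(t,q),\qquad\mathrm{err}_N\to0 ,
\]
on the regular class, with $\mathrm{err}_N$ of the sign needed for the remaining inequality. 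Consequently any subsequential pointwise limit $\bar F$ of $(\bar F_{N,\lambda_N})_N$ solves $\partial_t f-\int_0^1\xi(\partial_q f)=0$ with datum $\psi_{\lambda_\infty}$ in the appropriate one-sided sense; a direct computation shows that $(t,q)\mapsto\inf_{p\in\mcl Q^\sS_\infty(\kappa)}\mcl J_{\lambda_\infty,t,q}(q+t\nabla\xi(p),p)$ solves this equation with the same datum and is the solution picked out by the convexity of $\bar F_{N,\lambda_N}$ (it is the relevant Hopf--Lax formula), so comparison, together with the previous paragraph, gives~\eqref{e.t.convex} on regular paths and hence everywhere. Alternatively the matching inequality may be produced by the Aizenman--Sims--Starr/cavity scheme together with multi-species synchronization of the Gibbs overlap; the Hamilton--Jacobi route is preferred here because it uses only the ultrametricity built into $\fR$.

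\textbf{The main obstacle}, and the reason a separate treatment is needed, is to make every step above uniform in $N$ while the proportions $\lambda_N$ are merely convergent --- neither eventually constant nor, in the limit, necessarily rational --- so that the reduction to rational proportions used in~\cite{pan.multi} (which for the non-convex models of~\cite{HJ_critical_pts} would moreover require a free-energy limit that is not known to exist) need not be invoked. Concretely: the interpolation and integration-by-parts identities must be organized species by species, since the overlaps $R_{N,\lambda_N,s}$ live in distinct matrix spaces $\S^{\kappa_s}$; the errors in the finite-$N$ Hamilton--Jacobi equation and in Guerra's bound must vanish with rates insensitive to the arithmetic of $\lambda_\infty$; the passage from $\psi_{\lambda_N}$ to $\psi_{\lambda_\infty}$ is precisely what Lemma~\ref{l.reg_initial} supplies; and one must check that $q+t\nabla\xi(p)$ remains admissible and that the Bregman sign uses only convexity of $\xi$ on the positive semidefinite cone. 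The extra ingredient needed for the stronger Theorem~\ref{t.convex_2} is isolated as Claim~\ref{c.free_energy_upper_bd}.
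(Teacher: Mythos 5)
Your route is genuinely different from the paper's, and its second half has a real gap. The paper does not prove a multi-species Guerra bound plus a matching cavity/HJ bound from scratch. Instead, Theorem~\ref{t.convex} is deduced from Proposition~\ref{p.mp-parisi}: for \emph{rational} limiting proportions the multi-species model is identified exactly with a vector spin model (Lemma~\ref{l.match_mp_with_vector}, Corollary~\ref{c.equiv_rational}) and the vector-spin Parisi formula of~\cite{HJ_critical_pts} is imported wholesale; the general case then follows by approximating $\lambda_\infty$ by rational proportions and combining the critical point representation (Proposition~\ref{p.crit_pt_rep}) with the continuity estimates in $\lambda$ (Lemma~\ref{l.lambda_continuity}) to sandwich $\lim_N\bar F_{N,\lambda_N}$ between $\sup_p\sP_{\lambda^n_\infty,t,q}$ and $\sup_p\sP_{\lambda_\infty,t,q}$. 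The whole point of the paper's architecture is to avoid exactly the two ingredients you propose to use.

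The gap is in your ``matching inequality.'' What you describe --- showing that any subsequential limit satisfies $\partial_t f=\int_0^1\xi(\partial_q f)$ up to a vanishing, signed error and then concluding by comparison with the Hopf--Lax solution --- is precisely the content of Claim~\ref{c.free_energy_upper_bd}, which the paper deliberately leaves unproved and on which only Theorem~\ref{t.convex_2} (not Theorem~\ref{t.convex}) is conditioned. Two specific problems. First, the paper points out explicitly that knowing the limit satisfies the equation at its (dense set of) differentiability points does \emph{not} identify it, since there is no uniqueness theorem for solutions in that weak sense; a genuine comparison argument requires the full well-posedness theory for the infinite-dimensional equation on the cone $\mcl Q^\sS_2(\kappa)$ (the multi-species analogue of~\cite{chen2022hamilton}) together with the lower bound of~\cite{mourrat2023free}, neither of which your sketch supplies. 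Second, the overlap $R_{N,\lambda_N}(\sigma^1,\sigma^2)$ does not concentrate under the unperturbed measure $\la\cdot\ra_{N,\lambda_N}$, and the Ghirlanda--Guerra identities do not hold for it without the perturbation machinery of Section~\ref{s.cavity_computation}; even with the perturbation one obtains synchronization of the overlap with $\alpha\wedge\alpha'$, not concentration, and this is exactly where the irrationality of $\lambda_\infty$ forces the extra approximation layer of Lemmas~\ref{l.approx1}--\ref{l.approx2}. Separately, you should double-check the direction of Guerra's interpolation against the sign convention $\bar F_{N,\lambda_N}=-\frac1N\E\log(\cdots)$ in~\eqref{e.F_N(t,q)=}: the paper's Proposition~\ref{p.mp-parisi} together with~\eqref{e.rel_parisi_mcJ} expresses the limit as $\sup_p\mcl J_{\lambda_\infty,t,q}(q+t\nabla\xi(p),p)$, and with this convention the Bregman-remainder interpolation controls $\liminf_N\bar F_{N,\lambda_N}$ from below by each $\sP_{\lambda_\infty,t,q}(p)$, not $\limsup_N\bar F_{N,\lambda_N}$ from above.
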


This theorem is in fact simply the Parisi formula, which can be rewritten in a more familiar form in Proposition~\ref{p.mp-parisi}. 

The next two results adapt \cite[Theorems~1.2 and 1.3]{HJ_critical_pts} respectively to the setting here.
In the following, the Gateaux differentiability is defined in Section~\ref{s.diff_monot_precompt}.

\begin{theorem}[Critical point representation]\label{t.crit_pt_rep}
Assume that $(\lambda_N)_{N\in\N}$ converges to some $\lambda_\infty$ and that $(\bar F_{N,\lambda_N})_{N\in\N}$ converges to some $f$. Then, for every $(t,q) \in \R_+\times \mcl Q^\sS_2(\kappa)$, there exists $(q',p)\in \mcl Q^\sS_2(\kappa)\times \mcl Q^\sS_2(\kappa)$ that is a critical point of $\mcl J_{\lambda_\infty,t,q}$ and is such that
\begin{align}\label{e.t.crit_pt_rep}
    \lim_{N\to\infty} \bar F_{N,\lambda_N}(t,q) = \mcl J_{\lambda_\infty,t,q}(q',p).
\end{align}
\end{theorem}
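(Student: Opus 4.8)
The plan is to adapt the strategy behind \cite[Theorem~1.2]{HJ_critical_pts} to the multi-species setting, relying on the Hamilton--Jacobi structure of the free energy. First I would establish that, for each fixed $N$, the finite-volume free energy $\bar F_{N,\lambda_N}$ satisfies (approximately, up to an error vanishing in $N$) the Hamilton--Jacobi equation $\partial_t \bar F_{N,\lambda_N}(t,q) = \int_0^1 \xi\bigl(\partial_q \bar F_{N,\lambda_N}(t,q)\bigr)$ in the relevant variables, together with the initial condition $\bar F_{N,\lambda_N}(0,\cdot) = \psi_{\lambda_N}$ coming from Lemma~\ref{l.initial_condition}. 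This is the multi-species analogue of the standard computation for vector spins; the self-overlap correction terms in~\eqref{e.H^t,q_N=} are precisely designed so that Gaussian integration by parts produces the $\xi$-nonlinearity, and the cascade field $W^q_N$ produces the $q$-derivative. The new feature here is that the state variable $q = (q_s)_{s\in\sS}$ lives in a product of path spaces, but the derivation goes through species-by-species with $\nabla\xi$ replaced by its block-gradient $(\partial_{a_s}\xi)_{s\in\sS}$.

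Next, with the (approximate) Hamilton--Jacobi equation in hand, I would pass to the limit $f = \lim_N \bar F_{N,\lambda_N}$, using the convergence hypothesis on $(\lambda_N)$ and the assumed convergence of $(\bar F_{N,\lambda_N})$, and invoke the variational (Hopf--Lax-type) characterization together with the critical-point structure. Concretely, fix $(t,q) \in \R_+\times\mcl Q^\sS_2(\kappa)$. One shows the limit $f(t,q)$ is attained along a characteristic of the equation: there is a path $p = p(\cdot)$ such that, writing $q' = q + t\nabla\xi(p)$, one has $f(t,q) = \psi_{\lambda_\infty}(q') + \langle p, q - q'\rangle_{L^2} + t\int_0^1 \xi(p) = \mcl J_{\lambda_\infty,t,q}(q',p)$, and moreover $p = \partial_q\psi_{\lambda_\infty}(q')$ by the envelope/first-order condition for the variational problem. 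Together these two relations are exactly~\eqref{e.critical_rel}, so $(q',p)$ is a critical point of $\mcl J_{\lambda_\infty,t,q}$ and~\eqref{e.t.crit_pt_rep} holds. The differentiability of $\psi_{\lambda_\infty}$ needed to make sense of $\partial_q\psi_{\lambda_\infty}(q')$ is supplied by Lemma~\ref{l.reg_initial}, and the regularity of $\bar F_{N,\lambda_N}$ by Proposition~\ref{p.F_N_smooth}; I would extract the critical path $p$ as a subsequential limit of discrete finite-$N$ analogues (e.g.\ of $\partial_q\bar F_{N,\lambda_N}$ evaluated along the flow), using monotonicity and the precompactness results referenced in Section~\ref{s.diff_monot_precompt} to guarantee that the limit lies in $\mcl Q^\sS_2(\kappa)$ rather than merely in $L^2$.

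The main obstacle, and the reason this is not a corollary of~\cite{HJ_critical_pts}, is the one already flagged in the introduction: when the species proportions $\lambda_\infty$ are irrational, one cannot reduce to a vector model by grouping coordinates, and one cannot approximate by rational-proportion models because the existence of the limit free energy for those approximants is not known in the non-convex case. So all estimates --- the approximate Hamilton--Jacobi equation, the uniform (in $N$) Lipschitz bounds in $t$ and $q$, and the compactness used to extract $p$ --- must be proved directly at the level of the multi-species model with $\lambda_N \to \lambda_\infty$, with constants that do not degenerate as the proportions become irrational. Quantitatively this means tracking how the species proportions $\lambda_{N,s}$ enter the Gaussian integration-by-parts error terms and the concentration estimates, and checking that replacing $\lambda_{N,s}$ by $\lambda_{\infty,s}$ costs only $o(1)$; here the continuity of $\psi_\lambda$ in $\lambda$ (again from Lemma~\ref{l.reg_initial}) and of $\xi$ is what saves us. A secondary technical point is that the envelope argument yielding $p = \partial_q\psi_{\lambda_\infty}(q')$ requires knowing that the infimum in the Hopf--Lax formula is actually attained and that the minimizer is regular enough for the first-order condition to hold in the Fréchet sense of~\eqref{e.def_Frechet}; this is handled exactly as in~\cite[Section~6]{HJ_critical_pts} once the preceding steps are in place.
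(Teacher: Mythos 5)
Your proposal rests on two steps that do not hold in the non-convex setting and therefore has genuine gaps. First, the ``approximate Hamilton--Jacobi equation'' for $\bar F_{N,\lambda_N}$ with an error vanishing in $N$ is not available: by \eqref{e.def.der.FN} one has $\partial_t \bar F_{N,\lambda_N}(t,q)=\E\la\xi(R_{N,\lambda_N}(\sigma,\sigma'))\ra_{N,\lambda_N}$, and turning this into $\int_0^1\xi(\partial_q \bar F_{N,\lambda_N}(t,q))$ requires knowing that the overlap is asymptotically a function of the cascade overlap $\alpha\wedge\alpha'$, i.e.\ it requires the entire perturbation/Ghirlanda--Guerra/cavity machinery; even for the limit $f$, the equation \eqref{e.d_tf=intxi(d_qf)} is only established at Gateaux differentiability points (Proposition~\ref{p.crit_pt_2}), and the paper stresses after Theorem~\ref{t.convex_2} that this is insufficient to identify $f$ as the solution of \eqref{e.hj}. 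Second, the ``Hopf--Lax-type variational characterization'' and the ``envelope/first-order condition'' you invoke to obtain $p=\partial_q\psi_{\lambda_\infty}(q')$ are exactly what fails for non-convex $\xi$: there is no variational formula for $f$ (the Hopf--Lax representation \eqref{e.t.convex_2} is false without convexity, and \eqref{e.hopf} fails because $\psi_{\lambda_\infty}$ need not be convex), so the critical-point relation cannot be read off as an optimality condition. In the paper this relation is instead extracted from the cavity computation: comparing the system of size $N+\M$ with the size-$N$ system plus $\M$ cavity coordinates shows that the expected cavity overlap converges to $\partial_q\psi_{\lambda_\infty}(q+t\nabla\xi(p))$ (Lemmas~\ref{l.cavity2} and~\ref{l.approx2}), and the exchangeability identity of Lemma~\ref{l.last_vector} matches this with $\partial_q f(t,q)$.

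Your diagnosis of the irrationality obstruction is also off in a way that matters for the architecture of the proof. You conclude that one must avoid rational approximation altogether and redo everything ``directly'' for irrational $\lambda_\infty$; the paper instead does use rational approximants $\lambda^\star_n\to\lambda_\infty$, exploiting the fact that the cavity bounds of Lemmas~\ref{l.cavity1} and~\ref{l.cavity2} never require the free energy of the approximating models to converge --- they only give two-sided subsequential bounds $\sP_{\lambda^\star,t,q}(p_-)\le\liminf\bar F\le\limsup\bar F\le\sP_{\lambda^\star,t,q}(p_+)$ with overlap characterizations along perturbed subsequences. The uniform continuity in $\lambda$ (Lemma~\ref{l.lambda_continuity}) and in the system size (Lemma~\ref{l.NF_N-(N+M)F_(N+M)}) then transfer these bounds to the original sequence, and the assumed convergence $\bar F_{N,\lambda_N}\to f$ identifies $p_\pm=\partial_q f(t,q)$ at differentiability points, squeezing the two sides. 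Finally, your proposal does not address the passage from the dense set of differentiability points (Proposition~\ref{p.reg_limit}) to an arbitrary $(t,q)\in\R_+\times\mcl Q^\sS_2(\kappa)$, which in the paper is a separate stability argument for critical points (Proposition~\ref{p.crit_pt_rep}, in the spirit of Proposition~\ref{p.crit_pt_stable}); extracting $p$ as a subsequential limit of $\partial_q\bar F_{N,\lambda_N}$ ``along the flow'' does not by itself show that the limiting pair satisfies \eqref{e.critical_rel}.
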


\begin{theorem}[Critical point identification]\label{t.crit_pt_id}
Assume that $(\lambda_N)_{N\in\N}$ converges to some $\lambda_\infty$ and that $(\bar F_{N,\lambda_N})_{N\in\N}$ converges to some $f$. 
For every $(t,q)\in\R_+\times \mcl Q^\sS_{\infty,\uparrow}(\kappa)$, if $f(t,\cdot)$ is Gateaux differentiable at $q$, then letting $p=\partial_q f(t,q)$ and $q'=q+t\nabla \xi (p)$, we have that $(q',p)\in \mcl Q^\sS_2(\kappa)\times \mcl Q^\sS_2(\kappa)$ is a critical point of $\mcl J_{\lambda_\infty,t,q}$ and is such that~\eqref{e.t.crit_pt_rep} holds.
\end{theorem}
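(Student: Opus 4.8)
The plan is to deduce this statement from Theorem~\ref{t.crit_pt_rep} together with a convexity/monotonicity argument that pins down the critical point uniquely in terms of the Gateaux derivative $p = \partial_q f(t,q)$, exploiting the extra regularity afforded by $q \in \mcl Q^\sS_{\infty,\uparrow}(\kappa)$. First I would invoke Theorem~\ref{t.crit_pt_rep} to obtain \emph{some} critical point $(q'_0, p_0) \in \mcl Q^\sS_2(\kappa)\times\mcl Q^\sS_2(\kappa)$ with $f(t,q) = \mcl J_{\lambda_\infty,t,q}(q'_0,p_0)$, so that by the critical point relations~\eqref{e.critical_rel} we have $q = q'_0 - t\nabla\xi(p_0)$ and $p_0 = \partial_q\psi_{\lambda_\infty}(q'_0)$. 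The task is then to show $p_0 = \partial_q f(t,q)$, i.e.\ that the Gateaux derivative of the limit free energy in the $q$-variable coincides with the $p$-component of any critical point. Once this is established, setting $p = \partial_q f(t,q)$ forces $p = p_0$, and then $q' = q + t\nabla\xi(p) = q + t\nabla\xi(p_0) = q'_0$ by the first relation in~\eqref{e.critical_rel}, so $(q',p) = (q'_0,p_0)$ is the desired critical point and~\eqref{e.t.crit_pt_rep} holds automatically.

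The heart of the matter is the identification $\partial_q f(t,q) = p_0$. I would proceed through the variational inequality satisfied by the critical point: since $(q'_0,p_0)$ is critical, one expects $\mcl J_{\lambda_\infty,t,q}(q'_0,p_0) \leq \mcl J_{\lambda_\infty,t,q}(q',p')$ for nearby admissible pairs — at least in the convex case, and here the relevant statement for general $\xi$ should be an envelope-type inequality that is available from the machinery adapted in Section~\ref{s.crti_pt_and_rel_results}. Differentiating the functional in its explicit last argument, $\mcl J_{\lambda_\infty,t,q}(q',p) = \psi_{\lambda_\infty}(q') + \la p, q - q'\ra_{L^2} + t\int_0^1\xi(p)$, the derivative in $q$ (holding $(q',p)$ at the critical configuration) is simply $p_0$ by the explicit linear dependence $\la p_0, q - q'_0\ra_{L^2}$ on $q$. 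To promote this formal computation to a genuine statement about $\partial_q f$, I would use: (i) the general principle that $f(t,q)$, being a limit of free energies, inherits from the finite-$N$ free energies $\bar F_{N,\lambda_N}(t,\cdot)$ the relevant Lipschitz/monotonicity/convexity-in-$q$ structure (the $q$-dependence enters through the external field, and the Gibbs average of $R_{N,\lambda_N}$ gives the $q$-gradient at finite $N$), and (ii) the fact that on the regular class $\mcl Q^\sS_{\infty,\uparrow}(\kappa)$ these one-sided derivatives are two-sided, so Gateaux differentiability at $q$ upgrades the critical-point subdifferential inclusion to an equality $p_0 = \partial_q f(t,q)$. This is precisely the route taken in~\cite{HJ_critical_pts} for vector spins, and the role of $\mcl Q^\sS_{\infty,\uparrow}(\kappa)$ is the multi-species analogue of the ellipticity hypothesis there, guaranteeing that the overlap paths stay in the interior where the relevant convexity/differentiability arguments go through uniformly.

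The main obstacle I anticipate is \textbf{not} the formal envelope computation but the passage to the limit: one must ensure that the finite-$N$ identity relating $\partial_q \bar F_{N,\lambda_N}(t,q)$ to a Gibbs average of overlaps, together with the almost-sure ultrametric/cascade structure, passes to the limit along a subsequence and that the limiting object is genuinely the critical-point $p_0$ rather than some spurious limit point. Because species proportions may be irrational, one cannot shortcut this via the rational-approximation argument (as emphasized in the introduction), so the convergence must be handled directly at the level of the multi-species model; concentration estimates for the free energy and the Gaussian integration-by-parts (Gaussian interpolation) identities, adapted to the block structure of $R_{N,\lambda_N} = (R_{N,\lambda_N,s})_{s\in\sS}$, will be the technical workhorses. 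A secondary subtlety is checking that $q' = q + t\nabla\xi(p)$ indeed lands in $\mcl Q^\sS_2(\kappa)$ and that $p \in \mcl Q^\sS_2(\kappa)$ (monotonicity of $\partial_q f(t,\cdot)$), which should follow from monotonicity properties of $\partial_q\psi_{\lambda_\infty}$ recorded in Lemma~\ref{l.reg_initial} combined with the critical relation $p = \partial_q\psi_{\lambda_\infty}(q')$.
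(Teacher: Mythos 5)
Your overall strategy does not match the paper's, and the step you identify as ``the heart of the matter'' has a genuine gap. You propose to first invoke Theorem~\ref{t.crit_pt_rep} to produce \emph{some} critical point $(q'_0,p_0)$ with $f(t,q)=\mcl J_{\lambda_\infty,t,q}(q'_0,p_0)$, and then to prove $p_0=\partial_q f(t,q)$ by an envelope-type argument, differentiating $\mcl J$ in its explicit $q$-slot. The envelope theorem requires a variational characterization: either $f(t,\cdot)$ is a sup or inf of $\mcl J_{\lambda_\infty,t,\cdot}$ over admissible pairs (Danskin-type), or the critical-point selection $\tilde q\mapsto(q'(\tilde q),p(\tilde q))$ varies differentiably so that the first-order conditions kill the indirect terms. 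Neither is available here: for non-convex $\xi$ the Hopf--Lax formula is false (as the paper stresses in the discussion of~\eqref{e.t.convex_2} and~\eqref{e.hopf}), the limit is only known to be \emph{a} critical value, critical points need not be unique, and nothing controls how a selection of critical points moves with $q$. The sentence ``the relevant statement for general $\xi$ should be an envelope-type inequality that is available from the machinery adapted in Section~\ref{s.crti_pt_and_rel_results}'' is where the proof breaks: no such inequality is in that section or in~\cite{HJ_critical_pts}. There is also an architectural problem: in the paper, Theorem~\ref{t.crit_pt_rep} (via Proposition~\ref{p.crit_pt_rep}) is itself \emph{deduced} from the differentiable-point statements by density and stability, so routing Theorem~\ref{t.crit_pt_id} through Theorem~\ref{t.crit_pt_rep} reverses the logical order.

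The paper's actual proof is a two-line consequence of Propositions~\ref{p.crit_pt_1} and~\ref{p.crit_pt_2}, which establish the identification directly from the cavity computation rather than from any variational principle. Concretely: the approximation Lemmas~\ref{l.approx1} and~\ref{l.approx2} produce perturbed Gibbs measures along subsequences for which the overlap array converges in law to $p(\alpha\wedge\alpha')$; the Gaussian integration-by-parts identity~\eqref{e.def.der.FN} expresses $\partial_q$ of the (perturbed) free energy as a Gibbs average of the overlap; semi-concavity (Proposition~\ref{p.cvg_der_semi_concave} and Remark~\ref{r.cvg_der_semi_concave}) upgrades pointwise convergence of free energies to convergence of derivatives at Gateaux-differentiability points of $q\in\mcl Q^\sS_{\infty,\uparrow}(\kappa)$; combining these gives $\partial_q f(t,q)=p$, and the cavity identity for the last spin block (Lemma~\ref{l.last_vector} plus Lemma~\ref{l.approx2}~\eqref{i.l.approx2.2}) gives $p=\partial_q\psi_{\lambda_\infty}(q+t\nabla\xi(p))$, i.e.\ exactly the critical-point relation~\eqref{e.critical_rel}, while Proposition~\ref{p.crit_pt_1} gives $f(t,q)=\sP_{\lambda_\infty,t,q}(p)=\mcl J_{\lambda_\infty,t,q}(q+t\nabla\xi(p),p)$ by~\eqref{e.rel_parisi_mcJ}. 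You do correctly anticipate several of these ingredients (the finite-$N$ derivative as a Gibbs average, the role of $\mcl Q^\sS_{\infty,\uparrow}$, the need to pass the overlap identity to the limit without rational approximation), but the mechanism that actually closes the argument is the Ghirlanda--Guerra/cavity identification of the limiting overlap, not an envelope inequality; as written, your proof cannot be completed.
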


Theorem~\ref{t.crit_pt_rep} stats that, if the free energy converges, then the limit must be a critical value of Hamilton--Jacobi functional. 
For $t> 0$ sufficiently small and $q = 0$, namely, in the replica symmetry regime, Theorem~\ref{t.crit_pt_rep} was shown in \cite{dey2021fluctuation} for the multi-species Sherrington--Kirkpatrick model. 
Theorem~\ref{t.crit_pt_id} shows that, at every differentiable point in $\mcl Q^\sS_{\infty,\uparrow}(\kappa)$ of the limit, the critical point is uniquely determined. By Proposition~\ref{p.reg_limit}, such differentiability points are dense in $\mcl Q^\sS_2(\kappa)$.

Next, we state an adapted version of~\cite[Theorem~1.4]{HJ_critical_pts}.
By adding a small perturbation of quadratic interaction, we can show that the overlap $R_{N,\lambda_N}(\sigma,\sigma')$ under $\E\la\cdot\ra_{N,\lambda_N}$ converges in law to $p$ from the critical point $(q',p)$. Here, $\sigma'$ is an independent copy of $\sigma$ sampled from $\la\cdot\ra_{N,\lambda}$ introduced in~\eqref{e.<>_N=}.
The perturbation is given by an additional Gaussian Hamiltonian $(\hat H_N(\sigma))_{\sigma \in \Sigma^N}$ characterized by the covariance
\begin{equation}\label{e.hatH_N}  \E \Ll[ \hat H_N(\sigma) \hat H_N(\sigma') \Rr] = N \Ll| R_{N,\lambda_N}(\sigma,\sigma') \Rr| ^2.
\end{equation}
Letting $(g_{s,n,n',k,k'})$ be a family of i.i.d.\ standard Gaussian variables independent of all other randomness, we can explicitly construct
\begin{equation*}  \hat H_N(\sigma) := \frac{1}{\sqrt{N}} \sum_{s\in \sS}\sum_{n,n' \in I_{N,s}}\sum_{k,k'=1}^{\kappa_s} g_{s,n,n',k,k'}\sigma_{kn}\sigma_{k'n'}.
\end{equation*}
For every $\hat t \ge 0$,  to $H^{t,q}_N(\sigma,\alpha)$ in~\eqref{e.H^t,q_N=}, we add the quantity
\begin{equation*}  
\sqrt{2\hat t} \, \hat H_N(\si) - \hat t N \Ll| R_{N,\lambda_N}(\sigma,\sigma) \Rr| ^2
\end{equation*}
and let $\hat F_{N,\lambda_N}(t,\hat t,q)$ be the corresponding free energy, which is precisely expressed in \eqref{e.def.FN.t.hat}. The corresponding functional (see~\eqref{e.mcJ=}) becomes
\begin{equation}  \label{e.hatJ=}
\hat {\mcl J}_{\lambda_\infty,t,\hat t,q}(q',p) := \psi_{\lambda_\infty}(q') +  \la p , q-q'\ra_\cH + t \int_0^1 \xi(p) + \hat t \int_0^1 |p|^2.
\end{equation}
\begin{theorem}[Identification of overlap distribution]
\label{t.main3}
Assume that $(\lambda_N)_{N\in\N}$ converges to some $\lambda_\infty$ and that $(\hat F_{N,\lambda_N})_{N\in\N}$ converges to some $\hat f$ along some subsequence.
Suppose also that, for some $(t,q) \in \R_+\times \mcl Q^\sS_2(\kappa)$ and $\hat t > 0$, we have that $\hat f(t,\hat t, \cdot)$ is Gateaux differentiable at $q$, and that $\hat f(t,\cdot,q)$ is differentiable at $\hat t$. Then letting $p = \dr_q \hat f(t,\hat t, q)$ and $q' = q + t \nabla \xi(p) + 2 \hat t p$, we have that $(q',p)\in \mcl Q^\sS_2(\kappa)\times \mcl Q^\sS_2(\kappa)$ is a critical point of $\hat {\mcl J}_{\lambda_\infty,t,\hat t, q}$. 
Moreover, as $N$ tends to infinity along the said subsequence, the overlap $R_{N,\lambda_N}(\sigma,\sigma')$ under $\E\la\cdot\ra_{N,\lambda_N}$ converges in law to $p(U)$, where $U$ is the uniform random variable over $[0,1]$. 
\end{theorem}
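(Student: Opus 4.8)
\textbf{Proof proposal for Theorem~\ref{t.main3}.}

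The plan is to follow the scheme used for the analogous result in~\cite{HJ_critical_pts}, adapted to the multi-species setting, where the two perturbation parameters $t$ and $\hat t$ play complementary roles. First, I would record the Hamilton--Jacobi structure for the augmented free energy $\hat F_{N,\lambda_N}$: by the same computation that underlies the enriched Hamiltonian in~\eqref{e.H^t,q_N=}, the extra term $\sqrt{2\hat t}\,\hat H_N(\sigma)-\hat t N|R_{N,\lambda_N}(\sigma,\sigma)|^2$ contributes the nonlinearity $\hat t\int_0^1|p|^2$ to the functional, so that $\hat{\mcl J}_{\lambda_\infty,t,\hat t,q}$ in~\eqref{e.hatJ=} is indeed the right object, and the critical point relation becomes $q=q'-t\nabla\xi(p)-2\hat t p$ together with $p=\dr_q\psi_{\lambda_\infty}(q')$. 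Applying Theorem~\ref{t.crit_pt_id} (or rather its obvious extension to the $\hat t$-augmented model, treating $\xi+\hat t|\cdot|^2$ as the effective interaction, which is still smooth) at the point $(t,q)$ where $\hat f(t,\hat t,\cdot)$ is Gateaux differentiable yields that $p=\dr_q\hat f(t,\hat t,q)$ and $q'=q+t\nabla\xi(p)+2\hat t p$ form a critical point of $\hat{\mcl J}_{\lambda_\infty,t,\hat t,q}$, with the corresponding Hopf--Lax/critical-value identity for the limit. This gives the first assertion.

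For the overlap identification, the key is a Gaussian-concentration plus convexity-in-$\hat t$ argument. I would first show that $\hat t\mapsto\hat F_{N,\lambda_N}(t,\hat t,q)$ is convex (the map $\hat t\mapsto\hat H_N$ is a linear Gaussian interpolation, so the second derivative is a variance, hence nonnegative after the self-overlap correction), and that its derivative in $\hat t$ equals, up to lower-order terms, $\E\la|R_{N,\lambda_N}(\sigma,\sigma')|^2\ra_{N,\lambda_N}$ minus a self-overlap contribution that is $o(1)$ because the self-overlap concentrates (the species are finite in number and spins are bounded, so the usual one-species argument applies coordinatewise). Convexity and pointwise convergence $\hat F_{N,\lambda_N}(t,\cdot,q)\to\hat f(t,\cdot,q)$ then force convergence of derivatives at the differentiability point $\hat t$, giving $\E\la|R_{N,\lambda_N}|^2\ra_{N,\lambda_N}\to\int_0^1|p|^2$. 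Separately, the Gateaux differentiability of $\hat f(t,\hat t,\cdot)$ at $q$, combined with the synchronization/overlap-convergence machinery recalled in Section~\ref{s.crti_pt_and_rel_results} (the multi-species analogue of the results used in~\cite{HJ_critical_pts}, ultimately resting on the cascade structure and the Franz--Leone/Aizenman--Sims--Starr type bounds), identifies the matrix-valued limit law of $R_{N,\lambda_N}(\sigma,\sigma')$ componentwise with the candidate $p(U)$; the second-moment convergence upgrades weak subsequential limits to the claimed convergence in law to $p(U)$ and pins down the law uniquely.

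I would carry out the steps in this order: (i) derive the Hamilton--Jacobi/critical-point structure for $\hat{\mcl J}$ and invoke Theorem~\ref{t.crit_pt_id}; (ii) establish convexity of $\hat F_{N,\lambda_N}$ in $\hat t$ and the formula for its $\hat t$-derivative; (iii) prove self-overlap concentration for each species to discard the correction terms; (iv) pass convexity through the limit to get convergence of the second moment $\E\la|R_{N,\lambda_N}|^2\ra\to\int_0^1|p|^2$; (v) use Gateaux differentiability in $q$ and the cascade/synchronization results to identify the limiting law as $p(U)$. The main obstacle is step (v): in the multi-species setting the overlap is $\prod_s\R^{\kappa_s\times\kappa_s}$-valued and the synchronization argument must be run simultaneously across species whose proportions $\lambda_{\infty,s}$ may be irrational, so one cannot reduce to a single-species vector model by clearing denominators; one must instead show directly that the joint overlap law is concentrated on a curve and that this curve is $p(U)$, using that the Gateaux derivative $\dr_q\hat f$ is a genuine path in $\mcl Q^\sS_2(\kappa)$ (hence monotone and symmetric in each block) together with the positivity forced by the $\hat t>0$ regularization. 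A secondary technical point is justifying the extension of Theorem~\ref{t.crit_pt_id} to the $\hat t$-perturbed model, which requires checking that $\xi+\hat t|\cdot|^2$ satisfies whatever smoothness and (in the relevant regime) structural hypotheses were used there; since $|\cdot|^2$ is smooth and convex this should be routine, but it needs to be stated.
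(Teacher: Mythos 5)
Your overall route is the one the paper takes: Theorem~\ref{t.main3} is deduced from the multi-species analogue of \cite[Proposition~7.5]{HJ_critical_pts} (Proposition~\ref{p.cvg_overlap_pert} here), i.e.\ one treats $\xi + \hat t\,|\cdot|^2$ as the effective interaction, runs the critical-point identifications of Propositions~\ref{p.crit_pt_1} and~\ref{p.crit_pt_2} for the augmented model to obtain the critical-point relations and the identity $\partial_{\hat t}\hat f(t,\hat t,q)=\int_0^1|p|^2$, and then combines convergence of $\partial_{\hat t}\hat F_{N,\lambda_N}$ with the conditional-overlap representation to force the conditional variance of the overlap given $\alpha\wedge\alpha'$ to vanish. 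Two points in your sketch need correction. The genuine gap is in step (ii): you claim $\hat t\mapsto\hat F_{N,\lambda_N}(t,\hat t,q)$ is convex because ``$\hat t\mapsto\hat H_N$ is a linear Gaussian interpolation, so the second derivative is a variance.'' The Hamiltonian enters as $\sqrt{2\hat t}\,\hat H_N$, which is not linear in $\hat t$; after Gaussian integration by parts the first derivative is $\E\la |R_{N,\lambda_N}(\sigma,\sigma')|^2\ra$ exactly (the self-overlap terms cancel, so your step (iii) is not needed here), and the second derivative is a multi-replica expression of the form $\frac{\d}{\d\hat t}\E\la |R_{N,\lambda_N}(\sigma,\sigma')|^2\ra$ with no definite sign. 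Convexity is neither established nor needed: the correct substitute is the uniform local semi-concavity of Proposition~\ref{p.semi-concave}, extended to the $\hat t$-variable exactly as to $t$ (the $\hat t$-term has the same structure with $|\cdot|^2$ in place of $\xi$). A one-sided second-derivative bound, pointwise convergence, and differentiability of $\hat f(t,\cdot,q)$ at $\hat t>0$ then give $\partial_{\hat t}\hat F_{N,\lambda_N}\to\partial_{\hat t}\hat f$ via Proposition~\ref{p.cvg_der_semi_concave}.

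The second point is one of emphasis rather than a fatal error. Your step (v) is organized around a synchronization argument run simultaneously across species, which you single out as the main obstacle. The paper does not use synchronization at this stage: the identification of the limit law rests on the exact identity $\E\la R_{N,\lambda_N}(\sigma^1,\sigma^2)\mid\alpha^1\wedge\alpha^2\ra = p_N(\alpha^1\wedge\alpha^2)$ with $p_N=\partial_q\hat F_{N,\lambda_N}(t,\hat t,q)$ (Lemma~\ref{l.rep_cond_overlap}, a consequence of Gaussian integration by parts and the invariance of cascades), together with $p_N\to p=\partial_q\hat f(t,\hat t,q)$ from semi-concavity (Proposition~\ref{p.cvg_cond_overlap}). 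The irrationality of the proportions is absorbed earlier, in the approximation Lemmas~\ref{l.approx1} and~\ref{l.approx2} that feed Propositions~\ref{p.crit_pt_1} and~\ref{p.crit_pt_2}; once those are in place, passing from the conditional to the unconditional overlap is exactly the elementary second-moment computation of your step (iv).
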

In the above theorem, we do not require that the free energy converges as $N$ tends to infinity. Any subsequential convergence is enough. We can always extract such a convergent subsequence through a precompactness result such as Proposition~\ref{p.precompactness}.
If $\hat F_{N,\lambda_N}$ does converge along the full sequence, then an application of Theorem~\ref{t.crit_pt_rep} yields a representation of the free energy itself as
\begin{equation*}  \lim_{N \to \infty} \hat F_{N,\lambda_N}(t,\hat t,q) = \hat{\mcl J}_{\lambda_\infty,t,\hat t,q}(q',p),
\end{equation*}
for the same critical point $(q',p)$ of $\hat{\mcl J}_{\lambda_\infty,t,\hat t,q}$. Here and in the above theorem, that $(q',p)$ is a critical point of $\hat{\mcl J}_{\lambda_\infty,t,\hat t,q}$ means that
\begin{equation*}  q = q'-t \nabla \xi(p) - 2\hat t  p \quad \text{ and } \quad p = \dr_q \psi_{\lambda_\infty}(q').
\end{equation*}
As in Theorem~\ref{t.crit_pt_id}, for each $t \ge 0$, almost every $(\hat t, q) \in \R_+\times \mcl Q^\sS_{\infty,\uparrow}(\kappa)$ is a point that satisfies the assumptions of Theorem~\ref{t.main3}. In particular, the set of such points is dense in $\R_+ \times \mcl Q^\sS_2(\kappa)$.

Adapted versions of \cite[Propositions~1.5 and 1.6]{HJ_critical_pts} in the introduction of \cite{HJ_critical_pts} are adapted into Propositions~\ref{p.uniq_crit_pit_high_temp} and~\ref{p.crit_pt_stable}.

Lastly, we come back to the issue with adapting~\cite[Theorem~1.1]{HJ_critical_pts}. The missing piece is summarized as follows.

\begin{claim}[Free energy upper bound]\label{c.free_energy_upper_bd}
Assume that $(\lambda_N)_{N\in\N}$ converges to some $\lambda_\infty$. For every $(t,q)\in \R_+\times \mcl Q^\sS_2(\kappa)$, we have
\begin{align}\label{e.liminfF_N>}
    \liminf_{N\to\infty} \bar F_{N,\lambda_N}(t,q)\geq f(t,q)
\end{align}
where $f: \R_+\times \mcl Q^\sS_2(\kappa) \to\R$ is the solution to the equation
\begin{align}\label{e.hj}
    \begin{cases}
        \partial_t f- \int_0^1 \xi(\partial_q f)=0,\quad &\text{on $\R_+\times \mcl Q^\sS_2(\kappa)$},
        \\
        f(0,\cdot) = \psi_{\lambda_\infty}, &\text{on $ \mcl Q^\sS_2(\kappa)$}.
    \end{cases}
\end{align}
Moreover, if $\xi$ is a convex function on $\prod_{s\in\sS}\S^{\kappa_s}_+$, then $f$ admits the Hopf--Lax representation at every $(t,q)\in \R_+\times \mcl Q^\sS_2(\kappa)$:
\begin{align}\label{e.f=hopf-lax}
    f(t,q) = \sup_{q'\in q+ \mcl Q^\sS_\infty(\kappa)}\inf_{p\in \mcl Q^\sS_\infty(\kappa)}\mcl J_{\lambda_\infty,t,q}\Ll(q',p\Rr).
\end{align}
\end{claim}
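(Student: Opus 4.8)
I would separate the statement into two parts: the assertions about $f$ alone (that it is the solution of~\eqref{e.hj}, and the Hopf--Lax formula~\eqref{e.f=hopf-lax} when $\xi$ is convex), which are purely analytic; and the lower bound~\eqref{e.liminfF_N>}, which carries the probabilistic content. For the analytic part, Lemma~\ref{l.reg_initial} gives that $\psi_{\lambda_\infty}$ is Lipschitz, increasing, and convex on $\mcl Q^\sS_2(\kappa)$, and together with the smoothness of $\xi$ this places us in the setting where~\eqref{e.hj} has a unique Lipschitz viscosity solution obeying a comparison principle; I would simply quote these facts in the form adapted in Section~\ref{s.crti_pt_and_rel_results}. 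When $\xi$ is convex,~\eqref{e.f=hopf-lax} is the standard variational representation of such solutions, and to match it with~\eqref{e.mcJ=} one checks that $p\mapsto\mcl J_{\lambda_\infty,t,q}(q',p)$ is concave, $q'\mapsto\mcl J_{\lambda_\infty,t,q}(q',p)$ is convex, and $\nabla\xi$ is $\prod_{s\in\sS}\S^{\kappa_s}_+$-valued on the relevant cone, so that $q+t\nabla\xi(p)\in q+\mcl Q^\sS_\infty(\kappa)$ for $p\in\mcl Q^\sS_\infty(\kappa)$; then the inner optimum in~\eqref{e.f=hopf-lax} is attained along $q'=q+t\nabla\xi(p)$, which identifies its right-hand side with the expression already appearing in Theorem~\ref{t.convex}.

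For the lower bound, the plan is to show that $\bar F_{N,\lambda_N}$ is an approximate supersolution of~\eqref{e.hj} and then to apply the comparison principle. By Proposition~\ref{p.F_N_smooth}, $\bar F_{N,\lambda_N}$ is jointly Lipschitz in $(t,q)$ with constants independent of $N$, so $\partial_t\bar F_{N,\lambda_N}$ and the Fr\'echet derivative $\partial_q\bar F_{N,\lambda_N}$ exist almost everywhere, and $\bar F_{N,\lambda_N}(0,\cdot)=\psi_{\lambda_N}$ by Lemma~\ref{l.initial_condition}. Gaussian integration by parts applied to the $\sqrt{2t}H_N$-term of~\eqref{e.H^t,q_N=}, using the covariance~\eqref{e.EH_N(sigma)H_N(sigma')=} and the self-overlap correction, yields $\partial_t\bar F_{N,\lambda_N}(t,q)=\E\la\xi(R_{N,\lambda_N}(\sigma,\sigma'))\ra_{N,\lambda_N}$ with $\sigma'$ an independent replica; and the analysis of the cascade field $W^q_N$, following \cite[Section~4]{HJ_critical_pts}, identifies $\partial_q\bar F_{N,\lambda_N}(t,q)$, disintegrated over the cascade overlap level $v\in[0,1]$, with the conditional overlap $v\mapsto\E\la R_{N,\lambda_N}(\sigma,\sigma')\ra_{N,\lambda_N}$ at $\{\alpha\wedge\alpha'=v\}$, up to an $o(1)$ error. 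The multi-species structure enters both computations only through the block bookkeeping of the species and the proportions $\lambda_{N,s}$, which must be tracked but produces no new phenomenon.

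Combining these identities I would obtain $\partial_t\bar F_{N,\lambda_N}-\int_0^1\xi(\partial_q\bar F_{N,\lambda_N})\ge-\varepsilon_N$ with $\varepsilon_N\to0$: when $\xi$ is convex this is immediate from Jensen's inequality along the cascade disintegration, and for general $\xi$ one instead routes the argument through the convexity of the initial datum $\psi_{\lambda_\infty}$ — equivalently, a Hopf-type representation of $f$ — exactly as in the vector-spin case of \cite{HJ_critical_pts}. Finally, since $\lambda_N\to\lambda_\infty$ and $\lambda\mapsto\psi_\lambda$ is continuous (Lemma~\ref{l.reg_initial}), the solution of~\eqref{e.hj} with initial datum $\psi_{\lambda_N}$ converges locally uniformly to $f$ by the comparison principle; applying the comparison principle to the approximate supersolution $\bar F_{N,\lambda_N}$ against this solution and letting $N\to\infty$ yields~\eqref{e.liminfF_N>}.

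The main obstacle is this supersolution inequality in the genuinely non-convex case: neither Jensen's inequality nor the classical Guerra interpolation is available, and the argument must pass through the convexity of $\psi_{\lambda_\infty}$ rather than of $\xi$; this is exactly where the adaptation of \cite{HJ_critical_pts} is delicate, and where the possibly irrational limiting proportions and the species-wise matrix overlaps force the ``more careful treatment'' mentioned in the introduction (note that one cannot simply replace $\psi_{\lambda_N}$ by $\psi_{\lambda_\infty}$ in a finite model — the point is handled only through the continuity of $\psi_\lambda$ and the comparison principle in the limit). A secondary but essential point is that all the regularity estimates (Proposition~\ref{p.F_N_smooth}) and the error $\varepsilon_N$ must be uniform in $N$, since otherwise neither the approximate-supersolution property nor the comparison argument survives the limit.
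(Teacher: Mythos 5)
First, a point of comparison: the paper does not prove this statement at all. It is deliberately formulated as an unproven \emph{claim}, with the remark that \eqref{e.liminfF_N>} and \eqref{e.f=hopf-lax} should follow by adapting \cite[Theorem~3.4]{mourrat2023free} and \cite[Corollary~4.14]{chen2022hamilton} to the multi-species setting, the modifications being ``straightforward but tedious''. So there is no in-paper proof to measure your argument against; I can only assess it on its own terms, and as written it has a genuine gap at its central step.

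The gap is the approximate-supersolution inequality $\partial_t\bar F_{N,\lambda_N}-\int_0^1\xi(\partial_q\bar F_{N,\lambda_N})\ge-\varepsilon_N$ in the non-convex case. From \eqref{e.def.der.FN} and the invariance of the cascade, one has the exact identity
\begin{align*}
\partial_t\bar F_{N,\lambda_N}(t,q)-\int_0^1\xi\bigl(\partial_q\bar F_{N,\lambda_N}(t,q,u)\bigr)\,\d u
=\E\bigl\la\xi(R_{N,\lambda_N}(\sigma,\sigma'))\bigr\ra_{N,\lambda_N}-\int_0^1\xi\bigl(\E\la R_{N,\lambda_N}(\sigma,\sigma')\mid\alpha\wedge\alpha'=u\ra\bigr)\,\d u ,
\end{align*}
a conditional Jensen gap (note there is no $o(1)$ error in the identification of $\partial_q\bar F_{N,\lambda_N}$ with the conditional overlap; it is exact, cf.\ Lemma~\ref{l.rep_cond_overlap}). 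When $\xi$ is convex this gap is nonnegative and your plan (exact supersolution plus comparison, with the initial data $\psi_{\lambda_N}\to\psi_{\lambda_\infty}$) is the standard route, modulo the fact that the comparison principle on the cone $\mcl Q^\sS_2(\kappa)$ is itself a nontrivial input from~\cite{chen2022hamilton}. When $\xi$ is not convex, the gap has no sign and is not known to vanish; controlling it is essentially equivalent to the open convergence conjecture. Your proposed escape --- routing the argument ``through the convexity of the initial datum $\psi_{\lambda_\infty}$, equivalently a Hopf-type representation of $f$'' --- rests on a false premise: Lemma~\ref{l.reg_initial} gives Lipschitz continuity and differentiability of $\psi_{\lambda_\infty}$, not convexity, and the paper states explicitly that $\psi_{\lambda_\infty}$ does not appear to be convex in general, which is precisely why the Hopf formula~\eqref{e.hopf} is ruled out as a candidate for the limit. (Relatedly, $q'\mapsto\mcl J_{\lambda_\infty,t,q}(q',p)$ is convex only if $\psi_{\lambda_\infty}$ is, and $p\mapsto\mcl J_{\lambda_\infty,t,q}(q',p)$ is convex, not concave, when $\xi$ is convex; neither is needed for the $\sup\inf$ computation in \eqref{e.f=hopf-lax}, which is just a pointwise Legendre transform in $p$.) The actual proof of the vector-spin lower bound in \cite{mourrat2021nonconvex,mourrat2023free} does not exhibit $\bar F_N$ as an approximate supersolution; it is a different, multi-step interpolation argument, and any honest adaptation to the multi-species case would have to reproduce that scheme rather than the Jensen-plus-comparison shortcut you describe.
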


Parts~\eqref{e.liminfF_N>} and~\eqref{e.f=hopf-lax} of this claim respectively adapt~\cite[Theorem~3.4]{mourrat2023free} and~\cite[Corollary~4.14]{chen2022hamilton} (for vector spins) to the multi-species setting. The modification to the proofs should be straightforward but tedious. Hence, we only formulate this claim here without proving it.
Given this claim, we can recover the full version of \cite[Theorem~1.1]{HJ_critical_pts}.

\begin{theorem}\label{t.convex_2}
Assume that Claim~\ref{c.free_energy_upper_bd} is valid.
If $\xi$ is a convex function on $\prod_{s\in\sS}\S^{\kappa_s}_+$ and $(\lambda_N)_{N\in\N}$ converges to some $\lambda_\infty$, then for every $(t,q)\in \R_+\times \mcl Q^\sS_2(\kappa)$, we have
\begin{align}\label{e.t.convex_2}
    \lim_{N\to\infty} \bar F_{N,\lambda_N}(t,q) = \sup_{q'\in q+ \mcl Q^\sS_\infty(\kappa)}\inf_{p\in \mcl Q^\sS_\infty(\kappa)}\mcl J_{\lambda_\infty,t,q}\Ll(q',p\Rr) =f(t,q)
\end{align}
where $f$ is the solution to~\eqref{e.hj}.
\end{theorem}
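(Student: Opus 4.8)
The plan is to combine Theorem~\ref{t.convex} with Claim~\ref{c.free_energy_upper_bd} and a general fact about the Hopf--Lax/Hopf formulas for convex Hamiltonians. Since Theorem~\ref{t.convex} is assumed (it is proved earlier in the paper via the Parisi formula, Proposition~\ref{p.mp-parisi}), we already have
\begin{equation*}
\lim_{N\to\infty}\bar F_{N,\lambda_N}(t,q) = \inf_{p\in\mcl Q^\sS_\infty(\kappa)}\mcl J_{\lambda_\infty,t,q}\Ll(q+t\nabla\xi(p),p\Rr)
\end{equation*}
for every $(t,q)\in\R_+\times\mcl Q^\sS_2(\kappa)$. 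The goal is therefore to identify this quantity with both the double variational expression in~\eqref{e.f=hopf-lax} and the viscosity solution $f$ of the Hamilton--Jacobi equation~\eqref{e.hj}. The cleanest route is: first show the limit equals $f(t,q)$, then invoke the Hopf--Lax representation part of Claim~\ref{c.free_energy_upper_bd} to get the middle term for free.

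For the first identification, I would argue by two inequalities. The lower bound $\liminf_N \bar F_{N,\lambda_N}(t,q)\ge f(t,q)$ is exactly the content of~\eqref{e.liminfF_N>} in Claim~\ref{c.free_energy_upper_bd}, so nothing more is needed there. For the matching upper bound $\limsup_N \bar F_{N,\lambda_N}(t,q)\le f(t,q)$, I would use that the right-hand side of~\eqref{e.t.convex} is already a known upper bound for $\bar F_{N,\lambda_N}(t,q)$ at finite $N$ (the Parisi-type upper bound / Guerra interpolation gives $\bar F_{N,\lambda_N}(t,q)\le \mcl J_{\lambda_N,t,q}(q+t\nabla\xi(p),p)$ for each $p$, hence after passing to the limit one gets $\le \inf_p \mcl J_{\lambda_\infty,t,q}(q+t\nabla\xi(p),p)$), together with the fact that when $\xi$ is convex this infimum over $p$ of $\mcl J_{\lambda_\infty,t,q}(q+t\nabla\xi(p),p)$ coincides with the viscosity solution $f(t,q)$ of~\eqref{e.hj}. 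This last coincidence is a purely PDE-theoretic statement about the Hopf formula for convex Hamiltonians $H(p)=\int_0^1\xi(p)$ with convex--in--the--slope initial data: one checks that $(q',p)=(q+t\nabla\xi(p),p)$ is the critical point of $\mcl J$ in the $p$-variable and that the envelope $\inf_p \mcl J(q+t\nabla\xi(p),p)$ solves the equation, which for convex $\xi$ is the standard fact that the Hopf (Lax--Oleinik) formula produces the solution. Alternatively, since the paper has already developed uniqueness for~\eqref{e.hj}, it suffices to verify that $(t,q)\mapsto\inf_p\mcl J_{\lambda_\infty,t,q}(q+t\nabla\xi(p),p)$ is a (viscosity) solution with the right initial condition $\psi_{\lambda_\infty}$, and then invoke uniqueness.

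For the second identification, namely
\begin{equation*}
f(t,q)=\sup_{q'\in q+\mcl Q^\sS_\infty(\kappa)}\inf_{p\in\mcl Q^\sS_\infty(\kappa)}\mcl J_{\lambda_\infty,t,q}(q',p),
\end{equation*}
nothing new is required: this is precisely the Hopf--Lax representation asserted in the last sentence of Claim~\ref{c.free_energy_upper_bd} (equation~\eqref{e.f=hopf-lax}), which we are entitled to use since the theorem is stated under the hypothesis that Claim~\ref{c.free_energy_upper_bd} is valid. So the proof reduces to assembling these three ingredients in order: (i) the Parisi upper bound and Theorem~\ref{t.convex} give $\lim_N\bar F_{N,\lambda_N}(t,q)=\inf_p\mcl J_{\lambda_\infty,t,q}(q+t\nabla\xi(p),p)$; (ii) for convex $\xi$ this equals $f(t,q)$; (iii) Claim~\ref{c.free_energy_upper_bd} rewrites $f(t,q)$ as the double-variational (Hopf--Lax) expression.

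The main obstacle I anticipate is step (ii): making rigorous the claim that the $p$-infimum of $\mcl J_{\lambda_\infty,t,q}(q+t\nabla\xi(p),p)$ equals the solution of the Hamilton--Jacobi equation in the infinite-dimensional, ordered-path setting of $\mcl Q^\sS_2(\kappa)$. In finite dimensions with smooth convex $\xi$ this is textbook, but here $\psi_{\lambda_\infty}$ is only known to be differentiable in the Fr\'echet sense (Lemma~\ref{l.reg_initial}), the state space is a cone of matrix-valued increasing paths, and one must be careful that the natural candidate slope $p$ stays in $\mcl Q^\sS_\infty(\kappa)$. The safest workaround is to avoid reproving the Hopf formula from scratch and instead route everything through the already-cited results: \cite[Corollary~4.14]{chen2022hamilton} (adapted as~\eqref{e.f=hopf-lax}) identifies $f$ with the double variational formula, and a soft envelope/monotonicity argument shows $\sup_{q'}\inf_p\mcl J=\inf_p\mcl J(q+t\nabla\xi(p),p)$ when $\xi$ is convex — the inner infimum over $p$ at $q'=q+t\nabla\xi(p)$ is attained at the critical point, and convexity of $\xi$ guarantees the outer supremum over $q'$ is realized along the curve $q'=q+t\nabla\xi(p)$. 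Once that identity is in hand, combining with Theorem~\ref{t.convex} closes the argument.
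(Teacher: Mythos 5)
Your overall architecture is close to the paper's --- combine the Parisi formula with Claim~\ref{c.free_energy_upper_bd} --- but several inequalities point the wrong way and the one genuinely new computation is missing. First, with the sign convention $\bar F_{N,\lambda_N}=-\frac1N\E\log(\cdots)$, the Guerra-type bound reads $\bar F_{N,\lambda_N}(t,q)\ge\sP_{\lambda_N,t,q}(p)$ for each $p$, and the variational formula is a \emph{supremum} over $p$: Proposition~\ref{p.mp-parisi} together with \eqref{e.rel_parisi_mcJ} gives $\lim_N\bar F_{N,\lambda_N}(t,q)=\sup_p\mcl J_{\lambda_\infty,t,q}(q+t\nabla\xi(p),p)$. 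The $\inf$ you quote from \eqref{e.t.convex} is inconsistent with \eqref{e.mp_parisi} and must be read as a $\sup$; your claimed finite-$N$ upper bound $\bar F_N\le\mcl J_{\lambda_N,t,q}(q+t\nabla\xi(p),p)$ and the resulting ``$\le\inf_p$'' after passing to the limit are therefore reversed.

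Second, the identity you lean on, $\sup_{q'}\inf_p\mcl J=\inf_p\mcl J(q+t\nabla\xi(p),p)$, is false as written, and even the corrected version $\sup_{q'}\inf_p\mcl J(q',p)=\sup_p\mcl J(q+t\nabla\xi(p),p)$ does not follow from a ``soft envelope'' argument: restricting $q'$ to the curve $q'=q+t\nabla\xi(p)$ only yields ``$\ge$'', and your assertion that convexity forces the outer supremum to be attained on that curve is unsubstantiated. The paper supplies exactly the two missing pieces: (a) the Legendre-duality identity \eqref{e.inttheta=}, $\int_0^1\theta(p)=\sup_{p'}\{\la\nabla\xi(p),p'\ra_{L^2}-\int_0^1\xi(p')\}$ for convex $\xi$, which rewrites $\sP_{\lambda_\infty,t,q}(p)$ as $\inf_{p'}\mcl J_{\lambda_\infty,t,q}(q+t\nabla\xi(p),p')$ and hence gives $\limsup_N\bar F_{N,\lambda_N}(t,q)\le\sup_{q'}\inf_p\mcl J_{\lambda_\infty,t,q}(q',p)$; and (b) the reverse bound $\liminf_N\bar F_{N,\lambda_N}(t,q)\ge f(t,q)=\sup_{q'}\inf_p\mcl J_{\lambda_\infty,t,q}(q',p)$ taken directly from \eqref{e.liminfF_N>} and \eqref{e.f=hopf-lax}. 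Squeezing between (a) and (b) closes the proof without reproving any Hopf formula or invoking uniqueness for \eqref{e.hj}; your step (ii), which routes through viscosity-solution uniqueness, is both unproved in this infinite-dimensional cone setting and unnecessary.
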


Convexity is needed here to identify the limit of free energy to be the solution of~\eqref{e.hj}. Without convexity and assuming the convergence of free energy, we can only show that the limit satisfies~\eqref{e.hj} at every differentiability point (see~\eqref{e.d_tf=intxi(d_qf)} in Proposition~\ref{p.crit_pt_2}). This is not enough to identify the limit because we do not have the uniqueness of solutions in the sense of satisfying the equation on a dense set.

\subsection{Related works}

This work concerns the limit of free energy. In physics, the replica method has been developed and proven to be powerful \cite{MPV, parisi1979infinite, parisi1980order, parisi1980sequence, parisi1983order}.
As a special case of the multi-species setting, the bipartite spin glass model has been considered in \cite{fyo1, fyo2, hartnett2018replica, korenblit1985spin}, where formulas for the limit free energy in terms of functionals similar to $\mcl J_{\lambda_\infty,t,q}$ in~\eqref{e.mcJ=} has been obtained using the replica method.

Another perspective is based on Hamilton--Jacobi equations, which is explored in \cite{mourrat2021nonconvex, mourrat2023free}; see also \cite{HJbook}. The functional $\mcl J_{\lambda_\infty,t,q}$ is closely related to the characteristic lines of the Hamilton--Jacobi equation~\eqref{e.hj}. Our main result can be restated as that the limit of free energy, if exists, attains its value along a certain characteristic line. The conjecture of \cite{mourrat2021nonconvex, mourrat2023free} (see also \cite[Question~6.11]{HJbook}) is that, in the general non-convex case, the free energy $\bar F_{N,\lambda_N}$ converges to the unique solution of~\eqref{e.hj}

In the convex case, much is known. The limit of free energy in the Sherrington--Kirkpatrick models has been mathematically identified as the Parisi formula in \cite{gue03, Tpaper} (also see~\cite{Tbook1, Tbook2}), which is then extended to more general scalar models \cite{pan.aom, pan, pan14} using an argument based on the ultrametricity of the asymptotic Gibbs measure (see also~\cite{aizenman1998stability, ghirlanda1998general, guerra1996overlap}). Then, this is extended further to multi-species and vector spin glasses in \cite{barcon, pan.multi, pan.potts, pan.vec} under the assumption that $\xi$ is convex on the whole space. Here, Theorem~\ref{t.convex} (see also Proposition~\ref{p.mp-parisi}) encompasses these results with a weaker assumption that $\xi$ is convex on positive semi-definite matrices. For spherical spins, the limit of free energy has been identified for scalar models~\cite{chen2013aizenman, talagrand2006free} and multi-type models~\cite{bates2022crisanti, bates2022free, ko2020free, panchenko2007overlap}.

In the context of spin glasses, the connection to the Hamilton--Jacobi equation was first explored in \cite{abarra,barra1} and further in various settings in~\cite{barra2,abarra,barramulti,barra2014quantum}. 
The Parisi formula in the scalar case was rewritten as the Hopf--Lax formula for the solution of an equation of the form in~\eqref{e.hj} in \cite{mourrat2022parisi} (later extended in~\cite{mourrat2020extending}).
Here, the equation is infinite-dimensional and its well-posedness together with variational representations was proved in~\cite{chen2022hamilton}. For non-convex models, the best result so far is the lower bound as in~\eqref{e.liminfF_N>} established in \cite{mourrat2021nonconvex, mourrat2023free}. In particular, when $\xi$ is not convex, the Hopf--Lax formula as in~\eqref{e.t.convex_2} is false (see~\cite[Section~6]{mourrat2021nonconvex}). Also, since $\psi_{\lambda_\infty}$ does not seem to be convex in general, another possibility, the Hopf formula
\begin{align}\label{e.hopf}
    \sup_{p\in \mcl Q^\sS_\infty(\kappa)}\inf_{q'\in \mcl Q^\sS_\infty(\kappa)}\mcl J_{\lambda_\infty,t,q}\Ll(q',p\Rr)
\end{align}
cannot be the limit of free energy.

For some non-convex models, other partial bounds for the limit free energy were obtained in \cite{alberici2020annealing, alberici2021deep}. Some works also focus on the high-temperature regime, including \cite{barramulti, dey2021fluctuation, genovese2023minimax}.
We also mention that additional symmetry in the model can lead to simplification \cite{bates2023parisi, chen2023parisi, issa2024existence}. 

Formulas of the form in~\eqref{e.hopf} appear in some problems from high-dimensional statistical inference~\cite{barbier2016, barbier2019adaptive, barbier2017layered, chen2022statistical, HB1, HBJ, kadmon2018statistical,   lelarge2019fundamental, lesieur2017statistical, luneau2020high, mayya2019mutualIEEE, 
miolane2017fundamental, mourrat2020hamilton, mourrat2021hamilton, reeves2020information, reeves2019geometryIEEE} (see also~\cite[Chapter~4]{HJbook} for a Hamilton--Jacobi approach). In certain problems on sparse graphs \cite{dominguez2022infinite, dominguez2022mutual, kireeva2023breakdown}, an issue similar to the non-convexity of $\psi_{\lambda_\infty}$ occurs and thus invalidates the Hopf formula.
More broadly, connections between mean-field models and Hamilton--Jacobi equations have been noticed in~\cite{bra83, new86}. We refer to a recent survey \cite{bauerschmidt2023stochastic} for related topics.

In certain spherical models with multiple types and non-convex $\xi$, the limit of free energy, if exists, was explicitly identified in \cite{subag2021multi2, subag2021free, subag2021multi1, subag2021multi3}. Also see the related work \cite{baik2020free} on the bipartite spherical Sherrington–Kirkpatrick model. A more geometric analysis of the energy landscape can be found in \cite{benarous2022exponential, huang2023strong, kivimae2023ground, mckenna2021complexity}. For scalar models, also see \cite{auffinger2013complexity, auffinger2013random, fyodorov2004complexity, subag2017complexity}.

\subsection{Outline of the paper}

In Section~\ref{s.prelim}, we recall the Gaussian interpolation technique and use it to show two results: the continuity of $\bar F_{N,\lambda_N}$ in $\lambda_N$ and the estimate on the discrepancy between $\bar F_{N,\lambda_N}$ and $\bar F_{N+\M, \lambda_{N+\M}}$. 

In Section~\ref{s.rel_to_vector_spin}, we relate the multi-species spin glass model with rational species proportions to some vector spin glass model. In particular, we show that, if $(\lambda_{N,s})_{N\in\N}$ converges to a rational number for every $s\in\sS$, then the free energy of the multi-species model is asymptotically equal to that of a vector spin model. In this case, results from~\cite{HJ_critical_pts} for vector spin glasses are directly applicable. 
It remains to consider the case where $(\lambda_{N,s})_{N\in\N}$ converges to an irrational number for some $s\in\sS$, which we refer to as the irrational case. 

In preparation for treating the irrational case, in Section~\ref{s.analytic_properties}, we collect analytic properties of the free energy $\bar F_{N,\lambda_N}$, its limit (if exists), and the initial condition $\bar F_{N,\lambda_N}(0,\cdot)$. These are results adapted from those in~\cite[Sections~3 and 5]{HJ_critical_pts}.

To handle the general case including the irrational case, we need to redo some of the cavity computation in Section~\ref{s.cavity_and_proofs}. After importing cavity computation results for the rational case from~\cite[Section~6]{HJ_critical_pts}, we need an additional approximation procedure to obtain the results in the general case. This allows us to prove corresponding results from~\cite[Section~7]{HJ_critical_pts} adapted to the current setting. 
In particular, we prove Theorems~\ref{t.crit_pt_rep},~\ref{t.crit_pt_id}, and~\ref{t.main3} in Section~\ref{s.crti_pt_and_rel_results}. 

In Section~\ref{s.convex}, we apply those results to the convex setting, namely, the one where $\xi$ is convex. We recover corresponding results from~\cite[Section~8]{HJ_critical_pts} and prove Theorems~\ref{t.convex} and~\ref{t.convex_2},

\section{Gaussian interpolation and two estimates}\label{s.prelim}

In this section, we record two estimates on the continuity of the free energy in terms of $\lambda_N$ in Lemma~\ref{l.lambda_continuity} and the discrepancy of free energy with different configuration sizes in Lemma~\ref{l.NF_N-(N+M)F_(N+M)}. 
To prove them, we first recall the Gaussian interpolation technique stated as in Lemma~\ref{l.Gaussian interpolation} and Corollary~\ref{c.Gaussian_interpolation}, which will also be needed later.

\begin{lemma}[Gaussian interpolation technique]\label{l.Gaussian interpolation}
Let $\mathfrak{P}$ be a random probability measure on some Polish space $\mathcal{X}$. Suppose that, conditioned on $\mathfrak{P}$, there are two independent centered Gaussian process $(\mathbf{G}_i(\mathbf{x}))_{\mathbf{x}\in \supp \mathfrak{P}}$ and two bounded deterministic function $\mathbf{D}_i:\mathcal{X}\to\R$ for $i\in\{0,1\}$. 
Also, assume that there are deterministic functions $\mathbf{V}_i:\mathcal{X}\times \mathcal{X}\to \R$ for $i\in\{0,1\}$ such that, conditioned on $\mathfrak{P}$,
\begin{align*}
    \mathbf{V}_i(\mathbf{x},\mathbf{x}') = \E \mathbf{G}(\mathbf{x})\mathbf{G}(\mathbf{x}'),\quad\forall \mathbf{x}, \mathbf{x}'\in \supp\mathfrak{P}.
\end{align*}
For $r\in[0,1]$, define $\mathbf{G}_r=\sqrt{1-r}\mathbf{G}_0 + \sqrt{r}\mathbf{G}_1$, $\mathbf{D}_r=(1-r)\mathbf{D}_0 + r\mathbf{D}_1$, and
\begin{align*}
    \varphi(r) = -\E \log \int_\mathcal{X} \exp\Ll(\sqrt{2}\mathbf{G}_r(\mathbf{x}) + \mathbf{D}_r(\mathbf{x})\Rr)\d\mathfrak{P}(\mathbf{x})
\end{align*}
where $\E$ first averages over all Gaussian randomness (conditioned on $\mathfrak{P}$) and then that of $\mathfrak{P}$. Also, take the Gibbs measure $\la\cdot\ra_r \propto \exp\Ll(\sqrt{2}\mathbf{G}_r(\mathbf{x}) + \mathbf{D}_r(\mathbf{x})\Rr)\d\mathfrak{P}$. Then, we have
\begin{align*}
    \varphi(1)-\varphi(0)=\int_0^1\E\la \mathbf{V}_1(\mathbf{x},\mathbf{x}')-\mathbf{V}_0(\mathbf{x},\mathbf{x}')-\mathbf{V}_1(\mathbf{x},\mathbf{x})+\mathbf{V}_0(\mathbf{x},\mathbf{x})-\mathbf{D}_1(\mathbf{x})+\mathbf{D}_0(\mathbf{x})\ra_r\d r
\end{align*}
where $\mathbf{x}'$ is an independent copy of $\mathbf{x}$ under $\la\cdot\ra_r$.
\end{lemma}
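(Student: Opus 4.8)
The plan is to prove this by the standard Gaussian interpolation (Guerra-type) computation, differentiating $\varphi(r)$ in $r$ and then integrating from $0$ to $1$. First I would fix a realization of $\mathfrak{P}$ and work conditionally, writing $Z_r = \int_\mathcal{X} \exp(\sqrt{2}\mathbf{G}_r(\mathbf{x}) + \mathbf{D}_r(\mathbf{x}))\,\d\mathfrak{P}(\mathbf{x})$ so that $\varphi(r) = -\E\log Z_r$. A direct (formal) computation gives
\begin{align*}
    \varphi'(r) = -\E\Big\langle \sqrt{2}\,\partial_r \mathbf{G}_r(\mathbf{x}) + \partial_r \mathbf{D}_r(\mathbf{x})\Big\rangle_r,
\end{align*}
where $\partial_r \mathbf{G}_r = \tfrac{1}{2}\big((1-r)^{-1/2}\mathbf{G}_1 \cdot 0 + \dots\big)$ — more precisely $\partial_r \mathbf{G}_r = -\tfrac{1}{2\sqrt{1-r}}\mathbf{G}_0 + \tfrac{1}{2\sqrt{r}}\mathbf{G}_1$ and $\partial_r \mathbf{D}_r = \mathbf{D}_1 - \mathbf{D}_0$. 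The $\mathbf{D}$-part is immediate: it contributes $-\E\langle \mathbf{D}_1(\mathbf{x}) - \mathbf{D}_0(\mathbf{x})\rangle_r$, matching the $-\mathbf{D}_1 + \mathbf{D}_0$ terms in the claimed formula after integrating in $r$.

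The key step is the Gaussian part, handled by Gaussian integration by parts (Stein's lemma): for a centered Gaussian family, $\E[g_i\, F(g)] = \sum_j \E[g_i g_j]\,\E[\partial_{g_j} F]$. Applying this to $\E\langle \mathbf{G}_0(\mathbf{x})\rangle_r$ and $\E\langle \mathbf{G}_1(\mathbf{x})\rangle_r$ — treating $\langle \mathbf{G}_i(\mathbf{x})\rangle_r$ as a functional of the Gaussian fields and differentiating the Gibbs average, which brings down another factor of $\sqrt{2}\,\mathbf{G}_r$ — produces terms of the form $\E\langle \mathbf{V}_i(\mathbf{x},\mathbf{x}) \rangle_r$ (the diagonal, ``self-overlap'' contribution from differentiating the partition function in the denominator) and $-\E\langle \mathbf{V}_i(\mathbf{x},\mathbf{x}')\rangle_r$ (the off-diagonal contribution from differentiating $Z_r$, with $\mathbf{x}'$ an independent replica). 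The factors $\sqrt{1-r}$ and $\sqrt{r}$ from $\mathbf{G}_r$, combined with the $\tfrac{1}{2\sqrt{1-r}}$ and $\tfrac{1}{2\sqrt{r}}$ from $\partial_r\mathbf{G}_r$, cancel cleanly, and the overall factor $2$ from $(\sqrt{2})^2$ cancels the $\tfrac{1}{2}$'s, so that collecting all contributions yields exactly
\begin{align*}
    \varphi'(r) = \E\big\langle \mathbf{V}_1(\mathbf{x},\mathbf{x}') - \mathbf{V}_0(\mathbf{x},\mathbf{x}') - \mathbf{V}_1(\mathbf{x},\mathbf{x}) + \mathbf{V}_0(\mathbf{x},\mathbf{x}) - \mathbf{D}_1(\mathbf{x}) + \mathbf{D}_0(\mathbf{x})\big\rangle_r.
\end{align*}
Integrating in $r$ over $[0,1]$ then gives the stated identity for $\varphi(1) - \varphi(0)$.

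The main obstacle — and the only genuinely nontrivial point — is justifying the formal manipulations: differentiation under the expectation and the $\log$, and the application of Gaussian integration by parts, when $\supp\mathfrak{P}$ may be an infinite (even uncountable) subset of a Polish space, so that $(\mathbf{G}_i(\mathbf{x}))_{\mathbf{x}\in\supp\mathfrak{P}}$ is not a finite Gaussian vector. The clean way to handle this is a standard approximation/regularization argument: the boundedness of $\mathbf{D}_i$ and of the covariances $\mathbf{V}_i$ (which follows since $\mathbf{V}_i(\mathbf{x},\mathbf{x}') = \E \mathbf{G}_i(\mathbf{x})\mathbf{G}_i(\mathbf{x}')$ and, by Cauchy–Schwarz, $|\mathbf{V}_i(\mathbf{x},\mathbf{x}')| \le \mathbf{V}_i(\mathbf{x},\mathbf{x})^{1/2}\mathbf{V}_i(\mathbf{x}',\mathbf{x}')^{1/2}$, with the diagonal bounded in the intended applications) gives uniform integrability and the dominated-convergence bounds needed to differentiate under $\E\log\int$; and the infinite-dimensional Gaussian integration-by-parts formula is itself standard (it appears in the same form in the references cited in the excerpt, e.g.\ \cite{HJbook, pan}). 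One also notes the order of integration in $\E$ (Gaussian randomness first, then $\mathfrak{P}$) is exactly what makes the conditional computation legitimate and avoids measurability issues, as flagged in the discussion around \eqref{e.F_N(t,q)=}. Since this is the precise analogue of the vector-spin interpolation lemma in \cite{HJ_critical_pts}, I would cite that reference for the technical details and present here only the computation sketched above.
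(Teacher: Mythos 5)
Your proposal is correct and follows exactly the paper's argument: differentiate $\varphi$ in $r$ (the derivative being $-\E\la (2r)^{-1/2}\mathbf{G}_1(\mathbf{x}) - (2-2r)^{-1/2}\mathbf{G}_0(\mathbf{x}) + \mathbf{D}_1(\mathbf{x})-\mathbf{D}_0(\mathbf{x})\ra_r$), apply Gaussian integration by parts citing the standard reference, and integrate over $[0,1]$; your bookkeeping of the diagonal and off-diagonal covariance terms and the cancellation of the $\sqrt{r}$, $\sqrt{1-r}$ factors matches the paper's (two-line) computation. The additional remarks on justifying the formal manipulations in the infinite-dimensional setting are consistent with how the paper handles this (by deferring to the cited references).
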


We have an immediate corollary.
\begin{corollary}[Interpolation with self-overlap correction]\label{c.Gaussian_interpolation}
Under the same setting of Lemma~\ref{l.Gaussian interpolation}, we further assume that $\mathbf{D}_i(\mathbf{x})=-\mathbf{V}_i(\mathbf{x},\mathbf{x})$ for every $\mathbf{x}\in\mathcal{X}$ and $i\in\{0,1\}$. Then, we have
\begin{align*}
    \varphi(1)-\varphi(0)=\int_0^1\E\la \mathbf{V}_1(\mathbf{x},\mathbf{x}')-\mathbf{V}_0(\mathbf{x},\mathbf{x}')\ra_r\d r
\end{align*}
\end{corollary}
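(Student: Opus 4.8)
The plan is to derive Corollary~\ref{c.Gaussian_interpolation} by direct substitution into the formula of Lemma~\ref{l.Gaussian interpolation}. Under the additional hypothesis $\mathbf{D}_i(\mathbf{x}) = -\mathbf{V}_i(\mathbf{x},\mathbf{x})$ for $i \in \{0,1\}$, the two ``diagonal'' Gaussian terms and the two ``drift'' terms appearing in the integrand of Lemma~\ref{l.Gaussian interpolation} should cancel in pairs. Concretely, the integrand
\begin{align*}
    \mathbf{V}_1(\mathbf{x},\mathbf{x}') - \mathbf{V}_0(\mathbf{x},\mathbf{x}') - \mathbf{V}_1(\mathbf{x},\mathbf{x}) + \mathbf{V}_0(\mathbf{x},\mathbf{x}) - \mathbf{D}_1(\mathbf{x}) + \mathbf{D}_0(\mathbf{x})
\end{align*}
becomes, after replacing $\mathbf{D}_1(\mathbf{x})$ by $-\mathbf{V}_1(\mathbf{x},\mathbf{x})$ and $\mathbf{D}_0(\mathbf{x})$ by $-\mathbf{V}_0(\mathbf{x},\mathbf{x})$,
\begin{align*}
    \mathbf{V}_1(\mathbf{x},\mathbf{x}') - \mathbf{V}_0(\mathbf{x},\mathbf{x}') - \mathbf{V}_1(\mathbf{x},\mathbf{x}) + \mathbf{V}_0(\mathbf{x},\mathbf{x}) + \mathbf{V}_1(\mathbf{x},\mathbf{x}) - \mathbf{V}_0(\mathbf{x},\mathbf{x}) = \mathbf{V}_1(\mathbf{x},\mathbf{x}') - \mathbf{V}_0(\mathbf{x},\mathbf{x}').
\end{align*}
Substituting this back under the expectation $\E\la\cdot\ra_r$ and integrating in $r$ yields the claimed identity.

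The only point requiring a sentence of justification is that the substitution is legitimate pointwise inside $\E\la\cdot\ra_r$: since $\mathbf{D}_i$ are assumed bounded and deterministic and the equality $\mathbf{D}_i(\mathbf{x}) = -\mathbf{V}_i(\mathbf{x},\mathbf{x})$ holds for all $\mathbf{x} \in \mathcal{X}$, in particular for all $\mathbf{x} \in \supp\mathfrak{P}$, the identification of the integrands holds $\la\cdot\ra_r$-almost surely, hence after taking $\E\la\cdot\ra_r$. There is no genuine obstacle here: the corollary is purely algebraic bookkeeping applied to the conclusion of Lemma~\ref{l.Gaussian interpolation}, and the main (trivial) ``step'' is simply observing the cancellation $-\mathbf{V}_1(\mathbf{x},\mathbf{x}) + \mathbf{V}_0(\mathbf{x},\mathbf{x}) - \mathbf{D}_1(\mathbf{x}) + \mathbf{D}_0(\mathbf{x}) = 0$.

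I would therefore present the proof in two short moves: first invoke Lemma~\ref{l.Gaussian interpolation} to write $\varphi(1) - \varphi(0)$ as the integral over $r$ of $\E\la \cdot \ra_r$ of the six-term integrand; then use the hypothesis $\mathbf{D}_i(\mathbf{x}) = -\mathbf{V}_i(\mathbf{x},\mathbf{x})$ to collapse the last four terms, leaving $\mathbf{V}_1(\mathbf{x},\mathbf{x}') - \mathbf{V}_0(\mathbf{x},\mathbf{x}')$. This is essentially a one-line proof, and the expected ``difficulty'' is nil; the corollary is recorded separately only because this self-overlap-corrected form is the one that will be applied repeatedly to the Hamiltonian $H^{t,q}_N$ in~\eqref{e.H^t,q_N=}, whose self-overlap correction terms are exactly of the form $-\mathbf{V}_i(\mathbf{x},\mathbf{x})$.
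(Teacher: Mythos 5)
Your proof is correct and is exactly the argument the paper intends: the corollary is presented as an "immediate" consequence of Lemma~\ref{l.Gaussian interpolation}, and the cancellation $-\mathbf{V}_1(\mathbf{x},\mathbf{x})+\mathbf{V}_0(\mathbf{x},\mathbf{x})-\mathbf{D}_1(\mathbf{x})+\mathbf{D}_0(\mathbf{x})=0$ under the hypothesis $\mathbf{D}_i(\mathbf{x})=-\mathbf{V}_i(\mathbf{x},\mathbf{x})$ is all that is needed. Nothing is missing.
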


\begin{proof}[Proof of Lemma~\ref{l.Gaussian interpolation}]
We can compute
\begin{align*}
    \frac{\d}{\d r}\varphi(r) = -\E \la (2-2r)^{-1/2} \mathbf{G}_1(\mathbf{x}) - (2r)^{-1/2} \mathbf{G}_0(\mathbf{x}) + \mathbf{D}_1(\mathbf{x})-\mathbf{D}_0(\mathbf{x})\ra_r.
\end{align*}
Apply the Gaussian integration by parts (e.g.\ see~\cite[Lemma~1.1]{pan}), we get
\begin{align*}
    \frac{\d}{\d r}\varphi(r) =\E\la \mathbf{V}_1(\mathbf{x},\mathbf{x}')-\mathbf{V}_0(\mathbf{x},\mathbf{x}')-\mathbf{V}_1(\mathbf{x},\mathbf{x})+\mathbf{V}_0(\mathbf{x},\mathbf{x})-\mathbf{D}_1(\mathbf{x})+\mathbf{D}_0(\mathbf{x})\ra_r
\end{align*}
which gives the desired result.
\end{proof}

Now, with the Gaussian interpolation technique, we can start to prove the first estimate.

\begin{lemma}\label{l.lambda_continuity}
Let $C_\xi = \max_{ x \in \prod_{s\in\sS}[-1,+1]^{\kappa_s\times \kappa_s}}\Ll|\nabla \xi(x)\Rr|$, $C_\mu = \sqrt{\sum_{s\in\sS}\Ll|\log \mu_s(\R^{\kappa_s})\Rr|^2}$, and $|\kappa|_\infty=\max_{s\in\sS}\kappa_s$.
Then, for every $N\in \N$, $t\in\R_+$, $q\in \mcl Q_\infty^\sS(\kappa)$, and $\lambda_N,\,\lambda'_N\in \blacktriangle_N$, we have
\begin{align}\label{e.|F-F|<C|lambda-lambda|}
    \Ll|\bar F_{N,\lambda_N}(t,q) - \bar F_{N,\lambda'_N}(t,q)\Rr|\leq \Ll( |\kappa|_\infty C_\xi t  + |\kappa|_\infty |q|_{L^1} +C_\mu\Rr)\Ll|\lambda_N-\lambda'_N\Rr|.
\end{align}
\end{lemma}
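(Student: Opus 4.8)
The plan is to prove the bound by a two-step interpolation: first interpolate between the partitions underlying $\lambda_N$ and $\lambda'_N$ while keeping the reference measure fixed, and then account for the change in the reference measure $P_{N,\lambda_N} \to P_{N,\lambda'_N}$. Concretely, since $\lambda_N, \lambda'_N \in \blacktriangle_N$ there are partitions $(I_{N,s})_{s\in\sS}$ and $(I'_{N,s})_{s\in\sS}$ of $\{1,\dots,N\}$ realizing them; write $\sfM = \frac{N}{2}\sum_{s}|\lambda_{N,s}-\lambda'_{N,s}| = |\{n : \text{$n$ changes species}\}|$, so that $\sfM \le \frac{N}{2}|\lambda_N - \lambda'_N|$ counts the indices that must be reassigned. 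I would like to move from one partition to the other one index at a time, but it is cleaner to do it in one interpolation using Corollary~\ref{c.Gaussian_interpolation}: the Hamiltonian $H^{t,q}_N$ in~\eqref{e.H^t,q_N=} is a sum of Gaussian pieces ($\sqrt{2t}H_N(\sigma)$ and $\sqrt{2}W^q_N(\sigma,\alpha)$) each of which carries its own self-overlap correction, so the interpolation hypothesis $\mathbf{D}_i = -\mathbf{V}_i(\cdot,\cdot)$ is satisfied and the derivative of the interpolated free energy is $\E\langle \mathbf{V}_1(\sigma,\sigma') - \mathbf{V}_0(\sigma,\sigma')\rangle_r$.

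For the first step I would fix the product reference measure to be, say, $\otimes_{n} \d\mu_{s(n)}$ where the species label $s(n)$ is the one appearing in the first partition, and introduce at both endpoints $i\in\{0,1\}$ Hamiltonians built from the overlaps $R_{N,\lambda_N}$ (for $i=0$) and $R_{N,\lambda'_N}$ (for $i=1$) respectively; here one must be slightly careful that the $\mu$-integration does not match at indices that change species, which is precisely why this step only handles the Gaussian part and the second step handles the measure. Using the covariance formulas~\eqref{e.EH_N(sigma)H_N(sigma')=} and~\eqref{e.cov_external_multi-sp}, $\mathbf{V}_1 - \mathbf{V}_0$ at $(\sigma,\sigma')$ equals $2tN[\xi(R_{N,\lambda'_N}) - \xi(R_{N,\lambda_N})] + 2N[q(\alpha\wedge\alpha')\cdot R_{N,\lambda'_N} - q(\alpha\wedge\alpha')\cdot R_{N,\lambda_N}]$ (up to the factor $\frac12$ absorbing the $\sqrt 2$'s), and the key point is that $R_{N,\lambda'_N,s} - R_{N,\lambda_N,s} = \frac1N(\sigma_{\bullet I'_{N,s}}\sigma'^{\intercal}_{\bullet I'_{N,s}} - \sigma_{\bullet I_{N,s}}\sigma'^{\intercal}_{\bullet I_{N,s}})$ differs only over the $\le 2\sfM$ indices (and species pairs) that were reassigned. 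Since each entry of $\sigma_{\bullet n}, \sigma'_{\bullet n}$ lies in $[-1,1]$, each affected $\kappa_s\times\kappa_s$ block of $R$ changes by at most $\frac{2}{N}$ in each entry times the count, and a mean value estimate on $\xi$ with Lipschitz constant $C_\xi$ (in the $|\cdot|$ norm, noting the overlap blocks lie in $\prod_s[-1,1]^{\kappa_s\times\kappa_s}$) gives $|\xi(R_{N,\lambda'_N}) - \xi(R_{N,\lambda_N})| \le C_\xi \cdot \frac{2\sfM}{N}\cdot|\kappa|_\infty$-type bound; a parallel estimate for the linear-in-$q$ term gives the $|\kappa|_\infty |q|_{L^1}$ contribution after integrating $\alpha\wedge\alpha' \in [0,1]$ against $q$. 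Dividing by $N$ and integrating $r$ over $[0,1]$ yields the $(|\kappa|_\infty C_\xi t + |\kappa|_\infty |q|_{L^1})|\lambda_N - \lambda'_N|$ part, using $2\sfM/N \le |\lambda_N-\lambda'_N|$.

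The second step is to compare the free energy with reference measure $\otimes_n \d\mu_{s(n)}$ (first partition labels) against the one with $\otimes_n \d\mu_{s'(n)}$ (second partition labels), with the same Hamiltonian built from $R_{N,\lambda'_N}$. Only the $\le 2\sfM$ reassigned indices differ, and at such an index we replace a $\mu_s$-integration by a $\mu_{s'}$-integration; bounding the free-energy difference by a telescoping/Hölder argument over these indices produces a term controlled by $\frac1N\sum_{n \text{ reassigned}} |\log\mu_{s'(n)}(\R^{\kappa_{s'(n)}}) - \log\mu_{s(n)}(\R^{\kappa_{s(n)}})|$. One then reorganizes the sum over reassigned indices into a sum over ordered species pairs $(s,s')$ weighted by how many indices moved from $s$ to $s'$, bounds each $|\log\mu_{s'}(\cdots) - \log\mu_s(\cdots)| \le |\log\mu_{s'}(\cdots)| + |\log\mu_s(\cdots)|$, and applies Cauchy--Schwarz to land on $C_\mu |\lambda_N - \lambda'_N|$; the definition $C_\mu = \sqrt{\sum_s |\log\mu_s(\R^{\kappa_s})|^2}$ is tailored exactly for this step.

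The main obstacle I anticipate is the second step: the reference measure change does not fit directly into the Gaussian interpolation framework, so one needs a separate and somewhat delicate estimate. Changing the product measure factor at a single index $n$ can be handled by writing the free energy as $-\frac1N\E\log$ of a partition function and bounding $\log\int f\,\d\mu_{s'} - \log\int f\,\d\mu_s$ for the conditional integrand $f$; the cleanest route is to bound $\big|\log\frac{\int f\,\d\mu_{s'}}{\int f\,\d\mu_s}\big|$, but $f$ is not a probability density and its total masses under $\mu_s$ and $\mu_{s'}$ differ, so the bound genuinely involves $\log\mu_{s'}(\R^{\kappa_{s'}})/\mu_s(\R^{\kappa_s})$ and not just a Wasserstein-type distance — which is why the constant $C_\mu$ appears rather than something depending on the supports. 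Getting the combinatorics of "which index moved where" to assemble correctly into $C_\mu|\lambda_N-\lambda'_N|$ (as opposed to a weaker bound with an extra $\sqrt{|\sS|}$) requires being careful that $\sum_{s,s'} (\text{mass moved } s\to s') \le \sfM$ and then Cauchy--Schwarz against the vector $(|\log\mu_s(\R^{\kappa_s})|)_{s\in\sS}$; I would organize this so that the two parts of the right-hand side of~\eqref{e.|F-F|<C|lambda-lambda|} come out with exactly the stated constants.
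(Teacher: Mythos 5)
Your overall architecture (Gaussian interpolation for the Hamiltonian change, a separate accounting for the reference-measure change, Cauchy--Schwarz to produce $C_\mu$) is in the right spirit, but two steps would fail as described. First, your "reassign indices between species" move is not well posed: the overlap $R_{N,\lambda'_N,s}$ treats $\sigma_{\bullet n}$ for $n\in I'_{N,s}$ as a vector in $\R^{\kappa_s}$, so a Hamiltonian with covariance $N\xi(R_{N,\lambda'_N}(\sigma,\sigma'))$ is only defined on the support of $P_{N,\lambda'_N}$; if an index moves from species $s$ to $s'$ with $\kappa_s\neq\kappa_{s'}$, you cannot put both endpoint Hamiltonians on a common configuration space with a single fixed product measure, so Corollary~\ref{c.Gaussian_interpolation} does not apply as you set it up. Second, and more seriously, your measure-swap step is not controlled by $C_\mu$: for an index $n$ at which the integrand still depends on $\sigma_{\bullet n}$, the quantity $\log\int f\,\d\mu_{s'}-\log\int f\,\d\mu_s$ is \emph{not} bounded by $|\log\mu_{s'}(\R^{\kappa_{s'}})|+|\log\mu_s(\R^{\kappa_s})|$ (take two probability measures, so the claimed bound is $0$, while the two integrals can differ arbitrarily). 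The ratio of total masses controls the swap only when $f$ is constant in $\sigma_{\bullet n}$.

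The paper's proof avoids both problems by never moving an index across species: it \emph{drops} the excess spins instead. For a sub-collection $J_s\subset I_{N,s}$ it first replaces $\sigma$ by $\sigma^J$ inside the exponential (Gaussian interpolation within a fixed species structure, cost $(tC_\xi+|q|_{L^1})|\kappa|_\infty N^{-1}|I_N\setminus J|$), at which point the dropped spins are completely decoupled and integrating them out of $P_{N,\lambda_N}$ is an \emph{exact} identity producing the factor $N^{-1}\sum_s|I_{N,s}\setminus J_s|\log\mu_s(\R^{\kappa_s})$ --- this is the only legitimate source of $C_\mu$. The reduced free energy $\bar F^J_N$ then depends only on the counts $(|J_s|)_s$, so choosing $J$ to be the common core of the two partitions identifies the two reductions exactly, and the triangle inequality finishes. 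Note also that your bookkeeping $2\sfM/N\le|\lambda_N-\lambda'_N|$ is backwards: $2\sfM/N$ is the $\ell^1$ distance, which dominates the Euclidean $|\lambda_N-\lambda'_N|$ appearing in \eqref{e.|F-F|<C|lambda-lambda|}; to get the stated ($\ell^2$) constant you must keep the per-species counts separate and combine them in quadrature, as the paper does via $|I_N\setminus J|=(\sum_s|I_{N,s}\setminus J_s|^2)^{1/2}$.
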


Here and henceforth, the $L^p$-norm of $q$ for $p\in[0,\infty]$ is given as
\begin{align*}
    \Ll|q\Rr|_{L^p} =\Big(\int_0^1 |q(r)|^p\d r\Big)^\frac{1}{p},\qquad \Ll|q\Rr|_{L^\infty} =\esssup_{r\in[0,1]}|q(r)|
\end{align*}
where $|q(r)|$ is the norm in $\prod_{s\in\sS}\S^{\kappa_s}$ as given in~\eqref{e.dot_product}.

\begin{proof}
Throughout this proof, we fix $t$ and $q$ and thus omit them from the notation of different versions of free energy. We write $R_N = R_{N,\lambda_N}$ and $P_N = P_{N,\lambda_N}$ for brevity.

Let $I_N=(I_{N,s})_{s\in\N}$ be the partition associated with $\lambda_N$. Let $J=(J_s)_{s\in\N}$ be a collections of subsets $J_s\subset I_{N,s}$. For each configuration $\sigma\in \Sigma^N$, we write $\sigma^J = \Ll(\sigma_{\bullet n} \one_{n \in \cup_{s\in \sS} J_s}\Rr)_{1\leq n\leq N}$. 
We define $\bar F_{N,I_N}^J$ in the same way as~\eqref{e.F_N(t,q)=} but with every instance of $\sigma$ in $\exp(\cdots)$ replaced by $\sigma^J$. Notice that in this definition, we keep $P_N(\sigma)$ in $\bar F_{N,I_N}^J$ intact, which is source of the dependence on $I_N$.
Define $\bar F^J_N$ by further replacing $\d P_n(\sigma)$ in $\bar F^J_{N,I_N}$ by
\begin{align}\label{e.P^J_N=}
    \d P^J_N\Ll(\Ll(\sigma_{\bullet n}\Rr)_{n\in \cup_s J_s}\Rr)=\otimes_{s\in\sS} \otimes_{n\in J_s} \d \mu_s(\sigma_{\bullet n}).
\end{align}
It is straightforward to see that if $J'=(J'_s)$ is a partition of a subset of $\{1,\dots ,N\}$ and $|J'_s|=|J_s|$ for every $s$, we have
\begin{align}\label{e.F^J_N=F^J'_N}
    \bar F^J_N = \bar F^{J'}_N.
\end{align}
Writing $|I_{N}\setminus J| =\Ll(\sum_{s\in\sS}|I_{N,s}\setminus J_s|^2 \Rr)^\frac{1}{2}$, we claim
\begin{align}
    \Ll|\bar F_{N,\lambda_N}-\bar F_{N,I_N}^J\Rr|&\leq \Ll(tC_\xi + |q|_{L^1}\Rr) |\kappa|_\infty N^{-1}|I_{N}\setminus J|,\label{e.F-F^J_I<}
    \\
     \Ll|\bar F_{N,I_N}^J-\bar F_N^J\Rr|&\leq C_\mu N^{-1}|I_{N}\setminus J|.\label{e.F^J_I-F^J<}
\end{align}
Before proving them, we first use them to deduce the desired estimate.

Let $(I'_{N,s})_{s\in\N}$ be a partition with proportions given by $\lambda'_N$. We can find a bijection $\iota$ from $\{1,\dots,N\}$ to itself such that, setting $J_s = I_{N,s}\cap \iota^{-1} (I'_{N,s})$ for each $s$, we have $J_s=I_{N,s}$ or $\iota (J_s)=I'_{N,s}$ for every $s\in\sS$.
This property implies 
\begin{align}\label{e.|I-J|+|I'-iotaJ|=||I|-|I'||}
     \Ll|I_{N,s}\setminus J_s\Rr|+ \Ll|I'_{N,s}\setminus \iota(J_s)\Rr| = \Ll||I_{N,s}|-|I'_{N,s}|\Rr|=N\Ll|\lambda_{N,s}-\lambda'_{N,s}\Rr|,\quad\forall s\in\sS.
\end{align}
Write $J= (J_s)_{s\in \sS}$ and $\iota(J)= (\iota(J_s))_{s\in\sS}$. We can apply~\eqref{e.F-F^J_I<} and ~\eqref{e.F^J_I-F^J<} to the pair $J$ and $I_N$ and the pair $\iota(J)$ and $I'_N$. These along with the triangle inequality give
\begin{align*}
    \Ll|\bar F_{N,\lambda_N} - \bar F_{N,\lambda'_N}\Rr|\leq \Ll|\bar F^J_N - \bar F^{\iota(J)}_N\Rr|+ CN^{-1} \Ll|I_N\setminus J\Rr|+ CN^{-1}\Ll|I'_N\setminus \iota(J_s)\Rr|
    \\
    \stackrel{\eqref{e.F^J_N=F^J'_N},\,\eqref{e.|I-J|+|I'-iotaJ|=||I|-|I'||}}{=}0+ C|\lambda_{N,s}-\lambda'_{N,s}|
\end{align*}
for $C=\Ll(tC_\xi + |q|_{L^1}\Rr)|\kappa|_\infty +C_\mu$ as announced.

It remains to verify~\eqref{e.F-F^J_I<} and~\eqref{e.F^J_I-F^J<}. We first show the latter. Recall that the reference measure in $\bar F^J_{N,I_N}$ is $\d P_N(\sigma)$ as in~\eqref{e.P_N=} and that in $\bar F^J_N$ is $\d P^J_N(\cdots)$ as in~\eqref{e.P^J_N=}. Also notice that the term $\exp(\cdots)$ in $\bar F^J_{N,I_N}$ does not depend on spins $\sigma_{\bullet n}$ for $n\not\in \cup_{s\in\sS}J_s$. Hence, we can compute
\begin{align*}
    \bar F^J_{N,I_N} = \bar F^J_N - N^{-1}\sum_{s\in\sS}|I_s\setminus J_s| \log \mu_s(\R^{\kappa_s})
\end{align*}
which gives~\eqref{e.F^J_I-F^J<}.

To show~\eqref{e.F-F^J_I<}, we need an interpolation argument. 
Let us rewrite the terms inside $\exp(\cdots)$ in $\bar F_{N,\lambda_N}(t,q)$ (see~\eqref{e.F_N(t,q)=}) as $\sqrt{2}\mathbf{G}_1(\sigma,\alpha)+\mathbf{D}_1(\sigma)$, where $\mathbf{G}(\sigma,\alpha)$ collects the Gaussian terms (conditioned on $\fR$) and $\mathbf{D}(\sigma)$ collects the deterministic terms. 
Set $\mathbf{G}_0(\sigma,\alpha)= \mathbf{G}_1(\sigma^J,\alpha)$ and $\mathbf{D}_0(\sigma)=\mathbf{D}_1(\sigma^J)$. 
Notice that 
\begin{align*}
    \E \mathbf{G}_1(\sigma,\alpha)\mathbf{G}_1(\sigma',\alpha')\stackrel{\eqref{e.EH_N(sigma)H_N(sigma')=},\,\eqref{e.Ew^q_s_iw^q_s_i=}}{=} Nt\xi(R_N(\sigma,\sigma')) + Nq(\alpha \wedge\alpha')\cdot R_N(\sigma,\sigma')  
\end{align*}
and also the covariance of $\mathbf{G}_0$ is the same with $\sigma$ above replaced by $\sigma^J$.
We can apply Corollary~\ref{c.Gaussian_interpolation} in the setting (with $\varphi(1) = N\bar F_{N,\lambda_N}$ and $\varphi(0) = N\bar F^J_{N,I_N}$). 
Thus, we get
\begin{align*}
    N\bar F_{N,\lambda_N} - N\bar F^J_{N,I_N} = \int_0^1 \E \la t(\xi(R_N) -\xi(R^J_N))+ q(\alpha \wedge\alpha')\cdot ( R_N-R^J_N)\ra_r \d r
\end{align*}
where we used the shorthand $R_N=R_N(\sigma,\sigma')$ and $R_N^J=R_N(\sigma^J,\sigma'^J)$. 
We denote their $s$-coordinate as $R_{N,s}$ and $R^J_{N,s}$.
We recall an important property of invariance of cascade: for every $r\in[0,1]$,
\begin{align}\label{e.E<>_r=int}
    \E \la q(\alpha\wedge\alpha')\ra_r = \E \la q(\alpha\wedge\alpha')\ra_\fR =\int_0^1q(u) \d r.
\end{align}
This property is restated as in Lemma~\ref{l.invar} later.
By the definition of $\sigma^J$, almost surely, we have $|R_{N,s}-R^J_{N,s}|\leq \kappa_s N^{-1}|I_{N,s}\setminus J_s|$. Hence, we have $|R_N-R^J_N|\leq |\kappa|_\infty N^{-1}|I_N\setminus J|$. Using this and~\eqref{e.E<>_r=int}, we get
\begin{align*}
    \Ll|N\bar F_{N,\lambda_N} - N\bar F^J_{N,I_N}\Rr|\leq  t C_\xi|\kappa|_\infty  |I_N\setminus J| +  |q|_{L^1}|\kappa|_\infty  |I_N\setminus J|
\end{align*}
which implies~\eqref{e.F-F^J_I<}.
\end{proof}

Now, we turn to the second estimate.
We again take $|\kappa|_\infty = \max_{s\in\sS}\kappa_s$.

\begin{lemma}\label{l.NF_N-(N+M)F_(N+M)}
There are constants $C_\xi$ depending only on $\xi$ (over $\prod_{s\in\sS}[-1,+1]^{\kappa_s\times \kappa_s}$) and $C_\mu$ depending only on $\mu$ such that the following holds.
For every $N,\M\in\N$ with $N\geq M$, $t\in\R_+$, $q\in \mcl Q_\infty^\sS(\kappa)$, $\lambda_N\in \blacktriangle_N$, and $\lambda_{N+M} \in \blacktriangle_{N+M}$, we have
\begin{align*}
    &\Ll|N\bar F_{N,\lambda_N}(t,q) - (N+M)\bar F_{N+M,\lambda_{N+M}}(t,q)\Rr|
    \\
    &\leq 4\Ll(|\kappa|_\infty C_\xi t + |\kappa|_\infty|q|_{L^1}+C_\mu\Rr)\Ll((N+M)\Ll|\lambda_N-\lambda_{N+M}\Rr|+ M|\sS|\Rr).
\end{align*}
\end{lemma}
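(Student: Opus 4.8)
The plan is to bound the discrepancy $|N\bar F_{N,\lambda_N} - (N+M)\bar F_{N+M,\lambda_{N+M}}|$ by interpolating between the two models through an intermediate construction that matches proportions and sizes step by step, using Lemma~\ref{l.lambda_continuity} to absorb proportion mismatches and a cavity-type addition/removal argument to handle the size difference. First I would reduce to the case where $\lambda_N$ and $\lambda_{N+M}$ are ``compatible'' in the sense that one can build a size-$(N+M)$ partition whose first $N$ indices realize proportions as close to $\lambda_N$ as possible: concretely, pick a partition $(I_{N+M,s})_s$ of $\{1,\dots,N+M\}$ with proportions $\lambda_{N+M}$, and let $(I_{N,s})_s = (I_{N+M,s}\cap\{1,\dots,N\})_s$ be the induced partition on the first $N$ indices, with associated proportions $\tilde\lambda_N \in \blacktriangle_N$. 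A counting argument (each $|I_{N+M,s}| = (N+M)\lambda_{N+M,s}$ and each $I_{N,s}$ takes roughly $N/(N+M)$ of those) gives $|\tilde\lambda_N - \lambda_{N+M}| \leq C|\sS| M/N \lesssim |\sS| M/N$, hence $N|\tilde\lambda_N - \lambda_{N+M}| \lesssim M|\sS|$; and by Lemma~\ref{l.lambda_continuity}, $N|\bar F_{N,\lambda_N}(t,q) - \bar F_{N,\tilde\lambda_N}(t,q)| \leq (|\kappa|_\infty C_\xi t + |\kappa|_\infty|q|_{L^1} + C_\mu) N|\lambda_N - \tilde\lambda_N|$, and $N|\lambda_N-\tilde\lambda_N| \leq N|\lambda_N-\lambda_{N+M}| + N|\lambda_{N+M}-\tilde\lambda_N| \lesssim (N+M)|\lambda_N-\lambda_{N+M}| + M|\sS|$. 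So it remains to compare $N\bar F_{N,\tilde\lambda_N}$ with $(N+M)\bar F_{N+M,\lambda_{N+M}}$ where the two models share a common partition with the size-$N$ one nested inside the size-$(N+M)$ one.

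For that comparison I would use the Gaussian interpolation machinery of Lemma~\ref{l.Gaussian interpolation}/Corollary~\ref{c.Gaussian_interpolation} in the same ``zeroing out spins'' fashion as in the proof of Lemma~\ref{l.lambda_continuity}. Inside the size-$(N+M)$ free energy, write the exponent as $\sqrt{2}\mathbf{G}_1(\sigma,\alpha) + \mathbf{D}_1(\sigma)$ (Gaussian part conditioned on $\fR$, plus deterministic self-overlap corrections, now with normalization factor $N+M$ in the covariance), and set $\mathbf{G}_0(\sigma,\alpha) = \mathbf{G}_1(\sigma^J,\alpha)$, $\mathbf{D}_0 = \mathbf{D}_1(\sigma^J)$ where $\sigma^J$ zeroes out the last $M$ spins (i.e.\ $J_s = I_{N+M,s}\cap\{1,\dots,N\}$). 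By Corollary~\ref{c.Gaussian_interpolation} the difference equals $\int_0^1 \E\langle t(\xi(R) - \xi(R^J)) + q(\alpha\wedge\alpha')\cdot(R - R^J)\rangle_r\,\d r$ where $R = R_{N+M,\lambda_{N+M}}(\sigma,\sigma')$ uses the $1/(N+M)$ normalization and $R^J$ the same with zeroed spins; since zeroing $M$ out of $N+M$ spins changes each $\kappa_s\times\kappa_s$ overlap block by at most $\kappa_s M/(N+M)$ entrywise, $|R - R^J| \leq |\kappa|_\infty M/(N+M)\cdot\sqrt{|\sS|}$ (or just $|\kappa|_\infty$ times the analog of $|I\setminus J|/(N+M)$), so combined with the invariance identity~\eqref{e.E<>_r=int} controlling $\E\langle q(\alpha\wedge\alpha')\rangle_r = \int_0^1 q$, this contributes at most $(N+M)^{-1}(tC_\xi|\kappa|_\infty + |q|_{L^1}|\kappa|_\infty)\cdot M|\kappa|_\infty\sqrt{|\sS|}$-type terms, and changing the reference measure from $P_{N+M}$ to the product over only the first $N$ coordinates contributes a $C_\mu M$-type term as in~\eqref{e.F^J_I-F^J<}. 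Multiplying through by $N+M$ and noting the free energy with zeroed spins on the first $N$ coordinates (with full $P_{N+M}$ reference) differs from $N\bar F_{N,\tilde\lambda_N}$ only by the trivial $\sum_s |I_{N+M,s}\setminus J_s|\log\mu_s(\R^{\kappa_s})$ correction (bounded by $C_\mu M$), and by a normalization bookkeeping step reconciling $1/N$ versus $1/(N+M)$ in the overlap, we collect everything into the claimed bound with the explicit factor $4$.

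The main obstacle I expect is the bookkeeping around the two different normalizations: $R_{N,\tilde\lambda_N}$ carries a $1/N$ while $R_{N+M,\lambda_{N+M}}$ restricted to zeroed spins carries a $1/(N+M)$, so the two Hamiltonians do not literally agree even after zeroing — one must either absorb the $\frac{1}{N} - \frac{1}{N+M} = \frac{M}{N(N+M)}$ discrepancy through yet another (short) interpolation (bounding $|\xi(R) - \xi(\frac{N+M}{N}R)|$ and $|q\cdot(R - \frac{N+M}{N}R)|$ on the bounded overlap range, which costs $O(M/N)$ per spin and hence $O(M)$ total after multiplying by $N+M$), or set up the nested-partition construction so that the size-$N$ sub-model is itself defined with the $1/(N+M)$ normalization and then argue it is within $O(M/N)$ of the genuine $\bar F_{N,\tilde\lambda_N}$. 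Either route is routine but needs care to keep all the constants inside the single prefactor $4(|\kappa|_\infty C_\xi t + |\kappa|_\infty|q|_{L^1} + C_\mu)$; the assumption $N\geq M$ is what makes $M/N \leq 1$ so that these lower-order terms can be folded into the stated $M|\sS|$ and $(N+M)|\lambda_N-\lambda_{N+M}|$ terms without a worse constant.
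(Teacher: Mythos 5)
Your proposal is correct and follows essentially the same route as the paper: a nested intermediate partition, Lemma~\ref{l.lambda_continuity} to absorb the proportion mismatch, and a Gaussian interpolation to handle the size change. The only real difference is that the paper dissolves what you call the ``main obstacle'' by running a single interpolation directly between $N\bar F_{N,\lambda_N}$ and $(N+M)\bar F_{N+M,\lambda'_{N+M}}$, so the covariance difference is $Nt\xi(\bar R)-(N+M)t\xi(R)+q\cdot(N\bar R-(N+M)R)$ and the $1/N$ versus $1/(N+M)$ normalization mismatch is absorbed by the elementary entrywise bound $\Ll|N\bar R_s-(N+M)R_s\Rr|\le\kappa_s M$, with no second interpolation or separately renormalized sub-model required.
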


\begin{proof}
Fixing any $t$ and $q$, we omit them from the notation of free energy. Let $I_N=(I_{N,s})_{s\in\sS}$ and $I_{N+M}=(I_{N+M,s})_{s\in\sS}$ be partitions associated with $\lambda_N$ and $\lambda_{N+M}$. Let us fix any partition $I'_{N+M}=(I'_{N+M,s})_{s\in\sS}$ of $\{1,\dots,M\}$ satisfying
\begin{align}\label{e.I_NsubsetI'_N+M}
    I_{N,s}\subset I'_{N+M,s},\quad\forall s\in\sS.
\end{align}
Let $\lambda'_{N+M}$ be the associated proportions. Before proceeding, we derive some simple estimates.
By~\eqref{e.I_NsubsetI'_N+M}, we get
\begin{align}\label{e.|I'_N+M,s|-|I_N,s|<M}
    |I'_{N+M,s}|-|I_{N,s}|\leq M,\quad\forall s\in\sS.
\end{align}
Using this, we also have
\begin{align}\label{e.|lambda'_N+M,s-lambda_N,s|<M/N+M}
    |\lambda'_{N+M,s}-\lambda_{N,s}|\stackrel{\eqref{e.lambda_N,s=}}{\leq } \frac{|I'_{N+M,s}|-|I_{N,s}|}{N+M} \stackrel{\eqref{e.I_NsubsetI'_N+M}}{\leq} \frac{M}{N+M}
\end{align}

We first compare $\bar F_{N,\lambda_N}$ with $\bar F_{N+M,\lambda'_{N+M}}$. Denote by $\sigma\in \Sigma^{N+M}$ the configuration appearing in $\bar F_{N+M,\lambda'_{N+M}}$. For every such $\sigma=(\sigma_{\bullet n})_{1\leq n\leq N+M}$, write $\bar\sigma = (\sigma_{\bullet n})_{1\leq n\leq N}$. By~\eqref{e.I_NsubsetI'_N+M} and the fact that $I_N$ is a partition of $\{1,\dots,N\}$, we have $\bar\sigma =(\sigma_{\bullet n})_{n\in \cup_{s\in\sS}I_{N,s}}$ and $\bar\sigma$ is the configuration appearing in $\bar F_{N,\lambda_N}$. 

Let $\sqrt{2}\mathbf{G}_1(\sigma,\alpha)$ and $\mathbf{D}_1(\sigma)$ (resp.\ $\sqrt{2}\mathbf{G}_0(\sigma,\alpha)$ and $\mathbf{D}_0(\sigma)$) collect respectively Gaussian terms and deterministic terms inside $\exp(\cdots)$ in $\bar F_{N,\lambda_N}$ (resp.\ $\bar F_{N+M,\lambda'_{N+M}}$). Notice that $\mathbf{G}_1$ and $\mathbf{D}_1$ depends on $\sigma$ only through $\bar \sigma$.
With this setup, we apply Corollary~\ref{c.Gaussian_interpolation} (with $\varphi(1)= N \bar F_{N,\lambda_N}$ and $\varphi(0)=(N+M)\bar F_{N+M,\lambda'_{N+M}}$). 
We can use~\eqref{e.EH_N(sigma)H_N(sigma')=} and~\eqref{e.cov_external_multi-sp} to compute the covariances of $\mathbf{G}_0$ and $\mathbf{G}_1$. Then, Corollary~\ref{c.Gaussian_interpolation} implies
\begin{align}
    &\Ll|N \bar F_{N,\lambda_N}- (N+M)\bar F_{N+M,\lambda'_{N+M}}\Rr|\notag
    \\
    &=\int_0^1\E\la Nt\xi(\bar R)-(N+M)t\xi(R) + q(\alpha\wedge\alpha')\cdot(N\bar R-(N+M)R) \ra_r\d r\label{e.int_0^1E<N-N+M>}
\end{align}
where we used the short hand $\bar R= R_{N,\lambda_N}(\bar \sigma,\bar\sigma')$ and $R = R_{N+\M ,\lambda'_{N+\M}}(\sigma,\sigma')$. Notice that $\bar R$ and $R$ depend on the partition $I_N$ and $I'_{N+M}$ respectively (see~\eqref{e.R_N,s=}).
By the definition of $\bar\sigma$, we have the following a.s.\ entry-wise bound 
\begin{align*}
    \Ll|N\bar R_s - (N+M)R_s\Rr|\leq \kappa_s \Ll(|I'_{N+M,s}|-|I_{N,s}|\Rr)\stackrel{\eqref{e.|I'_N+M,s|-|I_N,s|<M}}{\leq} \kappa_s M,\quad\forall s\in\sS.
\end{align*}
Using this and $|\bar R|\leq |\kappa|_\infty|\sS|$, we can get
\begin{align*}
    \Ll|\bar R - R\Rr|&\leq \frac{\Ll|N\bar R - (N+M)R\Rr|}{N+M}+ \frac{M|\bar R|}{N+M}
    \leq \frac{2|\kappa|_\infty M|\sS|}{N}.
\end{align*}
Using this and~\eqref{e.E<>_r=int}, we can bound the absolute value of the integrand in~\eqref{e.int_0^1E<N-N+M>} by 
\begin{align*}
    &\E \la Nt\Ll|\xi(\bar R)-\xi(R)\Rr| + Mt \Ll|\xi (R) \Rr|+ \Ll|q(\alpha\wedge\alpha')\Rr|\Ll(N\Ll|\bar R-R\Rr|+ M \Ll|R\Rr|\Rr)\ra_r 
    \\
    &\leq 3\Ll(tC_\xi+|q|_{L^1}\Rr)|\kappa|_\infty M|\sS|
\end{align*}
where $C_\xi$ only depends on $\xi$ over $\prod_{s\in\sS}[-1,+1]^{\kappa_s\times \kappa_s}$. Inserting this back to~\eqref{e.int_0^1E<N-N+M>}, we get
\begin{align*}
    \Ll|N \bar F_{N,\lambda_N}- (N+M)\bar F_{N+M,\lambda'_{N+M}}\Rr|\leq 3\Ll(tC_\xi+|q|_{L^1}\Rr)|\kappa|_\infty M|\sS|.
\end{align*}
On the other hand, Lemma~\ref{l.lambda_continuity} yields
\begin{align*}
    \Ll|\bar F_{N+M,\lambda'_{N+M}} - \bar F_{N+M,\lambda_{N+M}}\Rr|\leq \Ll(tC_\xi |\kappa|_\infty   +  |q|_{L^1}|\kappa|_\infty +C_\mu\Rr)\Ll|\lambda'_{N+M}-\lambda_{N+M}\Rr|.
\end{align*}
Notice that
\begin{align*}
    \Ll|\lambda'_{N+M}-\lambda_{N+M}\Rr| \stackrel{\eqref{e.|lambda'_N+M,s-lambda_N,s|<M/N+M}}{\leq } \Ll|\lambda_N-\lambda_{N+M}\Rr| + \frac{M|\sS|}{N+M}
\end{align*}
Combining the above two displays, we get the desired result.
\end{proof}

\section{Relation to vector spin glasses}\label{s.rel_to_vector_spin}

In this section, we show that if $\lambda_{N,s}$ is rational for every $s\in\sS$, we can equate $\bar F_{N,\lambda_N}$ to the free energy of a vector spin glass. We start by introducing the setting of vector spin glass models in Section~\ref{s.vector_spin_glass} and then prove Lemma~\ref{l.match_mp_with_vector} and Corollary~\ref{c.equiv_rational}.

\subsection{Vector spin glasses}\label{s.vector_spin_glass}
Let $\D\in\N$ be the dimension for vector-valued spins and let $P^\vec_1$ be a finite nonnegative measure supported on a compact set in $\R^\D$. We view $P^\vec_1$ as the distribution for a single spin. For each $N\in\N$, a spin configuration with size $N$ is denoted by $\bsigma = (\bsigma_{dn})_{1\leq d\leq \D,\, 1\leq n\leq N}$. Column vectors $\bsigma_{\bullet n}$ are individual spins in $\bsigma$. We sample $\bsigma$ by independently drawing each $\bsigma_{\bullet n}$ according to $P_1$. More precisely, denoting the distribution of $\bsigma$ as $P^\vec_N$, we have
\begin{align}\label{e.P^vec_N=}
    \d P^\vec_N(\bsigma) = \otimes_{n=1}^N \d P_1^\vec(\bsigma_{\bullet n}).
\end{align}
Given a smooth function $\bxi :\R^{\D\times\D}\to \R$, for each $N\in\N$, we assume the existence of a centered Gaussian process $\Ll(H^\vec_N(\bsigma)\Rr)_{\bsigma\in \R^{\D\times N}}$ with covariance
\begin{align}\label{e.EH^vec_N(sigma)H^vec_N(sigma')=}
    \E H^\vec_N(\bsigma)H^\vec_N(\bsigma') = N \bxi\Ll(\frac{\bsigma\bsigma'^\intercal}{N}\Rr).
\end{align}
We interpret $\frac{\bsigma\bsigma'^\intercal}{N}$ as the $\R^{\D\times\D}$-valued overlap between configurations $\bsigma$ and $\bsigma'$.
For $\bq\in \mcl Q_\infty(\D)$ (see Section~\ref{s.matrices,inner_products,paths}), conditioned on $\fR$, let $(\bw^\bq(\alpha))_{\alpha\in\supp\fR}$ be the $\R^\D$-valued centered Gaussian process with covariance
\begin{align}\label{e.E[bwbw]=}
    \E \bw^\bq(\alpha) \bw^\bq(\alpha') = \bq(\alpha\wedge\alpha').
\end{align}
Conditioned on $\fR$, for each $i\in\{1,\dots,N\}$, let $\bw^\bq_i$ be independent copies of $\bw^\bq$. Then, we set
\begin{align}\label{e.W^q_N(alpha)=}
    W^\bq_N(\alpha) = \Ll(\bw^\bq_1(\alpha),\dots, \bw^\bq_N(\alpha)\Rr),\quad\forall \alpha\in\supp\fR.
\end{align}
We view $W^\bq_N(\alpha)$ as an $\R^{\D\times N}$-valued process with column vectors $\bw^\bq_i$.
For each $N\in\N$, $t\in\R_+$, and $\bq\in\mcl Q_\infty(\D)$, we consider the Hamiltonian and free energy:
\begin{gather}
    H^{\vec,t,\bq}_N(\bsigma,\alpha) = \sqrt{2t}H^\vec_N(\bsigma) - N \bxi\Ll(\bsigma\bsigma^\intercal/N\Rr)
     + \sqrt{2}W^\bq_N(\alpha)\cdot \bsigma - \bq(1)\cdot \bsigma\bsigma^\intercal, \label{e.H^vec,t,q_N=}
    \\
    \bar F_N^\vec(t,\bq) = - \frac{1}{N}\E \log \iint \exp\bigg(  H^{\vec,t,\bq}_N(\bsigma,\alpha)\bigg) \d P^\vec_N(\bsigma)\d\fR(\alpha),\label{e.F^vec(t,q)=}
\end{gather}
where the expectation is first taken over Gaussian randomness in $H^\vec_N$ and $W^\bq_N$ and then over the randomness in $\fR$.

\subsection{Reduction}

For $\M\in \N$, we call a collection $(\sfM_s)_{s\in\sS}$ of subsets a \textbf{weak partition} of $\{1,\dots,\M\}$ if $\cup_{s\in\sS}\sfM_s = \{1,\dots,\M\}$ and $ \sfM_s\cap \sfM_{s'}=\emptyset$ whenever $s\neq s'$. This differs from the standard notion in that we allow $\sfM_s$ to be empty.
We work with the multi-species spin glass with system size $\M N$ for $N\in\N$ and with species proportion satisfy
\begin{align}\label{e.lambda^MN_s=|M_s|/M}
    \lambda_{\M N,s} = |\sfM_s|/\M,\quad\forall s \in\sS
\end{align}
for some weak partition $(\sfM_s)_{s\in\sS}$ of $\{1,\dots ,\M\}$.
Under this assumption, among $\M N$ spins of the multi-species configuration $\sigma$, there are exactly $|\sfM_s|N$ spins belonging to the $s$-species for each $s\in\sS$.

We want to map this model to a vector spin model with spins in $\R^{\Delta}$ and size $N$, where
\begin{align}\label{e.Delta=}
    \Delta = \Delta(\M,\lambda_{\M N}) = \sum_{s\in\sS}\lambda_{\M N,s}\M \kappa_s = \sum_{s\in\sS}|\sfM_s|\kappa_s.
\end{align}
Notice that $\Delta$ does not depend on $N$ due to the form of $\lambda_{\M N,s}$ in~\eqref{e.lambda^MN_s=|M_s|/M}.
We reorder $\{1,2,\dots,\Delta\}$ by fixing an arbitrary bijection:
\begin{align}\label{e.lexico}
    \{1,2,\dots, \Delta\} \quad\stackrel{}{\longleftrightarrow}\quad \cup_{s\in\sS}\Ll\{(m,k):\: m\in \sfM_s,\, k\in\{1,\dots,\kappa_s\}\Rr\}.
\end{align}
Under this re-ordering, we view $a\in \R^{\Delta\times \Delta}$ and $b\in \R^{\Delta\times N}$ as $a=(a_{(m,k)(m',k')})_{m,m'; k,k'}$ and $b=(b_{(m,k)n})_{m, k,n}$. For $a\in \R^{\Delta\times \Delta}$, we use the notation, for $m\in \sfM_s$ and $m'\in\sfM_{s'}$
\begin{align}\label{e.a(m,bullet)(m',bullet)=}
    a_{(m,\bullet)(m',\bullet)} = \Ll(a_{(m,k)(m',k')}\Rr)_{1\leq k\leq\kappa_s,\, 1\leq k\leq \kappa_{s'}}\in \R^{\kappa_s\times \kappa_{s'}}.
\end{align}

We can fix a bijection $\sigma\mapsto \bsigma$ from $\Sigma^{\M N}$ (see~\eqref{e.Sigma}) to $[-1,+1]^{\Delta \times N}$ with the property:
\begin{align}\label{e.sigma_to_bsigma_bijection}
    \forall i \in I_{N,s},\ \exists m\in \sfM_s,\ \exists n\in\{1,\dots,N\}:\quad \sigma_{\bullet i}\mapsto \bsigma_{(m,\bullet)n}=(\bsigma_{(m,k)n})_{1\leq k\leq \kappa_s}.
\end{align}
In words, every single spin in the $s$-species is mapped to a sub-column vector crossing $\kappa_s$ rows with indices $(m,k)$, for $k\in\{1,\dots,\kappa_s\}$, for some $m \in\sfM_s$. Under such a bijection, we have the following correspondence of overlaps (recall the overlap~\eqref{e.R_N,s=} in the multi-species setting):
\begin{align}\label{e.R_N,s=(matrix_overlap)}
    R_{\M N,\lambda_{\M N},s}(\sigma,\sigma')= (\M N)^{-1}\sum_{m\in \sfM_s}\Ll(\bsigma\bsigma'^\intercal\Rr)_{(m,\bullet)(m,\bullet)},\quad\forall s\in\sS,
\end{align}
where we have used the notation in~\eqref{e.a(m,bullet)(m',bullet)=} and each entry in $\Ll(\bsigma\bsigma'^\intercal\Rr)_{(m,\bullet)(m,\bullet)}$ is given by
\begin{align}\label{e.bsigmabisgma'=}
    \Ll(\bsigma\bsigma'^\intercal\Rr)_{(m,k)(m,k')}= \sum_{n=1}^N\bsigma_{(m,k)n}\bsigma'_{(m,k')n},\quad \forall k,k'\in\{1,\dots,\kappa_s\}.
\end{align}
This is the usual matrix multiplication in view of the re-ordering~\eqref{e.lexico}.

We describe the corresponding single vector spin distribution.
Let $P^\vec_1$ be the finite nonnegative measure supported on $[-1,1]^{\Delta}$ given by
\begin{align}\label{e.P^vec_1=mp}
    \d P^\vec_1(\bsigma) = \otimes_{s\in\sS}\otimes_{m\in\sfM_s} \d\mu_{s}\Ll(\bsigma_{(m,\bullet)1}\Rr)
\end{align}
Suggested by~\eqref{e.R_N,s=(matrix_overlap)}, for $\xi$ in~\eqref{e.EH_N(sigma)H_N(sigma')=}, we take
\begin{align}\label{e.bxi=mp}
    \bxi(a) = \M\xi \bigg(\Big(M^{-1}\sum_{m\in \sfM_s}a_{(m,\bullet)(m,\bullet)}\Big)_{s\in\sS}\bigg),\quad\forall a \in \R^{\Delta\times\Delta},
\end{align}
Lastly, we describe the corresponding parameter for the external field.
To each $q\in\mcl Q_\infty^\sS(\kappa)$ (see~\eqref{e.Q^S(kappa)=}), we associate $\bq\in \mcl Q_\infty(\Delta)$ given by
\begin{align}\label{e.bq=mp}
     \bq_{(m,\bullet)(m',\bullet)}=0,\quad\forall m\neq m';\qquad \bq_{(m,\bullet)(m,\bullet)} = q_s,\quad\forall m \in \sfM_s.
\end{align}

\begin{lemma}[Relation between multi-species and vector spin glass]\label{l.match_mp_with_vector}
Let $\M\in\N$ and let $(\sfM_s)_{s\in\sS}$ be a weak partition of $\{1,\dots,\M\}$.
Correspondingly, let $P^\vec_1$ and $\bxi$ be given as in \eqref{e.P^vec_1=mp} and~\eqref{e.bxi=mp} and let $\Ll(\bar F^\vec_N\Rr)_{N\in\N}$ be the associated free energy given in~\eqref{e.F^vec(t,q)=}.
In particular, the vector spin is $\Delta$-dimensional for $\Delta$ in~\eqref{e.Delta=}.
Then, for every $N\in\N$, $\lambda_{\M N}$ satisfying~\eqref{e.lambda^MN_s=|M_s|/M}, and $(t,q)\in\R_+\times \mcl Q_\infty^\sS(\kappa)$, we have
\begin{align}\label{e.F_MN=M^-1F_N}
    \bar F_{\M N,\,\lambda_{\M N}}(t,q) =\M^{-1} \bar F^\vec_N(t,\bq)
\end{align}
with $\bq$ given in~\eqref{e.bq=mp}.
\end{lemma}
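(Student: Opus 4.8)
The plan is to verify that, under the bijection $\sigma\mapsto\bsigma$ from $\Sigma^{\M N}$ to $[-1,+1]^{\Delta\times N}$ fixed in~\eqref{e.sigma_to_bsigma_bijection}, every ingredient defining $\bar F_{\M N,\lambda_{\M N}}(t,q)$ is carried exactly onto the corresponding ingredient defining $\M^{-1}\bar F^\vec_N(t,\bq)$, so that~\eqref{e.F_MN=M^-1F_N} is a literal change of variables in the integral defining the free energy. Concretely, I would check term by term: (i) the reference measures match, i.e.\ the pushforward of $P_{\M N,\lambda_{\M N}}$ under $\sigma\mapsto\bsigma$ equals $P^\vec_N$; (ii) the Gaussian Hamiltonians $H_{\M N}(\sigma)$ and $H^\vec_N(\bsigma)$ have the same law, by comparing covariances; (iii) the external-field terms $W^q_{\M N}(\sigma,\alpha)$ and $W^\bq_N(\alpha)\cdot\bsigma$ have the same law conditioned on $\fR$, again via covariances; and (iv) the self-overlap correction terms match. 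Once all four hold, the two double integrals $\iint\exp(H^{t,q}_{\M N}(\sigma,\alpha))\,\d P_{\M N,\lambda_{\M N}}(\sigma)\,\d\fR(\alpha)$ and $\iint\exp(H^{\vec,t,\bq}_N(\bsigma,\alpha))\,\d P^\vec_N(\bsigma)\,\d\fR(\alpha)$ are equal (random variable by random variable in the Gaussian and cascade randomness), hence so are their $\E\log$'s, and the prefactors $\tfrac1{\M N}$ versus $\tfrac1N$ produce the factor $\M^{-1}$.

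The key computational input is the overlap identity~\eqref{e.R_N,s=(matrix_overlap)}: $R_{\M N,\lambda_{\M N},s}(\sigma,\sigma') = (\M N)^{-1}\sum_{m\in\sfM_s}(\bsigma\bsigma'^\intercal)_{(m,\bullet)(m,\bullet)}$. I would first establish this directly from~\eqref{e.R_N,s=}, \eqref{e.sigma_bullet,I}, \eqref{e.bsigmabisgma'=}, and the block structure of the bijection. Then: for (ii), $\E[H^\vec_N(\bsigma)H^\vec_N(\bsigma')] = N\bxi(\bsigma\bsigma'^\intercal/N)$ by~\eqref{e.EH^vec_N(sigma)H^vec_N(sigma')=}, and plugging the definition~\eqref{e.bxi=mp} of $\bxi$ together with~\eqref{e.R_N,s=(matrix_overlap)} gives $N\cdot\M\,\xi\big((\M^{-1}\sum_{m\in\sfM_s}(\bsigma\bsigma'^\intercal)_{(m,\bullet)(m,\bullet)}/N)_s\big) = \M N\,\xi(R_{\M N,\lambda_{\M N}}(\sigma,\sigma')) = \E[H_{\M N}(\sigma)H_{\M N}(\sigma')]$ by~\eqref{e.EH_N(sigma)H_N(sigma')=}. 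For (iii), from~\eqref{e.bq=mp} the block-diagonal structure of $\bq$ means $W^\bq_N(\alpha)\cdot\bsigma = \sum_{s\in\sS}\sum_{m\in\sfM_s}\bw^\bq_{(m,\bullet)}(\alpha)\cdot\bsigma_{(m,\bullet)\bullet}$ with each $\bw^\bq_{(m,\bullet)}$ an $\R^{\kappa_s}$-valued process of covariance $q_s(\alpha\wedge\alpha')$, matching~\eqref{e.Ew^q_s_iw^q_s_i=}; comparing with~\eqref{e.W^q_N(sigma,alpha)=} and~\eqref{e.cov_external_multi-sp}, and using that the $\M N$ spins of the $s$-species correspond bijectively to the index set $\{(m,n): m\in\sfM_s,\, n\le N\}$ of size $|\sfM_s|N$, one sees the two covariances~\eqref{e.cov_external_multi-sp} and $N\bq(\alpha\wedge\alpha')\cdot\bsigma\bsigma'^\intercal/N$ agree (the independence of the constituent processes conditioned on $\fR$ is assumed in both constructions). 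For (iv), $\bq(1)\cdot\bsigma\bsigma^\intercal = \sum_s\sum_{m\in\sfM_s}q_s(1)\cdot(\bsigma\bsigma^\intercal)_{(m,\bullet)(m,\bullet)} = \M N\, q(1)\cdot R_{\M N,\lambda_{\M N}}(\sigma,\sigma)$, and likewise for the $\xi$-correction, so the self-overlap terms in~\eqref{e.H^t,q_N=} and~\eqref{e.H^vec,t,q_N=} match up to the overall factor $\M$.

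For (i), I would note that~\eqref{e.P^vec_1=mp} says a single $\R^\Delta$-valued vector spin $\bsigma_{\bullet 1}$ is drawn by independently sampling, for each $s$ and each $m\in\sfM_s$, the sub-block $\bsigma_{(m,\bullet)1}\sim\mu_s$; tensoring over the $N$ columns via~\eqref{e.P^vec_N=} and re-indexing via the bijection~\eqref{e.sigma_to_bsigma_bijection} recovers exactly $\otimes_{s\in\sS}\otimes_{n\in I_{\M N,s}}\d\mu_s(\sigma_{\bullet n})$, which is $\d P_{\M N,\lambda_{\M N}}$ by~\eqref{e.P_N=}, using $|I_{\M N,s}| = \lambda_{\M N,s}\M N = |\sfM_s|N$. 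Since the bijection is measure-preserving in this sense and does not touch the $\alpha$ variable, the cascade integral $\d\fR(\alpha)$ is untouched.

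The main obstacle — really the only non-routine point — is bookkeeping the re-indexing~\eqref{e.lexico} carefully enough to be sure the ``usual matrix multiplication'' claim after~\eqref{e.bsigmabisgma'=} is correct and that the off-diagonal blocks of $\bsigma\bsigma'^\intercal$ (the $(m,\bullet)(m',\bullet)$ blocks with $m\ne m'$, including $m,m'$ in different species) genuinely do not enter: they are absent from $\bq$ by~\eqref{e.bq=mp}, and they are absent from $\bxi$ because~\eqref{e.bxi=mp} only sees the within-block diagonal sums $\sum_{m\in\sfM_s}(\cdot)_{(m,\bullet)(m,\bullet)}$. Once one is confident that the multi-species overlap only ever sees these diagonal blocks — which is exactly the content of~\eqref{e.R_N,s=(matrix_overlap)} — everything else is a direct substitution, and I would present the proof as: state the bijection, record~\eqref{e.R_N,s=(matrix_overlap)}, then observe that $H^{\vec,t,\bq}_N(\bsigma,\alpha) = \M^{-1}\cdot\M H^{t,q}_{\M N}(\sigma,\alpha)$ is false dimensionally — rather $H^{\vec,t,\bq}_N(\bsigma,\alpha)$ and $H^{t,q}_{\M N}(\sigma,\alpha)$ have the same law as processes (all four constituent pieces matching), whence the integrands agree and~\eqref{e.F_MN=M^-1F_N} follows from $\tfrac1{\M N} = \M^{-1}\cdot\tfrac1N$.
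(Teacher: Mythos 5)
Your proposal is correct and follows essentially the same route as the paper: identify $\sigma$ with $\bsigma$ via the bijection~\eqref{e.sigma_to_bsigma_bijection}, match the reference measures, match the covariances of $H_{\M N}$ and $H^\vec_N$ via~\eqref{e.R_N,s=(matrix_overlap)} and~\eqref{e.bxi=mp}, match the external fields via the block-diagonal form of $\bq$ in~\eqref{e.bq=mp}, and conclude that the two Hamiltonians have the same law so that only the normalizations $\tfrac{1}{\M N}$ versus $\tfrac1N$ differ. Your closing remark that the identity is an equality in law of processes rather than a pointwise identity is exactly the content of the paper's~\eqref{e.H^t,q_N=(d)=}.
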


\begin{proof}
We identify the two types of configurations $\sigma$ and $\bsigma$ via a bijection described as in~\eqref{e.sigma_to_bsigma_bijection}. Under this identification, we can see that $P_{\M N,\lambda_{\M N}}$ in~\eqref{e.P_N=} is the same as~$P^\vec_N$ in~\eqref{e.P^vec_N=} with $P^\vec_1$ in~\eqref{e.P^vec_1=mp}. Next, using~\eqref{e.R_N,s=(matrix_overlap)} and~\eqref{e.bxi=mp}, we have $N\bxi(\bsigma\bsigma'^\intercal/N)=\M N\xi(R_{\M N,\lambda_{\M N}}(\sigma,\sigma'))$. From this and the covariances~\eqref{e.EH_N(sigma)H_N(sigma')=} and~\eqref{e.EH^vec_N(sigma)H^vec_N(sigma')=}, we can deduce that $(H_{\M N}(\sigma))_\sigma \stackrel{\d}{=} (H^\vec_N(\bsigma))_\bsigma $. It remains to verify that the external fields have the same distribution. We can compute 
\begin{align*}
\begin{split}
    &\E W^q_{\M N}(\alpha,\sigma)W^q_{\M N}(\alpha',\sigma') 
    \stackrel{\eqref{e.cov_external_multi-sp}}{=} \M N\sum_{s\in\sS} q_s(\alpha\wedge\alpha')\cdot R_{\M N,\lambda_{\M N},s}(\sigma,\sigma')
    \\
    &\stackrel{\eqref{e.R_N,s=},\,\eqref{e.R_N,s=(matrix_overlap)}}{=} \sum_{s\in\sS} q_s(\alpha\wedge\alpha')\cdot\sum_{m\in \sfM_s} \Ll(\bsigma\bsigma'^\intercal\Rr)_{(d\bullet)(d\bullet)} \stackrel{\eqref{e.bq=mp}}{=} \bq(\alpha\wedge\alpha') \cdot\bsigma\bsigma'^\intercal.
\end{split}
\end{align*}
This implies that, under the identification~\eqref{e.sigma_to_bsigma_bijection} and conditioned on $\fR$, we have
\begin{align}\label{e.external_field_equal_distri}
    \Ll(W^q_{\M N}(\alpha,\sigma)\Rr)_{\sigma} \stackrel{\d}{=}\Ll(W^\bq_N(\alpha)\cdot \bsigma\Rr)_{\bsigma}
\end{align}
where the latter appears in $\bar F^\vec_N(t,\bq)$ (see~\eqref{e.F^vec(t,q)=}). The equivalence of distributions of various Gaussian terms also ensures that the self-overlap correction terms are the same in $\bar F_{\M N,\,\lambda_{\M N}}(t,q)$ and $\bar F^\vec_N(t,\bq)$. 
Hence, under the identification between $\sigma$ and $\bsigma$ as in~\eqref{e.sigma_to_bsigma_bijection}, we have thus verified
\begin{align}\label{e.H^t,q_N=(d)=}
    \Ll(H^{t,q}_{\M N}(\sigma,\alpha)\Rr)_{\sigma\in \Sigma^{\M N}} \stackrel{\d}{= } \Ll(H^{\vec, t,\bq}_{\M N}(\bsigma,\alpha)\Rr)_{\bsigma\in [-1,+1]^{\Delta\times N}}
\end{align}
where the two sides are given in~\eqref{e.H^t,q_N=} and~\eqref{e.H^vec,t,q_N=}.
In particular, this yields~\eqref{e.F_MN=M^-1F_N}. 
\end{proof}

For any $r\in\R$, write $\lceil r \rceil = \min\{n\in\N:n\geq r\}$.

\begin{corollary}[Equivalence in the rational case]\label{c.equiv_rational}
Assume that there are $M\in\N$ and a weak partition $(\sfM_s)_{s\in\sS}$ of $\{1,\dots,\M\}$ such that
\begin{align*}
    \lim_{N\to\infty}\lambda_{N,s} = |\sfM_s|/\M ,\quad\forall s \in \sS.
\end{align*}
Let $\bar F^\vec_N$ be the free energy with $P^\vec_1$ and $\bxi$ specified in~\eqref{e.P^vec_1=mp} and~\eqref{e.bxi=mp}. For every $t\in\R_+$ and $q\in \mcl Q_\infty^\sS(\kappa)$, let $\bq$ be given as in~\eqref{e.bq=mp} and we have
\begin{align*}
    \lim_{N\to\infty}\Ll|  \bar F_{N,\lambda_N}(t,q)- \M^{-1}\bar  F^\vec_{\lceil N/\M\rceil}(t,\bq)\Rr|=0.
\end{align*}
\end{corollary}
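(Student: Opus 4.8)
The plan is to combine Lemma~\ref{l.match_mp_with_vector} with the two stability estimates from Section~\ref{s.prelim} (Lemmas~\ref{l.lambda_continuity} and~\ref{l.NF_N-(N+M)F_(N+M)}). The point is that Lemma~\ref{l.match_mp_with_vector} gives an exact identity only when the system size is an exact multiple $\M N$ of $\M$ \emph{and} the proportions equal $|\sfM_s|/\M$ on the nose; for a general $N$ with proportions $\lambda_N$ merely converging to $|\sfM_s|/\M$, we have to travel from $(N,\lambda_N)$ to a nearby multiple-of-$\M$ model with exact proportions, picking up errors that we control by the two lemmas.

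First I would fix $(t,q)$ and suppress it from the notation. Let $N' = \M\lceil N/\M\rceil$, so $N' \in \M\N$, $N \le N' < N + \M$, and $N' = \M\lceil N/\M\rceil$. Let $\bar\lambda = (|\sfM_s|/\M)_{s\in\sS} \in \blacktriangle_{N'}$ be the target rational proportion vector (it lies in $\blacktriangle_{N'}$ because $N'/\M = \lceil N/\M\rceil \in \N$, so $|\sfM_s| N'/\M \in \N$). By Lemma~\ref{l.match_mp_with_vector} applied with this $\M$, the weak partition $(\sfM_s)_{s\in\sS}$, and $\lambda_{\M N'' }$ with $N'' = \lceil N/\M\rceil$, we have the exact identity
\begin{align*}
    \bar F_{N', \bar\lambda}(t,q) = \M^{-1}\bar F^\vec_{\lceil N/\M\rceil}(t,\bq).
\end{align*}
So it suffices to show $\bar F_{N,\lambda_N}(t,q) - \bar F_{N',\bar\lambda}(t,q) \to 0$. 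I would interpose the intermediate quantity $\bar F_{N',\lambda'_{N'}}$ where $\lambda'_{N'} \in \blacktriangle_{N'}$ is chosen close to $\lambda_N$ (e.g.\ by rounding $\lambda_N$ to $\Z/N'$), and split the difference into three pieces via the triangle inequality:
\begin{align*}
    \big|\bar F_{N,\lambda_N} - \bar F_{N',\bar\lambda}\big| \le \big|\bar F_{N,\lambda_N} - \bar F_{N',\lambda'_{N'}}\big| + \big|\bar F_{N',\lambda'_{N'}} - \bar F_{N',\lambda_N^{\mathrm{round}}}\big| + \big|\bar F_{N',\lambda_N^{\mathrm{round}}} - \bar F_{N',\bar\lambda}\big|.
\end{align*}
(One may be able to collapse this to two pieces; the exact bookkeeping is routine.) For the first piece I would invoke Lemma~\ref{l.NF_N-(N+M)F_(N+M)} with $M = N' - N < \M$: it bounds $|N\bar F_{N,\lambda_N} - N'\bar F_{N',\lambda'_{N'}}|$ by $C\big(N'|\lambda_N - \lambda'_{N'}| + (N'-N)|\sS|\big)$, and since $N'-N < \M$ is fixed and $N'|\lambda_N - \lambda'_{N'}| = O(1)$ by the rounding (each coordinate moves by at most $1/N'$), this is $O(1)$; dividing by $N$ and using $N'/N \to 1$ gives that $|\bar F_{N,\lambda_N} - \bar F_{N',\lambda'_{N'}}|$ together with the discrepancy caused by the $N$ versus $N'$ normalization tends to $0$. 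For the remaining pieces I would use Lemma~\ref{l.lambda_continuity}: since $\lambda_N \to \bar\lambda$ by hypothesis and $\lambda'_{N'}, \lambda_N^{\mathrm{round}}$ are within $O(1/N')$ of $\lambda_N$, we get $|\lambda'_{N'} - \bar\lambda|, |\lambda_N^{\mathrm{round}} - \bar\lambda| \to 0$, hence these free-energy differences vanish.

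The main obstacle — really the only subtlety, since the mechanics are routine — is the mismatch in the \emph{normalization} between $\bar F_{N,\lambda_N}$ (divided by $N$) and $\bar F_{N',\bar\lambda}$ (divided by $N'$): Lemma~\ref{l.NF_N-(N+M)F_(N+M)} controls the \emph{unnormalized} difference $|N\bar F_N - N'\bar F_{N'}|$, so I must be careful that $\big|\bar F_{N,\lambda_N} - \bar F_{N',\lambda'_{N'}}\big| \le \frac{1}{N}|N\bar F_N - N'\bar F_{N'}| + \big|\frac{1}{N} - \frac{1}{N'}\big| N'|\bar F_{N',\lambda'_{N'}}|$ and that both terms go to $0$. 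For the first, $|N\bar F_N - N'\bar F_{N'}| = O(1)$ as argued, so dividing by $N \to \infty$ kills it. For the second, $N'\big|\frac{1}{N}-\frac{1}{N'}\big| = \frac{N'-N}{N} \to 0$ since $N' - N < \M$ is bounded, while $|\bar F_{N',\lambda'_{N'}}|$ is bounded uniformly in $N'$ (a standard a priori bound on the free energy, which also follows from Lemma~\ref{l.lambda_continuity} plus boundedness of $\bar F_{N',\bar\lambda} = \M^{-1}\bar F^\vec$, the latter being controlled by the compactness of the spin support and smoothness of $\bxi$). Assembling the three vanishing pieces gives the claim.
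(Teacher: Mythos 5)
Your proposal is correct and follows essentially the same route as the paper: both arguments combine the exact identity of Lemma~\ref{l.match_mp_with_vector} at size $\lceil N/\M\rceil\M$ with the two stability estimates (Lemmas~\ref{l.lambda_continuity} and~\ref{l.NF_N-(N+M)F_(N+M)}) and a uniform a priori bound on $\bar F$ to handle the $1/N$ versus $1/N'$ normalization, merely interposing the intermediate proportions in a slightly different order. No gap.
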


\begin{proof}
Fixing $t$ and $q$, we omit them and $\bq$ from the notation. For two sequences $(r_N)_{N\in\N}$ and $(r'_N)_{N\in\N}$ of real numbers, we write $r_N \approx r'_N$ provided $\lim_{N\to\infty}|r_N-r'_N|=0$. We need the following estimates for every $N\in\N$,
\begin{align}\label{e.0<[N/D]D-N<D}
    0\leq\lceil N/\M\rceil\M - N\leq \M,\qquad \Ll|\bar F_{N,\lambda_N}\Rr|\leq C
\end{align}
for some constant $C>0$. The latter estimate follows from Jensen's inequality applied to the expression of $\bar F_{N,\lambda_N}$ (see~\eqref{e.F_N(t,q)=}).
Set $\lambda_\infty = (|\sfM_s|/\M)_{s\in\sS}$.
Now, we can get
\begin{align*}
    \bar F_{N,\lambda_N} \stackrel{\text{L.\ref{l.lambda_continuity}}}{\approx} \bar F_{N,\lambda_\infty} \stackrel{\text{\eqref{e.0<[N/D]D-N<D}, L.\ref{l.NF_N-(N+M)F_(N+M)}}}{\approx} \bar F_{\lceil N/\M\rceil\M,\,\lambda_\infty} \stackrel{\text{L.\ref{l.match_mp_with_vector}}}{=} \M^{-1}\bar  F^\vec_{\lceil N/\M\rceil}
\end{align*}
which gives the desired result.
\end{proof}

\section{Analytic properties of the free energy}\label{s.analytic_properties}

We study analytic properties of the free energy, its limits (if exists), and the initial condition at $t=0$. In Section~\ref{s.diff_monot_precompt}, we show that the free energy is differentiable and monotone, and that any subsequence of $(\bar F_{N,\lambda_N})_{N\in\N}$ is precompact in the local uniform topology.
In Section~\ref{s.semi-concave}, we show that the free energy is locally semi-concave uniformly in $N$ and then deduce properties of any subsequential limit. In Section~\ref{s.initial_cond}, we identify $\bar F_{N,\lambda_N}(0,\cdot)$ and establish its regularity properties. Results in this section adapt those in~\cite[Sections~3 and 5]{HJ_critical_pts}.

\subsection{Differentiability, monotonicity, and precompactness}\label{s.diff_monot_precompt}

Recall the notation of $\mcl Q_\square(\D)$ in Section~\ref{s.matrices,inner_products,paths} and $\mcl Q_\square(\kappa)$ in~\eqref{e.Q^S(kappa)=} for spaces of paths.

Let $G$ be either $\mcl Q^\sS_2(\kappa)$, $\R_+\times \mcl Q^\sS_2(\kappa)$, or $\mcl Q_2(\D)$ for some $\D\in\N$. Let $L^2$ be the ambient Hilbert space for $G$. A function $g:G\to\R$ is said to be \textbf{Fr\'echet differentiable} at some $q\in G$ if there is a unique $y\in L^2$ such that
\begin{align}\label{e.def_Frechet}
    \lim_{r\to0}\sup_{\substack{q'\in G\setminus\{q\}\\ |q'-q|_{L^2}\leq r}}\frac{\Ll|g(q')-g(q)-\la y,q'-q\ra_{L^2}\Rr|}{|q'-q|_{L^2}} = 0.
\end{align}
In this case, we call $y$ the \textbf{Fr\'echet derivative} of $g$ at $q$.

For every $q\in G$, we define
\begin{align*}
    \mathrm{Adm}(G,q) = \Ll\{e\in L^2 \ \big|\  \exists r>0:\: \forall r'\in[0,r],\ q+re\in G \Rr\}
\end{align*}
to be set of directions along which a small line segment starting from $q$ belongs to $G$. We say that $g:G\to\R$ is \textbf{Gateaux differentiable} at $q\in G$ if
\begin{itemize}
    \item $g'(q,e)= \lim_{r\searrow0}\frac{g(q+re)-g(q)}{r}$ exists for every $e\in \mathrm{Adm}(G,q)$;
    \item there is a unique $y\in L^2$ such that $g'(q,e)=\la y, e\ra_{L^2}$ for every $e\in \mathrm{Adm}(G,q)$.
\end{itemize}
In this case, we call $y$ the \textbf{Gateaux derivative} of $g$ at $q$.

If $g$ is differentiable at $q$ in either of the two senses, we denote its derivative at $q$ by $\partial_q g(q)$, which is an element in $L^2$.

For any $\D\in\N$ and $u\in \R_+$, we define
\begin{align*}
    \mcl Q_{\infty,\leq u} (\D) = \Ll\{q\in \mcl Q_{\infty}(\D):\: |q(r)|\leq u,\quad\forall r\in[0,1)\Rr\}.
\end{align*}
Then, for any $\lambda =(\lambda_s)_{s\in\sS}\in \R_+^\sS$, we define
\begin{align}\label{e.Q^s_infty,<lambda(kappa)=}
    \mcl Q^\sS_{\infty,\leq \lambda}(\kappa) = \prod_{s\in\sS} \mcl Q_{\infty, \leq \lambda_s}(\kappa_s).
\end{align}
Recall $\bar F_{N,\lambda_N}$ in~\eqref{e.F_N(t,q)=} and $\la\cdot\ra_{N,\lambda_N}$ in~\eqref{e.<>_N=}.

\begin{proposition}[Differentiability of $\bar F_{N,\lambda_N}$]
\label{p.F_N_smooth}
Let $N \in \N$, $\lambda_N\in \blacktriangle_N$, and let $\bar F_{N,\lambda_N}$ be given as in~\eqref{e.F_N(t,q)=}. We have for every $t,t' \in \R_+$ and $q,q' \in \mcl Q_\infty^\sS(\kappa)$ that 
\begin{equation*}
\Ll|\bar F_{N,\lambda_N}(t,{q}) - \bar F_{N,\lambda_N}(t',{q'})\Rr|\leq \Ll|{q}-{q'}\Rr|_{L^1} + |t-t'| \, \sup_{|a| \le 1} |\xi(a)|.
\end{equation*}
In particular, the free energy in~\eqref{e.F_N(t,q)=} can be extended by continuity to $\R_+\times\mcl Q_1^\sS(\kappa)$.
Moreover, the restriction of the function $\bar F_{N,\lambda_N}$ to $\R_+ \times \mcl Q_2^\sS(\kappa)$ is Fréchet (and Gateaux) differentiable everywhere, jointly in its two variables. We denote its Fréchet (and Gateaux) derivative in $q$ by $\dr_q \bar F_{N,\lambda_N}(t,q) = \dr_q \bar F_{N,\lambda_N}(t,q, \cdot) \in L^2([0,1]; \prod_{s\in\sS}S^{\kappa_s})$. For every $t \ge 0$, we have, for every $q \in \mcl Q_2^\sS(\kappa)$,
\begin{equation}
\label{e.bounds.der.FN}
\partial_q \bar F_{N,\lambda_N}(t,q) \in \mcl Q^\sS_{\infty,\leq \lambda_N}(\kappa), \qquad \Ll|\partial_t  \bar F_{N,\lambda_N}(t,q) \Rr| \le \sup_{|a| \le 1} |\xi(a)|,
\end{equation}
and, for every $q \in \mcl Q_\infty^\sS(\kappa)$  and $\pi \in L^2([0,1]; \prod_{s\in\sS}S^{\kappa_s})$,
\begin{equation}
\label{e.def.der.FN}
\begin{split}
    \la \pi,\dr_q \bar F_{N,\lambda_N}(t,{q})\ra_\cH &= \E \la \pi \Ll(\alpha\wedge\alpha'\Rr)\cdot R_{N,\lambda_N}(\sigma,\sigma')\ra_{N,\lambda_N}
\\
\partial_t \bar F_{N,\lambda_N}(t,{q}) &= \E \la \xi \Ll(R_{N,\lambda_N}(\sigma,\sigma')\Rr)\ra_{N,\lambda_N}.
\end{split}
\end{equation}
Finally, for every $r \in [1,+\infty]$,  $t \ge 0$ and $q, q' \in \mcl Q_2^\sS(\kappa)$ with $q'-q \in L^r$, we have
\begin{equation}
\label{e.continuity.der.FN}
\Ll|\dr_q \bar F_{N,\lambda_N}(t,{q'}) -\dr_q \bar F_{N,\lambda_N}(t,{q})\Rr|_{L^r}\leq 16N\Ll|q'-{q}\Rr|_{L^r}.
\end{equation}
In particular, the mapping $q \mapsto \dr_q \bar F_{N,\lambda_N}(t,q)$ can be extended to $\mcl Q_1^\sS(\kappa)$ by continuity, and the properties in \eqref{e.bounds.der.FN} and \eqref{e.continuity.der.FN} remain valid with $q, q' \in \mcl Q_1^\sS(\kappa)$. 
\end{proposition}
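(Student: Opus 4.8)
The plan is to follow the strategy of the corresponding result \cite[Proposition~3.1]{HJ_critical_pts} for vector spins, with the bookkeeping adapted to the product structure $\prod_{s\in\sS}\S^{\kappa_s}$ in place of a single matrix space. The whole statement is really a collection of differentiability and Lipschitz facts that follow from elementary manipulations of the Gibbs average, so I would organize the proof into four blocks in the order the claims appear.

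\emph{Step 1: Lipschitz bound and extension by continuity.} First I would write $\bar F_{N,\lambda_N}(t,q)$ as a smooth function of $(t,q)$ along line segments in $\R_+\times\mcl Q_\infty^\sS(\kappa)$. Differentiating under the integral and using Gaussian integration by parts (exactly the computation in the proof of Lemma~\ref{l.Gaussian interpolation}, applied to the interpolation in $t$ and in $q$) gives that the $t$-derivative equals $\E\la\xi(R_{N,\lambda_N}(\sigma,\sigma'))\ra_{N,\lambda_N}$ and the $q$-derivative in direction $\pi$ equals $\E\la\pi(\alpha\wedge\alpha')\cdot R_{N,\lambda_N}(\sigma,\sigma')\ra_{N,\lambda_N}$; the self-overlap correction terms in~\eqref{e.H^t,q_N=} are precisely what cancels the diagonal $\sigma=\sigma'$ contributions, which is why Corollary~\ref{c.Gaussian_interpolation} (rather than the bare Lemma) is the right tool. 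Since $|R_{N,\lambda_N,s}(\sigma,\sigma')|\leq\kappa_s$ entrywise and in fact $|R_{N,\lambda_N}|\le 1$ in the relevant normalization, the $t$-derivative is bounded by $\sup_{|a|\le1}|\xi(a)|$ and the $q$-derivative in direction $\pi$ is bounded by $\|\pi\|_{L^1}$ (pointwise $|R_{N,\lambda_N}(\sigma,\sigma')|\le 1$), giving the stated joint Lipschitz estimate and hence the continuous extension to $\R_+\times\mcl Q_1^\sS(\kappa)$.

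\emph{Step 2: Fréchet/Gateaux differentiability on $\mcl Q_2^\sS(\kappa)$ and the derivative formulas.} Here the key point is that the map $\pi\mapsto\E\la\pi(\alpha\wedge\alpha')\cdot R_{N,\lambda_N}(\sigma,\sigma')\ra_{N,\lambda_N}$ is a bounded linear functional on $L^2([0,1];\prod_s\S^{\kappa_s})$, so by Riesz representation it is given by some $y\in L^2$; one identifies $y=\dr_q\bar F_{N,\lambda_N}(t,q)$ and checks~\eqref{e.def.der.FN}. To upgrade Gateaux to Fréchet one uses the second-order Lipschitz bound from Step 3 (i.e., that $\pi\mapsto\dr_q\bar F$ is itself Lipschitz), which controls the remainder uniformly. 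That $\dr_q\bar F_{N,\lambda_N}(t,q)\in\mcl Q_{\infty,\le\lambda_N}^\sS(\kappa)$ requires showing the derivative, as a path, is increasing and right-continuous with the stated $L^\infty$ bound: the increasing/bounded properties come from the invariance of the cascade~\eqref{e.E<>_r=int} together with a synchronization/positivity argument on the overlap of the RPC (the derivative in direction supported on $[0,r]$ only increases in $r$), while the bound $\le\lambda_{N,s}$ follows since $R_{N,\lambda_N,s}(\sigma,\sigma')\in[-1,1]^{\kappa_s\times\kappa_s}$ and there are $|I_{N,s}|=\lambda_{N,s}N$ terms — this is the place where the multi-species normalization enters and where I'd be most careful. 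This is imported essentially verbatim from~\cite[Section~4 and Section~3]{HJ_critical_pts}.

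\emph{Step 3: The second-order bound~\eqref{e.continuity.der.FN} and final extension.} For $q,q'\in\mcl Q_2^\sS(\kappa)$ with $q'-q\in L^r$, I would interpolate $q_u=(1-u)q+uq'$ and differentiate $\dr_q\bar F_{N,\lambda_N}(t,q_u)$ in $u$; the derivative is a covariance of overlap-type quantities under the tilted Gibbs measure, and a standard concentration/bound for such covariances (using $|R_{N,\lambda_N}|\le 1$ and $|W^q_N|$ estimates) produces a factor of order $N$, giving the constant $16N$. The extension of $q\mapsto\dr_q\bar F_{N,\lambda_N}(t,q)$ to $\mcl Q_1^\sS(\kappa)$ then follows because $\mcl Q_\infty^\sS(\kappa)$ is dense and~\eqref{e.continuity.der.FN} with $r=1$ gives uniform continuity; the stated bounds pass to the limit. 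The main obstacle throughout is not any single computation but verifying that the derivative path lands in $\mcl Q_{\infty,\le\lambda_N}^\sS(\kappa)$ — monotonicity in the path variable and the species-dependent $L^\infty$ bound — which relies on the cascade invariance~\eqref{e.E<>_r=int} and an ultrametricity/synchronization argument; everything else is a routine adaptation of \cite[Proposition~3.1]{HJ_critical_pts} with $\prod_{s\in\sS}\S^{\kappa_s}$-valued overlaps.
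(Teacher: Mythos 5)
Your proposal follows essentially the same route as the paper, which itself simply invokes the vector-spin analogue \cite[Proposition~5.1]{HJ_critical_pts} and only comments on the one genuinely new point: the inclusion $\partial_q \bar F_{N,\lambda_N}(t,q)\in\mcl Q^\sS_{\infty,\leq\lambda_N}(\kappa)$, which you correctly derive from the first line of \eqref{e.def.der.FN}, the species-normalized overlap bound $|R_{N,\lambda_N,s}(\sigma,\sigma')|\leq\lambda_{N,s}$ coming from \eqref{e.R_N,s=}, and the invariance of the cascade (Lemma~\ref{l.invar}). No gaps worth flagging.
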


The proof follows the same lines as those for~\cite[Proposition~5.1]{HJ_critical_pts}, which are based on the Gaussian interpolation arguments.
We only need to explain the first inclusion in~\eqref{e.bounds.der.FN}. This follows from the first line in~\eqref{e.def.der.FN}, the fact that $|R_{N,\lambda_N,s}(\sigma,\sigma')|\leq \lambda_{N,s}$ due to~\eqref{e.R_N,s=}, and the invariance of cascade in Lemma~\ref{l.invar}.

\begin{proposition}[Precompactness of $\bar F_{N,\lambda_N}$]\label{p.precompactness}
For every $r\in(1,+\infty]$ and any sequence $(\lambda_N)_{N\in\N}$ with $\lambda_N\in\blacktriangle_N$, any subsequence of $\Ll(\bar F_{N,\lambda_N}\Rr)_{N\in\N}$ has a further subsequence which converges uniformly on every bounded metric ball in $\R_+\times \mcl Q_r^\sS(\kappa)$.
\end{proposition}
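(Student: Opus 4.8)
The plan is to deduce this from an Arzelà–Ascoli argument, using the uniform Lipschitz bounds already recorded in Proposition~\ref{p.F_N_smooth}. Fix $r\in(1,+\infty]$. For a fixed bounded metric ball $B\subset\R_+\times\mcl Q_r^\sS(\kappa)$, the family $(\bar F_{N,\lambda_N})_{N}$ restricted to $B$ is, by Proposition~\ref{p.F_N_smooth}, equi-Lipschitz with respect to the $L^1$-metric: indeed $|\bar F_{N,\lambda_N}(t,q)-\bar F_{N,\lambda_N}(t',q')|\le |q-q'|_{L^1}+|t-t'|\sup_{|a|\le1}|\xi(a)|$, and on a bounded ball in the $L^r$-metric the $L^1$-norm is controlled via $|\,\cdot\,|_{L^1}\le|\,\cdot\,|_{L^r}$ (Hölder, since $[0,1]$ has measure one). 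So the family is uniformly bounded (the bound $|\bar F_{N,\lambda_N}|\le C$ on $B$ follows from Jensen applied to~\eqref{e.F_N(t,q)=}, exactly as in~\eqref{e.0<[N/D]D-N<D}) and uniformly equicontinuous on $B$.

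The one genuine subtlety is that $\R_+\times\mcl Q_r^\sS(\kappa)$ is infinite-dimensional, so a bounded ball $B$ need not be precompact, and plain Arzelà–Ascoli does not apply directly. The standard fix, which I would follow, is to pass to a \emph{totally bounded} exhausting family: the set $\mcl Q^\sS_{\infty,\le u}(\kappa)$ of paths bounded by $u$ (intersected with a ball) is, together with an equicontinuity-in-$r\in[0,1)$ property of increasing bounded paths, relatively compact in $L^1$; more precisely, one restricts attention to finite-dimensional approximations $q\mapsto$ (piecewise-constant interpolation on a mesh) and uses that these are uniformly close to $q$ on the relevant ball. Concretely, I would first prove convergence on each $\R_+\times\mcl Q^\sS_{\infty,\le u}(\kappa)\cap B_k$ (where $B_k$ is the ball of radius $k$), where $\mcl Q_\infty$-balls \emph{are} $L^1$-precompact, extract by Arzelà–Ascoli a subsequence converging uniformly there, and then diagonalize over $u,k\in\N$ to get a single subsequence converging uniformly on every bounded ball of $\R_+\times\mcl Q^\sS_\infty(\kappa)$. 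Finally one upgrades from $\mcl Q_\infty$ to $\mcl Q_r$: given $(t,q)\in B\subset\R_+\times\mcl Q_r^\sS(\kappa)$, truncate $q$ to $q\wedge u\identity$ (entrywise, or via spectral truncation), note $|q-q\wedge u|_{L^1}\to0$ as $u\to\infty$ uniformly over the ball $B$ by a uniform-integrability argument (the $L^r$-bound with $r>1$ is exactly what makes the tails small, uniformly), and use the equi-Lipschitz bound to transfer uniform convergence from the truncated paths to $q$.

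The main obstacle is precisely this last passage: one must check that the truncations $q\wedge u\identity$ approximate $q$ in $L^1$ \emph{uniformly over a bounded $L^r$-ball}, which is where $r>1$ is essential (for $r=1$ no such uniform tail bound holds, consistent with the proposition excluding $r=1$); and one must confirm that $q\wedge u\identity$ stays inside $\mcl Q^\sS_\infty(\kappa)$ (right-continuity and monotonicity are preserved under entrywise/spectral truncation by $u\identity$). I expect this to follow the corresponding argument in~\cite[Section~3]{HJ_critical_pts} for vector spins essentially verbatim, with the only new bookkeeping being the product structure over $s\in\sS$ and the fact that $|R_{N,\lambda_N,s}|\le\lambda_{N,s}\le1$; in particular the $N$-uniformity of all constants comes from Proposition~\ref{p.F_N_smooth}, whose constants do not depend on $N$.
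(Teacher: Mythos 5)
Your skeleton — equi-Lipschitzness in the $L^1$ metric from Proposition~\ref{p.F_N_smooth}, uniform boundedness via Jensen, precompactness of bounded path-balls in $L^1$, then Arzel\`a--Ascoli with a diagonal extraction over an exhausting family of balls — is exactly the paper's route: the paper packages the domain-compactness into Lemma~\ref{l.compact_embed_paths} (any $L^r$-bounded sequence of increasing paths, $r>1$, has a subsequence converging a.e.\ and in $L^{r'}$ for every $r'<r$, a Helly-type statement adapted from \cite{HJ_critical_pts}), and with that lemma no truncation is needed at all.

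The one step in your write-up that fails as stated is the value-truncation $q\mapsto q\wedge u\,\identity$. Entrywise truncation does not preserve positive semidefiniteness, and spectral truncation (replacing each eigenvalue $x$ of $q(s)$ by $\min(x,u)$) does not preserve monotonicity of the path in the order on $\S^{\kappa_s}_+$, because $x\mapsto\min(x,u)$ is not operator monotone; so the truncated object need not lie in $\mcl Q^\sS_\infty(\kappa)$ and your equi-Lipschitz transfer has nothing admissible to be applied to. The standard repair is to truncate in the \emph{time} variable instead: since $q$ is increasing, $(1-s)\,|q(s)|^r\leq |q|_{L^r}^r$, so $q^{(\delta)}(s):=q\Ll(s\wedge(1-\delta)\Rr)$ is an increasing right-continuous path uniformly bounded by $\delta^{-1/r}|q|_{L^r}$, and by H\"older $|q-q^{(\delta)}|_{L^1}\leq \int_{1-\delta}^1|q(s)|\,\d s\leq \delta^{1-1/r}|q|_{L^r}$, which vanishes as $\delta\to0$ uniformly over a bounded $L^r$-ball — this is precisely where $r>1$ enters, as you anticipated. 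With that substitution (or by invoking Lemma~\ref{l.compact_embed_paths} directly) your argument goes through.
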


This proposition can be proved in the same way as~\cite[Proposition~3.3]{HJ_critical_pts} using the following lemma.

\begin{lemma}[Compact embedding of paths]\label{l.compact_embed_paths}
Let $r\in(1,+\infty]$ and let $(q_n)_{n\in\N}$ be a sequence in $\mcl Q_r^\sS(\kappa)$ such that
\begin{align*}
    \sup_{N\in\N} \Ll|q_n\Rr|_{L^r}<+\infty.
\end{align*}
Then, there exists a subsequence $\Ll(q_{n_k}\Rr)_{k\in\N}$ and some $q\in\mcl Q_r^\sS(\kappa)$ such that, for every $r'\in [1,r)$, this subsequence converges almost everywhere on $[0,1]$ and in $L^{r'}$ to $q$.
\end{lemma}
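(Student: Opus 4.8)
The plan is to reduce to the scalar (one-species, one-dimensional) statement, which is a variant of Helly's selection theorem, and then upgrade to $L^{r'}$-convergence via the Vitali convergence theorem. First I would note that $\mcl Q_r^\sS(\kappa) = \prod_{s\in\sS}\mcl Q_r(\kappa_s)$, so by a finite iteration of diagonal extraction it suffices to prove the statement for a single species, i.e.\ for a bounded sequence in $\mcl Q_r(\D)$ for a fixed $\D$. Next, for $a\in\S^\D_+$ write $|a|\asymp \sum_{i\leq j}|a_{ij}|$ up to dimensional constants, and observe that each entry-function $r\mapsto q_n(r)_{ij}$ is of bounded variation on $[0,1)$: for the diagonal entries $r\mapsto q_n(r)_{ii} = e_i^\intercal q_n(r) e_i$ this is immediate from monotonicity of $q_n$ in $\S^\D_+$ (it is increasing, with total variation controlled by $q_n(1^-)_{ii}$, which in turn is controlled by $\|q_n\|_{L^r}$ together with monotonicity — a monotone function bounded in $L^r$ near $1$ has a finite limit and hence bounded total variation on any $[0,1-\delta]$, and in fact on all of $[0,1)$ once one checks the endpoint using $L^r$ boundedness), and for off-diagonal entries one uses that $q_n(r)_{ii}+q_n(r)_{jj}\pm 2q_n(r)_{ij} = (e_i\pm e_j)^\intercal q_n(r)(e_i\pm e_j)$ is increasing, so $q_n(r)_{ij}$ is a difference of two increasing functions and hence BV with variation controlled by the diagonal entries.

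Then I would apply Helly's selection theorem entrywise (again with a finite diagonal argument over the entries $(i,j)$) to extract a subsequence $(q_{n_k})$ converging pointwise on $[0,1)$ — say at every point — to some limit $q$ taking values in $\S^\D_+$; right-continuity of the limit can be arranged by passing to the right-continuous modification, and monotonicity (i.e.\ $q(r')-q(r)\in\S^\D_+$ for $r\le r'$) passes to the pointwise limit since $\S^\D_+$ is closed. This shows $q\in\mcl Q(\D)$. To see $q\in\mcl Q_r(\D)$ and get the $L^{r'}$ convergence for $r'\in[1,r)$, I would invoke Fatou's lemma to get $\|q\|_{L^r}\le\liminf_k\|q_{n_k}\|_{L^r}<\infty$, and then use the Vitali convergence theorem: the sequence $(|q_{n_k}-q|^{r'})_k$ converges to $0$ a.e.\ and is uniformly integrable because $r'<r$ and $\sup_k\|q_{n_k}-q\|_{L^r}<\infty$ (so the $L^{r'}$-functions are bounded in $L^{r/r'}$ with $r/r'>1$, giving uniform integrability by the de la Vallée–Poussin criterion). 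When $r=\infty$, uniform integrability is even more immediate since the sequence is uniformly bounded. Hence $q_{n_k}\to q$ in $L^{r'}$ for every $r'\in[1,r)$, and a.e.\ on $[0,1]$.

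The main obstacle I anticipate is the careful handling of the behavior near the endpoint $r=1$: the paths are defined on the half-open interval $[0,1)$ and a priori may blow up, so one must use the $L^r$ bound (not merely monotonicity) to guarantee that the total variation is genuinely finite on $[0,1)$ and that the pointwise limit lies in $L^r$; a monotone matrix path with $\|q_n\|_{L^r}\le C$ satisfies $|q_n(r)|\le C'(1-r)^{-1/r}$ by monotonicity plus the integral bound, which is enough to run Helly on each compact $[0,1-\delta]$ and then take $\delta\to0$ via a diagonal argument over $\delta\in\{1/m\}$. The rest is bookkeeping: assembling the per-species, per-entry, per-$\delta$ diagonal extractions into a single subsequence, and checking that the reconstructed block-matrix-valued limit indeed lies in $\mcl Q_r^\sS(\kappa)$.
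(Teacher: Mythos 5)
Your proposal is correct and follows essentially the same route as the proof the paper defers to (\cite[Lemma~3.4]{HJ_critical_pts}): reduce to scalar entries via polarization so that each entry is a difference of monotone functions, apply Helly's selection theorem on compact subintervals $[0,1-\delta]$ with a diagonal extraction, and upgrade a.e.\ convergence to $L^{r'}$ convergence for $r'<r$ by uniform integrability. One small internal inconsistency: the early parenthetical claim that a monotone function bounded in $L^r$ ``has a finite limit'' at $1$ and hence bounded variation on all of $[0,1)$ is false for finite $r$ (e.g.\ $\rho\mapsto(1-\rho)^{-1/(2r)}$), but your final paragraph correctly abandons this in favor of the pointwise bound $|q_n(\rho)|\leq C(1-\rho)^{-1/r}$ and the exhaustion by compacts, so the argument as actually executed is sound.
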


This lemma is a straightforward adaption of~\cite[Lemma~3.4]{HJ_critical_pts} (a special case with $|\sS|=1$).

Next, we describe a monotonicity property of the free energy in the second variable. 
Given a nonempty closed convex cone $\mcl C$ in a Hilbert space $L^2$, we define the dual cone of $\mcl C^*$ by
\begin{align}\label{e.Q^*_2(kappa)}
    \mcl C^* = \Ll\{p\in L^2\ \big| \ \forall q \in \mcl C,\ \la p,q\ra_{L^2}\geq 0\Rr\}.
\end{align}
For each $s\in\sS$, the dual cone of $\mcl Q_2(\kappa_s)$ has a simple characterization given by~\cite[Lemma~3.5]{HJ_critical_pts} (see also~\cite[Lemma~3.4(2)]{chen2022hamilton}):
\begin{align*}
    \mcl Q_2(\kappa_s)^* = \Ll\{p\in L^2\ \big| \  \forall t\in[0,1), \ \int_t^1 p(r)\d r \in \S^{\kappa_s}_+\Rr\}.
\end{align*}
Since any $p\in \prod_{s\in\sS}\mcl Q_2(\kappa_s)^*$ satisfies exactly $\la p,q\ra_{L^2} = \sum_{s\in\sS}\la p_s,q_s\ra_{L^2}\geq 0$ for every $q \in \mcl Q_2^\sS(\kappa)$, we can identify
\begin{align*}
    \big(\mcl Q^\sS_2(\kappa)\big)^* = \prod_{s\in\sS}\mcl Q_2(\kappa_s)^*.
\end{align*}
For a subset $G$ of $L^2 ([0,1];\prod_{s\in\sS}\S^{\kappa_s})$ and a function $g:G\to\R$, we say that $g$ is \textbf{$\big(\mcl Q^\sS_2(\kappa)\big)^*$-increasing} if for every $q,q'\in G$, we have
\begin{align}\label{e.def_Q^*-increasing}
    q-q'\in \big(\mcl Q^\sS_2(\kappa)\big)^*\quad \Longrightarrow \quad g(q)\geq g(q').
\end{align}

\begin{proposition}[Monotonicity of free energy]\label{p.monotone}
For every $N\in\N$, $\lambda_N\in\blacktriangle_N$, and $t\in\R_+$, the function $\bar F_{N,\lambda_N}(t,\cdot)$ is $\big(\mcl Q^\sS_2(\kappa)\big)^*$-increasing.
\end{proposition}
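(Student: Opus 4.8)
The plan is to reduce the monotonicity statement to the derivative formula in Proposition~\ref{p.F_N_smooth}. Fix $N$, $\lambda_N\in\blacktriangle_N$, and $t\in\R_+$, and suppose $q,q'\in\mcl Q_2^\sS(\kappa)$ satisfy $q-q'\in\big(\mcl Q_2^\sS(\kappa)\big)^*$. Consider the line segment $q_u = q' + u(q-q')$ for $u\in[0,1]$; since $\mcl Q_2^\sS(\kappa)$ is convex, $q_u\in\mcl Q_2^\sS(\kappa)$ for each such $u$, so $\bar F_{N,\lambda_N}(t,\cdot)$ is defined and (by Proposition~\ref{p.F_N_smooth}) Fr\'echet differentiable along this segment. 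Then write
\begin{align*}
    \bar F_{N,\lambda_N}(t,q) - \bar F_{N,\lambda_N}(t,q') = \int_0^1 \la q-q', \dr_q\bar F_{N,\lambda_N}(t,q_u)\ra_{L^2}\,\d u,
\end{align*}
so it suffices to show the integrand is nonnegative for every $u$.

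To see this, apply the first identity in~\eqref{e.def.der.FN} with $\pi = q-q'$ (first noting that, by density/continuity in~\eqref{e.continuity.der.FN}, it is enough to treat $q-q'\in L^2$, which holds here). This gives
\begin{align*}
    \la q-q', \dr_q\bar F_{N,\lambda_N}(t,q_u)\ra_{L^2} = \E\la (q-q')(\alpha\wedge\alpha')\cdot R_{N,\lambda_N}(\sigma,\sigma')\ra_{N,\lambda_N},
\end{align*}
where $\la\cdot\ra_{N,\lambda_N}$ is the Gibbs measure associated with $q_u$. The key point is that $\E\la\cdot\ra_{N,\lambda_N}$ is a probability measure on pairs $(\sigma,\sigma')$ (with $\sigma'$ an independent copy of $\sigma$), and that under it the law of the overlap $R_{N,\lambda_N}(\sigma,\sigma')$ coincides with the law of $R_{N,\lambda_N}(\alpha,\alpha')$-type quantities governed by the cascade overlap $\alpha\wedge\alpha'$; more precisely, using the invariance of the cascade (Lemma~\ref{l.invar}) as cited in the proof sketch following Proposition~\ref{p.F_N_smooth}, one has for any bounded measurable $h$ that $\E\la h(\alpha\wedge\alpha')\cdot R_{N,\lambda_N}(\sigma,\sigma')\ra_{N,\lambda_N} = \int_0^1 h(r)\cdot \E\la R_{N,\lambda_N}(\sigma,\sigma')\,\1_{\alpha\wedge\alpha'\in[r,1]}\ra_{N,\lambda_N}\,\d r$ after integrating by parts in the overlap variable; the cleanest route is instead to use directly that $\dr_q\bar F_{N,\lambda_N}(t,q_u)\in\mcl Q_{\infty,\le\lambda_N}^\sS(\kappa)$ from~\eqref{e.bounds.der.FN}, so that $p_u := \dr_q\bar F_{N,\lambda_N}(t,q_u)$ is itself an increasing path in $\prod_{s}\S^{\kappa_s}_+$, and then pair it against $q-q'\in\big(\mcl Q_2^\sS(\kappa)\big)^*$.

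Indeed, this is the crux: by definition of the dual cone in~\eqref{e.Q^*_2(kappa)}, the pairing $\la q-q', p_u\ra_{L^2}$ is $\ge 0$ for \emph{every} $p_u\in\mcl Q_2^\sS(\kappa)$, and $p_u\in\mcl Q_{\infty,\le\lambda_N}^\sS(\kappa)\subset\mcl Q_2^\sS(\kappa)$ by~\eqref{e.bounds.der.FN}. Hence each integrand $\la q-q',\dr_q\bar F_{N,\lambda_N}(t,q_u)\ra_{L^2}\ge 0$, so the integral is $\ge 0$, giving $\bar F_{N,\lambda_N}(t,q)\ge\bar F_{N,\lambda_N}(t,q')$, which is~\eqref{e.def_Q^*-increasing}. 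The main obstacle is purely bookkeeping: making sure the fundamental theorem of calculus along the segment is legitimate (Fr\'echet differentiability plus local Lipschitz continuity of $\dr_q\bar F_{N,\lambda_N}$ from~\eqref{e.continuity.der.FN} suffice) and that $q-q'$ is regular enough ($L^2$, hence the $L^2$ pairing makes sense) — there is no real analytic difficulty once~\eqref{e.bounds.der.FN} is invoked. If one prefers not to pass through the integral, an alternative is to observe that $q\mapsto\bar F_{N,\lambda_N}(t,q)$ has gradient everywhere in the dual cone $\big(\mcl Q_2^\sS(\kappa)\big)^*$, which is exactly the infinitesimal form of $\big(\mcl Q_2^\sS(\kappa)\big)^*$-monotonicity, and this equivalence between ``gradient lies in $\mcl C^*$'' and ``$\mcl C^*$-increasing'' for $\mcl C^*$ a closed convex cone is standard.
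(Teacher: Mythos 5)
Your proof is correct, but it takes a different route from the paper's. The paper proves monotonicity by reduction: it invokes Lemma~\ref{l.match_mp_with_vector} (with $\M=N$ and system size $1$) to identify $\bar F_{N,\lambda_N}(t,q)$ with $N^{-1}\bar F^\vec_1(t,\bq)$, checks by a block-diagonal computation that $q-q'\in\big(\mcl Q_2^\sS(\kappa)\big)^*$ forces $\bq-\bq'\in\mcl Q_2(\Delta)^*$, and then cites the corresponding monotonicity result for vector spin glasses from~\cite{HJ_critical_pts}. You instead argue directly in the multi-species setting: fundamental theorem of calculus along the segment $q_u=(1-u)q'+uq$ (legitimate since $\mcl Q_2^\sS(\kappa)$ is convex, $\bar F_{N,\lambda_N}(t,\cdot)$ is Fr\'echet differentiable there, and $u\mapsto\partial_q\bar F_{N,\lambda_N}(t,q_u)$ is continuous by~\eqref{e.continuity.der.FN}), followed by the observation that $\partial_q\bar F_{N,\lambda_N}(t,q_u)\in\mcl Q^\sS_{\infty,\leq\lambda_N}(\kappa)\subset\mcl Q_2^\sS(\kappa)$ by~\eqref{e.bounds.der.FN}, so the definition~\eqref{e.Q^*_2(kappa)} of the dual cone makes each integrand nonnegative. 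This is exactly the infinitesimal characterization you mention at the end, and it is sound. What each approach buys: yours is self-contained and avoids the reduction bookkeeping, at the cost of leaning on the full strength of Proposition~\ref{p.F_N_smooth} (in particular the cone inclusion for the derivative, whose proof the paper itself only sketches by reference); the paper's is shorter given that the vector-spin statement is already available and fits its general strategy of importing results via Lemma~\ref{l.match_mp_with_vector}. One stylistic remark: the detour in your middle paragraph through the invariance of cascades and an integration by parts in the overlap variable is unnecessary and somewhat garbled — you correctly abandon it in favor of the cone inclusion from~\eqref{e.bounds.der.FN}, which is all that is needed; I would simply delete that passage.
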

\begin{proof}
Fix any $N$, $\lambda_N$, and $t$.
We apply Lemma~\ref{l.match_mp_with_vector} with $\M,N$ therein substituted with $N,1$ respectively. Also, we replace $\sfM_s$ in~\eqref{e.lambda^MN_s=|M_s|/M} by $I_{N,s}$ associated with $\lambda_N$. Then, Lemma~\ref{l.match_mp_with_vector} gives $\bar F_{N,\lambda_N}(t,q)= N^{-1} \bar  F^\vec_1(t,\bq)$ for every $q \in \mcl Q_\infty^\sS(\kappa)$ and $\bq \in \mcl Q_\infty(\Delta)$ given in~\eqref{e.bq=mp}. We are allowed by Proposition~\ref{p.F_N_smooth} to take $q \in\mcl Q_2^\sS(\kappa)$ instead of $\mcl Q_\infty^\sS(\kappa)$. Given any $q'$, we define $\bq'$ in the same was as in~\eqref{e.bq=mp}. 

Now, let us assume $q-q'\in \big(\mcl Q^\sS_2(\kappa)\big)^*$ and we want to show $\bq -\bq' \in \mcl Q_2(\Delta)^*$ defined as in~\eqref{e.Q^*_2(kappa)}. Let $\mathbf{p}$ be any element in $\mcl Q_2(\Delta)$. Using the notation in~\eqref{e.a(m,bullet)(m',bullet)=} and the form of $\bq$ and $\bq'$ given in~\eqref{e.bq=mp}, we can compute
\begin{align*}
    \la \mathbf{p},\,\bq -\bq'\ra_{L^2} = \sum_{n=1}^N \la \mathbf{p}_{(n,\bullet)(n,\bullet)} ,\ \bq_{(n,\bullet)(n,\bullet)}-\bq'_{(n,\bullet)(n,\bullet)}\ra_{L^2}.
\end{align*}
Since $\mathbf{p}$ is an increasing path in $\S^{\Delta}_+$ and each $\mathbf{p}_{(n,\bullet)(n',\bullet)}$ is a projection of $\mathbf{p}$ into a minor matrix, we have that $\mathbf{p}_{(n,\bullet)(n',\bullet)}$ is an increasing path in $\S^{\kappa_s}_+$ for $n\in I_{N,s}$ and thus $\mathbf{p}_{(n,\bullet)(n',\bullet)}\in \mcl Q_2(\kappa_s)$. From the second part in~\eqref{e.bq=mp}, we can see $\bq_{(n,\bullet)(n,\bullet)}-\bq'_{(n,\bullet)(n,\bullet)}\in \mcl Q_2(\kappa_s)^*$ (from $q-q'\in \big(\mcl Q^\sS_2(\kappa)\big)^*$). Therefore, by the definition of dual cones in~\eqref{e.Q^*_2(kappa)}, the right-hand side of the above display is nonnegative. Since $\mathbf{p}$ is arbitrary, we conclude that $\bq -\bq' \in \mcl Q_2(\Delta)^*$.

By the result in the setting of vector spin glasses~\cite[Proposition~3.6]{HJ_critical_pts}, we have that $\bar F^\vec_1(t,\cdot)$ is $\mcl Q_2(\Delta)^*$-nondecreasing defined in the same way as in~\eqref{e.def_Q^*-increasing}.
Therefore, $\bar F^\vec_1(t,\bq)\geq \bar F^\vec_1(t,\bq')$ and thus $\bar F_{N,\lambda_N}(t,q)\geq \bar F_{N,\lambda_N}(t,q')$, which gives the desired result.
\end{proof}

\subsection{Semi-concavity and consequences}\label{s.semi-concave}

Recall the definition of $\mcl Q^\sS_{\uparrow,c}(\kappa)$ from~\eqref{e.Q_uparrow,c(D)=} and~\eqref{e.Q^S(kappa)=}. For any increasing path $q$, we denote by $\dot q$ its distributional derivative.

\begin{proposition}[Semi-concavity of the free energy]\label{p.semi-concave}
There exists a constant $C<+\infty$ (depending only on $(\mu_s)_{s\in\sS}$, $(\kappa_s)_{s\in\sS}$, and $\xi$) such that, for every $N\in\N$, $\lambda_N\in\blacktriangle_N$, $c>0$, $t,t'\geq c$, $q,q'\in \mcl Q^\sS_{\uparrow,c}(\kappa)$ with $\dot q-\dot q' \in L^2$, and $r\in[0,1]$,
\begin{align}\label{e.semi-concave_cts_F_N}
\begin{split}
    (1-r)\bar F_{N,\lambda_N}(t,q)+ r \bar F_{N,\lambda_N}(t,'q') - \bar F_{N,\lambda_N}\Ll((1-r)(t,q)+r(t',q')\Rr) 
    \\
    \leq Cr(1-r)c^{-2}\Ll((t-t')^2+\Ll|\dot q-\dot q'\Rr|_{L^2}^2\Rr).
\end{split}
\end{align}
\end{proposition}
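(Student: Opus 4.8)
The plan is to reduce the statement to a one-dimensional semi-concavity bound along the segment and then establish that bound by the Gaussian-interpolation/cavity computation that, for vector spins, is carried out in \cite[Section~5]{HJ_critical_pts}. Fix all the data and put $\theta(r)=(1-r)(t,q)+r(t',q')$ and $\phi(r)=\bar F_{N,\lambda_N}(\theta(r))$ for $r\in[0,1]$. One first checks that $\mcl Q^\sS_{\uparrow,c}(\kappa)$ is convex: the bound $q(r')-q(r)\geq c(r'-r)\identity$ is stable under convex combinations, and if $A,B\in\S^\D_+$ satisfy $\ellipt(A),\ellipt(B)\leq c^{-1}$ then, by Weyl's inequality, $\lambda_{\max}((1-u)A+uB)\leq c^{-1}\lambda_{\min}((1-u)A+uB)$, so $\ellipt((1-u)A+uB)\leq c^{-1}$. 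Hence the whole segment lies in $[c,+\infty)\times\mcl Q^\sS_{\uparrow,c}(\kappa)$, and in particular in $\R_+\times\mcl Q^\sS_2(\kappa)$ once $q,q'\in\mcl Q^\sS_2(\kappa)$ (the general case then following from the continuity and density statements of Proposition~\ref{p.F_N_smooth}). On this segment $\bar F_{N,\lambda_N}$ is differentiable by Proposition~\ref{p.F_N_smooth} with locally Lipschitz gradient (in $q$ by~\eqref{e.continuity.der.FN}, in $t$ and jointly by the same interpolation arguments), so $\phi\in C^{1,1}([0,1])$; it therefore suffices to prove $\phi''(r)\leq 2Cc^{-2}\big((t-t')^2+|\dot q-\dot q'|_{L^2}^2\big)$ for a.e.\ $r$, since then $r\mapsto\phi(r)-Cc^{-2}(\cdots)r^2$ is concave, which is exactly~\eqref{e.semi-concave_cts_F_N}.

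For the bound on $\phi''$, I would differentiate $\phi$ using~\eqref{e.def.der.FN} and the chain rule to get $\phi'(r)=\E\big\langle (t'-t)\xi(R_{12})+(q'-q)(\alpha_1\wedge\alpha_2)\cdot R_{12}\big\rangle_r=:\E\langle A_{12}\rangle_r$, where $R_{12}=R_{N,\lambda_N}(\sigma_1,\sigma_2)$ and $\langle\cdot\rangle_r$ is the replicated Gibbs measure at $\theta(r)$; the condition $q(0)=q'(0)=0$ built into $\mcl Q^\sS_{\uparrow,c}(\kappa)$ lets me rewrite $(q'-q)(u)=\int_0^u(\dot q'-\dot q)$, which is what will ultimately bring $\dot q-\dot q'$ (rather than $q-q'$) into the estimate. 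Differentiating once more, the $r$-dependence sits only in the Gibbs weight $e^{H^{\theta(r)}_N}$; since $\theta(r)$ is affine, the covariance of the Gaussian part of $H^{\theta(r)}_N$ is affine in $r$ and the self-overlap correction is affine in $r$, so Gaussian integration by parts (the mechanism behind Lemma~\ref{l.Gaussian interpolation} and Corollary~\ref{c.Gaussian_interpolation}) expresses $\phi''(r)$ as $N$ times a fixed linear combination of Gibbs expectations of products $\langle A_{ij}A_{k\ell}\rangle_r$ over replicas $i,j,k,\ell\in\{1,2,3,4\}$ — exactly the combination that appears for vector spins in \cite[Section~5]{HJ_critical_pts}, with the single overlap replaced by the tuple $(R_{N,\lambda_N,s})_{s\in\sS}$ and $\xi,q$ acting block-wise. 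That combination is controlled there — using the cascade invariance of Lemma~\ref{l.invar} together with the centering role of the self-overlap corrections, and the regularity inputs $t(r)\geq c$ (which bounds the non-linearity of $r\mapsto\sqrt{t(r)}$) and $\dot q(r,\cdot)=(1-r)\dot q+r\dot q'\geq c\,\identity$ (which controls the dependence of the cascade external fields on $q(r)$, via estimates such as $\|\partial_r^2 M(r)^{1/2}\|\ls\|M'\|^2\|M(r)^{-1}\|^{3/2}$ and Cauchy--Schwarz over the cascade levels) — and is bounded above by $Cc^{-2}N^{-1}\big((t-t')^2+|\dot q-\dot q'|_{L^2}^2\big)$; multiplying by $N$ gives the desired bound on $\phi''$.

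The main obstacle is this last cavity estimate: organizing all Gaussian-integration-by-parts terms so that, after the cascade and self-overlap structure is used, everything beyond two explicit curvature contributions (one $\ls c^{-1}(t-t')^2\sup|\xi|$ coming from $\sqrt{t(r)}$, one $\ls c^{-2}|\dot q-\dot q'|_{L^2}^2$ coming from the $q(r)$-dependence of the fields) is manifestly non-positive. This is lengthy but routine and is done in \cite[Section~5]{HJ_critical_pts}; passing to the multi-species setting is purely notational. The two points that genuinely need checking here, rather than transcribing, are the convexity of $\mcl Q^\sS_{\uparrow,c}(\kappa)$ (done above, so that the segment is a regular path) and the fact that the $c$-dependent estimates are applied species-by-species with constants depending only on $(\kappa_s)_{s\in\sS}$ — and not on $N$ — which is precisely what forces the hypothesis $\dot q-\dot q'\in L^2$ and produces $|\dot q-\dot q'|_{L^2}^2$ on the right-hand side of~\eqref{e.semi-concave_cts_F_N}.
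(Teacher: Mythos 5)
Your strategy is sound and, at its analytic core, it is the same as the paper's: everything hinges on reparametrizing the Gaussian data by the square roots of the increments of $q$ (and by $\sqrt{t}$), where the Hamiltonian is linear in the new variables and the Hessian of the free energy is bounded above by a constant independent of $N$, and then transferring back via the derivative bounds for the matrix square root, which is where $t\geq c$ and the two conditions defining $\mcl Q^\sS_{\uparrow,c}(\kappa)$ enter and produce the $c^{-2}$. The organization differs: you reduce to a pointwise bound on $\phi''$ along the affine segment (after checking convexity of $\mcl Q^\sS_{\uparrow,c}(\kappa)$, which is correct and a nice observation), whereas the paper first discretizes $q,q'$ into $K+1$ piecewise-constant levels, proves a discrete semi-concavity estimate~\eqref{e.semi-concavity_discrete_F_N} by bounding the Hessian of $\tilde G_N(x)=G_N(\sqrt{x})$, and only then lets $K\to\infty$. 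Two points deserve emphasis. First, your intermediate claim that $\phi''(r)$ is $N$ times a replica combination that is itself $O(N^{-1})$ cannot be seen term by term; it is true only because the dominant contribution has a sign in the linear-field parametrization, so the square-root change of variables is not an optional refinement of your argument but the step that makes the $N$-uniformity possible (the paper makes this the centerpiece via the bounds~\eqref{e.bound_on_der_G_N}). Second, the discretization is load-bearing rather than presentational: the square-root derivative bounds~\eqref{e.der_square_root_matrix} are applied level by level to increments satisfying $q_k-q_{k-1}\geq\frac{c}{K+1}\identity_{\kappa_s}$, producing a factor $(K+1)c^{-1}$ per level which cancels against $\sum_k|\Delta_k|^2\approx (K+1)^{-1}|\dot q-\dot q'|_{L^2}^2$; this cancellation is exactly how the hypothesis $\dot q-\dot q'\in L^2$ and the right-hand side of~\eqref{e.semi-concave_cts_F_N} arise, and it is not visible in a continuum second-derivative computation unless you introduce the levels anyway. (Also, the relevant source is \cite[Propositions~3.7 and~3.8]{HJ_critical_pts}, not Section~5 of that paper.) With these two points made explicit, your outline matches the proof given here.
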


\begin{proof}
The argument is the same as that for~\cite[Propositions~3.7 and 3.8]{HJ_critical_pts}. Here, we sketch similar parts and highlight the differences.

By a density argument and Proposition~\ref{p.F_N_smooth}, it suffices to consider smooth $q$ and $q'$. We then approximate them by piece-wise constant paths. For each $K\in\N$ and $k\in\{0,\dots,K\}$, we set
\begin{align*}
    q_k = q\Big(\frac{k}{K+1}\Big),\qquad \forall k\in\{0,\dots,K\};\qquad q^K = \sum_{k=0}^K q_k \one_{\Ll[\frac{k}{K+1},\, \frac{k+1}{K+1}\Rr)} \in \mcl Q^\sS(\kappa).
\end{align*}
Recall $\ellipt$ from~\eqref{e.ellipt(a)=}.
Due to $q\in \mcl Q^\sS_{\uparrow,c}(\kappa)$, we have
\begin{align}\label{e.q_k>}
    \frac{c}{K+1}\identity_{\kappa_s}\leq q_{k,s}-q_{k-1,s}, \quad \ellipt(q_{k,s}-q_{k-1,s})\leq c^{-1},\quad\forall k\in \{0,\dots,K\},\ s\in\sS.
\end{align}
Here, $\identity_{\kappa_s}$ is the $\kappa_s\times\kappa_s$ identity matrix.
Similarly, we construct $\Ll(q'_k\Rr)_{0\leq k\leq K}$ and $q'^K$, which satisfy an analogous version of~\eqref{e.q_k>}. Also set $q_{-1}=q'_{-1}=0$. We claim that for a constant $C$ as announced in the statement, we have
\begin{align}\label{e.semi-concavity_discrete_F_N}
\begin{split}
    (1-r)\bar F_{N,\lambda_N}\Ll(t,q^K\Rr)+ r \bar F_{N,\lambda_N}(t,'q'^K) - \bar F_{N,\lambda_N}\Ll((1-r)\Ll(t,q^K\Rr)+r\Ll(t',q'^K\Rr)\Rr) 
    \\
    \leq Cr(1-r)c^{-2}\Big((t-t')^2+(K+1)\sum_{k=0}^K \Ll|(q_k-q_{k-1})-(q'_k-q'_{k-1})\Rr|^2\Big).
\end{split}
\end{align}
Then, we can use approximation arguments by sending $K\to\infty$ as in the proof of~\cite[Proposition~3.8]{HJ_critical_pts} (below (3.34)) to get~\eqref{e.semi-concave_cts_F_N}.

In the remainder, we explain the proof of~\eqref{e.semi-concavity_discrete_F_N}.
For preparation, we need some properties on the matrix square root. For every $h\in \S^\D_{++}$ (see Section~\ref{s.matrices,inner_products,paths}) and $a\in\S^\D$ for any $\D\in\N$, we define the first and second order derivatives of matrix square root:
\begin{align*}
    \mathcal{D}_{\sqrt{h}}(a) = \lim_{\eps\to 0}\eps^{-1}\Ll(\sqrt{h+\eps a}-\sqrt{h}\Rr),\qquad \mathcal{D}^2_{\sqrt{h}}(a) = \lim_{\eps\to0}\eps^{-1}\Ll(\mathcal{D}_{\sqrt{h+\eps}}(a)-\mathcal{D}_{\sqrt{h+\eps}}(a)\Rr).
\end{align*}
Denoting by $h_\mathrm{min}$ the smallest eigenvalue of $h$. Then,~\cite[(3.23) and (3.24)]{HJ_critical_pts} give
\begin{align}\label{e.der_square_root_matrix}
    \Ll|\mathcal{D}_{\sqrt{h}}(a)\Rr|\leq |a|h_\mathrm{min}^{-\frac{1}{2}}/2,\qquad \Ll|\mathcal{D}^2_{\sqrt{h}}(a)\Rr|\leq |a|^2h_\mathrm{min}^{-\frac{3}{2}}/4.
\end{align}
To prove~\eqref{e.semi-concavity_discrete_F_N}, we need to bound the Hessian of $\bar F_{N,\lambda_N}$ after a change of variables. As in the proof of~\cite[Proposition~3.7]{HJ_critical_pts}, we focus on the semi-concavity in the second variable for the brevity of presentation. Henceforth, we omit the first variable from the notation by fixing some $t$ and writing $\bar F_{N,\lambda_N}(\cdot) = \bar F_{N,\lambda_N}(t,\cdot)$. 
For simplicity, we write $\mathbf{S}= \prod_{s\in\sS} S^{\kappa_s}$ and $\mathbf{S}_+ = \prod_{s\in\sS} S^{\kappa_s}_+$.
We view $\mathbf{S}^{K+1}$ as a subset of the linear space $\Ll(\prod_{s\in\sS}\R^{\kappa_s\times\kappa_s}\Rr)^{K+1}$ with inner product given by the entry-wise dot product as in~\eqref{e.dot_product}.

When considering $\bar F_{N,\lambda_N}$ over paths of the form $q^K$, we can think of $\bar F_{N,\lambda_N}$ as a function of $(q_k)_{0\leq k\leq K}\in \mathbf{S}^{K+1}_+$. 
For each $x =(x_k)_{0\leq k\leq K}\in\mathbf{S}^{K+1}_+$, we write $\sqrt{x}=\Ll(\sqrt{x_k}\Rr)_{0\leq k\leq K}$ and
$\sqrt{x_k} = \Ll(\sqrt{x_{k,s}}\Rr)_{s\in\sS}$ with each $x_k= (x_{k,s})_{s\in\sS} \in \mathbf{S}_+$.
Then, we define a function $G_N: \mathbf{S}^{K+1}\to \R$ through the relation
\begin{align}\label{e.G_N=}
    \bar F_{N,\lambda_N}\Ll(q^K\Rr) =\bar F_{N,\lambda_N}\Ll((q_k)_{0\leq k\leq K}\Rr) = G_N\Ll(\Ll(\sqrt{q_k-q_{k-1}}\Rr)_{0\leq k\leq K}\Rr).
\end{align}
We consider the function $\mathbf{S}^{K+1}\ni y\mapsto G_N(y)$. For each $k\in \{0,\dots,K\}$ and $s\in\sS$, we define $\partial_{y_{k,s}} G_N(y) \in \S^{\kappa_s}$ via the relation: for every $a\in\S^{\kappa_s}$,
\begin{align*}
    a\cdot \partial_{y_{k,s}} G_N(y) = \frac{\d}{\d\eps}G_N(y_1,\dots,y_{k-1},y_k+\eps \bar a,y_{k+1},\dots,y_K)\Big|_{\eps =0}
\end{align*}
where $\bar a = (\bar a_{s'})_{s'\in\sS}$ satisfies $\bar a_{s'}=0$ when $s'\neq s$, and $\bar a_{s}=a$.
The Hessian $\nabla^2G_N(y)$ viewed as a linear map from $\mathbf{S}^{K+1}$ to $\mathbf{S}^{K+1}$ is defined through: for every $\mathbf{a}\in \mathbf{S}^{K+1}$,
\begin{align*}
    \mathbf a\cdot \nabla^2G_N(y)\mathbf a=\frac{\d^2}{\d \eps^2} G_N(y+\mathbf{a})\Big|_{\eps=0}.
\end{align*}
By the same computation for~\cite[(3.28) and (3.29)]{HJ_critical_pts}, we can find a constant $C$ as in the statement of the proposition such that
\begin{align}\label{e.bound_on_der_G_N}
    \Ll|\partial_{y_{k,s}}G_N(y)\Rr|\leq C|y_{k,s}|,\qquad \mathbf{a}\cdot \nabla^2 G_N(y)\mathbf{a}\leq C|\mathbf{a}|^2
\end{align}
for every $k$ and every $\mathbf{a}$. The computation is basic and involves Gaussian interpolation technique as in Corollary~\ref{c.Gaussian_interpolation} and Jensen's inequality.

Next, we introduce the change of variable: for every $x\in\S^{K+1}_+$, we set $\tilde G_N(x) = G_N(\sqrt{x})$.
We can make sense of the Hessian of $\tilde G_N$ in the same way as above.
As in the second step of the proof of~\cite[Proposition~3.7]{HJ_critical_pts}, we can compute, for every $\mathbf{a}\in\mathbf{S}^{K+1}$ 
\begin{align*}
    \mathbf{a}\cdot \tilde \nabla^2G_N(x)\mathbf{a}
    = \mathbf{D}\cdot \nabla^2 G_N(\sqrt{x})\cdot \mathbf{D} + \sum_{k,\, s} \mathbf{D}_{k,s}^2
    \cdot \partial_{y_{k,s}}G_N(\sqrt{x})
\end{align*}
where
\begin{align*}
    \mathbf{D}= \Ll(\mathcal{D}_{\sqrt{x_{k,s}}}(a_{k,s})\Rr)_{0\leq k\leq K,\, s\in\sS}\in\mathbf{S}^K,\qquad \mathbf{D}_{k,s}^2 =\mathcal{D}^2_{\sqrt{x_{k,s}}}(a_{k,s}).
\end{align*}
Then, we get
\begin{align*}
    \mathbf{a}\cdot \tilde \nabla^2G_N(x)\mathbf{a}&\stackrel{\eqref{e.bound_on_der_G_N}}{\leq} C\Ll|\mathbf{D}\Rr|^2 + C\sum_{k=0}^K \Ll|\mathbf{D}^2_k\Rr|\Ll|\sqrt{x_k}\Rr|
    \\
    &\stackrel{\eqref{e.der_square_root_matrix}}{\leq} \frac{C}{4}\sum_{k,\,s}|a_{k,s}|^2\Ll(\Ll(x_{k,s}\Rr)_\mathrm{min}\Rr)^{-1} + \frac{C}{4}\sum_{k,\,s}|a_{k,s}|^2\Ll(\Ll(x_{k,s}\Rr)_\mathrm{min}\Rr)^{-3/2}\Ll|\sqrt{x_{k,s}}\Rr|
\end{align*}
It is easy to see that
\begin{align*}
    \Ll|\sqrt{x_{k,s}}\Rr|^2\leq \kappa_s \ellipt(x_{k,s})(x_{k,s})_\mathrm{min}.
\end{align*}
Combining the above two displays, we have that for every $x\in\mathbf{S}^{K+1}$ satisfying
\begin{align}\label{e.x_k>}
    \frac{c}{K+1}\identity_{\kappa_s}\leq x_{k,s}, \quad \ellipt(x_{k,s})\leq c^{-1},\quad\forall k\in \{0,\dots,K\},\ s\in\sS,
\end{align}
we have
\begin{align*}
     \mathbf{a}\cdot \tilde \nabla^2G_N(x)\mathbf{a} \leq C(K+1)c^{-2}|a|^2
\end{align*}
where we have absorbed $\kappa_s$ into $C$. Hence, for every $x$ satisfying~\eqref{e.x_k>} and $x'$ satisfying an analogous version of~\eqref{e.x_k>}, we have
\begin{align*}
    (1-r)\tilde G_N(x) + r \tilde G_N(x') \leq \tilde G_N\Ll((1-r)x+rx'\Rr)+ Cr(1-r)c^{-1}(K+1)\Ll|x-x'\Rr|^2.
\end{align*}
In view of~\eqref{e.G_N=}, we have $\bar F_{N,\lambda_N}\Ll(q^K\Rr)=\tilde G_N\Ll((q_k-q_{k-1})_{0\leq k\leq K})\Rr)$. Recall that in the relation~\eqref{e.semi-concavity_discrete_F_N} to prove, both $q$ and $q'$ satisfy~\eqref{e.q_k>}. Therefore, we are allowed to use the above display to deduce~\eqref{e.semi-concavity_discrete_F_N}.
\end{proof}

As a consequence of the semi-concavity, we have the following result.
Recall the definition of $\mcl Q^\sS_\uparrow(\kappa)$ from~\eqref{e.Q_uparrow(D)=} and~\eqref{e.Q^S(kappa)=}. 

\begin{proposition}[Convergence of derivatives]\label{p.cvg_der_semi_concave}
Suppose that $\bar F_{N,\lambda_N}$ converges pointwise to some limit $f$ along a subsequence $(N_n)_{n\in\N}$.
\begin{enumerate}
    \item For every $t\in\R_+$, if $f(t,\cdot)$ is Gateaux differentiable at $q\in\mcl Q^\sS_\uparrow(\kappa)$, then $\partial_q \bar F_{N_n,\lambda_{N_n}}(t,q,\cdot)$ converges in $L^r$ to $\partial_q f(t,q,\cdot)$ for every $r\in[1,+\infty)$.
    \item For every $q\in\mcl Q^\sS_1(\kappa)$, if $f(\cdot,q)$ is differentiable at $t>0$, then $\partial_t \bar F_{N_n,\lambda_{N_n}}(t,q)$ converges to $\partial_t f(t,q)$.
\end{enumerate}
\end{proposition}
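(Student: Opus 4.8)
The plan is to reduce both assertions to the corresponding statement for vector spin glasses already available in \cite{HJ_critical_pts}, combined with the uniform semi-concavity from Proposition~\ref{p.semi-concave} and the approximation-by-rational-proportions machinery built in Sections~\ref{s.prelim} and~\ref{s.rel_to_vector_spin}. The key mechanism is standard: uniform semi-concavity of a sequence of functions converging pointwise forces convergence of derivatives at any point where the limit is differentiable, because semi-concavity pins the subdifferential from one side while convexity of the approximating deviation controls the other. I would first carry over the relevant \cite[Proposition~3.9]{HJ_critical_pts}-type statement (convergence of derivatives for semi-concave functions) as an abstract lemma, or simply invoke its proof verbatim, applied to $G$ being $\mcl Q^\sS_{\uparrow,c}(\kappa)$ (for the $q$-variable) and to $t$-slices (for the $t$-variable).

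For part~(1): fix $t\in\R_+$ and $q\in\mcl Q^\sS_\uparrow(\kappa)$, so $q\in\mcl Q^\sS_{\uparrow,c}(\kappa)$ for some $c>0$; WLOG $t\ge c$ (shrink $c$). By Proposition~\ref{p.semi-concave}, the functions $\bar F_{N_n,\lambda_{N_n}}(t,\cdot)$ restricted to $\mcl Q^\sS_{\uparrow,c}(\kappa)$ are semi-concave with a constant uniform in $n$ (and $\lambda_{N_n}$). By hypothesis they converge pointwise to $f(t,\cdot)$, which is then itself semi-concave and Gateaux differentiable at $q$. The standard argument then runs: write $g_n = \bar F_{N_n,\lambda_{N_n}}(t,\cdot)$ and $g=f(t,\cdot)$; from semi-concavity, $h_n(q') := g_n(q') - g_n(q) - \langle \partial_q g_n(q), q'-q\rangle_{L^2} \le C c^{-2}|q'-q|^2_{L^2}$ near $q$ (using a norm on $\dot q$ as in~\eqref{e.semi-concave_cts_F_N}, working with the $\dot q$ coordinates or restricting to a neighborhood where the norms are comparable). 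Passing to the limit along $n$ and using pointwise convergence, any weak-$L^2$ subsequential limit $y$ of $\partial_q g_n(q)$ (bounded by~\eqref{e.bounds.der.FN}, which gives $\partial_q \bar F_{N,\lambda_N}(t,q)\in\mcl Q^\sS_{\infty,\le\lambda_N}(\kappa)$, uniformly bounded since $\lambda_N\to\lambda_\infty$) satisfies $g(q')-g(q)-\langle y, q'-q\rangle_{L^2}\le Cc^{-2}|q'-q|^2_{L^2}$; combined with Gateaux differentiability of $g$ at $q$ along admissible directions (which are dense enough here because $q$ is interior-like in the $\mcl Q_{\uparrow,c}$ cone — every direction $e$ with $\dot e$ bounded is admissible), this forces $y=\partial_q g(q)$. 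Hence $\partial_q g_n(q)\rightharpoonup \partial_q f(t,q)$ weakly in $L^2$. To upgrade to strong $L^r$ convergence for all $r\in[1,\infty)$: weak $L^2$ convergence plus the uniform $L^\infty$ bound~\eqref{e.bounds.der.FN} gives, by interpolation, weak $L^r$ convergence for all $r$ and boundedness in every $L^r$; then the monotone/increasing-path structure — each $\partial_q g_n(t,q,\cdot)$ is an increasing path (in $\big(\mcl Q^\sS_2(\kappa)\big)^*$-type sense) so is of bounded variation, and Lemma~\ref{l.compact_embed_paths} (Helly-type compactness) upgrades weak convergence to a.e.\ convergence, whence dominated convergence gives strong $L^r$ for $r<\infty$.

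For part~(2): fix $q\in\mcl Q^\sS_1(\kappa)$ and $t>0$ with $f(\cdot,q)$ differentiable at $t$. By Proposition~\ref{p.semi-concave}, for $q$ in a regular class the $t$-slices are semi-concave; but $q$ here is only in $\mcl Q^\sS_1$, so I would first approximate $q$ by regular paths $q^\delta\in\mcl Q^\sS_{\uparrow,c}(\kappa)$ (e.g.\ $q^\delta(r)=q(r)+\delta\!\int_0^r\identity$, mollified), use the Lipschitz bound~\eqref{e.continuity.der.FN} / Proposition~\ref{p.F_N_smooth} to control the $q$-perturbation uniformly in $N$, and exploit $|\partial_t \bar F_{N,\lambda_N}|\le\sup_{|a|\le1}|\xi(a)|$ together with the one-variable semi-concavity in $t$ (which holds for $q^\delta$) to get convergence of $\partial_t$ at $(t,q^\delta)$ via the same semi-concave-functions-converging argument as in part~(1) — now in the scalar variable $t$, where it is the classical fact that pointwise-convergent semi-concave functions of one real variable have derivatives converging at every differentiability point of the limit. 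A diagonal argument sending $\delta\to0$, using the second identity in~\eqref{e.def.der.FN} ($\partial_t \bar F_{N,\lambda_N}=\E\langle\xi(R_{N,\lambda_N})\rangle$) and the continuity of $\xi$ to control the error, then yields $\partial_t \bar F_{N_n,\lambda_{N_n}}(t,q)\to\partial_t f(t,q)$.

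The main obstacle is not any single estimate but the bookkeeping around admissible directions and the approximation in part~(2): one must verify that Gateaux differentiability of $f$ at $q\in\mcl Q^\sS_\uparrow(\kappa)$ genuinely pins down the weak limit of derivatives (i.e.\ the admissible cone $\mathrm{Adm}(\mcl Q^\sS_{\uparrow,c}(\kappa),q)$ is rich enough to separate points of $L^2$), and that the $\delta\to0$ regularization in part~(2) commutes with the $N\to\infty$ limit — this requires the uniform-in-$N$ Lipschitz control~\eqref{e.continuity.der.FN}, which is linear in $N$ and hence \emph{not} uniform, so one instead argues at the level of free energies (uniformly bounded and Lipschitz in $q$ by Proposition~\ref{p.F_N_smooth}) rather than their derivatives, extracting the derivative convergence only at the end. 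Everything else is a direct transcription of the vector-spin arguments in \cite[Section~3]{HJ_critical_pts}.
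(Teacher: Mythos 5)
Part~(1) of your proposal is correct and is essentially the paper's argument (the paper simply defers to the vector-spin version in \cite{HJ_critical_pts}): uniform semi-concavity from Proposition~\ref{p.semi-concave} plus pointwise convergence pins any weak subsequential limit of $\partial_q\bar F_{N_n,\lambda_{N_n}}(t,q)$ against the Gateaux derivative of $f(t,\cdot)$ along directions $e$ with $\dot e$ bounded (which are two-sidedly admissible from $q\in\mcl Q^\sS_{\uparrow,c}(\kappa)$ and span a dense subspace of $L^2$), and the upgrade to strong $L^r$ convergence follows from the uniform bound $\partial_q\bar F_{N,\lambda_N}(t,q)\in\mcl Q^\sS_{\infty,\leq\lambda_N}(\kappa)$ in \eqref{e.bounds.der.FN} together with the Helly-type compactness of Lemma~\ref{l.compact_embed_paths}.

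Part~(2), as you sketch it, has a gap: the double limit $\delta\to0$, $N\to\infty$ does not close. First, differentiability of $f(\cdot,q)$ at $t$ does not imply differentiability of $f(\cdot,q^\delta)$ at $t$, which your argument at $q^\delta$ needs. Second, there is no uniform-in-$N$ modulus of continuity of $q\mapsto\partial_t\bar F_{N,\lambda_N}(t,q)$ (this quantity is a Gibbs average whose $q$-derivative involves $O(N)$ correlations), so convergence of $\partial_t$ at $q^\delta$ cannot be transferred to $q$; and your fallback of ``arguing at the level of free energies'' only shows that $\bar F_{N,\lambda_N}(\cdot,q)$ is within $O(\delta)$ of a function whose semi-concavity constant, imported from Proposition~\ref{p.semi-concave}, is of order $\delta^{-2}$ (since $q^\delta\in\mcl Q^\sS_{\uparrow,c}(\kappa)$ only for $c\lesssim\delta$), which is not a usable uniform bound. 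The correct route requires no regularization of $q$ at all: for fixed arbitrary $q\in\mcl Q^\sS_1(\kappa)$, the map $t\mapsto\bar F_{N,\lambda_N}(t,q)$ is semi-concave on $[c,\infty)$ with a constant depending only on $c$, because the change of variables $t=\tau^2$ and the uniform one-variable Hessian bound (the $t$-direction analogue of \eqref{e.bound_on_der_G_N}) do not involve the external field; the restriction $q\in\mcl Q^\sS_{\uparrow,c}(\kappa)$ in Proposition~\ref{p.semi-concave} is needed only to control the $q$-increments. With that in hand, the classical one-variable argument — bound the difference quotients of $\bar F_{N,\lambda_N}(\cdot,q)$ at $t$ from both sides by $\partial_t\bar F_{N,\lambda_N}(t,q)\mp C|t'-t|$, pass $N\to\infty$ using pointwise convergence at $(t,q)$ and $(t',q)$, then let $t'\to t$ using differentiability of $f(\cdot,q)$ at $t$ — gives part~(2) directly.
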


The proof is the same as that for \cite[Proposition~5.4]{HJ_critical_pts} based on the semi-concavity proved in Proposition~\ref{p.semi-concave}.

\begin{remark}[Convergence of derivatives for perturbed free energy]\label{r.cvg_der_semi_concave}
Let $(N_n)_{N\in\N}$ be a strictly increasing sequence in $\N$. For each $n\in\N$, let $\M_n$ satisfy $N_n \in \M_n \N$, let $\lambda_{N_n} \in \blacktriangle_{\M_n}$ (see~\eqref{e.simplex}), let $x_n$ be any perturbation parameter from~\eqref{e.x_perturb_parameter} (the space of which depends on $\M_n$). Then, we consider the perturbed free energy $\bar F^{\M_n,x_n}_{N_n,\lambda_{N_n}}$ in~\eqref{e.F^x_N=} or $\tilde F^{\M_n,x_n}_{N_n,\lambda_{N_n}}$ in~\eqref{e.tilde_F^x_N=}. For them, we can also prove Propositions~\ref{p.semi-concave} and Proposition~\ref{p.cvg_der_semi_concave} (along any further subsequence of $(N_n)_{n\in\N}$). The proofs are the same.
\end{remark}

Next, we state the result on the regularity of limits of the free energy. In the statement below, we use the notion of ``Gaussian null sets'' which is a natural generalization of ``Lebesgue null sets'' to infinite dimensions. We refer to~\cite[Definition~4.2]{HJ_critical_pts} for the exact definition. The only property important to us here is that the complement of a Gaussian null set is dense, which allows us to say that any limit of free energy is differentiable on a dense set.

\begin{proposition}[Regularity of the limit]\label{p.reg_limit}
Suppose that $(\bar F_{N,\lambda_N})_{N\in\N}$ converges pointwise to some $f$ along some subsequence. Then, for every $r\in(1,\infty]$, the function $\bar F_{N,\lambda_N}$ converges locally uniformly along this subsequence. 
The limit $f$ satisfies the same Lipschitz, monotone, and local semi-concave properties of $\bar F_{N,\lambda_N}$ in Propositions~\ref{p.monotone}, \ref{p.F_N_smooth}, and~\ref{p.semi-concave}. Moreover,
\begin{itemize}
    \item for each $t\geq0$, there is a Gaussian null set $\mcl N_t$ of $L^2([0,1],\prod_{s\in\sS}\S^{\kappa_s})$ such that $f(t,\cdot) :\mcl Q^\sS_2(\kappa)\to\R$ is Gateaux differentiable at every point in $\mcl Q^\sS_2(\kappa)\setminus \mcl N_t$ and $\mcl Q^\sS_{\infty,\uparrow}\setminus\mcl N_t$ is dense in $\mcl Q^\sS_2(\kappa)$;
    \item there is a Gaussian null set $\mcl N$ of $\R\times L^2([0,1],\prod_{s\in\sS}\S^{\kappa_s})$ such that $f:\R_+\times \mcl Q^\sS_2(\kappa)\to\R$ is Gateaux differentiable on $(\R_+\times \mcl Q^\sS_2(\kappa))\setminus \mcl N$ and $(\R_+\times \mcl Q^\sS_{\infty,\uparrow}(\kappa))\setminus \mcl N$ is dense in $\R_+\times \mcl Q^\sS_2(\kappa)$.
\end{itemize}
\end{proposition}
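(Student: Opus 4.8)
## Proof Proposal for Proposition~\ref{p.reg_limit}

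The plan is to follow the template of~\cite[Section~5]{HJ_critical_pts}, transferring each property of $\bar F_{N,\lambda_N}$ established in the preceding propositions to the limit $f$, and then invoking the abstract differentiability theory for Lipschitz functions on separable Hilbert spaces.

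\textbf{Step 1: Local uniform convergence.} Given pointwise convergence along a subsequence, Proposition~\ref{p.precompactness} furnishes a \emph{further} subsequence converging locally uniformly on bounded balls in $\R_+\times\mcl Q^\sS_r(\kappa)$ for every $r\in(1,\infty]$. Since the pointwise limit is already fixed to be $f$, the full original subsequence must converge to $f$ locally uniformly (any further subsequence has a sub-subsequence converging locally uniformly, necessarily to $f$). This is the standard Urysohn-type argument.

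\textbf{Step 2: Inheritance of Lipschitz, monotonicity, and semi-concavity.} Each of these is preserved under pointwise limits. The Lipschitz estimate of Proposition~\ref{p.F_N_smooth} passes to $f$ directly since it is a pointwise inequality in $(t,q)$ with $N$-independent constants. The $\big(\mcl Q^\sS_2(\kappa)\big)^*$-monotonicity of Proposition~\ref{p.monotone} is an implication between two values of the function, hence stable under limits. The local semi-concavity inequality~\eqref{e.semi-concave_cts_F_N} of Proposition~\ref{p.semi-concave} has an $N$-independent constant $C$, so taking $N\to\infty$ along the subsequence preserves it verbatim on each $\mcl Q^\sS_{\uparrow,c}(\kappa)$.

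\textbf{Step 3: Gateaux differentiability on a dense set.} Here I would invoke the differentiability theory for continuous convex (or, after localizing, semi-concave) functions on separable Hilbert spaces: a locally Lipschitz function that is locally semi-concave is, on each open region where the semi-concavity holds, a difference of a concave function and a smooth quadratic, hence Gateaux (indeed Fréchet on the interior, in the appropriate restricted sense) differentiable outside a Gaussian null set — this is exactly the Preiss–Zajíček / Aronszajn framework as packaged in~\cite[Section~4]{HJ_critical_pts} and used for~\cite[Proposition~5.5]{HJ_critical_pts}. One applies this separately to $f(t,\cdot)$ on $\mcl Q^\sS_2(\kappa)$ for each fixed $t$ (getting $\mcl N_t$) and jointly to $f$ on $\R_+\times\mcl Q^\sS_2(\kappa)$ (getting $\mcl N$). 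The density of $\mcl Q^\sS_{\infty,\uparrow}(\kappa)\setminus\mcl N_t$ in $\mcl Q^\sS_2(\kappa)$ follows because $\mcl Q^\sS_{\infty,\uparrow}(\kappa)$ is itself dense in $\mcl Q^\sS_2(\kappa)$ (bounded regular paths approximate any $L^2$ increasing path) and removing a Gaussian null set cannot destroy density of a set with nonempty relative interior structure — more precisely, one argues that every $q\in\mcl Q^\sS_2(\kappa)$ is a limit of points of $\mcl Q^\sS_{\infty,\uparrow}(\kappa)$, and along a suitable one-parameter family through such an approximant the semi-concavity forces differentiability at almost every parameter value (Rademacher in one dimension), so differentiability points of $\mcl Q^\sS_{\infty,\uparrow}(\kappa)$-type are found arbitrarily close to $q$.

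\textbf{Main obstacle.} The only genuinely non-routine point is verifying that the abstract Gaussian-null-set differentiability result applies in the multi-species ambient space $L^2([0,1];\prod_{s\in\sS}\S^{\kappa_s})$ with the relevant geometry of the cone $\mcl Q^\sS_2(\kappa)$ and its regular sub-cones $\mcl Q^\sS_{\uparrow,c}(\kappa)$ — specifically, that these sets have enough interior (in the affine-span sense) for the semi-concavity to yield differentiability and that $\mcl Q^\sS_{\infty,\uparrow}(\kappa)$ is dense. But since the ambient space is still a separable Hilbert space and the cone structure factorizes over $s\in\sS$ as a product of the single-species cones already handled in~\cite{HJ_critical_pts}, the verification is identical to the vector-spin case up to carrying the product index $s$; I would state this and cite~\cite[Proposition~5.5]{HJ_critical_pts} (and the supporting~\cite[Section~4]{HJ_critical_pts}) for the mechanism, writing out only the density claim for $\mcl Q^\sS_{\infty,\uparrow}(\kappa)$ in detail.
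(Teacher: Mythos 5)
Your proposal is correct and follows essentially the same route as the paper, which itself simply defers to the corresponding vector-spin result (\cite[Proposition~5.3]{HJ_critical_pts}) and the Gaussian-null-set differentiability machinery of \cite[Section~4]{HJ_critical_pts}, noting that the argument carries over verbatim to the product-over-species setting. Your three steps (local uniform convergence via Proposition~\ref{p.precompactness}, inheritance of the Lipschitz, monotone, and semi-concave estimates under pointwise limits, and Gateaux differentiability off a Gaussian null set together with the density of $\mcl Q^\sS_{\infty,\uparrow}(\kappa)$) are exactly the intended adaptation.
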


This corresponds to~\cite[Proposition~5.3]{HJ_critical_pts} and can be deduced by the same argument. Hence, we omit the proof here.

\subsection{Initial condition}\label{s.initial_cond}

We want to describe the initial condition $\bar F_{N,\lambda_N}(0,\cdot)$.
For any finite positive measure $\nu$ on $[-1,1]^{\D}$ for some $\D\in\N$ and for $\bq\in \mcl Q_\infty(\D)$, define
\begin{align}\label{e.psi^vec=}
    \psi^\vec_\nu(\bq) = - \E \log\iint \exp\Ll(\sqrt{2}\bw^\bq(\alpha)\cdot \bsigma-\bq(1)\cdot \bsigma\bsigma^\intercal\Rr)\d \nu(\bsigma)\d \fR(\alpha)
\end{align}
where $\bw^\bq$ is the Gaussian process with covariance given in~\eqref{e.E[bwbw]=}.
This is related to the initial condition in the vector spin glass. Indeed, comparing with the expression of the free energy in~\eqref{e.F^vec(t,q)=}, we have
\begin{align*}
    \bar F^\vec_N(0,\bq) = \psi^\vec_{P^\vec_1}(\bq).
\end{align*}
This identity holds clearly for $N=1$ and the general case follows from a standard property of the cascade measure (see~\cite[Proposition~3.2]{HJ_critical_pts}).

In the multi-species setting, given $\lambda=(\lambda_s)_{s\in\sS} \in \R_+^\sS$, we define, for $q\in \mcl Q^\sS_\infty(\kappa)$,
\begin{align}\label{e.psi=}
    \psi_{\lambda}(q) =\sum_{s\in\sS}\lambda_s \psi^\vec_{\mu_s}(q_s)
\end{align}
where $(\mu_s)_{s\in\sS}$ are the fixed distributions of spins of different species (see Section~\ref{s.spin}).

We recall the result~\cite[Corollary~5.2]{HJ_critical_pts} on the regularity of the initial condition~\eqref{e.psi^vec=} in the vector spin model.
For $\bq \in \mcl Q_\infty(\D)$ and a positive measure $\nu$ on $\R^\D$, we introduce the following Gibbs measure:
\begin{align*}
    \la\cdot\ra^\vec_{\nu,\bq} = \exp\Ll(\bw^\bq(\alpha)\cdot\sigma-\bq(1)\cdot\sigma\sigma^\intercal\Rr) \d \nu(\sigma)\d \fR(\alpha)
\end{align*}
where $\bw^\bq(\alpha)$ is given as in~\eqref{e.E[bwbw]=}.

\begin{lemma}[Regularity of vector-spin initial condition]\label{l.psi^vec_smooth}
For any $\D\in\N$ and any positive measure $\nu$ supported on $[-1,+1]^\D$, the function $\psi^\vec_\nu$ given in~\eqref{e.psi^vec=} can be extended to $\mcl Q_1(\D)$ and satisfies
\begin{align*}
    \Ll|\psi^\vec_\nu(\bq)-\psi^\vec_\nu(\bq')\Rr|\leq |\bq-\bq'|_{L^1},\quad\forall \bq,\bq'\in\mcl Q_1(\D).
\end{align*}
The restriction $\psi^\vec_\nu:\mcl Q_2(\D)\to\R$ is Fréchet (and Gateaux) differentiable everywhere; we denote its Fréchet (and Gateaux) derivative by $\dr_\bq \psi^\vec_\nu(\bq) = \dr_\bq \psi^\vec_\nu(\bq, \cdot) \in L^2([0,1]; S^D)$. We have, for every $\bq \in \mcl Q_2(\D)$,
\begin{equation}
\label{e.bound.der.psi}
\partial_\bq\psi^\vec_\nu(\bq) \in \mcl Q_{\infty,\leq 1}(\D),
\end{equation}
and, for every $\bq \in \mcl Q_\infty(\D)$ and $\pi \in L^2([0,1]; S^D)$,
\begin{equation*}
\la \pi,\dr_\bq \psi^\vec_\nu(\bq)\ra_\cH = \E \la \pi \Ll(\alpha\wedge\alpha'\Rr)\cdot \sigma\sigma'^\intercal\ra^\vec_{\nu,\bq}.
\end{equation*}
Moreover, for every $r \in [1,+\infty]$ and $\bq, \bq' \in \mcl Q_2(\D)$ with $\bq-\bq' \in L^r$, we have
\begin{equation}
\label{e.continuity.der.psi}
\Ll|\dr_\bq \psi^\vec_\nu(\bq) -\dr_\bq \psi^\vec_\nu(\bq')\Rr|_{L^r}\leq 16\Ll|\bq-\bq'\Rr|_{L^r}.
\end{equation}
In particular, the mapping $\bq \mapsto \dr_\bq \psi^\vec_\nu(\bq)$ can be extended to $\mcl Q_1(\D)$ by continuity, and the properties in \eqref{e.bound.der.psi} and \eqref{e.continuity.der.psi} remain valid with $\bq, \bq' \in \mcl Q_1(\D)$.  
\end{lemma}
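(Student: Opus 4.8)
The plan is to recognize $\psi^\vec_\nu$ as a free energy to which the regularity already established for vector spin glass free energies applies, and then simply read off the constants. First I would compare the defining formula~\eqref{e.psi^vec=} with the vector spin free energy~\eqref{e.F^vec(t,q)=}: setting $t=0$ and $N=1$ there collapses the Hamiltonian~\eqref{e.H^vec,t,q_N=} to exactly $\sqrt2\,\bw^\bq(\alpha)\cdot\bsigma-\bq(1)\cdot\bsigma\bsigma^\intercal$, so $\psi^\vec_\nu(\bq)=\bar F^\vec_1(0,\bq)$ when $\nu$ is a probability measure, and for a general finite nonnegative $\nu$ it differs from the $\bar F^\vec_1(0,\cdot)$ built from $\nu/\nu(\R^\D)$ only by the additive constant $\log\nu(\R^\D)$, which affects none of the assertions. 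In particular the whole statement is contained in \cite[Corollary~5.2]{HJ_critical_pts}.

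To obtain it from the framework set up here, I would invoke the vector-spin analogue \cite[Proposition~5.1]{HJ_critical_pts} of Proposition~\ref{p.F_N_smooth} with $N=1$, $t=0$. Its Lipschitz display at $t=t'=0$ gives the $L^1$-bound and the Fréchet (hence Gateaux) differentiability on $\mcl Q_2(\D)$; the inclusion $\partial_q\bar F_{N,\lambda_N}(t,q)\in\mcl Q^\sS_{\infty,\le\lambda_N}(\kappa)$ from~\eqref{e.bounds.der.FN} at $\lambda_N=1$ (a single species of proportion one, so $\mcl Q^\sS_{\infty,\le 1}(\kappa)=\mcl Q_{\infty,\le 1}(\D)$) gives~\eqref{e.bound.der.psi}; the first line of~\eqref{e.def.der.FN} at $t=0$, $N=1$ — where $R_{1,\lambda_1}(\sigma,\sigma')=\sigma\sigma'^\intercal$ and the Gibbs measure $\la\cdot\ra_{N,\lambda_N}$ reduces precisely to $\la\cdot\ra^\vec_{\nu,\bq}$ — gives the derivative formula; and~\eqref{e.continuity.der.FN} at $N=1$ gives~\eqref{e.continuity.der.psi} with constant $16N=16$, which then extends $\bq\mapsto\partial_\bq\psi^\vec_\nu(\bq)$ to $\mcl Q_1(\D)$ by density.

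The underlying mechanism, were one to reprove \cite[Proposition~5.1]{HJ_critical_pts} directly, is: Gaussian interpolation and Gaussian integration by parts as in Lemma~\ref{l.Gaussian interpolation} and Corollary~\ref{c.Gaussian_interpolation}, with the self-overlap term $\bq(1)\cdot\bsigma\bsigma^\intercal$ chosen exactly to cancel the variance of $\sqrt2\,\bw^\bq(\alpha)\cdot\bsigma$; differentiation of $\psi^\vec_\nu$ under the expectation and Gibbs average, controlled via Jensen's inequality and the uniform bound $|\sigma\sigma'^\intercal|\le\D$ on $[-1,+1]^\D$; and the invariance of the cascade to pin down the image of the derivative. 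I do not anticipate a genuine obstacle: the point is precisely that the multi-species structure plays no role here (the statement concerns vector spins only), so the content is just bookkeeping of the specialized constants and checking that finiteness — rather than normalization to one — of $\nu$ is harmless; this is why only the statement is recalled rather than reproved.
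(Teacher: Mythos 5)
Your proposal is correct and matches the paper's treatment: the paper gives no proof of this lemma, recalling it verbatim as \cite[Corollary~5.2]{HJ_critical_pts} after noting the identification $\psi^\vec_{P^\vec_1}(\bq)=\bar F^\vec_N(0,\bq)$, which is exactly the specialization ($N=1$, $t=0$) you describe. Your reconstruction of the underlying mechanism (Gaussian interpolation, integration by parts, cascade invariance) and of the constants ($16N=16$ at $N=1$) is consistent with the cited source, and your observation that finiteness rather than normalization of $\nu$ is harmless is accurate.
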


\begin{lemma}[Multi-species initial condition]\label{l.reg_initial}
For any $\lambda\in \R_+^\sS$, $\psi_\lambda$ given in~\eqref{e.psi=} can be extended to $ Q^\sS_1(\kappa)$ and satisfies
\begin{align*}
    \Ll|\psi_\lambda(q)-\psi_\lambda(q')\Rr|\leq \sum_{s\in\sS}\lambda_{s} |q_s-q'_s|_{L^1},\quad\forall q,q'\in Q^\sS_1(\kappa).
\end{align*}
The restriction $\psi_\lambda:\mcl Q^\sS_2(\kappa)\to \R$ is F\'echet (and Gateaux) differentiable everywhere. At every $q\in \mcl Q^\sS_2(\kappa)$, its derivative is given by
\begin{align*}
    \partial_q \psi_\lambda(q) = \Ll(\lambda_s \partial_{q_s} \psi^\vec_{\mu_s}(q_s)\Rr)_{s\in\sS} \in \mcl Q^\sS_{\infty,\leq\lambda}(\kappa).
\end{align*}
\end{lemma}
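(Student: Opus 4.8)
The plan is to deduce every assertion from the already-established regularity of the vector-spin initial condition in Lemma~\ref{l.psi^vec_smooth} (applied with $\D=\kappa_s$ and $\nu=\mu_s$), using the observation that $\psi_\lambda$ in~\eqref{e.psi=} is the nonnegative linear combination $\psi_\lambda(q)=\sum_{s\in\sS}\lambda_s\psi^\vec_{\mu_s}(q_s)$ of the functions $\psi^\vec_{\mu_s}$ precomposed with the coordinate projections $q=(q_{s'})_{s'\in\sS}\mapsto q_s$. The one structural fact I will use throughout is that the ambient Hilbert space splits as $L^2([0,1];\prod_{s\in\sS}\S^{\kappa_s})=\prod_{s\in\sS}L^2([0,1];\S^{\kappa_s})$, with $\la p,q\ra_{L^2}=\sum_{s\in\sS}\la p_s,q_s\ra_{L^2}$ and $|q(r)|^2=\sum_{s\in\sS}|q_s(r)|^2$, so that in particular $|q_s-q'_s|_{L^2}\le|q-q'|_{L^2}$ for each $s$, and $\mcl Q^\sS_2(\kappa)=\prod_{s\in\sS}\mcl Q_2(\kappa_s)$ as in~\eqref{e.Q^S(kappa)=}.

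First I would settle the extension and the Lipschitz bound. Since Lemma~\ref{l.psi^vec_smooth} extends each $\psi^\vec_{\mu_s}$ to $\mcl Q_1(\kappa_s)$, the same formula~\eqref{e.psi=} extends $\psi_\lambda$ to $\mcl Q^\sS_1(\kappa)=\prod_{s\in\sS}\mcl Q_1(\kappa_s)$, and the bound $|\psi_\lambda(q)-\psi_\lambda(q')|\le\sum_{s\in\sS}\lambda_s|q_s-q'_s|_{L^1}$ is immediate from the triangle inequality over the finite set $\sS$ combined with the $L^1$-Lipschitz estimate of Lemma~\ref{l.psi^vec_smooth} in each coordinate.

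Next I would address differentiability on $\mcl Q^\sS_2(\kappa)$ and the formula for the derivative. Fixing $q\in\mcl Q^\sS_2(\kappa)$, I would take $y:=(\lambda_s\partial_{q_s}\psi^\vec_{\mu_s}(q_s))_{s\in\sS}$, which is a well-defined element of $L^2([0,1];\prod_s\S^{\kappa_s})$ by Lemma~\ref{l.psi^vec_smooth}, and verify the defining condition~\eqref{e.def_Frechet} directly: by linearity the quantity $\psi_\lambda(q')-\psi_\lambda(q)-\la y,q'-q\ra_{L^2}$ equals $\sum_{s\in\sS}\lambda_s\big(\psi^\vec_{\mu_s}(q'_s)-\psi^\vec_{\mu_s}(q_s)-\la\partial_{q_s}\psi^\vec_{\mu_s}(q_s),q'_s-q_s\ra_{L^2}\big)$, and Fréchet differentiability of each $\psi^\vec_{\mu_s}$ bounds the $s$-th summand by $\lambda_s\,\eps_s(|q'_s-q_s|_{L^2})\,|q'_s-q_s|_{L^2}$ with $\eps_s$ a nondecreasing modulus vanishing at $0$; dividing by $|q'-q|_{L^2}$, using $|q'_s-q_s|_{L^2}\le|q'-q|_{L^2}$, and summing over the finite index set shows the supremum in~\eqref{e.def_Frechet} tends to $0$. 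Uniqueness of $y$ (hence the Fréchet, and therefore also Gateaux, differentiability of $\psi_\lambda$ at $q$, with derivative $y$) follows coordinate by coordinate from the corresponding uniqueness in Lemma~\ref{l.psi^vec_smooth} together with the density of the span of admissible directions, exactly as in the vector-spin case. This gives $\partial_q\psi_\lambda(q)=(\lambda_s\partial_{q_s}\psi^\vec_{\mu_s}(q_s))_{s\in\sS}$.

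Finally, for the inclusion I would invoke~\eqref{e.bound.der.psi}, which gives $\partial_{q_s}\psi^\vec_{\mu_s}(q_s)\in\mcl Q_{\infty,\le1}(\kappa_s)$ for each $s$; since $\lambda_s\ge0$, multiplication by $\lambda_s$ preserves monotonicity and the property of being $\S^{\kappa_s}_+$-valued and scales the sup-norm, so $\lambda_s\partial_{q_s}\psi^\vec_{\mu_s}(q_s)\in\mcl Q_{\infty,\le\lambda_s}(\kappa_s)$, and taking the product over $s\in\sS$ yields $\partial_q\psi_\lambda(q)\in\prod_{s\in\sS}\mcl Q_{\infty,\le\lambda_s}(\kappa_s)=\mcl Q^\sS_{\infty,\le\lambda}(\kappa)$ by~\eqref{e.Q^s_infty,<lambda(kappa)=}. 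I do not expect any genuine obstacle here; the only mildly delicate points are the uniform-in-$s$ control of the Fréchet error terms over the finite set $\sS$ and the purely notational identification $L^2([0,1];\prod_s\S^{\kappa_s})\cong\prod_sL^2([0,1];\S^{\kappa_s})$, with everything else a direct transcription of Lemma~\ref{l.psi^vec_smooth}.
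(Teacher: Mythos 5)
Your proposal is correct and follows essentially the same route as the paper, which likewise deduces the extension and Lipschitz bound from \eqref{e.psi=} together with Lemma~\ref{l.psi^vec_smooth}, obtains the derivative formula directly from the definition of the Fr\'echet derivative applied to the finite sum, and reads off the range from \eqref{e.bound.der.psi}. You have simply spelled out the coordinate-wise estimates that the paper leaves implicit.
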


\begin{proof}
The extendability and Lipschitzness follows from~\eqref{e.psi=} and Lemma~\ref{l.psi^vec_smooth}. The expression for the derivative follows from~\eqref{e.psi=} and the definition of derivatives. The range $\mcl Q^\sS_{\infty,\leq\lambda}(\kappa)$ is clear from~\eqref{e.bound.der.psi}.
\end{proof}

\begin{lemma}[Initial condition]\label{l.initial_condition}
For every $N\in\N$, $q\in \mcl Q_1^\sS(\kappa)$, and $\lambda_N\in\blacktriangle_N$, we have
\begin{align*}
    \bar F_{N,\lambda_N} (0,q)=\psi_{\lambda_N}(q).
\end{align*}
\end{lemma}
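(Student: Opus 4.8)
The plan is a direct computation. At $t=0$ the Hamiltonian~\eqref{e.H^t,q_N=} loses the term $\sqrt{2t}H_N(\sigma)$ and its self-overlap correction, leaving only the external-field part $\sqrt2\,W^q_N(\sigma,\alpha)-Nq(1)\cdot R_{N,\lambda_N}(\sigma,\sigma)$. By~\eqref{e.W^q_N(sigma,alpha)=} and~\eqref{e.R_N,s=}, this quantity splits as a sum over $s\in\sS$ and $n\in I_{N,s}$ of terms $\sqrt2\, w^{q_s}_n(\alpha)\cdot\sigma_{\bullet n}-q_s(1)\cdot\sigma_{\bullet n}\sigma_{\bullet n}^\intercal$, each depending on the single spin $\sigma_{\bullet n}$. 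Since $P_{N,\lambda_N}$ in~\eqref{e.P_N=} is a product measure, for each fixed realization of $\fR$ and of $\alpha$ the inner integral over $\Sigma^N$ factorizes, so that $\bar F_{N,\lambda_N}(0,q)$ equals $-\tfrac1N\E\log\int \prod_{s\in\sS}\prod_{n\in I_{N,s}} Z^{s}_n(\alpha)\,\d\fR(\alpha)$, where $Z^{s}_n(\alpha)=\int_{[-1,1]^{\kappa_s}} \exp(\sqrt2\, w^{q_s}_n(\alpha)\cdot\tau - q_s(1)\cdot\tau\tau^\intercal)\,\d\mu_s(\tau)$.

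Conditioned on $\fR$, the fields $(w^{q_s}_n)$ are independent across $(s,n)$ by construction, hence so are the $Z^s_n$; moreover, for each fixed $s$ the processes $(Z^s_n)_{n\in I_{N,s}}$ are i.i.d.\ copies of a common process $Z^s$ whose associated quantity is $-\E\log\int Z^s\,\d\fR = \psi^\vec_{\mu_s}(q_s)$, by comparing with~\eqref{e.psi^vec=} (taking $\D=\kappa_s$, $\nu=\mu_s$, $\bq=q_s$). The next step is to invoke the additivity of the cascade free energy over independent products of disorders and measures — precisely the ``standard property of the cascade measure'' already used in the excerpt to pass from $N=1$ to general $N$ in $\bar F^\vec_N(0,\cdot)=\psi^\vec_{P^\vec_1}$, namely~\cite[Proposition~3.2]{HJ_critical_pts} — to conclude $\E\log\int\prod_{s,n}Z^s_n(\alpha)\,\d\fR(\alpha)=\sum_{s\in\sS}\sum_{n\in I_{N,s}}\E\log\int Z^s(\alpha)\,\d\fR(\alpha)=-\sum_{s\in\sS}|I_{N,s}|\,\psi^\vec_{\mu_s}(q_s)$. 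Dividing by $N$, recalling $\lambda_{N,s}=|I_{N,s}|/N$ from~\eqref{e.lambda_N,s=} and the definition~\eqref{e.psi=}, gives $\bar F_{N,\lambda_N}(0,q)=\sum_{s\in\sS}\lambda_{N,s}\psi^\vec_{\mu_s}(q_s)=\psi_{\lambda_N}(q)$ for every $q\in\mcl Q^\sS_\infty(\kappa)$. (Equivalently, one may package the same argument through Lemma~\ref{l.match_mp_with_vector} applied with the roles of $\M,N$ played by $N,1$ and $\sfM_s$ replaced by $I_{N,s}$: this yields $\bar F_{N,\lambda_N}(0,q)=N^{-1}\bar F^\vec_1(0,\bq)=N^{-1}\psi^\vec_{P^\vec_1}(\bq)$ with $\bq$, $P^\vec_1$ the block-diagonal path~\eqref{e.bq=mp} and product measure~\eqref{e.P^vec_1=mp}, and one then block-decomposes $\psi^\vec_{P^\vec_1}$ using~\cite[Proposition~3.2]{HJ_critical_pts} exactly as above.)

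Finally I would extend the identity from $\mcl Q^\sS_\infty(\kappa)$ to $\mcl Q^\sS_1(\kappa)$, which is what the statement asks for: by Proposition~\ref{p.F_N_smooth} the map $\bar F_{N,\lambda_N}(0,\cdot)$ is $L^1$-Lipschitz and extends continuously to $\mcl Q^\sS_1(\kappa)$, and by Lemma~\ref{l.reg_initial} the same holds for $\psi_{\lambda_N}$; since $\mcl Q^\sS_\infty(\kappa)$ is dense in $\mcl Q^\sS_1(\kappa)$, the two continuous extensions agree. The only genuinely substantive input is the cascade additivity, and the one point that needs care is that we use it in the heterogeneous-product form (distinct single-spin laws $\mu_s$ and distinct block dimensions $\kappa_s$) rather than the homogeneous form in which it is usually quoted; this follows either by iterating the homogeneous statement block by block or by inspecting the proof of~\cite[Proposition~3.2]{HJ_critical_pts}, and I expect this to be the main (and only mild) obstacle. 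Everything else is bookkeeping with the definitions~\eqref{e.psi^vec=}, \eqref{e.psi=} and the splitting of the $t=0$ Hamiltonian.
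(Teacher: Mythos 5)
Your proposal is correct and follows essentially the same route as the paper: factorize the $t=0$ Gibbs integral over single spins using the product structure of $P_{N,\lambda_N}$ and the independence of the fields $w^{q_s}_n$, apply the additivity of the cascade free energy (the paper cites \cite[Corollary~5.26]{HJbook} for exactly the heterogeneous sum $\sum_{s}\sum_{n\in I_{N,s}}X_{s,n}$ you worry about), identify each factor with $\psi^\vec_{\mu_s}(q_s)$ via~\eqref{e.psi^vec=}, and extend from $\mcl Q^\sS_\infty(\kappa)$ to $\mcl Q^\sS_1(\kappa)$ by $L^1$-continuity. The heterogeneity issue you flag is handled in the paper by quoting the cascade additivity directly for a sum of independent (not identically distributed) processes, so no extra work is needed.
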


\begin{proof}
We only need to prove identity at $q\in\mcl Q^\sS_\infty(\kappa)$ and the general case follows by extension.
For each $s\in\sS$ and $n\in I_{N,s}$, we define
\begin{align*}
    X_{s,n}(\alpha) = \log \int \exp \Ll(\sqrt{2}w^{q_s}_n(\alpha)\cdot \sigma_{\bullet n}-q_s(1)\cdot \sigma_{\bullet  n}\sigma_{\bullet n}^\intercal\Rr) \d \mu_s(\sigma_{\bullet n})
\end{align*}
where $w^{q_s}_n$ is introduced in~\eqref{e.Ew^q_s_iw^q_s_i=}. 
Comparing this with $\psi^\vec_\nu$ in~\eqref{e.psi_mu_s(q_s)=-Elogint...}, we have
\begin{align}\label{e.psi_mu_s(q_s)=-Elogint...}
    \psi^\vec_{\mu_s}(q_s) = - \E \log \int \exp \Ll(X_{s,n}(\alpha)\Rr)\d\fR(\alpha).
\end{align}
On the other hand, using the expression in~\eqref{e.F_N(t,q)=}, we can rewrite
\begin{align*}
    -N \bar F_{N,\lambda_N}(0,q) = \E \log \int \exp\bigg(\sum_{s\in\sS}\sum_{n\in I_{N,s}} X_{s,n}(\alpha)\bigg) \d \fR(\alpha).
\end{align*}
By a basic property of cascades stated for example in~\cite[Corollary~5.26]{HJbook}, the right-hand side in the above display is equal to
\begin{align*}
    \sum_{s\in\sS}\sum_{n\in I_{N,s}}\E \log \int \exp \Ll(X_{s,n}(\alpha)\Rr)\d\fR(\alpha) \stackrel{\eqref{e.psi_mu_s(q_s)=-Elogint...}}{=} -\sum_{s\in\sS}\Ll|I_{N,s}\Rr|\psi^\vec_{\mu_s}(q_s)
\end{align*}
which together with the definition of $\lambda_N$ in~\eqref{e.lambda_N,s=} implies the announced result.
\end{proof}

\section{Cavity computation and proofs in the general case}\label{s.cavity_and_proofs}

In Section~\ref{s.rel_to_vector_spin}, we have shown that, if the limit of species proportions $(\lambda_{N,s})_{s\in\sS}$ are all rational, the multi-species model is equivalent to the vector spin glass model (see Corollary~\ref{c.equiv_rational}). Hence, we can directly apply cavity computation results in~\cite[Sections~6 and~7]{HJ_critical_pts} stated for vector spin glasses. However, in the case where some entries in the limit of $\lambda_N$ are irrational, such argument no longer works. To handle this, we need an additional approximation argument. 

In Section~\ref{s.cavity_computation}, we give definitions of various objects appearing in the cavity computation and then state the cavity computation results, Lemmas~\ref{l.cavity1} and~\ref{l.cavity2}, assuming that the species proportions are rational. These two lemmas are straightforward adaptions of results in~\cite{HJ_critical_pts}.
In Section~\ref{s.approx}, we consider the (general) irrational case and use approximation to extend the two results to Lemmas~\ref{l.approx1} and~\ref{l.approx2}. In Section~\ref{s.crti_pt_and_rel_results}, we apply these two lemmas to prove the results corresponding to those in~\cite[Section~7]{HJ_critical_pts}. In particular, we prove Theorems~\ref{t.crit_pt_rep},~\ref{t.crit_pt_id}, and~\ref{t.main3} here.

\subsection{Cavity computation}\label{s.cavity_computation}

We start with introducing definitions necessary for the cavit computation.

\subsubsection{Parisi functional}

Recall the definition of $\psi_{\lambda}$ from~\eqref{e.psi=}.
For $\xi$ in~\eqref{e.EH_N(sigma)H_N(sigma')=}, we define $\theta: \prod_{s\in\sS}\R^{\kappa_s\times\kappa_s}\to \R$ by
\begin{align}\label{e.theta=}
    \theta(a)= a\cdot \nabla\xi (a) -\xi(a),\quad\forall a \in \prod_{s\in\sS}\R^{\kappa_s\times\kappa_s}.
\end{align}
Here, $\nabla\xi :\prod_{s\in\sS}\R^{\kappa_s\times\kappa_s}\to \prod_{s\in\sS}\R^{\kappa_s\times\kappa_s} $ is the gradient of $\xi$ defined with respect to the entry-wise inner product structure on the linear space $\prod_{s\in\sS}\R^{\kappa_s\times\kappa_s}$.

For $t\in\R_+$, $q\in \mcl Q^\sS_\infty(\kappa)$, and $\lambda \in \R_+^\sS$, we set
\begin{align}\label{e.sP_lambda,t,q}
    \sP_{\lambda,t,q}(p) = \psi_\lambda(q+t \nabla\xi(p))-t\int_0^1\theta(p(r)) \d r.
\end{align}
Comparing this with~\eqref{e.mcJ=}, we have
\begin{align}\label{e.rel_parisi_mcJ}
    \sP_{\lambda_\infty, t,q}(p)= \mcl J_{\lambda_\infty, t,q}(q+t\nabla \xi(p), p).
\end{align}

\subsubsection{Hamiltonian from the cavity computation}

Then, we describe the Gibbs measure appearing in the cavity computation and the corresponding free energy. Fix any $M\in\N$ interpreted as the number of cavity spins. The Hamiltonian arising from the cavity computation is a centered Gaussian process $\big(\tilde H_N(\sigma)\big)_{\sigma \in \Sigma^N}$ with covariance
\begin{align}\label{e.E[tildeH_N...]var=}
    \E \Ll[\tilde H_N(\sigma)\tilde H_N(\sigma')\Rr]= (N+M)\xi \Big(\frac{N}{N+M}R_{N,\lambda_N}(\sigma,\sigma')\Big)
\end{align}
which should be compared with $H_N(\sigma)$ given in~\eqref{e.EH_N(sigma)H_N(sigma')=}. Notice that we have kept $M$ implicit from the notation. We assume that $\tilde H_N(\sigma)$ is independent from other randomness. Let $\tilde W^q_N(\sigma,\alpha)$ be an independent copy of $W^q_N(\sigma,\alpha)$ given in~\eqref{e.W^q_N(sigma,alpha)=}. 
Analogous to $H^{t,q}_N(\sigma,\alpha)$ in~\eqref{e.H^t,q_N=}, we define
\begin{align}\label{e.tildeH^t,q_N(sigma,lapha)}
\begin{split}
    \tilde H^{t,q}_N(\sigma,\alpha) = \sqrt{2t}\tilde H_N(\sigma) - t(N+M)\xi \Ll(\frac{N}{N+M}R_{N,\lambda_N}(\sigma,\sigma)\Rr) 
    \\
    + \tilde W^q_N(\sigma,\alpha) -q(1)\cdot R_{N,\lambda_N}(\sigma,\sigma).
\end{split}
\end{align}
We prefer to omit $\M$ in the notation.

\subsubsection{Perturbations}
To ensure Ghirlanda--Guerra identities, we need to introduce perturbation terms to the Hamiltonian. Since we want to apply Lemma~\ref{l.match_mp_with_vector} and results from~\cite{HJ_critical_pts} for vector spin glasses, the perturbation to be introduced below is not the most suitable choice. Indeed, in the multi-species setting, we only need Ghirlanda--Guerra identities for the overlap array $\big(R_{\M N,\lambda_{\M N}}(\sigma^l,\sigma^{l'})\big)_{l,l'\in\N}$ (see~\eqref{e.R_N,s=}) which contains less entries than $\big(N^{-1}\bsigma^l \bsigma^{l'}\big)_{l,l'\in\N}$ in the matching vector spin glass model (see~\eqref{e.R_N,s=(matrix_overlap)}). However, to use results from~\cite{HJ_critical_pts}, we need to employ perturbation in the style of vector spin glasses. In Remark~\ref{r.perturbation}, we describe the most suitable perturbation in the multi-species setting and corresponding results without proofs.

Henceforth, we fix any $\M\in\N$ and consider the multi-species model with size $\M N$ for $N\in\N$. We assume that $\lambda_{\M N}$ satisfies~\eqref{e.lambda^MN_s=|M_s|/M} for some fixed weak partition $(\sfM_s)_{s\in\sS}$ of $\{1,\dots,\M\}$.
As in~\eqref{e.Delta=}, let $\Delta = \Delta(\M,\lambda_{\M N})$
be the dimension for the matching vector spin model (which is independent of $N$). As in~\eqref{e.sigma_to_bsigma_bijection}, fix any such bijection (for each $N$) and let $\bsigma$ be the image of $\sigma$ after this mapping.

Let $(r_n)$ be an enumeration of $[0,1]\cap \Q$ and $(a_n)_{n\in\N}$ be an enumeration of elements in $\S^\Delta_+\cap \Q^{\Delta\times\Delta}$. Conditioned on $\fR$, for every $N\in\N$ and every $h=(h_i)_{1\leq i\leq 4}\in \N^4$, let $\Ll(H^h_{\M N}(\sigma,\alpha)\Rr)_{\sigma\in\Sigma^N,\, \alpha\in\supp\fR}$ be an independent centered Gaussian process with covariance
\begin{align}\label{e.E[H^hH^h]=}
    \E \Ll[H^h_{\M N}(\sigma,\alpha)H^h_{\M N}(\sigma',\alpha')\Rr] = N \Ll(a_{h_1}\cdot \Ll(N^{-1}\bsigma\bsigma'^\intercal)\Rr)^{\odot h_2}+\lambda_{h_3} \alpha\wedge\alpha'\Rr)^{h_4}.
\end{align}
Here, $\odot$ denotes the Schur product, namely, $a\odot b= (a_{ij}b_{ij})_{i,j}$ for two matrices $a$ and $b$ of the same dimension.
The existence of such a process is explained in~\cite[Section~6.1.1]{HJ_critical_pts}. For each $h\in\N^4$, we write $|h|_1 =\sum_{i=1}^4 h_i$ and let $c_h>0$ be a constant such that
\begin{align}\label{e.c_h<}
    c_h \sqrt{N^{-1} \E\Ll[H^h_{\M N}(\sigma,\alpha)H^h_{\M N}(\sigma,\alpha)\Rr]} \leq 2^{-|h|_1},
\end{align}
uniformly over $\sigma \in\Sigma^N$, $\alpha\in\supp\fR$, and $N\in\N$. 
Fix an orthonormal basis $\mathsf{bas}$ of $\S^\Delta$. 
We define the space of perturbation parameters:
\begin{align}\label{e.pert=}
    \pert(\M,\lambda_{\M N}) = [0,3]^{\N^4\times \mathsf{bas}}
\end{align}
where the dependence on $\lambda_{\M N}$ and $\M$ is through $\Delta$ as in~\eqref{e.Delta=}.
For every
\begin{align}\label{e.x_perturb_parameter}
    x= \Ll((x_h)_{h\in\N^4},\ (x_e)_{e\in\mathsf{bas}}\Rr)\in \pert(\M,\lambda_{\M N}),
\end{align}
we set
\begin{align}\label{e.H^x_N(sigma,alpha)}
    H^x_{\M N}(\sigma,\alpha) = \sum_{h\in\N^4}x_hc_hH^h_{\M N}(\sigma,\alpha) + \frac{1}{|\mathsf{bas}|} \sum_{e\in \mathsf{bas}}x_e e\cdot \bsigma\bsigma^\intercal.
\end{align}
Compared with the standard perturbation as in~\cite{pan.multi,pan.potts,pan.vec}, the additional second sum ensures the concentration of the self-overlap $N^{-1}\bsigma\bsigma'^\intercal$ (which implies the concentration of $R_{\M N,\lambda_{\M N}}(\sigma,\sigma)$).
Recall the original Hamiltonian $H^{t,q}_{\M N}(\sigma,\alpha)$ from~\eqref{e.H^t,q_N=} (with $\M N$ substituted for $N$ therein).
For each $N\in\N$, we define the free energy with perturbation
\begin{align}\label{e.F^x_N=}
\begin{split}
    &\bar F^{\M,x}_{\M N,\lambda_{\M N}}(t,q) 
    \\
    &=-\frac{1}{{\M N}}\E \log \iint\exp\Ll(H^{t,q}_{\M N}(\sigma,\alpha) +  N^{-\frac{1}{16}}H^x_{\M N}(\sigma,\alpha)\Rr)\d P_{\M N,\lambda_{\M N}}(\sigma) \d \fR(\alpha).
\end{split}
\end{align}
The choice of $\frac{1}{16}$ is inconsequential and can be replaced by any number in $(\frac{1}{32},\frac{1}{8})$. This factor is needed to ensure that the perturbation is weak enough not to change the limit of free energy and strong enough to ensure the validity of the Ghirlanda--Guerra identities. For each $N\in\N$, we denote the associated Gibbs measure by
\begin{align}\label{e.<>_N,lambda,x}
    \la \cdot\ra^{\M,x}_{\M N,\lambda_{\M N},x}\propto \exp\Ll(H^{t,q}_{\M N}(\sigma,\alpha) + N^{-\frac{1}{16}}H^x_{\M N}(\sigma,\alpha)\Rr)\d P_{\M N,\lambda_{\M N}}(\sigma) \d \fR(\alpha)
\end{align}
where the value for $(t,q)$ will be clear from the context.

Let $\tilde H^{t,q}_{\M N}(\sigma,\alpha)$ be given as in~\eqref{e.tildeH^t,q_N(sigma,lapha)} with $\M N$ substituted for $N$ therein. For each $N\in\N$, we denote the associated perturbed free energy and Gibbs measure by
\begin{gather}
    \begin{split}
        &\tilde F^{\M,x}_{\M N,\lambda_{\M N}}(t,q) \\&= -\frac{1}{{\M N}}\E \log \iint\exp\Ll(\tilde H^{t,q}_{\M N}(\sigma,\alpha) + N^{-\frac{1}{16}}H^x_{\M N}(\sigma,\alpha)\Rr)\d P_{\M N,\lambda_{\M N}}(\sigma) \d \fR(\alpha);
    \end{split}
     \label{e.tilde_F^x_N=}
    \\
    \la \cdot\ra^{\circ,\M,x}_{{\M N},\lambda_{\M N}}\propto \exp\Ll(\tilde H^{t,q}_{\M N}(\sigma,\alpha) + N^{-\frac{1}{16}}H^x_{\M N}(\sigma,\alpha)\Rr)\d P_{\M N,\lambda_{\M N}}(\sigma) \d \fR(\alpha). \label{e.<>^circ=}
\end{gather}
The symbol $\circ$ in the superscript signifies ``cavity''.

In the notation for the free energy and the Gibbs measure in~\eqref{e.F^x_N=}, \eqref{e.<>_N,lambda,x}, \eqref{e.tilde_F^x_N=}, and~\eqref{e.<>^circ=}, we have emphasized the dependence on $\M$ (the cavity dimension) in the superscript.

We denote by $(\sigma,\alpha)$ the canonical random variable under $\la\cdot\ra^{\M,x}_{{\M N},\lambda_{\M N}}$ and $\la\cdot\ra^{\circ,\M,x}_{{\M N},\lambda_{\M N}}$. We write $\Ll(\sigma^l,\alpha^l\Rr)_{l\in\N}$ to denote independent copies of $(\sigma,\alpha)$.

\begin{lemma}\label{l.F-F^x}
Let $M\in\N$ and let $\lambda_{MN}$ satisfy~\eqref{e.lambda^MN_s=|M_s|/M}.
There is a constant $C>0$ depending only $\kappa$, $\mu$, and $\xi$ such that for every $N$, $t\in\R_+$, and $q\in \mcl Q^\sS_\infty(\kappa)$, we have
\begin{gather}
    \sup_{x\in\pert(\M,\lambda_{\M N})} \Ll|\bar F_{\M N,\lambda_{\M N}}(t,q)- \bar F^{\M,x}_{\M N,\lambda_{\M N}}(t,q)\Rr| \leq C N^{-\frac{1}{16}}, \label{e.|F-F^x|<}
    \\
    \sup_{x\in\pert(\M,\lambda_{\M N})} \Ll|\bar F^{\M,x}_{\M N,\lambda_{\M N}}(t,q)- \tilde F^{\M,x}_{\M N,\lambda_{\M N}}(t,q)\Rr| \leq C|t|N^{-1}. \label{e.|F^x-tF^x|<}
\end{gather}
\end{lemma}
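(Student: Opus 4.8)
The plan is to prove both estimates by elementary Gaussian calculus, keeping in each case the reference measure $P_{\M N,\lambda_{\M N}}\otimes\fR$ and every term common to the two free energies fixed.

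For~\eqref{e.|F-F^x|<}: the Hamiltonians $H^{t,q}_{\M N}$ and $H^{t,q}_{\M N}+N^{-1/16}H^x_{\M N}$ act on the same measure space, so the elementary sandwich $\inf G\leq \log\frac{\iint e^{H+G}}{\iint e^{H}}\leq\sup G$, applied with $G=N^{-1/16}H^x_{\M N}$, gives $\Ll|\bar F_{\M N,\lambda_{\M N}}(t,q)-\bar F^{\M,x}_{\M N,\lambda_{\M N}}(t,q)\Rr|\leq\frac{N^{-1/16}}{\M N}\,\E\sup_{\sigma,\alpha}\Ll|H^x_{\M N}(\sigma,\alpha)\Rr|$, uniformly in $x\in\pert(\M,\lambda_{\M N})$. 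It therefore suffices to bound $\E\sup_{\sigma,\alpha}|H^x_{\M N}(\sigma,\alpha)|$ by $C\M N$ with $C$ depending only on $\kappa,\mu,\xi$, using~\eqref{e.Delta=} in the form $\Delta\leq|\kappa|_\infty\M$. Split $H^x_{\M N}$ as in~\eqref{e.H^x_N(sigma,alpha)} into the deterministic part $|\mathsf{bas}|^{-1}\sum_e x_e\,e\cdot\bsigma\bsigma^\intercal$, bounded pointwise by $3\,|\bsigma\bsigma^\intercal|\leq 3\Delta N$, and the Gaussian part $\sum_h x_h c_h H^h_{\M N}(\sigma,\alpha)$. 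Conditionally on $\fR$ the latter is, at each $(\sigma,\alpha)$, a centered Gaussian whose variance is at most $9\sum_{h\in\N^4}c_h^2\,\E[(H^h_{\M N}(\sigma,\alpha))^2]\leq 9N\sum_{h\in\N^4}2^{-2|h|_1}=:CN$, where we used $x_h\leq 3$, the normalization~\eqref{e.c_h<}, and the convergence of $\sum_{h\in\N^4}2^{-2|h|_1}$. A Gaussian maximal inequality on the compact set $\Sigma^{\M N}\times\supp\fR$ — using this uniform variance bound, the $O(\M N)$ metric entropy of $\Sigma^{\M N}$ at a fixed scale, and the a.s.\ regularity of the processes $H^h$ on $\supp\fR$ recorded in~\cite[Section~4]{HJ_critical_pts} — then bounds $\E\sup_{\sigma,\alpha}|\sum_h x_h c_h H^h_{\M N}|$ by $C\M N$ as well. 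Only $x_h,x_e\in[0,3]$ enter, so all bounds are uniform in $x$.

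For~\eqref{e.|F^x-tF^x|<}: since $W^q_{\M N}$ and $\tilde W^q_{\M N}$ are independent copies with the same law, we may rewrite both free energies with the same copy, after which $\bar F^{\M,x}_{\M N,\lambda_{\M N}}$ and $\tilde F^{\M,x}_{\M N,\lambda_{\M N}}$ differ only in that $\sqrt{2t}H_{\M N}(\sigma)$ together with its self-overlap correction $-t\M N\,\xi(R_{\M N,\lambda_{\M N}}(\sigma,\sigma))$ is replaced by $\sqrt{2t}\tilde H_{\M N}(\sigma)$ together with $-t(\M N+M)\,\xi\Ll(\tfrac{\M N}{\M N+M}R_{\M N,\lambda_{\M N}}(\sigma,\sigma)\Rr)$. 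Absorbing the common external field, the perturbation $N^{-1/16}H^x_{\M N}$, and the cascade into a random reference measure $\mathfrak P$, and using that $\tilde H_{\M N}$ is independent of all other randomness, we apply Corollary~\ref{c.Gaussian_interpolation} with $\mathbf G_1=\sqrt t\,H_{\M N}$, $\mathbf G_0=\sqrt t\,\tilde H_{\M N}$, $\mathbf V_1(\sigma,\sigma')=t\M N\,\xi(R_{\M N,\lambda_{\M N}}(\sigma,\sigma'))$, $\mathbf V_0(\sigma,\sigma')=t(\M N+M)\,\xi\Ll(\tfrac{\M N}{\M N+M}R_{\M N,\lambda_{\M N}}(\sigma,\sigma')\Rr)$, and $\mathbf D_i=-\mathbf V_i(\sigma,\sigma)$, which are exactly the two self-overlap corrections. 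This yields $\M N\,\Ll(\bar F^{\M,x}_{\M N,\lambda_{\M N}}-\tilde F^{\M,x}_{\M N,\lambda_{\M N}}\Rr)=\int_0^1\E\la\mathbf V_1(\sigma,\sigma')-\mathbf V_0(\sigma,\sigma')\ra_r\,\d r$. Writing $f(u)=(\M N+u)\,\xi\Ll(\tfrac{\M N}{\M N+u}R\Rr)$ for $u\in[0,M]$ with $R=R_{\M N,\lambda_{\M N}}(\sigma,\sigma')$, a direct computation gives $f'(u)=-\theta\Ll(\tfrac{\M N}{\M N+u}R\Rr)$ with $\theta$ as in~\eqref{e.theta=}, so $|\mathbf V_1-\mathbf V_0|=t\,|f(0)-f(M)|\leq t\,M\sup_{|a|\leq|\kappa|_\infty|\sS|}|\theta(a)|$ since every entry of $R$ lies in $[-1,1]$. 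Hence $\Ll|\bar F^{\M,x}_{\M N,\lambda_{\M N}}-\tilde F^{\M,x}_{\M N,\lambda_{\M N}}\Rr|\leq C|t|\,M/(\M N)$, which is the asserted $C|t|N^{-1}$ since the cavity count $M$ and the block size $\M$ coincide.

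The only nonroutine point is the Gaussian maximal inequality invoked for~\eqref{e.|F-F^x|<}: controlling $\E\sup_{\sigma,\alpha}$ of the (infinite) perturbation Gaussian over the continuous spin space together with $\supp\fR$ at the rate needed to produce $N^{-1/16}$. The variance normalization built into the $c_h$ (see~\eqref{e.c_h<}) and the summability of $2^{-2|h|_1}$ over $\N^4$ make this precisely the standard Ghirlanda--Guerra perturbation estimate, carried out in~\cite[Section~6]{HJ_critical_pts} and in Panchenko's framework~\cite{pan}, and its transcription to the present notation is mechanical; everything else is routine Gaussian interpolation as in Corollary~\ref{c.Gaussian_interpolation}.
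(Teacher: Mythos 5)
Your treatment of \eqref{e.|F^x-tF^x|<} is correct and is essentially the paper's argument: interpolate only between $\sqrt{2t}H_{\M N}$ and $\sqrt{2t}\tilde H_{\M N}$ with their self-overlap corrections via Corollary~\ref{c.Gaussian_interpolation}, and bound the resulting covariance discrepancy $\big|(\M N+\M)\xi\big(\tfrac{\M N}{\M N+\M}R\big)-\M N\,\xi(R)\big|$ by $C_\xi\M$ (your computation of $f'(u)=-\theta\big(\tfrac{\M N}{\M N+u}R\big)$ is a clean way to get this).

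Your argument for \eqref{e.|F-F^x|<}, however, has a genuine gap. The sandwich bound $\inf G\leq\log\frac{\iint e^{H+G}}{\iint e^{H}}\leq\sup G$ requires controlling $\E\sup_{\sigma,\alpha}|H^x_{\M N}(\sigma,\alpha)|$, and this quantity is not $O(\M N)$ --- it is almost surely infinite for generic $x\in\pert(\M,\lambda_{\M N})$. The problem is the supremum over $\alpha\in\supp\fR$, not over $\sigma$: the support of the Ruelle cascade is an infinite set, and for any $h$ with $h_4\geq1$ and $\lambda_{h_3}>0$ the covariance \eqref{e.E[H^hH^h]=} at fixed $\sigma=\sigma'$ is strictly smaller for $\alpha\neq\alpha'$ than at $\alpha=\alpha'$, so conditionally on the tree structure each leaf carries an independent Gaussian increment of fixed positive variance; the supremum over infinitely many such leaves diverges. (This is the same phenomenon as for the external field $w^q(\alpha)$: the Gibbs integral $\int e^{w(\alpha)}\d\fR(\alpha)$ is finite because the cascade weights decay, but $\sup_{\alpha\in\supp\fR}w(\alpha)=+\infty$. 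The reference you cite for ``a.s.\ regularity on $\supp\fR$'' only provides measurability, not a finite supremum.) The standard Ghirlanda--Guerra perturbation estimate you appeal to is not a supremum bound; it is exactly the interpolation (equivalently, the derivative-in-$x_h$ plus Gaussian integration by parts) bound, which is what the paper uses here: apply Lemma~\ref{l.Gaussian interpolation} between the Hamiltonians with and without $N^{-1/16}H^x_{\M N}$, so that only the Gibbs average of the covariance difference appears. Your pointwise covariance bound $9N\sum_h 4^{-|h|_1}$ and the pointwise bound $3|\kappa|_\infty\M N$ on the deterministic part are then exactly what is needed (and note it is the deterministic term, scaling like $N^{-1/16}$ after division by $\M N$, that dictates the rate; the Gaussian part contributes $O(N^{-1/8})$).
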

\begin{proof}
Comparing $\bar F_{\M N,\lambda_{\M N}}$ given in~\eqref{e.F_N(t,q)=} and $\bar F^{\M, x}_{\M N,\lambda_{\M N}}$ in~\eqref{e.F^x_N=}, we see that the additional term is $N^{-\frac{1}{16}}H^x_{\M N}(\sigma,\alpha)$.
The definition of $\Delta$ in~\eqref{e.Delta=} implies that $|\Delta|\leq \M |\kappa|_\infty$ for $|\kappa|_\infty = \max_{s\in\sS} \kappa_s$.
Since every entry in $\bsigma \in \R^{\Delta\times N}$ lies in $[-1,+1]$, we have $|\bsigma\bsigma'|\leq \M N |\kappa|_\infty$.
Using this, the choice of $c_h$ in~\eqref{e.c_h<}, and the presence of $|\mathsf{bas}|$ in~\eqref{e.H^x_N(sigma,alpha)}, we can apply the interpolation argument in Lemma~\ref{l.Gaussian interpolation} to get~\eqref{e.|F-F^x|<}. 

Comparing $\bar F^{\M, x}_{\M N,\lambda_{\M N}}$ in~\eqref{e.F^x_N=} and $\tilde F^{\M, x}_{\M N,\lambda_{\M N}}$ in~\eqref{e.tilde_F^x_N=}, the difference lies in the terms associated with $\sqrt{2t}H_{\M N}(\sigma)$ and $\sqrt{2t}\tilde H_{\M N}(\sigma)$. The variance of the former is given in~\eqref{e.EH_N(sigma)H_N(sigma')=} and latter in~\eqref{e.E[tildeH_N...]var=}. There is a constant $C_\xi$ depending on $\xi$ such that
\begin{align*}
    \bigg|(\M N+\M)\xi \Big(\frac{\M N}{\M N+\M}R_{\M N,\lambda_{\M N}}(\sigma,\sigma')\Big)- \M N\xi \Big(R_{\M N,\lambda_{\M N}}(\sigma,\sigma')\Big)\bigg|\leq C_\xi \M.
\end{align*}
Using this and the interpolation argument in Lemma~\ref{l.Gaussian interpolation}, we can get~\eqref{e.|F^x-tF^x|<}.
\end{proof}

\begin{remark}
We clarify that the law of $\Ll(\sigma^l,\alpha^l\Rr)_{l\in\N}$ under $\la\cdot\ra^{\M,x}_{\M N,\lambda_{\M N}}$ or $\la\cdot\ra^{\circ,\M,x}_{\M N,\lambda_{\M N}}$ depends on the partition $(I_{\M N,s})_{s\in\sS}$ of $\{1,\dots, N\}$. But, the law of overlaps
\begin{align*}
    \big(R_{\M N,\lambda_{\M N}}(\sigma^l,\sigma^{l'}),\,\alpha^l\wedge\alpha^{l'}\big)_{l,l'\in\N}
\end{align*}
(see~\eqref{e.R_N,s=}) under these Gibbs measures only depend on the proportions $\lambda_N$. Since we are more interested in the law of overlaps, we prefer to display in the notation only the dependence on $\lambda_{\M N}$. \qed
\end{remark}

Since we have introduced several Gibbs measures, it is a good place state the following important property (e.g.\ see~\cite[Proposition~4.8]{HJ_critical_pts}). Recall the notation $\la\cdot\ra_\fR$ from Section~\ref{s.cascade}.

\begin{lemma}[Invariance of cascades]\label{l.invar}
Let $\la\cdot\ra$ be the one of the following Gibbs measures: $\la\cdot\ra_{N,\lambda_N}$ in~\eqref{e.<>_N=}, $\la\cdot\ra^{\M,x}_{\M N,\lambda_{\M N}}$ in~\eqref{e.<>_N,lambda,x}, $\la\cdot\ra^{\circ,\M,x}_{\M N,\lambda_{\M N}}$ in~\eqref{e.<>^circ=}, or any interpolation of these appearing in Lemma~\ref{l.Gaussian interpolation}. Then, the law of $\big(\alpha^l\wedge\alpha^{l'}\big)_{l,l'\in\N}$ under $\E \la\cdot\ra$ is equal to that under $\E\la\cdot\ra_\fR$.
In particular, $\alpha\wedge\alpha'$ distributes uniformly over $[0,1]$ under $\E\la\cdot\ra$.
\end{lemma}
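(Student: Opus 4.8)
The plan is to reduce the invariance statement for each of these multi-species Gibbs measures to the corresponding statement for the Ruelle probability cascade $\fR$ itself, which is exactly the well-known invariance property quoted as \cite[Proposition~4.8]{HJ_critical_pts}. The key structural observation is that each Gibbs measure in the list has the form $\la\cdot\ra \propto \exp(\mathbf{G}(\sigma,\alpha) + \mathbf{D}(\sigma))\, \d P(\sigma)\, \d\fR(\alpha)$, where, conditioned on $\fR$, the process $(\mathbf{G}(\sigma,\alpha))_{\sigma,\alpha}$ is a centered Gaussian process whose covariance depends on $\alpha,\alpha'$ only through $\alpha\wedge\alpha'$ (see~\eqref{e.cov_external_multi-sp}, \eqref{e.E[H^hH^h]=}, and the covariances entering $\tilde H^{t,q}_{\M N}$), and the deterministic part $\mathbf{D}$ does not depend on $\alpha$ at all. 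For the interpolated measures appearing in Lemma~\ref{l.Gaussian interpolation}, the same is true since a convex combination $\sqrt{1-r}\,\mathbf{G}_0 + \sqrt{r}\,\mathbf{G}_1$ of two such processes again has covariance depending on $\alpha,\alpha'$ only through $\alpha\wedge\alpha'$.

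First I would recall precisely the formulation of the cascade invariance: if $f:\R\to\R_+$ and more generally if $(g_n(\alpha))_{\alpha\in\supp\fR}$, $n\in\N$, are, conditioned on $\fR$, independent centered Gaussian processes with $\E[g_n(\alpha)g_n(\alpha')] = c_n(\alpha\wedge\alpha')$ for some functions $c_n$, then reweighting $\fR^{\otimes\N}$ by $\prod_n \exp(g_n(\alpha^n) - \frac12 c_n(1))$ leaves the law of the overlap array $(\alpha^l\wedge\alpha^{l'})_{l,l'}$ invariant; this is \cite[Proposition~4.8]{HJ_critical_pts} (the Bolthausen--Sznitman / cascade invariance). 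Then I would write each Gibbs measure in the list in the ``reweighting of $\fR^{\otimes\N}$'' form: integrating out the spin variables $\sigma$ against $\d P$ and against the Gaussian randomness in $\mathbf{G}$ (using that $\mathbf{D}$ is $\alpha$-independent and that the law of $(\mathbf{G}(\cdot,\alpha))_\alpha$ given $\fR$ is a function of the overlap kernel $\alpha\wedge\alpha'$), the effective weight attached to a replica $\alpha^l$ is of exactly the admissible Gaussian-reweighting type. Applying the quoted invariance property then yields that the law of $(\alpha^l\wedge\alpha^{l'})_{l,l'}$ under $\E\la\cdot\ra$ equals its law under $\E\la\cdot\ra_\fR = \E\fR^{\otimes\N}$. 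The final sentence about $\alpha\wedge\alpha'$ being uniform on $[0,1]$ is then immediate from the defining property of $\fR$ recalled in Section~\ref{s.cascade}.

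The main obstacle, such as it is, is bookkeeping rather than anything deep: one must check that \emph{every} Gibbs measure on the list — including the interpolations from Lemma~\ref{l.Gaussian interpolation}, whose Gaussian part mixes $\mathbf{G}_0$ and $\mathbf{G}_1$ and whose deterministic part $\mathbf{D}_r$ is still $\alpha$-independent — genuinely fits the hypotheses of \cite[Proposition~4.8]{HJ_critical_pts}, and that the self-overlap correction terms (the $\mathbf{D}$'s, and the terms $-Nq(1)\cdot R$, $-q(1)\cdot R$ in $H^{t,q}_N$ and $\tilde H^{t,q}_N$) carry no $\alpha$-dependence so that they factor out of the $\alpha$-integral cleanly. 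Since this is entirely analogous to the vector-spin argument in \cite[Section~4]{HJ_critical_pts}, I would present the proof briefly: state that each measure is of the form covered by the cascade invariance property, invoke \cite[Proposition~4.8]{HJ_critical_pts}, and conclude. No genuinely new multi-species phenomenon enters here, because the external-field covariance~\eqref{e.cov_external_multi-sp} and the perturbation covariance~\eqref{e.E[H^hH^h]=} both depend on $\alpha,\alpha'$ only through the single scalar $\alpha\wedge\alpha'$, exactly as in the single-species case.
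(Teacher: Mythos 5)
Your proposal is correct and follows essentially the same route as the paper, which simply invokes the cascade invariance property of \cite[Proposition~4.8]{HJ_critical_pts} after observing that each listed Gibbs measure (including the interpolations) is a reweighting of $\d P(\sigma)\,\d\fR(\alpha)$ by an exponential whose conditionally Gaussian part has covariance depending on the replicas only through $\alpha\wedge\alpha'$ and whose deterministic part is $\alpha$-independent. The paper offers no further argument beyond this citation, so your slightly more detailed bookkeeping is, if anything, more explicit than the source.
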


Later, we need the next result for the convergence of overlap arrays.

\begin{lemma}[Criterion for convergence of overlap array]\label{l.cvg_overlap}
Let $(\alpha^l)_{l\in\N}$ be i.i.d.\ samples from $\la\cdot\ra_\fR$. Let $E$ be a fixed compact Euclidean set and let $(\Omega_n,\nu_n)_{n\in\N}$ be a sequence of probability spaces such that, for each $n$, the following holds:
\begin{itemize}
    \item $(\alpha^l)_{l\in\N}$ are random variables on $\Omega_n$ and the law of $\big(\alpha^l\wedge\alpha^{l'}\big)_{l,l'\in\N}$ under $\nu_n$ is equal to that under $\E\la\cdot\ra_\fR$;
    \item there is a random array $R_n=\big(R^{l,l'}_n\big)_{l,l'}$ of $E$-valued random variables on $\Omega_n$ and the law of $R_n$ under $\nu_n$ is invariant under permutation of labels.
\end{itemize}
Let $a\in E$ and $p:[0,1]\to E$ be bounded and measurable. Then, we have
\begin{align}\label{e.l.cvg_overlap}
    \lim_{n\to\infty} \nu_n \Ll(\Ll|R^{1,2}_n - p\Ll(\alpha^1\wedge\alpha^2\Rr)\Rr|\Rr) =0,\qquad \lim_{n\to\infty} \nu_n \Ll(\Ll|R^{1,1}_n - a\Rr|\Rr)=0,
\end{align}
if and only if $\big(R_n^{l,l'},\, \alpha^l\wedge\alpha^{l'}\big)_{l,l'\in\N}$ under $\nu_n$ converges in law to
\begin{align*}
    \Big(\Ll(p \big(\alpha^l\wedge\alpha^{l'}\big),\ \alpha^l\wedge\alpha^{l'}\Rr)\one_{l\neq l'} + (a,1)\one_{l=l'}\Big)_{l,l'\in\N}
\end{align*}
under $\E\la\cdot\ra_\fR$, as $n$ tends to infinity.
\end{lemma}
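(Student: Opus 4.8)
The plan is to prove the equivalence via the standard bridge between $L^1$-convergence of array entries (together with the matching single-site limit) and convergence in law. The key observation is that, by the first hypothesis, the law of $\bigl(\alpha^l\wedge\alpha^{l'}\bigr)_{l,l'}$ under $\nu_n$ is \emph{fixed}: it always equals the law under $\E\la\cdot\ra_\fR$, for every $n$. So the only randomness that can vary with $n$ comes from the array $R_n$, and the entire statement reduces to controlling the joint law of $R_n$ with the ambient (fixed-law) ultrametric data. Since $E$ is compact and the diagonal value $a$ and the profile $p$ are bounded, all relevant random variables take values in a fixed compact set, so convergence in law is equivalent to convergence of expectations of all bounded continuous test functions, and by a density/Stone--Weierstrass argument it suffices to test against finitely many coordinates at a time.

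For the direction ``($\Rightarrow$)'': assume the two $L^1$-limits in~\eqref{e.l.cvg_overlap}. By exchangeability of $R_n$ under $\nu_n$ (the permutation-invariance hypothesis), the $L^1$-bound on $\bigl|R_n^{1,2}-p(\alpha^1\wedge\alpha^2)\bigr|$ upgrades to $\nu_n\bigl(\bigl|R_n^{l,l'}-p(\alpha^l\wedge\alpha^{l'})\bigr|\bigr)\to 0$ for every fixed pair $l\neq l'$, and likewise $\nu_n\bigl(\bigl|R_n^{l,l}-a\bigr|\bigr)\to 0$ for each $l$. Hence, for any fixed finite index set, the random vector $\bigl(R_n^{l,l'}\bigr)_{l,l'}$ is, in $L^1$ and therefore in probability, asymptotically equal to $\bigl(p(\alpha^l\wedge\alpha^{l'})\one_{l\neq l'}+a\,\one_{l=l'}\bigr)_{l,l'}$ under $\nu_n$. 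The latter is a fixed measurable function of $\bigl(\alpha^l\wedge\alpha^{l'}\bigr)_{l,l'}$, whose law under $\nu_n$ is the $n$-independent law under $\E\la\cdot\ra_\fR$. Combining ``asymptotic equality in probability'' with ``one factor has a constant law'' via Slutsky's theorem gives that $\bigl(R_n^{l,l'},\alpha^l\wedge\alpha^{l'}\bigr)_{l,l'}$ converges in law, along the full sequence, to $\bigl((p(\alpha^l\wedge\alpha^{l'}),\alpha^l\wedge\alpha^{l'})\one_{l\neq l'}+(a,1)\one_{l=l'}\bigr)_{l,l'}$ under $\E\la\cdot\ra_\fR$, which is exactly the claimed limit.

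For the direction ``($\Leftarrow$)'': assume the convergence in law to the stated limit. Then in particular $\bigl(R_n^{1,2},\alpha^1\wedge\alpha^2\bigr)$ converges in law to $\bigl(p(\alpha^1\wedge\alpha^2),\alpha^1\wedge\alpha^2\bigr)$ under $\E\la\cdot\ra_\fR$; applying the continuous mapping theorem to the bounded function $(x,u)\mapsto |x-p(u)|$ — here one must be slightly careful since $p$ need only be measurable, so one first approximates $p$ uniformly off a small-measure set by a continuous function, using that $\alpha^1\wedge\alpha^2$ is uniform on $[0,1]$ hence has no atoms — gives that $|R_n^{1,2}-p(\alpha^1\wedge\alpha^2)|$ converges in law to $0$, and since it is bounded, also in $L^1$. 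The diagonal statement is identical and easier: $R_n^{1,1}$ converges in law to the constant $a$, hence in $L^1$ by boundedness. This gives~\eqref{e.l.cvg_overlap}.

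The main obstacle, and the only place requiring genuine care, is the measurability (as opposed to continuity) of $p$ in both directions: convergence in law does not interact well with merely measurable functionals. The fix in both directions is the same and exploits a structural fact already available to us: under $\E\la\cdot\ra_\fR$ (and hence under every $\nu_n$, by the first hypothesis) the overlap $\alpha^l\wedge\alpha^{l'}$ is uniformly distributed on $[0,1]$ for $l\neq l'$ by Lemma~\ref{l.invar}, so by Lusin's theorem $p$ agrees with a continuous $E$-valued function outside a set of arbitrarily small Lebesgue measure; replacing $p$ by this continuous approximant changes $\nu_n\bigl(|R_n^{1,2}-p(\alpha^1\wedge\alpha^2)|\bigr)$ and the limiting law by an amount controlled uniformly in $n$ by (diameter of $E$) times the small measure, and then one lets the approximation improve. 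Everything else is the routine exchangeability-plus-Slutsky packaging described above; this lemma is, in spirit, a restatement of the equivalence used for vector spins in~\cite{HJ_critical_pts}, adapted to carry the extra coordinate $\alpha^l\wedge\alpha^{l'}$ along.
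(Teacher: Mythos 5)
Your proposal is correct and follows essentially the same route as the paper: reduce to single entries via exchangeability and the fact that the law of $\big(\alpha^l\wedge\alpha^{l'}\big)_{l,l'}$ is $n$-independent, deduce the forward direction from $L^1$-asymptotic equality with a fixed-law target, and recover \eqref{e.l.cvg_overlap} from the joint convergence in law using boundedness. The only (minor) divergence is in the converse, where the paper expands $|R_n^{1,2}-p(\alpha^1\wedge\alpha^2)|^2$ and passes to the limit term by term while you use the continuous mapping theorem with a Lusin approximation of $p$; your explicit handling of the mere measurability of $p$ is a point the paper's cross term also implicitly requires, so this is a welcome clarification rather than a gap.
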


\begin{proof}
We first show ``$\Longrightarrow$''.
Since the convergence in law is defined to be convergence over finitely many entries, by considering test functions of the form of a product of functions of one entry, it suffices to prove convergence for each entry.
The convergence of any diagonal entry ($l=l'$) is obvious. For non-diagonal entry, by symmetry, we only need to consider the one with index $(1,2)$. Let $g:\R^d\times \R\to\R$ be any Lipschitz function. Then, writing $R=R^{1,2}_n$ and $Q=\alpha^1\wedge\alpha^2$, we have
\begin{align*}
    \Ll|\nu_n\Ll(g(R,Q)\Rr)- \E \la g(p(Q),Q)\ra_\fR\Rr|= \Ll|\nu_n\Ll(g(R,Q)\Rr)- \nu_n\Ll(g(p(Q),Q)\Rr)\Rr| 
    \\
    \leq \|g\|_\mathrm{Lip}\nu_n \Ll(|R-p(Q)|\Rr)
\end{align*}
which converges to zero by~\eqref{e.l.cvg_overlap}. This completes the proof of ``$\Longrightarrow$''. To see ``$\Longleftarrow$'', we first notice that it is sufficient to prove~\eqref{e.l.cvg_overlap} with $|\cdot|$ replaced by $|\cdot|^2$ (random variables are bounded because $E$ is compact). Then, we can expand the square and use the convergence in law to get the desired result.
\end{proof}

\subsubsection{Two results from the cavity computation}
Recall the definition of the discrete simplex $\blacktriangle_N$ in~\eqref{e.simplex}.
Denote by $\overline{\mathrm{conv}}$ the operator taking the closed convex hull of some set in a finite-dimensional linear space. Define
\begin{align}\label{e.mcl_K=}
    \mcl K = \prod_{s\in\sS}\overline{\mathrm{conv}}\Ll\{\tau\tau^\intercal:\:\tau \in \supp \mu_s\Rr\}.
\end{align}
Recall the definition of $\mcl Q^\sS_{\infty,\leq \lambda}(\kappa)$ from~\eqref{e.Q^s_infty,<lambda(kappa)=} and the space of perturbation parameters in~\eqref{e.pert=}.

\begin{lemma}\label{l.cavity1}
Let $\M\in\N$ and $\lambda^\star \in \blacktriangle_\M$. Set $\lambda_{\M N}=\lambda^\star$ for every $N$.
For every $(t,q)\in\R_+\times \mcl Q^\sS_\infty(\kappa)$, there are sequences $\Ll(N^\pm_k\Rr)_{k\in\N}$ of strictly increasing integers, $\Ll(x^\pm_k\Rr)_{k\in\N}$ of parameters in $\pert(\M,\lambda^\star)$, $p_\pm \in \mcl Q^\sS_{\infty,\leq \lambda^\star}(\kappa)$, and $a_\pm\in\mcl K$ satisfying $a_\pm\geq p_\pm$ such that
\begin{enumerate}
    \item \label{i.l.cavity1.1} $\big(R_{\M N^\pm_k,\lambda^\star}\big(\sigma^l,\sigma^{l'}\big),\, \alpha^l\wedge\alpha^{l'}\big)_{l,l'\in\N}$ under $\E \la \cdot\ra^{\circ,\M,x^\pm_k}_{\M N^\pm_k,\lambda^\star}$ converges in law to
    \begin{align*}
        \Big(\Ll(p_\pm \big(\alpha^l\wedge\alpha^{l'}\big),\ \alpha^l\wedge\alpha^{l'}\Rr)\one_{l\neq l'} + (a_\pm,1)\one_{l=l'}\Big)_{l,l'\in\N}
    \end{align*}
    under $\E\la\cdot\ra_\fR$, as $k$ tends to infinity;

    \item\label{i.l.cavity1.2} we have 
    \begin{align*}
        \sP_{\lambda^\star,t,q}(p_-)\leq \liminf_{N\to\infty} \bar F_{\M N,\lambda^\star}(t,q) \leq \limsup_{N\to\infty} \bar F_{\M N,\lambda^\star}(t,q) \leq \sP_{\lambda^\star,t,q}(p_+).
    \end{align*}
\end{enumerate}
\end{lemma}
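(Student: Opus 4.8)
The plan is to reduce, in this rational case, to the cavity computation for vector spin glasses carried out in \cite[Section~6]{HJ_critical_pts}, via the identification of Lemma~\ref{l.match_mp_with_vector}. Since $\lambda^\star\in\blacktriangle_\M$, the weak partition $(\sfM_s)_{s\in\sS}$ satisfies $|\sfM_s|/\M=\lambda^\star_s$ and hence~\eqref{e.lambda^MN_s=|M_s|/M} holds for every $N$. Fix the bijection $\sigma\leftrightarrow\bsigma$ from~\eqref{e.sigma_to_bsigma_bijection}, which sends $\Sigma^{\M N}$ onto $[-1,+1]^{\Delta\times N}$ with $\Delta=\sum_{s}|\sfM_s|\kappa_s$, and write $\msf L$ for the block-trace map $\msf L(b)=\big(\M^{-1}\sum_{m\in\sfM_s}b_{(m,\bullet)(m,\bullet)}\big)_{s\in\sS}$ in the notation of~\eqref{e.a(m,bullet)(m',bullet)=}, so that~\eqref{e.R_N,s=(matrix_overlap)} reads $R_{\M N,\lambda^\star}(\sigma,\sigma')=\msf L(N^{-1}\bsigma\bsigma'^\intercal)$. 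Taking $M$ to be a multiple of $\M$ (harmless for the downstream use of the lemma), the perturbed cavity Hamiltonian $\tilde H^{t,q}_{\M N}+N^{-1/16}H^x_{\M N}$ of~\eqref{e.tildeH^t,q_N(sigma,lapha)} and~\eqref{e.H^x_N(sigma,alpha)} transports, under $\sigma\leftrightarrow\bsigma$ and by the same covariance computations as in the proof of Lemma~\ref{l.match_mp_with_vector}, to the perturbed cavity Hamiltonian of the vector model of dimension $\Delta$ with $M/\M$ cavity spins, single-spin law $P^\vec_1$ of~\eqref{e.P^vec_1=mp}, interaction $\bxi$ of~\eqref{e.bxi=mp}, and external-field parameter $\bq$ of~\eqref{e.bq=mp}; crucially, $H^x_{\M N}$ was defined in vector-spin style, so $\pert(\M,\lambda^\star)$ is literally the vector perturbation space in dimension $\Delta$ and the basis sum in~\eqref{e.H^x_N(sigma,alpha)} is its self-overlap-concentration term. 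Consequently $\la\cdot\ra^{\circ,\M,x}_{\M N,\lambda^\star}$ pushes forward to the vector cavity Gibbs measure, $\tilde F^{\M,x}_{\M N,\lambda^\star}$ equals $\M^{-1}$ times the vector cavity free energy, and $\bar F_{\M N,\lambda^\star}(t,q)=\M^{-1}\bar F^\vec_N(t,\bq)$ by Lemma~\ref{l.match_mp_with_vector}.

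Next I would invoke the vector-spin counterpart of the present lemma, established in \cite[Section~6]{HJ_critical_pts}, applied to $(t,\bq)$. It produces strictly increasing sequences $(N^\pm_k)$, perturbation parameters $(x^\pm_k)\subset\pert(\M,\lambda^\star)$, increasing paths $\bp_\pm$ valued in $\S^\Delta_+$, and matrices $\ba_\pm$ in the closed convex hull of $\{\tau\tau^\intercal:\tau\in\supp P^\vec_1\}$ with $\ba_\pm\geq\bp_\pm$, such that $\big(N^{-1}\bsigma^l\bsigma^{l'\intercal},\,\alpha^l\wedge\alpha^{l'}\big)_{l,l'}$ under the vector cavity Gibbs measure converges in law, along $(N^\pm_k)$, to $\big((\bp_\pm(\alpha^l\wedge\alpha^{l'}),\alpha^l\wedge\alpha^{l'})\one_{l\neq l'}+(\ba_\pm,1)\one_{l=l'}\big)_{l,l'}$ under $\E\la\cdot\ra_\fR$, and such that $\sP^\vec_{t,\bq}(\bp_-)\le\liminf_N\bar F^\vec_N(t,\bq)\le\limsup_N\bar F^\vec_N(t,\bq)\le\sP^\vec_{t,\bq}(\bp_+)$, where $\sP^\vec_{t,\bq}$ is the vector analogue of~\eqref{e.sP_lambda,t,q}. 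I then set $p_\pm:=\msf L\circ\bp_\pm$ and $a_\pm:=\msf L(\ba_\pm)$. Because $R_{\M N,\lambda^\star}(\sigma^l,\sigma^{l'})=\msf L(N^{-1}\bsigma^l\bsigma^{l'\intercal})$, the array in item~\eqref{i.l.cavity1.1} is the image of the vector overlap array under the fixed continuous map $(b,u)\mapsto(\msf L(b),u)$ applied entrywise, so item~\eqref{i.l.cavity1.1} is immediate from the vector convergence (compare Lemma~\ref{l.cvg_overlap}); and the inclusion $p_\pm\in\mcl Q^\sS_{\infty,\leq\lambda^\star}(\kappa)$ together with $a_\pm\in\mcl K$ and $a_\pm\geq p_\pm$ passes through $\msf L$, since $\msf L$ maps increasing $\S^\Delta_+$-paths to increasing $\prod_s\S^{\kappa_s}_+$-paths (principal submatrices of positive semi-definite matrices being positive semi-definite) and is linear and continuous, hence preserves closed convex hulls; the sharp bound $|p_{\pm,s}|\le\lambda^\star_s$ can alternatively be read off from $\partial_q\bar F_{\M N,\lambda^\star}\in\mcl Q^\sS_{\infty,\leq\lambda^\star}(\kappa)$ (Proposition~\ref{p.F_N_smooth}), as $p_\pm$ arises in the cavity argument as a limit of such derivatives.

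It remains to identify the vector Parisi functional with the multi-species one. I would verify the three elementary identities: $\psi^\vec_{P^\vec_1}$ evaluated at the embedding~\eqref{e.bq=mp} of $q'$ equals $\M\,\psi_{\lambda^\star}(q')$ (from Lemma~\ref{l.initial_condition} and Lemma~\ref{l.match_mp_with_vector} at $t=0$); $\nabla\bxi(b)=\M\,\msf L^*(\nabla\xi(\msf L(b)))$ and $\theta^\vec(b)=\M\,\theta(\msf L(b))$ (from $\bxi=\M\,\xi\circ\msf L$, the definition~\eqref{e.theta=}, and the chain rule), where $\msf L^*$ denotes the adjoint of $\msf L$. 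The key observation is that $\M\,\msf L^*$ reproduces exactly the embedding~\eqref{e.bq=mp}: $\M\,\msf L^*(c)$ has $(m,\bullet)(m,\bullet)$-block equal to $c_s$ for $m\in\sfM_s$ and vanishing off-diagonal blocks. Hence $\bq+t\nabla\bxi(\bp)$ is the~\eqref{e.bq=mp}-embedding of $q+t\nabla\xi(\msf L\circ\bp)$, so $\sP^\vec_{t,\bq}(\bp)=\M\,\sP_{\lambda^\star,t,q}(\msf L\circ\bp)$ for every admissible $\bp$. Applying this at $\bp=\bp_\pm$, combining with $\bar F^\vec_N(t,\bq)=\M\,\bar F_{\M N,\lambda^\star}(t,q)$, and dividing the vector sandwich bound by $\M$ yields item~\eqref{i.l.cavity1.2}, in view of~\eqref{e.rel_parisi_mcJ}.

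The step I expect to be the main obstacle is not conceptual but a matter of careful bookkeeping: transporting the full perturbed cavity apparatus — the cavity Hamiltonian $\tilde H$, all self-overlap corrections, and both pieces of $H^x_{\M N}$ (including the self-overlap-concentration sum) — verbatim through the bijection $\sigma\leftrightarrow\bsigma$, and lining up the cavity dimension (the $M$ multi-species cavity spins correspond to $M/\M$ vector ones, which forces $M\in\M\N$; for general $M$ the $O(\M/N)$ discrepancy is absorbed using Lemma~\ref{l.NF_N-(N+M)F_(N+M)}). This is the same reduction that underlies Proposition~\ref{p.monotone}, now carried out with the perturbed cavity model rather than the bare free energy.
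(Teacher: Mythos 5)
Your proposal is correct and follows essentially the same route as the paper: fix the weak partition realizing $\lambda^\star$, transport the perturbed cavity apparatus through the bijection of~\eqref{e.sigma_to_bsigma_bijection} to the $\Delta$-dimensional vector model, invoke the vector-spin cavity result of \cite[Section~6]{HJ_critical_pts} (the paper uses Corollary~6.11 there with one vector cavity spin, i.e.\ $\M$ multi-species cavity spins), and pull back $\bp_\pm,\ba_\pm$ through the block-trace map exactly as in~\eqref{e.p=(bp)}, with the same adjoint computation identifying $\sP^\vec_{t,\bq}(\bp_\pm)=\M\,\sP_{\lambda^\star,t,q}(p_\pm)$. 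The identities you list for $\nabla\bxi$, $\boldsymbol\theta$, and $\psi^\vec_{P^\vec_1}$ all check out against~\eqref{e.b'nablabxi(b)=}--\eqref{e.sP^vec_t,q=MsP}.
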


The two Gibbs measures appearing in Part~\eqref{i.l.cavity1.1} are given in~\eqref{e.<>^circ=} and Section~\ref{s.cascade}, respectively.
In the statement, $a_\pm\geq p_\pm$ and more generally $a\geq p$ means
\begin{align}\label{e.a>p}
    a_s-p_s(r) \in \S^{\kappa_s}_+,\quad\forall r\in[0,1),\ s\in\sS.
\end{align}

\begin{proof}
We fix any weak partition $(\sfM_s)_{s\in\sS}$ of $\{1,\dots,\M\}$ satisfying $\lambda^\star = (|\sfM_s|/\M)_{s\in\sS}$. Hence, the assumption $\lambda_{\M N}=\lambda^\star$ ensures~\eqref{e.lambda^MN_s=|M_s|/M}. 
Fix any $(t,q)$. Let $\bq$ be given as in~\eqref{e.bq=mp} and $\bar F^\vec_N$ be given by Lemma~\ref{l.match_mp_with_vector}. We directly define $\bar F^{\vec,x}_N(t,\bq) = M \bar F^{\M,x}_{\M N, \lambda_{\M N}}(t,q)$ and $\tilde F^{\vec,x}_N(t,\bq) = M \tilde F^{\M,x}_{\M N, \lambda_{\M N}}(t,q)$. Allowed by the bijection between $\sigma$ and $\bsigma$ in~\eqref{e.sigma_to_bsigma_bijection}, we can view $\la\cdot\ra^{\M,x}_{\M N,\lambda_{\M N}}$ (see~\eqref{e.<>_N,lambda,x}) and $\la\cdot\ra^{\circ,\M,x}_{\M N,\lambda_{\M N}}$ (see~\eqref{e.<>^circ=}) as the Gibbs measure associated with $\bar F^{\vec,x}_N(t,\bq)$ and $\tilde F^{\vec,x}_N(t,\bq)$, respectively.

Recall that in the proof of Lemma~\ref{e.H^t,q_N=(d)=}, we used various identities to derive the equivalence~\eqref{e.H^t,q_N=(d)=} between the multi-species Hamiltonian and the one in the vector spin glass model. As a consequence, we can match the corresponding free energies. Similar arguments together with the definition of $H^x_{\M N}(\sigma,\alpha)$ in~\eqref{e.H^x_N(sigma,alpha)} can be used to match $\bar F^{\vec,x}_N(t,\bq)$ and $\tilde F^{\vec,x}_N(t,\bq)$ exactly~\cite[(6.6) and (6.8)]{HJ_critical_pts}. Similarly, the Gibbs measures $\la\cdot\ra^{\M,x}_{\M N,\lambda_{\M N}}$ and $\la\cdot\ra^{\circ,\M,x}_{\M N,\lambda_{\M N}}$ match exactly~\cite[(6.7) and (6.9)]{HJ_critical_pts} (denoted by $\la\cdot\ra_{N,x}$ and $\la\cdot\ra^\circ_{N,x}$ therein).

Hence, applying~\cite[Corollary~6.11]{HJ_critical_pts} (with $M$ therein set to be $1$), we get that there are $(N^\pm_k)_{k\in\N}$, $(x^\pm_k)_{k\in\N}$, $\bp_\pm \in \mcl Q_{\infty,\leq 1}(\Delta)$, and $\ba_\pm \in \overline{\mathrm{conv}}\Ll\{\bsigma\bsigma^\intercal:\: \bsigma\in \supp P^\vec_1\Rr\}$ (see~\eqref{e.P^vec_1=mp}) satisfying $\ba_\pm\geq \bp_\pm$ such that
\begin{enumerate}[label=(\roman*)]
    \item \label{i.c6.11.1} $\big(N^{-1}\bsigma^l\big(\bsigma^{l'}\big)^\intercal,\, \alpha^l\wedge\alpha^{l'}\big)_{l,l'\in\N}$ under $\E \la\cdot\ra^{\circ,\M,x^\pm_k}_{\M N^\pm_k,\lambda^\star}$ converges in law to
    \begin{align*}
        \Big(\Ll(\bp_\pm \big(\alpha^l\wedge\alpha^{l'}\big),\ \alpha^l\wedge\alpha^{l'}\Rr)\one_{l\neq l'} + (\ba_\pm,1)\one_{l=l'}\Big)_{l,l'\in\N}
    \end{align*}
    under $\E\la\cdot\ra_\fR$, as $k$ tends to infinity;
    \item \label{i.c6.11.2} we have 
    \begin{align*}
        \sP^\vec_{t,q}(\bp_-)\leq \liminf_{N\to\infty} \bar F^\vec_N(t,\bq) \leq \limsup_{N\to\infty} \bar F^\vec_N(t,\bq) \leq \sP^\vec_{t,q}(\bp_+).
    \end{align*}
\end{enumerate}
Here, the functional (see~\cite[(6.15)]{HJ_critical_pts}) is given by, for every $\bp \in \mcl Q_{\infty}(\Delta)$,
\begin{align}\label{e.P^vec_t,q(bp)=}
    \sP^\vec_{t,q}(\bp)= \psi^\vec_{P^\vec_1}(\bq + t\nabla \bxi(\bp))-t\int_0^1\boldsymbol{\theta}(\bp(r))\d r
\end{align}
where $\psi^\vec_{P^\vec_1}$ is given as in~\eqref{e.psi^vec=} and $\boldsymbol\theta$ is given by $\boldsymbol{\theta} = \mathbf{b}\cdot \nabla \bxi(\mathbf{b}) - \bxi(\mathbf{b})$ for every $\mathbf{b}\in \R^{\Delta\times\Delta}$ ($\bxi$ as in~\eqref{e.bxi=mp}).

In the following, we use the above statement to complete the proof. 

Recall the reparametrization of $\{1,\dots,\Delta\}$ given in~\eqref{e.lexico} and recall notation in~\eqref{e.a(m,bullet)(m',bullet)=}.
We set
\begin{align}\label{e.p=(bp)}
    p_\pm = \Big(M^{-1}\sum_{m\in \sfM_s}(\bp_\pm)_{(m,\bullet)(m,\bullet)}\Big)_{s\in\sS},\qquad a_\pm =\Big(M^{-1}\sum_{m\in\sfM_s}(\ba_\pm)_{(m,\bullet)(m,\bullet)}\Big)_{s\in\sS}.
\end{align}
Using this and the relation between $R_{\M N,\lambda_{\M N}}(\sigma,\sigma')$ and $N^{-1}\bsigma\bsigma'^\intercal$ in~\eqref{e.R_N,s=(matrix_overlap)}, we can deduce Part~\eqref{i.l.cavity1.1} from Part~\ref{i.c6.11.1} in the above.

Next, we verify properties of $p_\pm$ and $a_\pm$. First of all, it is clear that $p_\pm \in \mcl Q_\infty^\sS(\kappa)$. By~\eqref{e.p=(bp)} and $\ba_\pm\geq \bp_\pm$, we can deduce $a_\pm\geq p_\pm$. From the definition of $P^\vec_1$ in~\eqref{e.P^vec_1=mp}, the fact that $\ba_\pm \in \overline{\mathrm{conv}}\{\cdots\}$ in the above, and~\eqref{e.p=(bp)}, we have $a_\pm \in \mcl K$ defined in~\eqref{e.mcl_K=}. Lastly, from~\eqref{e.p=(bp)}, we have $|p_{\pm,s}(r)|\leq M^{-1}|\sfM_s||\bp_\pm(r)|\leq \lambda^\star_s$ for every $r\in[0,1)$ and $s\in\sS$. Hence, we can conclude $p_\pm \in \mcl Q^\sS_{\infty,\leq\lambda^\star}(\kappa)$.

Lastly, we verify Part~\eqref{i.l.cavity1.2}.
We want to match the functional appearing in Part~\eqref{i.l.cavity1.2} with that in Part~\ref{i.c6.11.2}.
We start with the second term in the functionals.
By computing $\frac{\d}{\d\eps} \bxi(\mathbf{b}+\eps \mathbf{b})\big|_{\eps=0}$, we can get from the definition of $\bxi$ in~\eqref{e.bxi=mp} that, for any $\mathbf{b},\mathbf{b}'\in\R^{\Delta\times\Delta}$,
\begin{align}\label{e.b'nablabxi(b)=}
    \mathbf{b}'\cdot \nabla \bxi(\mathbf{b}) = \Big(\sum_{m\in \sfM_s}\mathbf{b}'_{(m,\bullet)(m,\bullet)}\Big)_{s\in\sS}\cdot \nabla\xi \bigg(\Big(M^{-1}\sum_{m\in \sfM_s}\mathbf{b}_{(m,\bullet)(m,\bullet)}\Big)_{s\in\sS}\bigg)
\end{align}
Comparing the above with~\eqref{e.p=(bp)}, we get $\bp_\pm\cdot \nabla\bxi(\bp_\pm) = M p_\pm\cdot \nabla\xi(p_\pm)$. By~\eqref{e.p=(bp)} and the relation between $\bxi$ and $\xi$ in~\eqref{e.bxi=mp}, we get $\bxi(\bp_\pm) =M\xi(p_\pm)$. Recall the definition of $\theta$ in~\eqref{e.theta=}. Then, we can conclude $\boldsymbol{\theta}(\bp_\pm) = \M\theta(p_\pm)$. It remains to identify the first term on the right on~\eqref{e.P^vec_t,q(bp)=}.

In the notation in~\eqref{e.lexico} and~\eqref{e.a(m,bullet)(m',bullet)=}, from~\eqref{e.p=(bp)} and~\eqref{e.b'nablabxi(b)=} we can see that
$\nabla \bxi(\bp_\pm)$ and $\nabla \xi(p_\pm)$ satisfy~\eqref{e.bq=mp} with $\bq$ and $q$ therein replaced by them respectively. Hence, we have
\begin{align}\label{e.bq+..and_q+satisfy}
    \text{$\bq + t\nabla\bxi(\bp_\pm)$ and $q+t\nabla\xi(p_\pm)$ satisfy ~\eqref{e.bq=mp}}
\end{align}
with $\bq$ and $q$ therein substituted with this pair.
This along with Lemma~\ref{l.match_mp_with_vector} (at $N=1$ and $t=0$) implies
\begin{align}\label{e.barF_M=M...}
    \bar F_{\M,\lambda_\M}(0,q+t\nabla\xi(p_\pm))= \M^{-1} \bar F^\vec_1(0,\bq + t\nabla\bxi(\bp_\pm)).
\end{align}
In view of~\eqref{e.F^vec(t,q)=} and~\eqref{e.psi^vec=}, we have $\bar F^\vec_1(0,\cdot) = \psi^\vec_{P^\vec_1}$. Therefore, we get
\begin{align*}
    \psi^\vec_{P^\vec_1}(\bq + t\nabla \bxi(\bp)) = \bar F^\vec_1(0,\bq + t\nabla \bxi(\bp))\stackrel{\text{\eqref{e.barF_M=M...}, L.\ref{l.initial_condition}}}{=}\M\psi_{\lambda^\star}(q+t\nabla\xi(p)).
\end{align*}
Inserting this and $\boldsymbol{\theta}(\bp_\pm) = \M\theta(p_\pm)$ into~\eqref{e.P^vec_t,q(bp)=} and comparing it with~\eqref{e.sP_lambda,t,q}, we thus obtain
\begin{align}\label{e.sP^vec_t,q=MsP}
    \sP^\vec_{t,q}(\bp_\pm) = \M \sP_{\lambda^\star,t,q}(p_\pm).
\end{align}
This along with Part~\ref{i.c6.11.2} and Lemma~\ref{l.match_mp_with_vector} yields Part~\eqref{i.l.cavity1.2}. The proof is now complete.
\end{proof}

To state the second result from the cavity computation, we need to introduce the notation for the overlap of cavity spins. From the definition of the overlap in~\eqref{e.R_N,s=} and the free energy in~\eqref{e.F_N(t,q)=}, it is clear that we can reorder the elements in the partition $(I_{N,s})_{s\in\sS}$ as long as the proportions $\lambda_{N,s}$ are preserved. Hence, for every $N\in\N$, we can assume
\begin{align}\label{e.I_M,s_subset_I_MN,s}
    I_{\M,s}\subset I_{\M N, s},\quad\forall s\in\sS
\end{align}
so that the indices for cavity spins are fixed. Then, for every $N\in\N$ and every $\sigma\in \Sigma^{\M N}$, we define
\begin{align}\label{e.sigma_cav,s}
    \sigma_{\circ,s} = (\sigma_{kn})_{1\leq k\leq \kappa_s,\, n\in I_{\M ,s}},\qquad\forall s\in\sS.
\end{align}
to play the role of cavity spins. Then, for every $N\in\N$ and $\sigma,\sigma'\in\Sigma^{\M N}$, we consider the overlap of cavity spins
\begin{gather}\label{e.cavity_overlap}
\begin{split}
    R^{\circ,\M}_{\M N,\lambda_{\M N},s}(\sigma,\sigma')= \frac{1}{\M}\sigma_{\circ,s}\Ll(\sigma_{\circ,s}'\Rr)^\intercal,\quad\forall s\in\sS;
    \\
    R^{\circ,\M}_{\M N,\lambda_{\M N}}(\sigma,\sigma') =\Ll(R^{\circ,\M}_{\M N,\lambda_{\M N},s}(\sigma,\sigma')\Rr)_{s\in\sS}.
\end{split}
\end{gather}
We state a simple observation to be used later.
\begin{lemma}\label{l.last_vector}
Let $\M ,N\in\N$, $(t,q)\in\R_+\times\mcl Q^\sS_\infty(\kappa)$, $\lambda^\star \in \blacktriangle_\M$, and $x\in \pert(\M,\lambda^\star)$. Then at $(t,q)$ we have, for every bounded measurable $\pi:[0,1]\to \prod_{s\in\sS}\S^{\kappa_s}$,
\begin{align}\label{e.l.last_vector}
    \E \la \pi(\alpha\wedge\alpha')\cdot R_{\M N,\lambda_{\M N}}(\sigma,\sigma')\ra^{\M,x}_{\M N,\lambda^\star} = \E \la \pi(\alpha\wedge\alpha')\cdot R^{\circ,\M}_{\M N,\lambda_{\M N}}(\sigma,\sigma')\ra^{\M,x}_{\M N,\lambda^\star}.
\end{align}
\end{lemma}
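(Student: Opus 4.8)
The plan is to deduce the identity from the exchangeability, among the $N$ ``copies'', of the perturbed Gibbs measure. Recall that, after identifying the size-$\M N$ multi-species model with the $\Delta$-dimensional vector spin model through a bijection $\sigma\leftrightarrow\bsigma$ as in~\eqref{e.sigma_to_bsigma_bijection} (cf.\ Lemma~\ref{l.match_mp_with_vector}), the $\M N$ spins are regrouped into $N$ vector spins $\bsigma_{\bullet 1},\dots,\bsigma_{\bullet N}$. I would fix this bijection compatibly with~\eqref{e.I_M,s_subset_I_MN,s}, so that the cavity indices $\cup_{s\in\sS}I_{\M,s}=\{1,\dots,\M\}$ correspond to the first vector spin $\bsigma_{\bullet 1}$, i.e.\ for each $s$ the set $I_{\M,s}$ corresponds to $\{\bsigma_{(m,\bullet)1}:m\in\sfM_s\}$. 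For a permutation $\rho$ of $\{1,\dots,N\}$, write $\bsigma^\rho$ for the column permutation $(\bsigma^\rho)_{\bullet n}=\bsigma_{\bullet\rho(n)}$ and $\sigma^\rho$ for the corresponding permutation of $\Sigma^{\M N}$, which preserves each $I_{\M N,s}$.

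The key step is to show that, for every such $\rho$, the law of the random Gibbs measure $\la\cdot\ra^{\M,x}_{\M N,\lambda^\star}$ is invariant under the pushforward by $\sigma\mapsto\sigma^\rho$. The reference measure $P_{\M N,\lambda^\star}$ is $\rho$-invariant because $\rho$ fixes the species partition, and $\fR$ is unaffected. Every Gaussian field entering $H^{t,q}_{\M N}$ and $H^x_{\M N}$ — namely $H_{\M N}$, $W^q_{\M N}$, and the perturbation fields $H^h_{\M N}$ of~\eqref{e.E[H^hH^h]=} — has covariance that is a function only of $N^{-1}\bsigma\bsigma'^\intercal=N^{-1}\sum_{n=1}^N\bsigma_{\bullet n}(\bsigma'_{\bullet n})^\intercal$ (equivalently of $R_{\M N,\lambda_{\M N}}(\cdot,\cdot)$, cf.~\eqref{e.R_N,s=(matrix_overlap)}) and of $\alpha\wedge\alpha'$, all of which are unchanged by the diagonal column permutation; likewise the deterministic self-overlap corrections and the term $\frac{1}{|\mathsf{bas}|}\sum_{e\in\mathsf{bas}}x_e\,e\cdot\bsigma\bsigma^\intercal$ depend on $\bsigma$ only through $\bsigma\bsigma^\intercal=\sum_{n=1}^N\bsigma_{\bullet n}\bsigma_{\bullet n}^\intercal$, hence are $\rho$-invariant. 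It follows that $\E\la\cdot\ra^{\M,x}_{\M N,\lambda^\star}$, and its two-replica version, are invariant under $(\sigma,\sigma')\mapsto(\sigma^\rho,\sigma'^\rho)$. In particular, for each $s$ and each $m\in\sfM_s$, the quantity $\E\la\pi_s(\alpha\wedge\alpha')\cdot\bsigma_{(m,\bullet)n}(\bsigma'_{(m,\bullet)n})^\intercal\ra^{\M,x}_{\M N,\lambda^\star}$ does not depend on $n\in\{1,\dots,N\}$ (take $\rho$ to be the transposition of $1$ and $n$).

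I would then finish by bookkeeping. By~\eqref{e.R_N,s=},~\eqref{e.R_N,s=(matrix_overlap)} and~\eqref{e.bsigmabisgma'=}, $R_{\M N,\lambda_{\M N},s}(\sigma,\sigma')=(\M N)^{-1}\sum_{m\in\sfM_s}\sum_{n=1}^N\bsigma_{(m,\bullet)n}(\bsigma'_{(m,\bullet)n})^\intercal$, whereas by~\eqref{e.cavity_overlap} and the choice of bijection $R^{\circ,\M}_{\M N,\lambda_{\M N},s}(\sigma,\sigma')=\M^{-1}\sum_{m\in\sfM_s}\bsigma_{(m,\bullet)1}(\bsigma'_{(m,\bullet)1})^\intercal$. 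Pairing both with $\pi_s(\alpha\wedge\alpha')$, applying $\E\la\cdot\ra^{\M,x}_{\M N,\lambda^\star}$, and using the $n$-independence above to replace the inner average over $n$ in the first expression by its $n=1$ value, I obtain $\E\la\pi_s(\alpha\wedge\alpha')\cdot R_{\M N,\lambda_{\M N},s}(\sigma,\sigma')\ra^{\M,x}_{\M N,\lambda^\star}=\E\la\pi_s(\alpha\wedge\alpha')\cdot R^{\circ,\M}_{\M N,\lambda_{\M N},s}(\sigma,\sigma')\ra^{\M,x}_{\M N,\lambda^\star}$. Summing over $s\in\sS$, since $\pi(\alpha\wedge\alpha')\cdot R_{\M N,\lambda_{\M N}}=\sum_{s\in\sS}\pi_s(\alpha\wedge\alpha')\cdot R_{\M N,\lambda_{\M N},s}$ and similarly for $R^{\circ,\M}$, yields~\eqref{e.l.last_vector}.

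The step that needs care is the invariance claim: since $W^q_{\M N}$ and the $H^h_{\M N}$ are built from families of Gaussian variables indexed by sites, ``$\rho$-invariance'' there means invariance in distribution of the whole field after relabeling, which is exactly what is needed once the outer expectation $\E$ is present; and one should note that the fixed matrices $a_{h_1}$ in~\eqref{e.E[H^hH^h]=} and the fixed basis elements $e$ in~\eqref{e.H^x_N(sigma,alpha)} are contracted only against the overlap or self-overlap matrix, never against individual vector-spin labels, so they create no obstruction. The remaining points — e.g.\ that the bijection can be chosen with the stated compatibility — are routine.
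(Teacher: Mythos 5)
Your proof is correct and follows essentially the same route as the paper's: identify the multi-species model with the vector spin model via the bijection of~\eqref{e.sigma_to_bsigma_bijection}, observe that the perturbed Hamiltonian depends on $\bsigma$ only through the overlaps $\bsigma^l(\bsigma^{l'})^\intercal$ so the quenched Gibbs law is exchangeable in the column index, and replace the average over columns by a single column. The only (immaterial) difference is that you place the cavity block in the first column whereas the paper places it in the last.
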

\begin{proof}
We consider the bijection $\sigma\mapsto \bsigma$ given in~\eqref{e.sigma_to_bsigma_bijection} and assume~\eqref{e.I_M,s_subset_I_MN,s}.
In addition, we can choose the bijection $\sigma\mapsto\bsigma$ in~\eqref{e.sigma_to_bsigma_bijection} to ensure that, for every $N\in\N$, the cavity spins in $(\sigma_{\circ,s})_{s\in\sS}$ (see~\eqref{e.sigma_cav,s}) with $\sigma\in \Sigma^{\M N}$ are mapped to $\btau=(\bsigma_{(m,k)N})_{m,k}\in \R^\Delta$, the last column vector of $\bsigma\in\R^{\Delta\times N}$ (here $\Delta$ is given in~\eqref{e.Delta=}).
Hence, analogous to~\eqref{e.R_N,s=(matrix_overlap)}, we have
\begin{align}\label{e.R^cav_MN=(matrix_overlap)}
    R^{\circ,\M}_{\M N,\lambda_{\M N}}(\sigma,\sigma') = \Big(\M^{-1} \sum_{m\in\sfM_s}\Ll(\btau\btau'^\intercal\Rr)_{(m,\bullet)(m,\bullet)}\Big)_{s\in\sS}
\end{align}
in the notation introduced in~\eqref{e.a(m,bullet)(m',bullet)=}.
In the following, we write $\la\cdot\ra= \la\cdot\ra^{\M,x}_{\M N,\lambda^\star}$. Since the Hamiltonian in~\eqref{e.<>_N,lambda,x} only depends on $\sigma$ only through the overlaps $(\bsigma^l{\bsigma^{l'}}^\intercal)_{l,l'\in\N}$, the law of $\bsigma$ under $\E\la\cdot\ra$ is invariant if we permute the indices of the column vectors. 
Using this and the expression in~\eqref{e.bsigmabisgma'=}, we have
\begin{align*}
    \E \la \pi_s(\alpha\wedge\alpha')\cdot \sum_{m\in\sfM_s}\Ll(\bsigma\bsigma'^\intercal\Rr)_{(m,\bullet)(m,\bullet)}\ra = N  \E \la \pi_s(\alpha\wedge\alpha')\cdot \sum_{m\in\sfM_s}\Ll(\btau\btau'^\intercal\Rr)_{(m,\bullet)(m,\bullet)}\ra
\end{align*}
where we have written $\pi=(\pi_s)_{s\in\sS}$. This together with~\eqref{e.R_N,s=(matrix_overlap)} and~\eqref{e.R^cav_MN=(matrix_overlap)} yields~\eqref{e.l.last_vector}.
\end{proof}

Lastly, for every $\lambda_\M \in \blacktriangle_\M$ and $q\in\mcl Q^\sS_\infty(\kappa)$, we define the Gibbs measure
\begin{align}\label{e.<>_M,lambda,R,q}
    \la \cdot\ra_{\M,\lambda_\M,\fR,q} \propto \exp\Ll(\sqrt{2}W^q_\M(\sigma,\alpha)- \M q(1)\cdot R_{\M,\lambda_\M}(\sigma,\sigma)\Rr) \d P_\M(\sigma)\d \fR(\alpha)
\end{align}
where $W^q_\M(\sigma,\alpha)$ and $P_M$ are given as in~\eqref{e.W^q_N(sigma,alpha)=} and~\eqref{e.P_N=}, respectively.
In view of~\eqref{e.F_N(t,q)=}, we can see that this Gibbs measure is the one associated with $\bar F_{\M,\lambda_M}(0,q)$.

\begin{lemma}\label{l.cavity2}
Let $\M\in\N$ and $\lambda^\star \in \blacktriangle_\M$. Set $\lambda_{\M N}=\lambda^\star$ for every $N$. For any $(t,q)\in\R_+\times \mcl Q^\sS_\infty(\kappa)$ and any sequence $\seq$ of increasing integers, there are a subsequence $(N_k)_{k\in\N}$ of $\seq$, $(x_k)_{k\in\N}$ of parameters in $\pert(\M,\lambda^\star)$, $p\in\mcl Q^\sS_{\infty,\leq \lambda^\star}$, and $a\in\mcl K$ (see~\eqref{e.mcl_K=}) satisfying $a\geq p$ (see~\eqref{e.a>p}) such that
\begin{enumerate}
     \item \label{i.l.cavity2.1} $\big(R_{\M N_k,\lambda^\star}\big(\sigma^l,\sigma^{l'}\big),\, \alpha^l\wedge\alpha^{l'}\big)_{l,l'\in\N}$ under $\E \la \cdot\ra^{\circ,\M,x_k}_{\M N_k,\lambda^\star}$ converges in law to
    \begin{align*}
        \Big(\Ll(p \big(\alpha^l\wedge\alpha^{l'}\big),\ \alpha^l\wedge\alpha^{l'}\Rr)\one_{l\neq l'} + (a,1)\one_{l=l'}\Big)_{l,l'\in\N}
    \end{align*}
    under $\E\la\cdot\ra_\fR$, as $k$ tends to infinity;
    \item  \label{i.l.cavity2.2}
    for every bounded continuous $g:\Ll(\prod_{s\in\sS}\R^{\kappa_s\times\kappa_s}\Rr)\times \R\to\R$, we have
    \begin{align*}
        \lim_{k\to\infty}\E \la g\Ll(R^{\circ,\M}_{\M (N_k+1),\lambda^\star}(\sigma,\sigma'),\ \alpha\wedge\alpha'\Rr) \ra^{\M,x_k}_{\M(N_k+1),\lambda^\star}
        \\
        =\E \la g\Ll(R_{\M,\lambda^\star}(\sigma,\sigma'),\ \alpha\wedge\alpha'\Rr) \ra_{\M,\lambda^\star,\fR,q+t\nabla\xi(p)}.
    \end{align*}
\end{enumerate}
\end{lemma}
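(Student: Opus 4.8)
\textbf{Proof proposal for Lemma~\ref{l.cavity2}.}

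The plan is to run the same reduction strategy as in the proof of Lemma~\ref{l.cavity1}: transfer everything through the bijection $\sigma\mapsto\bsigma$ of~\eqref{e.sigma_to_bsigma_bijection} to the matching $\Delta$-dimensional vector spin glass, invoke the corresponding cavity computation result from~\cite{HJ_critical_pts} (the analogue of~\cite[Corollary~6.11]{HJ_critical_pts} but now keeping one cavity coordinate, so the statement that plays the role of~\cite[Proposition~6.? / Corollary~6.?]{HJ_critical_pts} concerning the limit of the cavity-spin overlap against the Gibbs measure $\bar F^\vec_1(0,\cdot)$), and then translate the conclusions back. First I would fix a weak partition $(\sfM_s)_{s\in\sS}$ with $\lambda^\star=(|\sfM_s|/\M)_{s\in\sS}$, so that~\eqref{e.lambda^MN_s=|M_s|/M} holds, choose the bijection as in the proof of Lemma~\ref{l.last_vector} (so cavity spins $(\sigma_{\circ,s})_{s\in\sS}$ map to the last column $\btau$ of $\bsigma$), and set $\bq$ by~\eqref{e.bq=mp}, $\bxi$ by~\eqref{e.bxi=mp}, $P^\vec_1$ by~\eqref{e.P^vec_1=mp}, $\Delta$ by~\eqref{e.Delta=}. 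Then $\bar F^{\vec,x}_N := \M\bar F^{\M,x}_{\M N,\lambda^\star}$ and $\tilde F^{\vec,x}_N := \M\tilde F^{\M,x}_{\M N,\lambda^\star}$ coincide with the perturbed free energies of~\cite{HJ_critical_pts}, and the Gibbs measures $\la\cdot\ra^{\M,x}_{\M N,\lambda^\star}$ and $\la\cdot\ra^{\circ,\M,x}_{\M N,\lambda^\star}$ are identified with $\la\cdot\ra_{N,x}$, $\la\cdot\ra^\circ_{N,x}$ there, exactly as in Lemma~\ref{l.cavity1}.

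Next I would apply the vector-spin cavity result from~\cite[Section~6]{HJ_critical_pts} along the sequence $\seq$ (with $M=1$ there): it provides a subsequence $(N_k)_k\subset\seq$, perturbation parameters $(x_k)_k$, a path $\bp\in\mcl Q_{\infty,\leq 1}(\Delta)$ and $\ba\in\overline{\mathrm{conv}}\{\bsigma\bsigma^\intercal:\bsigma\in\supp P^\vec_1\}$ with $\ba\geq\bp$, such that (i) $(N^{-1}\bsigma^l(\bsigma^{l'})^\intercal,\alpha^l\wedge\alpha^{l'})_{l,l'}$ under $\E\la\cdot\ra^{\circ,\M,x_k}_{\M N_k,\lambda^\star}$ converges in law to the array built from $\bp,\ba$; and (ii) the cavity-spin overlap $N^{-1}\btau(\btau')^\intercal$ (the last-column overlap) of $\bsigma$ at size $N_k+1$, under $\E\la\cdot\ra_{N_k+1,x_k}$, converges in law against the Gibbs measure $\la\cdot\ra^\vec_{P^\vec_1,\bq+t\nabla\bxi(\bp)}$ — this is the content of the vector-spin version of Part~\eqref{i.l.cavity2.2}. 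I then define $p,a$ from $\bp,\ba$ by the projection-and-average formulas~\eqref{e.p=(bp)}, and reproduce verbatim the verification in Lemma~\ref{l.cavity1} that $p\in\mcl Q^\sS_{\infty,\leq\lambda^\star}(\kappa)$, $a\in\mcl K$, and $a\geq p$. Part~\eqref{i.l.cavity2.1} follows from (i) together with the overlap identity~\eqref{e.R_N,s=(matrix_overlap)}, exactly as Part~\eqref{i.l.cavity1.1} did.

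For Part~\eqref{i.l.cavity2.2}, I would combine three ingredients. First, the cavity-overlap identity~\eqref{e.R^cav_MN=(matrix_overlap)} from Lemma~\ref{l.last_vector} expresses $R^{\circ,\M}_{\M(N_k+1),\lambda^\star}(\sigma,\sigma')$ as the projection-average of $\btau(\btau')^\intercal$, so a bounded continuous $g$ of $(R^{\circ,\M}_{\M(N_k+1),\lambda^\star},\alpha\wedge\alpha')$ is a bounded continuous function of the vector-spin cavity overlap; hence the $N_k+1$ limit on the left equals the vector-spin limit in (ii). Second, the target Gibbs measure: I claim $\la\cdot\ra^\vec_{P^\vec_1,\bq+t\nabla\bxi(\bp)}$ pushed forward through the bijection restricted to the last column $\btau$ is precisely $\la\cdot\ra_{\M,\lambda^\star,\fR,q+t\nabla\xi(p)}$ of~\eqref{e.<>_M,lambda,R,q}. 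This uses~\eqref{e.bq+..and_q+satisfy}, namely that $\bq+t\nabla\bxi(\bp)$ and $q+t\nabla\xi(p)$ satisfy~\eqref{e.bq=mp} with respect to each other (which in turn rests on~\eqref{e.p=(bp)} and~\eqref{e.b'nablabxi(b)=}), so the external field $\bw^{\bq+t\nabla\bxi(\bp)}$ acting on a single $\Delta$-vector decouples across the blocks $m\in\sfM_s$ into independent copies of the multi-species external field $w^{(q+t\nabla\xi(p))_s}$, and the factor measure~\eqref{e.P^vec_1=mp} is the block product of the $\mu_s$; matching the self-overlap correction terms via~\eqref{e.R_N,s=(matrix_overlap)} closes the identification. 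Third, translating the value of $g$ under the two measures via~\eqref{e.R^cav_MN=(matrix_overlap)} once more. The main obstacle I anticipate is not any single step but the bookkeeping in the third paragraph: making the identification of $\la\cdot\ra^\vec_{P^\vec_1,\bq+t\nabla\bxi(\bp)}$ restricted to $\btau$ with $\la\cdot\ra_{\M,\lambda^\star,\fR,q+t\nabla\xi(p)}$ fully rigorous requires carefully checking that $\nabla\bxi(\bp)$ really is block-diagonal of the form~\eqref{e.bq=mp} with blocks $(\nabla\xi(p))_s$ — i.e.\ that $\nabla\bxi$ kills off-block entries when evaluated at $\bp$ of that form — and that the argument is genuinely reversible; once~\eqref{e.bq+..and_q+satisfy} is in hand (which I would import from the proof of Lemma~\ref{l.cavity1}), the rest is routine, but it is the point where an error would most easily creep in. One should also note that the $N_k+1$ appearing in Part~\eqref{i.l.cavity2.2} versus $N_k$ in Part~\eqref{i.l.cavity2.1} is exactly the cavity shift already handled in~\cite{HJ_critical_pts}, so no extra work beyond citing their statement is needed there.
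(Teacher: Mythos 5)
Your proposal follows essentially the same route as the paper: reduce to the matching $\Delta$-dimensional vector spin model via the bijection~\eqref{e.sigma_to_bsigma_bijection}, invoke the vector-spin cavity result (\cite[Corollary~6.12]{HJ_critical_pts}, the one-cavity-coordinate analogue you describe), define $p,a$ by the projection formulas~\eqref{e.p=(bp)}, and identify the limiting Gibbs measure for Part~\eqref{i.l.cavity2.2} through the block-diagonal structure~\eqref{e.bq+..and_q+satisfy} and the cavity-overlap identity~\eqref{e.R^cav_M(N+1)=(matrix_overlap)}. The verification point you flag (that $\nabla\bxi(\bp)$ is block-diagonal of the form~\eqref{e.bq=mp}) is exactly what the paper settles via~\eqref{e.b'nablabxi(b)=} and the distributional identity~\eqref{e.external_field_equal_distri}, so the argument is complete.
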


The two Gibbs measures in Part~\eqref{i.l.cavity2.1} are given in~\eqref{e.<>^circ=} and Section~\ref{s.cascade}, respectively. The two Gibbs measure in Part~\eqref{i.l.cavity2.2} are given in~\eqref{e.<>_N,lambda,x} and~\eqref{e.<>_M,lambda,R,q}, respectively.

\begin{proof}
We match $\bar F_{\M N,\lambda_{\M N}}$, its perturbed versions, and the associated Gibbs measures with those of $\bar F^\vec_N$ as described in the first two paragraphs in the proof of Lemma~\ref{l.cavity1}. 

Then, applying~\cite[Corollary~6.12]{HJ_critical_pts} (with $M$ therein set to be $1$), we get that there are 
$(N_k)_{k\in\N}$, $(x_k)_{k\in\N}$, $\bp \in \mcl Q_{\infty,\leq 1}(\Delta)$, and $\ba \in \overline{\mathrm{conv}}\Ll\{\bsigma\bsigma^\intercal:\: \bsigma\in \supp P^\vec_1\Rr\}$ (see~\eqref{e.P^vec_1=mp}) satisfying $\ba\geq \bp$ such that
\begin{enumerate}[label=(\roman*)]
    \item \label{i.c6.12.1} $\big(N^{-1}\bsigma^l\big(\bsigma^{l'}\big)^\intercal,\, \alpha^l\wedge\alpha^{l'}\big)_{l,l'\in\N}$ under $\E \la\cdot\ra^{\circ,\M,x_k}_{\M N_k,\lambda^\star}$ converges in law to
    \begin{align*}
        \Big(\Ll(\bp \big(\alpha^l\wedge\alpha^{l'}\big), \alpha^l\wedge\alpha^{l'}\Rr)\one_{l\neq l'} + (\ba,1)\one_{l=l'}\Big)_{l,l'\in\N}
    \end{align*}
    under $\E\la\cdot\ra_\fR$ as $k$ tends to infinity;
    \item \label{i.c6.12.2}
    for every bounded continuous $\mathbf{g}:\R^{\Delta\times\Delta}\times \R\to\R$, we have 
    \begin{align*}
        \lim_{k\to\infty}\E \la \mathbf{g}\Ll(\btau\btau'^\intercal,\ \alpha\wedge\alpha'\Rr) \ra^{\M,x_k}_{\M(N+1),\lambda^\star}
        =\E \la \mathbf{g}\Ll(\btau\btau'^\intercal,\ \alpha\wedge\alpha'\Rr) \ra_{\fR,\bq+t\nabla\bxi(\bp)}^\vec.
    \end{align*}
\end{enumerate}
Here, $\btau=(\bsigma_{(m,k)N+1})_{m,k}\in \R^\Delta$ is the last column vector of $\bsigma\in\R^{\Delta\times (N+1)}$ and the Gibbs measure on the right-hand side is defined as in~\cite[(6.17)]{HJ_critical_pts}:
\begin{align*}
    \la\cdot\ra_{\fR,\bq+t\nabla\bxi(\bp)}^\vec\propto \exp\Ll(\sqrt{2}W^{\bq+t\nabla\bxi(\bp)}_1(\alpha)\cdot \btau- \Ll(\bq+t\nabla\bxi(\bp)\Rr)(1)\cdot\btau\btau^\intercal\Rr) \d P_1^\vec(\btau)\d \fR(\alpha)
\end{align*}
where $W^{\bq+t\nabla\bxi(\bp)}_1$ is given as in~\eqref{e.W^q_N(alpha)=}.

We set $p$ and $a$ analogously as in~\eqref{e.p=(bp)}. Using the same argument below~\eqref{e.p=(bp)}, we can get Part~\eqref{i.l.cavity2.1} from Part~\ref{i.c6.12.1}.

For the second part, as argued above~\eqref{e.R^cav_MN=(matrix_overlap)}, we can choose the bijection $\sigma\mapsto \bsigma$ in~\eqref{e.sigma_to_bsigma_bijection} to satisfy
\begin{align}\label{e.R^cav_M(N+1)=(matrix_overlap)}
    R^{\circ,\M}_{\M(N+1),\lambda^\star}(\sigma,\sigma') = \Big(\M^{-1} \sum_{m\in\sfM_s}\Ll(\btau\btau'^\intercal\Rr)_{(m,\bullet)(m,\bullet)}\Big)_{s\in\sS}.
\end{align}
Similar to~\eqref{e.bq+..and_q+satisfy}, we have that $\bq+t\nabla\bxi(\bp)$ and $q+t\nabla\xi(p)$ satisfy~\eqref{e.bq=mp}, which allows us to use the identity~\eqref{e.external_field_equal_distri} to see that, under the identification of $\bsigma$ and $\sigma$ in~\eqref{e.sigma_to_bsigma_bijection}, $(\sigma^l,\alpha^l)_{l\in\N}$ under $\E\la\cdot\ra_{\fR,\bq+t\nabla\bxi(\bp)}^\vec$ has the same law under $\E \la\cdot \ra_{\M,\lambda^\star,\fR,q+t\nabla\xi(p)}$. Therefore, Part~\eqref{i.l.cavity2.2} follows from Part~\ref{i.c6.12.2} and~\eqref{e.R^cav_M(N+1)=(matrix_overlap)}. Properties of $p$ and $a$ can be verified similarly as in the proof of Lemma~\ref{l.cavity1}.
\end{proof}

\begin{remark}[Perturbation specific to multi-species models]\label{r.perturbation}
In the above, the perturbation was introduced in the style of vector spins (presence of $\bsigma$ in~\eqref{e.E[H^hH^h]=} and~\eqref{e.H^x_N(sigma,alpha)}) because we want to directly apply Lemma~\ref{l.match_mp_with_vector} and results from~\cite{HJ_critical_pts} for vector spins. In fact, results in Lemmas~\ref{l.cavity1} and~\ref{l.cavity2} hold for the following perturbation specific to the multi-species models as used in~\cite{pan.multi,bates2022free}. Instead of only considering systems with size $\M N$ due to reliance on the assumption~\eqref{e.lambda^MN_s=|M_s|/M}, we are able to define the perturbation for each $N\in\N$ (but later we only apply to systems with sizes $\M N$). Let $(r_n)$ be an enumeration of $[0,1]\cap \Q$ and $(a_n)_{n\in\N}$ be an enumeration of elements in $\prod_{s\in\sS}\S^{\kappa_s}_+$ with rational entries. Conditioned on $\fR$, for every $h=(h_i)_{1\leq i\leq 4}\in \N^4$, let $\Ll(H^h_N(\sigma,\alpha)\Rr)_{\sigma\in\Sigma^N,\, \alpha\in\supp\fR}$ be an independent centered Gaussian process with covariance
\begin{align*}
    \E \Ll[H^h_N(\sigma,\alpha)H^h_N(\sigma',\alpha')\Rr] = N \Ll(a_{h_1}\cdot \Ll(R_{N,\lambda_N}(\sigma,\sigma')\Rr)^{\odot h_2}+\lambda_{h_3} \alpha\wedge\alpha'\Rr)^{h_4}.
\end{align*}
The existence of such a process is explained in~\cite[Section~6.1.1]{HJ_critical_pts}. 
Fix $c_h$ similarly as in~\eqref{e.c_h<} but this time for every $N$ instead of $\M N$.
Let $\mathsf{bas}$ be an orthonormal basis of $\prod_{s\in\sS}\S^{\kappa_s}$.
Then, we set
\begin{align*}
    H^x_N(\sigma,\alpha) = \sum_{h\in\N^4}x_hc_hH^h_N(\sigma,\alpha) + \sum_{e\in \mathsf{bas}}x_e e\cdot R_{N,\lambda_N}(\sigma,\sigma).
\end{align*}
Let $\M$ be the dimension of cavity as fixed in Lemmas~\ref{l.cavity1} and~\ref{l.cavity2}. Let $\tilde H^{t,q}_N(\sigma,\alpha)$ be given as in~\eqref{e.tildeH^t,q_N(sigma,lapha)} relative to this $\M$.
Then, for each $N\in\N$, we define
\begin{gather*}
    \la \cdot\ra_{N,\lambda_N,x}\propto \exp\Ll(H^{t,q}_N(\sigma,\alpha) + N^{-\frac{1}{16}}H^x_N(\sigma,\alpha)\Rr)\d P_{N,\lambda_N}(\sigma) \d \fR(\alpha),
    \\
    \la \cdot\ra_{N,\lambda_N,x}^\circ\propto \exp\Ll(\tilde H^{t,q}_N(\sigma,\alpha) + N^{-\frac{1}{16}}H^x_N(\sigma,\alpha)\Rr)\d P_{N,\lambda_N}(\sigma) \d \fR(\alpha). 
\end{gather*}
Then, Lemma~\ref{l.cavity1} and~\ref{l.cavity2} hold when the Gibbs measures therein are replaced by these two (at size $\M N$).
To see this, one needs to redo cavity computations in~\cite[Section~6]{HJ_critical_pts} using this perturbation. The key part is to prove that Ghirlanda--Guerra identities hold for the overlap $\Ll(R_{N,\lambda_N}(\sigma^l,\sigma^{L'})\Rr)_{l,l'\in\N}$ under this perturbation (see~\cite[Proposition~6.8]{HJ_critical_pts}). These modifications are straightforward but tedious.
\qed
\end{remark}

\subsection{Approximation}\label{s.approx}

Recall the definition of the discrete simplex $\blacktriangle_N$ from~\eqref{e.simplex}. Recall the continuous simplex $\blacktriangle_\infty$ from~\eqref{e.cts_simplex}.
In this section, we fix any $\lambda_\infty\in \blacktriangle_\infty$. We want to use approximation arguments to study the limit of $\bar F_{N,\lambda_N}$ and the limit law of overlaps, for $(\lambda_N)_{N\in\N}$ converging to $\lambda_\infty$. 

To use previous results, it is convenient to work with another sequence of proportions other than $(\lambda_N)_{N\in\N}$. We fix two sequences $(M_n)_{n\in\N}$ and $(\lambda^\star_n)_{n\in\N}$ satisfying
\begin{align}\label{e.lambda_star,n}
    \lambda^\star_n\in \blacktriangle_{M_n},\quad\forall n\in\N;\qquad \lim_{n\to\infty} \lambda^\star_n = \lambda_\infty.
\end{align}

To simplify our notation for approximations,
for $a,b\in\R$ and $\eps>0$, we write $a\lesssim_\eps b$ provided $a\leq b+\eps$;
also, we write $a\approx_\eps b$ provided $|a-b|\leq \eps$.

Recall the functional $\sP_{\lambda,t,q}$ in~\eqref{e.sP_lambda,t,q} and Gibbs measures introduced in Section~\ref{s.cascade}, \eqref{e.<>_N,lambda,x}, \eqref{e.<>^circ=}, and~\eqref{e.<>_M,lambda,R,q}. 
Also recall the space of perturbation parameters in~\eqref{e.pert=}.
We have two results and we now state the first.

\begin{lemma}\label{l.approx1}
Assume that $(\lambda_N)_{N\in\N}$ converges to some $\lambda_\infty$.
Let $(t,q)\in\R_+\times \mcl Q^\sS_\infty(\kappa)$.
Let $(M_n)_{n\in\N}$ and $(\lambda^\star_n)_{n\in\N}$ satisfy~\eqref{e.lambda_star,n}.
Then, there are
\begin{itemize}
    \item $(N^\pm_n)_{n\in\N}$ of strictly increasing positive integers satisfying $N^\pm_n\in M_n\N$ for each $n$,
    \item $(x^\pm_n)_{n\in\N}$ of perturbation parameters satisfying $x^\pm_n\in\pert(\M_n,\lambda^\star_n)$ for each $n$, 
    \item $p_\pm$ in $\mcl Q^\sS_{\infty,\leq \lambda_\infty}(\kappa)$ and $a_\pm$ in $ \prod_{s\in\sS}\S^{\kappa_s}_+$ satisfying $a_\pm\geq p_\pm$ (see~\eqref{e.a>p}),
\end{itemize}
such that the following holds at $(t,q)$:
\begin{enumerate}
    \item \label{i.l.approx1.1}
    $\big(R_{N^\pm_n,\lambda^\star_n}(\sigma^l,\sigma^{l'}),\ \alpha^l\wedge\alpha^{l'}\big)_{l,l'\in\N}$ under $\E \la \cdot\ra^{\circ,\M_n,x^\pm_n}_{N^\pm_n,\lambda^\star_n}$ converges in law to
    \begin{align*}
        \Big(\Ll(p_\pm \big(\alpha^l\wedge\alpha^{l'}\big),\ \alpha^l\wedge\alpha^{l'}\Rr)\one_{l\neq l'} + (a_\pm,1)\one_{l=l'}\Big)_{l,l'\in\N}
    \end{align*}
    under $\E\la\cdot\ra_\fR$, as $n$ tends to infinity;
    \item\label{i.l.approx1.2} we have 
    \begin{align*}
        \limsup_{n\to\infty}\bar F_{N^\pm_n,\lambda^\star_n}(t,q)\leq \sP_{\lambda_\infty,t,q}(p_+),\qquad \liminf_{n\to\infty}\bar F_{N^\pm_n,\lambda^\star_n}(t,q)\geq \sP_{\lambda_\infty,t,q}(p_-).
    \end{align*}
\end{enumerate}
Moreover, if $(\bar F_{N,\lambda_N})_{N\in\N}$ converges pointwise to some $f$, then both $\Ll(\bar F_{N^\pm_n,\lambda^\star_n}\Rr)_{n\in\N}$ and $\Ll(\tilde F^{\M_n,x^\pm_n}_{N^\pm_n,\lambda^\star_n}\Rr)_{n\in\N}$ converge pointwise to $f$.
\end{lemma}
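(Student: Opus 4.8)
The plan is to apply Lemma~\ref{l.cavity1} separately for each pair $(M_n,\lambda^\star_n)$, extract a convergent subsequence of the limiting paths it produces, and then diagonalize in $n$. Fix $(t,q)$. For each $n$, since $\lambda^\star_n\in\blacktriangle_{M_n}$, Lemma~\ref{l.cavity1} with $\M=M_n$ and $\lambda^\star=\lambda^\star_n$ produces, for each sign, a strictly increasing sequence $(N^\pm_{n,k})_{k\in\N}$, perturbation parameters $x^\pm_{n,k}\in\pert(M_n,\lambda^\star_n)$, a path $p^{(n)}_\pm\in\mcl Q^\sS_{\infty,\le\lambda^\star_n}(\kappa)$, and $a^{(n)}_\pm\in\mcl K$ with $a^{(n)}_\pm\ge p^{(n)}_\pm$, such that, as $k\to\infty$, the array $\big(R_{M_nN^\pm_{n,k},\lambda^\star_n}(\sigma^l,\sigma^{l'}),\ \alpha^l\wedge\alpha^{l'}\big)_{l,l'}$ under $\E\la\cdot\ra^{\circ,M_n,x^\pm_{n,k}}_{M_nN^\pm_{n,k},\lambda^\star_n}$ converges in law to the array $\nu^{(n)}_\pm$ determined by $(p^{(n)}_\pm,a^{(n)}_\pm)$, and $\limsup_{N}\bar F_{M_nN,\lambda^\star_n}(t,q)\le\sP_{\lambda^\star_n,t,q}(p^{(n)}_+)$, $\liminf_{N}\bar F_{M_nN,\lambda^\star_n}(t,q)\ge\sP_{\lambda^\star_n,t,q}(p^{(n)}_-)$.

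Since $\lambda^\star_n\to\lambda_\infty$, the $p^{(n)}_\pm$ are uniformly bounded in $L^\infty$, so by Lemma~\ref{l.compact_embed_paths}, after extracting a subsequence in $n$ (equivalently, replacing $(M_n)_n$ and $(\lambda^\star_n)_n$ by subsequences still obeying~\eqref{e.lambda_star,n}), I may assume $p^{(n)}_\pm\to p_\pm$ almost everywhere on $[0,1]$ and in every $L^{r'}$, $r'<\infty$, while $a^{(n)}_\pm\to a_\pm$ by compactness of $\mcl K$; passing the bounds $|p^{(n)}_{\pm,s}(\cdot)|\le\lambda^\star_{n,s}$, $p^{(n)}_{\pm,s}(\cdot)\in\S^{\kappa_s}_+$, $a^{(n)}_\pm\ge p^{(n)}_\pm$ to the limit (and correcting $p_\pm$ on a null set to be right-continuous) gives $p_\pm\in\mcl Q^\sS_{\infty,\le\lambda_\infty}(\kappa)$, $a_\pm\in\mcl K\subseteq\prod_{s\in\sS}\S^{\kappa_s}_+$, $a_\pm\ge p_\pm$. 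I then record two continuity facts. First, $\sP_{\lambda^\star_n,t,q}(p^{(n)}_\pm)\to\sP_{\lambda_\infty,t,q}(p_\pm)$: writing $\sP$ via~\eqref{e.psi=} as $\sum_s\lambda_s\psi^\vec_{\mu_s}\big((q+t\nabla\xi(p))_s\big)-t\int_0^1\theta(p(r))\,\d r$, the continuity of $\nabla\xi$ and $\theta$ and the uniform boundedness of $p^{(n)}_\pm$ give (by dominated convergence) $\nabla\xi(p^{(n)}_\pm)\to\nabla\xi(p_\pm)$ in $L^1$ and $\int_0^1\theta(p^{(n)}_\pm)\to\int_0^1\theta(p_\pm)$, which combine with the $L^1$-Lipschitzness and local boundedness of each $\psi^\vec_{\mu_s}$ (Lemma~\ref{l.psi^vec_smooth}) and $\lambda^\star_{n,s}\to\lambda_{\infty,s}$. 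Second, $\nu^{(n)}_\pm\to\nu_\pm$ in law, where $\nu_\pm$ is the array determined by $(p_\pm,a_\pm)$: testing against bounded continuous functions of finitely many coordinates, the off-diagonal integrals are of the form $\int_0^1 g(p^{(n)}_\pm(u),u)\,\d u\to\int_0^1 g(p_\pm(u),u)\,\d u$ (using $p^{(n)}_\pm\to p_\pm$ a.e. and the cascade invariance, Lemma~\ref{l.invar}, which makes each $\alpha^l\wedge\alpha^{l'}$ uniform on $[0,1]$), and the diagonal coordinates converge since $a^{(n)}_\pm\to a_\pm$.

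Now diagonalize. Convergence in law of the overlap arrays, valued in the compact metrizable space $(E\times[0,1])^{\N\times\N}$ with $E=\prod_{s\in\sS}[-1,+1]^{\kappa_s\times\kappa_s}$, is metrized by some $d$; for each $n$ and each sign I choose $k=k(n)$ large so that, with $N^\pm_n=M_nN^\pm_{n,k(n)}\in M_n\N$ and $x^\pm_n=x^\pm_{n,k(n)}$: $N^\pm_n>N^\pm_{n-1}$ and $N^\pm_n/M_n\ge n$; the $d$-distance between the law of $\big(R_{N^\pm_n,\lambda^\star_n}(\sigma^l,\sigma^{l'}),\alpha^l\wedge\alpha^{l'}\big)_{l,l'}$ under $\E\la\cdot\ra^{\circ,M_n,x^\pm_n}_{N^\pm_n,\lambda^\star_n}$ and $\nu^{(n)}_\pm$ is $\le 1/n$; and $\bar F_{N^+_n,\lambda^\star_n}(t,q)\le\sP_{\lambda^\star_n,t,q}(p^{(n)}_+)+1/n$ (resp.\ $\bar F_{N^-_n,\lambda^\star_n}(t,q)\ge\sP_{\lambda^\star_n,t,q}(p^{(n)}_-)-1/n$), using that the $\limsup/\liminf$ bounds above hold for all large $N$. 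Then claim~\eqref{i.l.approx1.1} follows from the triangle inequality for $d$ together with $\nu^{(n)}_\pm\to\nu_\pm$, and claim~\eqref{i.l.approx1.2} from $\bar F_{N^+_n,\lambda^\star_n}(t,q)\le\sP_{\lambda^\star_n,t,q}(p^{(n)}_+)+1/n\to\sP_{\lambda_\infty,t,q}(p_+)$ and symmetrically. For the ``moreover'' part, if $\bar F_{N,\lambda_N}\to f$ then Lemma~\ref{l.lambda_continuity} gives $|\bar F_{N^\pm_n,\lambda^\star_n}(t',q')-\bar F_{N^\pm_n,\lambda_{N^\pm_n}}(t',q')|\le C(t',q')|\lambda^\star_n-\lambda_{N^\pm_n}|\to0$ (both $\lambda^\star_n,\lambda_{N^\pm_n}\to\lambda_\infty$), and $\bar F_{N^\pm_n,\lambda_{N^\pm_n}}(t',q')\to f(t',q')$, so $\bar F_{N^\pm_n,\lambda^\star_n}(t',q')\to f(t',q')$; and Lemma~\ref{l.F-F^x} bounds $|\tilde F^{M_n,x^\pm_n}_{N^\pm_n,\lambda^\star_n}(t',q')-\bar F_{N^\pm_n,\lambda^\star_n}(t',q')|$ by $C|t'|(N^\pm_n/M_n)^{-1}+C(N^\pm_n/M_n)^{-1/16}\to0$ for $q'\in\mcl Q^\sS_\infty(\kappa)$, with the case $q'\in\mcl Q^\sS_1(\kappa)$ following by density and the uniform $L^1$-Lipschitz bound of Proposition~\ref{p.F_N_smooth} (valid also for the perturbed free energies, whose $q$-dependence has the same form).

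The step I expect to be the main obstacle is the bookkeeping of this diagonal argument — making the weak-convergence triangle inequality close up while simultaneously pinning the free energy along the same subsequence — together with the joint continuity $\sP_{\lambda^\star_n,t,q}(p^{(n)}_\pm)\to\sP_{\lambda_\infty,t,q}(p_\pm)$, which is delicate precisely because $p^{(n)}_\pm$ converges only a.e./in $L^{r'}$ (not in $L^\infty$) while the proportions $\lambda^\star_n$ also vary; the decomposition $\psi_\lambda=\sum_s\lambda_s\psi^\vec_{\mu_s}$ and the Lipschitz/boundedness of $\psi^\vec_{\mu_s}$ are what rescue it.
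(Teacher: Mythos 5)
Your proposal is correct and follows essentially the same route as the paper: apply Lemma~\ref{l.cavity1} for each $(M_n,\lambda^\star_n)$, extract a subsequence along which $p^{(n)}_\pm\to p_\pm$ and $a^{(n)}_\pm\to a_\pm$ via Lemma~\ref{l.compact_embed_paths}, diagonalize in $k$ with $N^\pm_n/M_n\to\infty$, and use the continuity of $\sP_{\lambda,t,q}(p)$ in $(\lambda,p)$ together with Lemmas~\ref{l.lambda_continuity} and~\ref{l.F-F^x} for the ``moreover'' part. The only difference is presentational: the paper quantifies the weak convergence through the $L^1$ overlap statements of Lemma~\ref{l.cvg_overlap}, whereas you metrize convergence in law on the compact product space directly; both close the diagonal argument the same way.
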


\begin{proof}

We shall only consider the case with superscript $+$ and the other case can be treated similarly. Henceforth, we omit $+$ from the notation.
Fix any sequence $(\eps_n)_{n\in\N}$ of strictly positive real numbers satisfying $\lim_{n\to\infty}\eps_n =0$.

For each $n\in\N$,
let $(N^n_k)_{k\in\N}$, $(x^n_k)_{k\in\N}$, $p_n$, and $a_n$ be given by Lemma~\ref{l.cavity1} for the $+$ case with $\M_n$ and $\lambda^\star_n$ substituted for $\M$ and $\lambda^\star$ therein. By Part~\eqref{i.l.cavity1.1} of Lemma~\ref{l.cavity1} (and also Lemma~\ref{l.cvg_overlap}), there is $\mathbf{k}_1\in\N$ such that, for every $k\geq \mathbf{k}_1$, we have
\begin{gather}
    \E \la \Ll|R_{\M_n N^n_k,\lambda^\star_n}(\sigma,\sigma')- p_n(\alpha\wedge\alpha')\Rr|\ra^{\circ,\M_n,x^n_k}_{\M_n N^n_k,\lambda^\star_n}\approx_{\eps_n} 0,\label{e.l.approx1.2_pf}
    \\
    \E \la \Ll|R_{\M_n N^n_k,\lambda^\star_n}(\sigma,\sigma)-a_n\Rr|\ra^{\circ,\M_n,x^n_k}_{\M_n N^n_k,\lambda^\star_n}\approx_{\eps_n} 0.\label{e.l.approx1.3_pf}
\end{gather}
By Part~\eqref{i.l.cavity1.2} of Lemma~\ref{l.cavity1}, there is $\mathbf{k}_2\in\N$ such that, for every $k\geq \mathbf{k}_2$, we have
\begin{align}
    \bar F_{\M_n N^n_k,\lambda^\star_n}(t,q) \lesssim_{\eps_n} \limsup_{N\to\infty}\bar F_{M_n N,\lambda^\star_n}(t,q) \leq \sP_{\lambda^\star_n,t,q}(p_n).\label{e.l.approx1.4_pf}
\end{align}
Fix any $k_n$ satisfying $k\geq \max_{i\in\{1,2\}}\mathbf{k}_i$ and set $N_n = M_n N^n_{k_n}$ and $x_n = x^n_{k_n}$. Since there is no upper bound on $k_n$, we can choose larger $k_n$ to ensure that $(N_n)_{n\in\N}$ is strictly increasing as announced and
\begin{align}\label{e.(N/M)^-1<eps}
        \Ll(N_n / M_n\Rr)^{-1} \leq \eps_n
\end{align}
which will needed later.

By passing to a subsequence of $(N_n)_{n\in\N}$ and using the compactness result in Lemma~\ref{l.compact_embed_paths}, we may assume that there are $p\in\mcl Q^\sS_\infty(\kappa)$ such that $(p_n)_{n\in\N}$ converges to $p$ pointwise and in $L^1$. 
Also, we can assume that $(a_n)_{n\in\N}$ converges to some $a$. Recall from Lemma~\ref{l.cavity1} that we have $p_n \in \mcl Q^\sS_{\infty,\leq \lambda^\star_n}(\kappa)$. The pointwise convergence and~\eqref{e.lambda_star,n} implies $p  \in \mcl Q^\sS_{\infty,\leq \lambda_\infty}(\kappa)$. Clearly, we also have $a\geq p$. Hence, we have verified properties of $p$ and $a$.

We are ready to verify the two main parts of Lemma~\ref{l.approx1}. From the convergence of $(a_n)_{n\in\N}$ and~\eqref{e.l.approx1.3_pf}, we immediately get
\begin{align}
    \lim_{n\to\infty}\E \la \Ll|R_{N_n,\lambda_n}(\sigma,\sigma)-a\Rr|\ra^{\circ,\M_n,x_n}_{N_n,\lambda^\star_n}= 0.\label{e.l.approx1.3}
\end{align}
Using the invariance of cascades in Lemma~\ref{l.invar} and the convergence of $(p_n)_{n\in\N}$, we have
\begin{align*}
    \lim_{n\to\infty} \E \la \Ll| p_n(\alpha\wedge\alpha')- p(\alpha\wedge\alpha')\Rr|\ra^{\circ,\M_n,x_n}_{N_n,\lambda^\star_n} =0
\end{align*}
which together with~\eqref{e.l.approx1.2_pf} implies
\begin{align}
    \lim_{n\to\infty}\E \la \Ll|R_{N_n,\lambda_n}(\sigma,\sigma')- p(\alpha\wedge\alpha')\Rr|\ra^{\circ,\M_n,x_n}_{N_n,\lambda^\star_n}=0,\label{e.l.approx1.2}
\end{align}
Combining~\eqref{e.l.approx1.3} and~\eqref{e.l.approx1.2} with Lemma~\ref{l.cvg_overlap}, we can conclude Part~\eqref{i.l.approx1.1}.
From its definition in~\eqref{e.sP_lambda,t,q}, we can see that $\sP_{\lambda^\star_n,t,q}(p_n)$ is continuous jointly in $\lambda^\star_n$ and $p_n$ (due to Lemmas~\ref{l.psi^vec_smooth} and~\ref{l.reg_initial}). This along with~\eqref{e.l.approx1.4_pf} yields the first side in Part~\eqref{i.l.approx1.2}. As we explained previously, the other side can be treated by the same method.

Now, we turn to the last statement. Here, we display the superscript $\pm$.
For each $(t',q')$, let $C_{t',q'}$ be the constant appearing on the right-hand side in~\eqref{e.|F-F|<C|lambda-lambda|} of Lemma~\ref{l.lambda_continuity}, which only depends on $(\kappa_s)_{s\in\sS}$, $\xi$, $(\mu_s)_{s\in\sS}$, and $(t',q')$. We set $\eps^\pm_n = C_{t',q'}|\lambda^\star_n -\lambda_{N^\pm_n}|$. Since both $(\lambda^n_\star)_{n\to\infty}$ and $(\lambda_N)_{N\to\infty}$ converge to $\lambda_\infty$, we have $\lim_{n\to\infty}\eps^\pm_n=0$. Lemma~\ref{l.lambda_continuity} implies that, for every $(t',q')$,
\begin{align}\label{e.F_approx_eps^pm_n_F}
    \bar F_{N^\pm_n,\lambda^\star_n}(t',q') \approx_{\eps^\pm_n} \bar F_{N^\pm_n,\lambda_{N^\pm_n}}(t',q')
\end{align}
which gives the convergence of $\bar F_{N^\pm_n,\lambda^\star_n}$ to $f$.
By Lemma~\ref{l.F-F^x}, we have
\begin{align*}
    \Ll|\bar F_{N^\pm_n,\lambda^\star_n}(t',q')-\tilde F^{\M_n,x^\pm_n}_{N^\pm_n,\lambda^\star_n}(t',q')\Rr| \leq C(1+|t'|)(N^\pm_n / M_n)^{-\frac{1}{16}}.
\end{align*}
This along with~\eqref{e.(N/M)^-1<eps} and~\eqref{e.F_approx_eps^pm_n_F} implies that $\tilde F^{\M_n,x^\pm_n}_{N^\pm_n,\lambda^\star_n}(t',q')$ converges pointwise to $f$ as $n\to\infty$. 
\end{proof}

Recall the cavity overlap from~\eqref{e.cavity_overlap}. We state the second result.

\begin{lemma}\label{l.approx2}
Assume that $(\lambda_N)_{N\in\N}$ converges to some $\lambda_\infty$.
Let $(t,q)\in\R_+\times \mcl Q^\sS_\infty(\kappa)$.
Let $(M_n)_{n\in\N}$ and $(\lambda^\star_n)_{n\in\N}$ satisfy~\eqref{e.lambda_star,n}.
Let $\seq$ be any strictly increasing sequence in $\N$.
Then, there are
\begin{itemize}
    \item a subsequence $(N_n)_{n\in\N}$ of $\seq$ satisfying $N_n\in M_n\N$ for every $n$,
    \item $(x_n)_{n\in\N}$ of perturbation parameters satisfying $x_n\in\pert(\M_n,\lambda^\star_n)$ for each $n$, 
    \item $p$ in $\mcl Q^\sS_{\infty,\leq \lambda_\infty}(\kappa)$ and $a$ in $ \prod_{s\in\sS}\S^{\kappa_s}_+$ satisfying $a\geq p$,
\end{itemize}
such that the following holds at $(t,q)$:
\begin{enumerate}
    \item \label{i.l.approx2.1}
    $\big(R_{N_n,\lambda^\star_n}(\sigma^l,\sigma^{l'}),\ \alpha^l\wedge\alpha^{l'}\big)_{l,l'\in\N}$ under $\E \la \cdot\ra^{\circ,\M_n,x_n}_{N_n,\lambda^\star_n}$ converges in law to
    \begin{align*}
        \Big(\Ll(p_\pm \big(\alpha^l\wedge\alpha^{l'}\big),\ \alpha^l\wedge\alpha^{l'}\Rr)\one_{l\neq l'} + (a_\pm,1)\one_{l=l'}\Big)_{l,l'\in\N}
    \end{align*}
    under $\E\la\cdot\ra_\fR$, as $n$ tends to infinity;
    \item\label{i.l.approx2.2} 
    for any bounded continuous $\pi:[0,1]\to\Ll(\prod_{s\in\sS}\S^{\kappa_s}\Rr)$, we have
    \begin{align*}
        \lim_{n\to\infty} \E \la  \pi\Ll(\alpha\wedge\alpha'\Rr)\cdot R^{\circ,\M_n}_{N_n+M_n,\lambda^\star_n}\ra^{\M_n,x_n}_{N_n+M_n,\lambda^\star_n}  = \la \pi, \partial_q \psi_{\lambda_\infty} (q+t\nabla\xi(p))\ra_{L^2}.
    \end{align*}
\end{enumerate}
Moreover, if $(\bar F_{N,\lambda_N})_{N\in\seq}$ converges pointwise to some $f$, then both $\Ll(\bar F^{\M_n,x_n}_{N_n+\M_n,\lambda^\star_n}\Rr)_{n\in\N}$ and $\Ll(\tilde F^{\M_n,x_n}_{N_n,\lambda^\star_n}\Rr)_{n\in\N}$ converge pointwise to $f$.
\end{lemma}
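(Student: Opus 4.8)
The plan is to mimic the proof of Lemma~\ref{l.approx1} almost line by line, with Lemma~\ref{l.cavity2} in place of Lemma~\ref{l.cavity1}; the only genuinely new ingredient is the identification needed for Part~\eqref{i.l.approx2.2}. Fix a sequence $(\eps_n)_{n\in\N}$ of positive reals with $\eps_n\to0$. For each $n$, choose a weak partition of $\{1,\dots,M_n\}$ with proportions $\lambda^\star_n$ (possible since $\lambda^\star_n\in\blacktriangle_{M_n}$) and apply Lemma~\ref{l.cavity2} with this weak partition and a sequence of batch counts extracted from $\seq$, obtaining a subsequence $(N^n_k)_{k\in\N}$, parameters $x^n_k\in\pert(M_n,\lambda^\star_n)$, a path $p_n\in\mcl Q^\sS_{\infty,\le\lambda^\star_n}(\kappa)$, and $a_n\in\mcl K$ with $a_n\ge p_n$, satisfying the two conclusions of that lemma. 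Using Part~\eqref{i.l.cavity2.1} of Lemma~\ref{l.cavity2} together with Lemma~\ref{l.cvg_overlap} to make the relevant overlap quantities $\eps_n$-small, and Part~\eqref{i.l.cavity2.2} applied to the first $n$ members of a fixed countable family $(g_j)_{j\in\N}$ of bounded continuous test functions, pick $k_n$ so large that, writing $N_n:=M_nN^n_{k_n}$ and $x_n:=x^n_{k_n}$, the sequence $(N_n)_n$ is strictly increasing inside $\seq$, $N^n_{k_n}\ge\eps_n^{-1}$, and all relevant quantities are within $\eps_n$ of their $k\to\infty$ limits. Then, by the compactness Lemma~\ref{l.compact_embed_paths} (applicable since the $p_n$ are uniformly bounded in $L^\infty$), pass to a subsequence of $(n)$ along which $p_n\to p$ pointwise and in $L^1$ and $a_n\to a$; as at the end of the proof of Lemma~\ref{l.approx1}, $p\in\mcl Q^\sS_{\infty,\le\lambda_\infty}(\kappa)$, $a\in\prod_{s\in\sS}\S^{\kappa_s}_+$, and $a\ge p$.

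Part~\eqref{i.l.approx2.1} is then obtained exactly as in Lemma~\ref{l.approx1}: combine the $\eps_n$-closeness coming from Part~\eqref{i.l.cavity2.1} of Lemma~\ref{l.cavity2} with $a_n\to a$ and with $\E\la|p_n(\alpha\wedge\alpha')-p(\alpha\wedge\alpha')|\ra^{\circ,M_n,x_n}_{N_n,\lambda^\star_n}\to0$ (which follows from the invariance of cascades, Lemma~\ref{l.invar}, and $p_n\to p$ in $L^1$), and invoke Lemma~\ref{l.cvg_overlap}.

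For Part~\eqref{i.l.approx2.2}, fix a bounded continuous $\pi$ and use Part~\eqref{i.l.cavity2.2} of Lemma~\ref{l.cavity2} with the test function $g(b,u)=\pi(u)\cdot b$ (extended to a bounded continuous function, which is harmless since the cavity overlaps lie in a fixed compact set). The right-hand side there equals $\E\la\pi(\alpha\wedge\alpha')\cdot R_{M_n,\lambda^\star_n}(\sigma,\sigma')\ra_{M_n,\lambda^\star_n,\fR,\,q+t\nabla\xi(p_n)}$; since (by the remark preceding Lemma~\ref{l.cavity2}) $\la\cdot\ra_{M_n,\lambda^\star_n,\fR,q'}$ is the Gibbs measure associated to $\bar F_{M_n,\lambda^\star_n}(0,q')$, the derivative formula~\eqref{e.def.der.FN} at $t=0$ combined with $\bar F_{M_n,\lambda^\star_n}(0,\cdot)=\psi_{\lambda^\star_n}$ (Lemma~\ref{l.initial_condition}) identifies it with $\la\pi,\,\partial_q\psi_{\lambda^\star_n}(q+t\nabla\xi(p_n))\ra_{L^2}$. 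Letting $n\to\infty$, one has $\partial_q\psi_\lambda(q')=(\lambda_s\,\partial_{q_s}\psi^\vec_{\mu_s}(q'_s))_{s\in\sS}$ by Lemma~\ref{l.reg_initial}, which is linear in $\lambda$ and, by the Lipschitz bound~\eqref{e.continuity.der.psi} of Lemma~\ref{l.psi^vec_smooth}, Lipschitz in $q'$ for the $L^2$ norm; together with $\lambda^\star_n\to\lambda_\infty$, $p_n\to p$, and hence $\nabla\xi(p_n)\to\nabla\xi(p)$ in $L^2$ (the paths being uniformly bounded and $\xi$ smooth), this gives $\la\pi,\partial_q\psi_{\lambda^\star_n}(q+t\nabla\xi(p_n))\ra_{L^2}\to\la\pi,\partial_q\psi_{\lambda_\infty}(q+t\nabla\xi(p))\ra_{L^2}$. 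Combined with the $\eps_n$-closeness built into the choice of $k_n$ for $\pi$ in the countable family, this yields Part~\eqref{i.l.approx2.2} for those $\pi$; the general case follows from density, because $\pi\mapsto\E\la\pi(\alpha\wedge\alpha')\cdot R^{\circ,M_n}_{N_n+M_n,\lambda^\star_n}\ra^{M_n,x_n}_{N_n+M_n,\lambda^\star_n}$ and $\pi\mapsto\la\pi,\partial_q\psi_{\lambda_\infty}(q+t\nabla\xi(p))\ra_{L^2}$ are $|\cdot|_{L^1}$-Lipschitz uniformly in $n$ (the overlaps and $\partial_q\psi_{\lambda_\infty}$ being $L^\infty$-bounded).

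The ``moreover'' claim then follows by the same chain of estimates as in Lemma~\ref{l.approx1}: Lemma~\ref{l.lambda_continuity} gives $\bar F_{N_n,\lambda^\star_n}\to f$ (since $N_n\in\seq$ and $\lambda_{N_n},\lambda^\star_n\to\lambda_\infty$), Lemma~\ref{l.NF_N-(N+M)F_(N+M)} upgrades this to $\bar F_{N_n+M_n,\lambda^\star_n}\to f$ (with error $O(M_n/N_n)=O(1/N^n_{k_n})$), and Lemma~\ref{l.F-F^x} then yields $\bar F^{M_n,x_n}_{N_n+M_n,\lambda^\star_n}\to f$ and $\tilde F^{M_n,x_n}_{N_n,\lambda^\star_n}\to f$ (errors $O((N^n_{k_n})^{-1/16})$ and $O(t(N^n_{k_n})^{-1})$). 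The main difficulty here is organizational rather than conceptual: one must run the cavity computation separately for each approximating proportion $\lambda^\star_n$ and then diagonalize over the inner index $k$ while keeping $N_n$ inside $\seq$, divisible by $M_n$, large, and with all $\eps_n$-errors under control. The only substantive analytic point is the identification in Part~\eqref{i.l.approx2.2} of the limiting cavity-overlap functional with $\partial_q\psi_{\lambda_\infty}(q+t\nabla\xi(p))$, and this reduces to the derivative formula~\eqref{e.def.der.FN} and the continuity of $(\lambda,q')\mapsto\partial_q\psi_\lambda(q')$, both of which are already available.
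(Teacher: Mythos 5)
Your proposal is correct and follows essentially the same route as the paper: apply Lemma~\ref{l.cavity2} for each rational approximation $\lambda^\star_n$ along the extracted batch counts from $\seq$, diagonalize in $k$ with $\eps_n$-error control, use Lemma~\ref{l.compact_embed_paths} to extract $p$ and $a$, and identify the limit in Part~\eqref{i.l.approx2.2} via the derivative formula~\eqref{e.def.der.FN}, Lemma~\ref{l.initial_condition}, and the continuity of $(\lambda,q')\mapsto\partial_q\psi_\lambda(q')$. Your extra care with a countable family of test functions plus a uniform-Lipschitz density argument for general $\pi$ is a harmless refinement of what the paper does implicitly.
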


\begin{proof}
Fix any sequence $(\eps_n)_{n\in\N}$ of strictly positive real numbers satisfying $\lim_{n\to\infty}\eps_n =0$.

For each $n$, we choose the corresponding parameters. Define $\seq(n) = \Ll(\lfloor N/\M_n\rfloor\Rr)_{N\in\seq}$.
Apply Lemma~\ref{l.cavity2} to $M_n$, $\lambda^\star_n$, and $\seq(n)$, we get the corresponding $(N^n_k)_{k\in\N}$, $(x^n_k)_{k\in\N}$, $p_n\in \mcl Q^\sS_{\infty,\leq \lambda^\star_n}(\kappa)$, and $a_n$.
By Part~\eqref{i.l.cavity2.1} of Lemma~\ref{l.cavity2} (together with Lemma~\ref{l.cvg_overlap}), there is $\mathbf{k}_1\in\N$ such that, for every $k\geq \mathbf{k}_1$, we have
\begin{gather}
    \E \la \Ll|R_{\M_n N^n_k,\lambda^\star_n}(\sigma,\sigma')- p_n(\alpha\wedge\alpha')\Rr|\ra^{\circ,\M_n,x^n_k}_{\M_n N^n_k,\lambda^\star_n}\approx_{\eps_n} 0,\label{e.l.approx2.2_pf}
    \\
    \E \la \Ll|R_{\M_n N^n_k,\lambda^\star_n}(\sigma,\sigma)-a_n\Rr|\ra^{\circ,\M_n,x^n_k}_{\M_n N^n_k,\lambda^\star_n}\approx_{\eps_n} 0.\label{e.l.approx2.3_pf}
\end{gather}
By Part~\eqref{i.l.cavity2.2} of Lemma~\ref{l.cavity2}, there is $\mathbf{k}_2\in\N$ such that, for every $k\geq \mathbf{k}_2$, we have
\begin{align}\label{e.l.approx2.4_pf}
    \begin{split}
        \E \la  \pi\Ll(\alpha\wedge\alpha'\Rr)\cdot R^{\circ,\M_n}_{\M_n(N^n_k+1),\lambda^\star_n}(\sigma,\sigma')\ra^{\M_n,x^n_k}_{\M_n(N^n_k+1),\lambda^\star_n} 
    \\
    \approx_{\eps_n} \E \la \pi\Ll( \alpha\wedge\alpha' \Rr)\cdot R_{\M_n,\lambda^\star_n}(\sigma,\sigma') \ra_{\M_n,\lambda^\star_n,\fR,q+t\nabla\xi(p_n)}
    \end{split}
\end{align}
These are preparations for Parts~\eqref{i.l.approx2.1} and~\eqref{i.l.approx2.2} of the lemma to prove. But before proceeding, we first prove the last statement in the lemma.
For each $k\in\N$, let $\tilde N^n_k$ from $\seq$ satisfy $\lfloor  \tilde N^n_k /M_n \rfloor = N^n_k$. Then, we can estimate, at every $(t',q')\in\R_+\times \mcl Q^\sS_\infty(\kappa)$,
\begin{align*}
    \Ll|\bar F_{\tilde N^n_k,\lambda_{\tilde N^n_k}} - \bar F^{\M_n,x^n_k}_{\M_n ( N^n_k+1),\lambda^\star_n}\Rr| 
    \leq \Ll|\bar F_{\tilde N^n_k,\lambda_{\tilde N^n_k}}- \bar F_{\tilde N^n_k,\lambda^\star_n}\Rr|+\Ll|\bar F_{\tilde N^n_k,\lambda^\star_n} - \bar F_{\M_n ( N^n_k+1),\lambda^\star_n}\Rr|
    \\
    + \Ll|\bar F_{\M_n ( N^n_k+1),\lambda^\star_n} -  \bar F^{\M_n,x^n_k}_{\M_n ( N^n_k+1),\lambda^\star_n}\Rr|
\end{align*}
Applying Lemmas~\ref{l.lambda_continuity}, \ref{l.NF_N-(N+M)F_(N+M)}, and~\ref{l.F-F^x} to the three terms respectively, we can see that the left-hand side is bounded by a constant $C_{t',q'}$ times
\begin{align*}
    \Ll|\lambda_{\tilde N^n_k}-\lambda^\star_n\Rr|+ \frac{M_n}{\tilde N^n_k} + \Ll(N^n_k\Rr)^{-1/16}.
\end{align*}
A similar bound holds for the difference between $\bar F_{\tilde N^n_k,\lambda_{\tilde N^n_k}}$ and $\tilde F^{\M_n,x^n_k}_{N_n,\lambda^\star_n}$. Notice that the last two terms in this bound vanish as $k$ tends to infinity (recall that $n$ is temporarily fixed) while the first term tends to $\Ll|\lambda_\infty -\lambda^\star_n\Rr|$. 
Recall that $(\eps_n)_{n\in\N}$ is an arbitrary vanishing sequence that we choose. Here, for convenience of notation, we can assume that we have chosen it to satisfy $\eps_n \geq 2|\lambda_\infty -\lambda^\star_n|$.
Then, we can find $\mathbf{k}_3\in\N$ (independent of $(t',q')$) such that the error term in the above display is bounded by $\eps_n$ for all $k\geq \mathbf{k}_3$. Hence, at every $(t',q')$ and for every $k\geq \mathbf{k}_3$, we have
\begin{align}\label{e.approx2.3}
    \bar F_{\tilde N^n_k,\lambda_{\tilde N^n_k}} \approx_{C_{t',q'}\eps_n}\bar F^{\M_n,x^n_k}_{\M_n ( N^n_k+1),\lambda^\star_n},\qquad \bar F_{\tilde N^n_k,\lambda_{\tilde N^n_k}}\approx_{C_{t',q'}\eps_n} \tilde F^{\M_n,x^n_k}_{N_n,\lambda^\star_n}.
\end{align}

Fix some $k_n$ satisfying $k\geq \max_{i\in\{1,2,3\}}\mathbf{k}_i$ and set 
\begin{align}\label{e.chooseN_nx_n}
    N_n = M_n N^n_{k_n},\qquad x_n = x^n_{k_n}.
\end{align}
This is done for each $n$.
We can choose larger $k_n$ to ensure that both $(N_n)_{n\in\N}$ and $(\tilde N^n_{k_n})_{n\in\N}$ are strictly increasing. To verify the last statement, recall that our choice of $\tilde N^n_k$ ensures that $\bar F_{\tilde N^n_k,\lambda_{\tilde N^n_k}}$ in~\eqref{e.approx2.3} belongs to the sequence $(\bar F_{N,\lambda_N})_{N\in\seq}$. Now inserting~\eqref{e.chooseN_nx_n} into~\eqref{e.approx2.3}, we can see that the last statement holds.

Now we turn to Parts~\eqref{i.l.approx2.1} and~\eqref{i.l.approx2.2}.
By passing to a subsequence of $(N_n)_{n\in\N}$ and using the compactness result in Lemma~\ref{l.compact_embed_paths}, we can find some $p\in\mcl Q^\sS_\infty(\kappa)$ such that $(p_n)_{n\in\N}$ converges to $p$ pointwise and in $L^1$. By the same argument as in the proof of Lemma~\ref{l.approx1}, we can verify that $p\in \mcl Q^\sS_{\infty,\leq\lambda_\infty}(\kappa)$ and $a\geq p$.

Notice that~\eqref{e.l.approx2.2_pf} and~\eqref{e.l.approx2.3_pf} have the same form as in~\eqref{e.l.approx1.2_pf} and~\eqref{e.l.approx1.3_pf}
Using the same argument in Lemma~\ref{l.approx1}, we can verify Part~\eqref{i.l.approx2.1} of Lemma~\ref{l.approx2}. 

For Part~\eqref{i.l.approx2.2}, recall that the Gibbs measure on the right of~\eqref{e.l.approx2.4_pf} is given in~\eqref{e.<>_M,lambda,R,q}. 
Comparing it with~\eqref{e.F_N(t,q)=} and~\eqref{e.<>_N=}, we can see that it is exactly the Gibbs measure associated with $\bar F_{\M_n,\lambda^\star_n}(0,q+t\nabla\xi(p_n))$.
Using this and the computation in~\eqref{e.def.der.FN}, we can see that the right-hand side in~\eqref{e.l.approx2.4_pf} is equal to
\begin{align*}
    \la \pi, \partial_q \bar F_{\M_n,\lambda^\star_n}(0,q+t\nabla\xi(p_n))\ra_{L^2}\stackrel{\text{L.\ref{l.initial_condition}}}{=}\la \pi, \partial_q \psi_{\lambda^\star_n}(q+t\nabla\xi(p_n))\ra_{L^2}.
\end{align*}
By Lemmas~\ref{l.psi^vec_smooth} and~\ref{l.reg_initial}, $\partial_q \psi_{\lambda^\star_n}(q+t\nabla\xi(p_n))$ is continuous in $\lambda^\star_n$ and $p_n$.
Hence, the above display along with~\eqref{e.l.approx2.4_pf} gives Part~\eqref{i.l.approx2.2} after $n$ is sent to infinity.
\end{proof}

\subsection{Results concerning critical points}\label{s.crti_pt_and_rel_results}

We use Lemma~\ref{l.approx1} to prove the following proposition, which is the counterpart to~\cite[Proposition~7.1]{HJ_critical_pts}.
Recall the definition of $\mcl Q_{\infty,\uparrow}(\D)$ from~\eqref{e.Q_uparrow(D)=} and that of $\mcl Q^\sS_{\infty,\uparrow}(\kappa)$ as in~\eqref{e.Q^S(kappa)=}.

\begin{proposition}[Critical point identification, I]\label{p.crit_pt_1}
Suppose that $(\lambda_N)_{N\in\N}$ converges to some $\lambda_\infty \in \blacktriangle_\infty$ and that $\Ll(\bar F_{N,\lambda_N}\Rr)_{N\in\N}$ converges pointwise to some $f$. Let $t\in\R_+$ and $q\in \mcl Q^\sS_{\infty,\uparrow}(\kappa)$. If $f(t,\cdot)$ is Gateaux differentiable at $q$, then $f(t,q) = \sP_{\lambda_\infty,t,q}\Ll(\partial_q f(t,q)\Rr)$.
\end{proposition}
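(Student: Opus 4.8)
The plan is to use Lemma~\ref{l.approx1} to sandwich the limit $f(t,q)$ between $\sP_{\lambda_\infty,t,q}(p_-)$ and $\sP_{\lambda_\infty,t,q}(p_+)$, and then use the Gateaux differentiability of $f$ together with the convergence-of-derivatives machinery (Proposition~\ref{p.cvg_der_semi_concave} and Remark~\ref{r.cvg_der_semi_concave}) to identify both $p_+$ and $p_-$ with $\partial_q f(t,q)$. First I would observe that since $(\bar F_{N,\lambda_N})$ converges pointwise to $f$, Part~\eqref{i.l.approx1.2} of Lemma~\ref{l.approx1} forces
\begin{align*}
    \sP_{\lambda_\infty,t,q}(p_-)\leq f(t,q)\leq \sP_{\lambda_\infty,t,q}(p_+),
\end{align*}
using the fact established in the last statement of Lemma~\ref{l.approx1} that $\bar F_{N^\pm_n,\lambda^\star_n}(t,q)\to f(t,q)$. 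So it remains to show that $p_+ = p_- = \partial_q f(t,q)$, which would also collapse the sandwich to an equality.

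The main step is identifying $p_\pm$. Here I would follow the strategy of \cite[Proposition~7.1]{HJ_critical_pts}. The auxiliary free energies $\tilde F^{\M_n,x^\pm_n}_{N^\pm_n,\lambda^\star_n}$ converge pointwise to $f$ by the last statement of Lemma~\ref{l.approx1}; moreover, by Remark~\ref{r.cvg_der_semi_concave}, these perturbed/cavity free energies enjoy the same semi-concavity as $\bar F_{N,\lambda_N}$, so Proposition~\ref{p.cvg_der_semi_concave} applies to them: since $q\in\mcl Q^\sS_{\infty,\uparrow}(\kappa)$ and $f(t,\cdot)$ is Gateaux differentiable at $q$, the derivatives $\partial_q \tilde F^{\M_n,x^\pm_n}_{N^\pm_n,\lambda^\star_n}(t,q)$ converge in $L^r$ to $\partial_q f(t,q)$. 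On the other hand, using the formula for the derivative of the free energy (the analogue of \eqref{e.def.der.FN} for the cavity free energy $\tilde F$), together with Lemma~\ref{l.last_vector} to replace the full overlap by the cavity overlap, the derivative $\partial_q \tilde F^{\M_n,x^\pm_n}_{N^\pm_n,\lambda^\star_n}(t,q)$ tested against a path $\pi$ equals an expression of the form $\E\la \pi(\alpha\wedge\alpha')\cdot R^{\circ,\M_n}_{\cdots}\ra$ evaluated under the cavity Gibbs measure. By Part~\eqref{i.l.approx1.1} of Lemma~\ref{l.approx1}, the overlap array converges in law so that $R_{N^\pm_n,\lambda^\star_n}(\sigma,\sigma')$ (and hence, by the invariance of cascades in Lemma~\ref{l.invar} plus Lemma~\ref{l.last_vector}, the cavity overlap) converges to $p_\pm(\alpha\wedge\alpha')$; passing to the limit gives $\partial_q \tilde F^{\M_n,x^\pm_n}_{N^\pm_n,\lambda^\star_n}(t,q)\to p_\pm$ weakly in $L^2$. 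Comparing the two limits yields $p_\pm = \partial_q f(t,q)$.

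Once $p_+ = p_- = \partial_q f(t,q) =: p$, the sandwich above becomes $f(t,q)=\sP_{\lambda_\infty,t,q}(p) = \sP_{\lambda_\infty,t,q}(\partial_q f(t,q))$, which is the claim. The main obstacle I anticipate is the careful bookkeeping in the second step: matching the test-function pairing $\la\pi,\partial_q\tilde F\ra_{L^2}$ with the Gibbs-average expression and then legitimately passing to the limit in $n$. This requires that the convergence in law of the overlap array (which a priori only controls finitely many bounded continuous test functions of the overlap entries) can be upgraded to convergence of the relevant integral $\E\la\pi(\alpha\wedge\alpha')\cdot R^{\circ,\M_n}_{\cdots}\ra$; one has to invoke Lemma~\ref{l.cvg_overlap} (and approximate $\pi$ by bounded continuous functions and the overlap by a uniformly bounded quantity, so that the convergence in law indeed implies convergence of this pairing), and keep track of the fact that one is working along the subsequences $(N^\pm_n)$ produced by Lemma~\ref{l.approx1}, which depend on the pair $(M_n,\lambda^\star_n)$ of approximating proportions. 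The rest — the sandwich inequality and the final substitution — is routine given the quoted results.
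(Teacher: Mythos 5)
Your proof is correct and takes essentially the same route as the paper: sandwich via Lemma~\ref{l.approx1}~\eqref{i.l.approx1.2}, convergence of derivatives via Proposition~\ref{p.cvg_der_semi_concave} and Remark~\ref{r.cvg_der_semi_concave}, and identification of $p_\pm$ by pairing $\partial_q\tilde F^{\M_n,x^\pm_n}_{N^\pm_n,\lambda^\star_n}(t,q)$ against test functions and invoking the overlap limit of Lemma~\ref{l.approx1}~\eqref{i.l.approx1.1} together with the invariance of cascades. One small detour: Lemma~\ref{l.last_vector} is not needed here (the paper's computation works directly with the full overlap $R_{N^\pm_n,\lambda^\star_n}(\sigma,\sigma')$ under the cavity Gibbs measure, and that lemma is in any case stated for $\la\cdot\ra^{\M,x}$, not $\la\cdot\ra^{\circ,\M,x}$); it enters only later, in the proof of Proposition~\ref{p.crit_pt_2}.
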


\begin{proof}
Let the sequences and parameters be given by Lemma~\ref{l.approx1}. 
Since $\bar F_{N^\pm_n,\lambda^\star_n}$ also converges to $f$ due to Lemma~\ref{l.approx1}, we get from Part~\eqref{i.l.approx1.2} of this lemma that
\begin{align}\label{e.P<f<P}
     \sP_{\lambda_\infty,t,q}(p_-)\leq f(t,q)\leq\sP_{\lambda_\infty,t,q}(p_+).
\end{align}
It remains to identify $p_\pm$.
For brevity, we write $\tilde F^\pm_n = \tilde F^{\M_n,x^\pm_n}_{N^\pm_n,\lambda^\star_n}$ and $\la\cdot\ra^\circ_{\pm,n}=\la \cdot\ra^{\circ,\M_n,x^\pm_n}_{N^\pm_n,\lambda^\star_n}$.
Lemma~\ref{l.approx1} gives that $\bar F^\pm_n$ converges to $f$.
As a consequence of Proposition~\ref{p.cvg_der_semi_concave} and Remark~\ref{r.cvg_der_semi_concave}, we have the convergence of $\partial_q \tilde F^\pm_n(t,q)$ to $\partial_q f(t,q)$.
By a similar computation for~\eqref{e.def.der.FN}, we have that, for every continuous $\pi \in L^\infty([0,1]; \prod_{s\in\sS}S^{\kappa_s})$,
\begin{align*}
    &\la \pi,\,\partial_q \tilde F^\pm_n(t,q)\ra_{L^2} = \E \la \pi(\alpha\wedge\alpha')\cdot R_{N^\pm_n,\lambda^\star_n}(\sigma,\sigma')\ra^\circ_{\pm,n}
\end{align*}
We send $n$ to infinity and apply Lemma~\ref{l.approx1}~\eqref{i.l.approx1.1} to the right-hand side to get
\begin{align*}
    \la \pi,\,\partial_q f(t,q)\ra_{L^2} = \E \la \pi(\alpha\wedge\alpha')\cdot p_\pm(\alpha\wedge\alpha')\ra_\fR \stackrel{\text{L.\ref{l.invar}}}{=}\la \pi , p_\pm\ra_{L^2}
\end{align*}
which implies that $p_\pm = \partial_q f(t,q)$. Inserting this to~\eqref{e.P<f<P}, we get the desired result.
\end{proof}

The following corresponds to~\cite[Proposition~7.2]{HJ_critical_pts} and we apply Lemma~\ref{l.approx2}.

\begin{proposition}[Critical point identification, II]\label{p.crit_pt_2}
Suppose that $(\lambda_N)_{N\in\N}$ converges to some $\lambda_\infty \in \blacktriangle_\infty$ and that $\Ll(\bar F_{N,\lambda_N}\Rr)_{N\in\N}$ converges pointwise to some $f$ along a subsequence $(N_k)_{k\in\N}$. Let $t\geq 0$. If $f(t,\cdot)$ is Gateaux differentiable at some $q\in \mcl Q^\sS_{\infty,\uparrow}(\kappa)$, then
\begin{align}\label{e.d_qf=d_qpsi(...)}
    \partial_q f(t,q) = \partial_q \psi_{\lambda_\infty}(q+t\nabla\xi(\partial_q f(t,q))).
\end{align}
If, in addition, $t>0$, and $f(\cdot,q)$ is differentiable at $t$, then
\begin{align}\label{e.d_tf=intxi(d_qf)}
    \partial_t f(t,q) -\int_0^1\xi\Ll(\partial_qf(t,q)\Rr)=0.
\end{align}
\end{proposition}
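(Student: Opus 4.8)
The plan is to mimic the proof of~\cite[Proposition~7.2]{HJ_critical_pts}, now feeding in the approximation machinery of Lemma~\ref{l.approx2} in place of the direct cavity computation.

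First I would apply Lemma~\ref{l.approx2} with the given $(t,q)$, with $(M_n)_{n\in\N}$ and $(\lambda^\star_n)_{n\in\N}$ chosen once and for all as in~\eqref{e.lambda_star,n}, and with $\seq = (N_k)_{k\in\N}$ the subsequence along which $\bar F_{N,\lambda_N}\to f$. This yields a further subsequence $(N_n)_{n\in\N}$, perturbation parameters $x_n$, and a limiting pair $(p,a)$ with $p\in \mcl Q^\sS_{\infty,\leq\lambda_\infty}(\kappa)$, $a\ge p$, such that the overlap convergence in Part~\eqref{i.l.approx2.1} holds and, since $(\bar F_{N,\lambda_N})_{N\in\seq}$ converges to $f$, both $\bar F^{\M_n,x_n}_{N_n+\M_n,\lambda^\star_n}$ and $\tilde F^{\M_n,x_n}_{N_n,\lambda^\star_n}$ converge pointwise to $f$.

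To prove~\eqref{e.d_qf=d_qpsi(...)}: the key is to identify the limiting path $p$ as $\partial_q f(t,q)$ in two independent ways, and then to read off the relation from Part~\eqref{i.l.approx2.2}. On one hand, by the convergence of derivatives (Proposition~\ref{p.cvg_der_semi_concave} together with Remark~\ref{r.cvg_der_semi_concave}, using the Gateaux differentiability of $f(t,\cdot)$ at $q\in\mcl Q^\sS_{\infty,\uparrow}(\kappa)$), $\partial_q \tilde F^{\M_n,x_n}_{N_n,\lambda^\star_n}(t,q)$ converges in $L^r$ to $\partial_q f(t,q)$ for every $r\in[1,\infty)$; and by the analogue of~\eqref{e.def.der.FN} for the cavity free energy, $\la\pi,\partial_q\tilde F^{\M_n,x_n}_{N_n,\lambda^\star_n}(t,q)\ra_{L^2} = \E\la \pi(\alpha\wedge\alpha')\cdot R_{N_n,\lambda^\star_n}(\sigma,\sigma')\ra^{\circ,\M_n,x_n}_{N_n,\lambda^\star_n}$, whose limit is $\E\la\pi(\alpha\wedge\alpha')\cdot p(\alpha\wedge\alpha')\ra_\fR = \la\pi,p\ra_{L^2}$ by Part~\eqref{i.l.approx2.1} and the invariance of cascades (Lemma~\ref{l.invar}); hence $p=\partial_q f(t,q)$. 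On the other hand, applying the same derivative-convergence argument to $\bar F^{\M_n,x_n}_{N_n+\M_n,\lambda^\star_n}$ and using Lemma~\ref{l.last_vector} (which lets us replace the full overlap $R_{N_n+\M_n,\lambda^\star_n}$ by the cavity overlap $R^{\circ,\M_n}_{N_n+\M_n,\lambda^\star_n}$ in the Gibbs average against $\la\cdot\ra^{\M_n,x_n}$), the left-hand side of Part~\eqref{i.l.approx2.2} equals $\la\pi,\partial_q \bar F^{\M_n,x_n}_{N_n+\M_n,\lambda^\star_n}(t,q)\ra_{L^2}$ up to negligible error, hence converges to $\la\pi,\partial_q f(t,q)\ra_{L^2}$; while Part~\eqref{i.l.approx2.2} says this limit equals $\la\pi,\partial_q\psi_{\lambda_\infty}(q+t\nabla\xi(p))\ra_{L^2}$. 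Combining and using $p=\partial_q f(t,q)$ and that $\pi$ ranges over a dense set, we get~\eqref{e.d_qf=d_qpsi(...)}.

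For~\eqref{e.d_tf=intxi(d_qf)}, assuming moreover $t>0$ and that $f(\cdot,q)$ is differentiable at $t$, I would use the second part of Proposition~\ref{p.cvg_der_semi_concave} (with Remark~\ref{r.cvg_der_semi_concave}) to get $\partial_t \tilde F^{\M_n,x_n}_{N_n,\lambda^\star_n}(t,q)\to\partial_t f(t,q)$; by the analogue of~\eqref{e.def.der.FN}, $\partial_t \tilde F^{\M_n,x_n}_{N_n,\lambda^\star_n}(t,q)=\E\la\xi(R_{N_n,\lambda^\star_n}(\sigma,\sigma'))\ra^{\circ,\M_n,x_n}_{N_n,\lambda^\star_n}$ (up to an $O(1/N_n)$ correction coming from the $\frac{N}{N+M}$ rescaling in~\eqref{e.E[tildeH_N...]var=}, which vanishes), and by Part~\eqref{i.l.approx2.1} and continuity of $\xi$ this converges to $\E\la\xi(p(\alpha\wedge\alpha'))\ra_\fR=\int_0^1\xi(p(r))\d r$; since $p=\partial_q f(t,q)$, this is exactly~\eqref{e.d_tf=intxi(d_qf)}. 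The main obstacle is the bookkeeping around which free energy ($\tilde F$ versus $\bar F^{\M,x}$ at size $N_n$ versus $N_n+\M_n$) carries which derivative identity and verifying that all the discrepancies between them — controlled by Lemmas~\ref{l.lambda_continuity}, \ref{l.NF_N-(N+M)F_(N+M)}, and~\ref{l.F-F^x} — are genuinely negligible in the limit; the use of Lemma~\ref{l.last_vector} to bridge the cavity overlap appearing in Lemma~\ref{l.approx2}\eqref{i.l.approx2.2} with the derivative identity for $\bar F^{\M,x}_{N_n+\M_n}$ is the one slightly delicate point.
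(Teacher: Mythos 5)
Your proposal is correct and follows essentially the same route as the paper's proof: apply Lemma~\ref{l.approx2}, use Proposition~\ref{p.cvg_der_semi_concave} with Remark~\ref{r.cvg_der_semi_concave} to pass derivatives to the limit, identify $p=\partial_q f(t,q)$ via the cavity Gibbs measure $\la\cdot\ra^\circ_n$, and obtain the $\psi$-relation by comparing the derivative of $\bar F^{\M_n,x_n}_{N_n+\M_n,\lambda^\star_n}$ with Lemma~\ref{l.approx2}\eqref{i.l.approx2.2} through Lemma~\ref{l.last_vector}. You merely present the two conclusions in the opposite order, and your explicit remark about the vanishing $O(\M_n/N_n)$ correction in $\partial_t\tilde F_n$ is a slightly more careful accounting than the paper's (Lemma~\ref{l.last_vector} is in fact an exact identity, so the ``negligible error'' phrasing in that step is unnecessary but harmless).
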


\begin{proof}
Denote by $\seq$ the sequence $(N_k)_{k\in\N}$. Let $(N_n)_{n\in\N}$, $(x_n)_{n\in\N}$, $p$, and $a$ be given by Lemma~\ref{l.approx2} corresponding to $\seq$ and $(t,q)$ in the statement of this proposition. 

We first show~\eqref{e.d_tf=intxi(d_qf)}. We write $\tilde F_n = \tilde F^{\M_n,x_n}_{N_n,\lambda^\star_n}$ and $\la\cdot\ra^\circ_n=\la\cdot\ra^{\circ,\M_n,x_n}_{N_n,\lambda^\star_n}$ (appearing in Lemma~\ref{l.approx2}~\eqref{i.l.approx2.1}). The last statement of Lemma~\ref{l.approx2} ensures that $\tilde F_n$ converges to $f$ pointwise. Since $\la\cdot\ra^\circ_n$ is the Gibbs measure associated with $\tilde F_n$ (see~\eqref{e.tilde_F^x_N=} and~\eqref{e.<>^circ=}), a similar computation for~\eqref{e.def.der.FN} gives that, for every bounded continuous $\pi:[0,1]\to \prod_{s\in\sS}\S^{\kappa_s}$,
\begin{align}\label{e.<pi,d_q_tF_n>=...}
    \la \pi,\partial_q\tilde F_n(t,q)\ra_{L^2} = \E \la \pi(\alpha\wedge\alpha')\cdot R\ra^\circ_n,\qquad \partial_t \tilde F_n(t,q) = \E \la \xi(R)\ra^\circ_n
\end{align}
where we used the short hand $R = R_{N_n,\lambda^\star_n}(\sigma,\sigma')$. By Proposition~\ref{p.cvg_der_semi_concave} and Remark~\ref{r.cvg_der_semi_concave}, $\partial_q \tilde F_n(t,q)$ converges to $\partial_q f(t,q)$. Using this and Lemma~\ref{l.approx2}~\eqref{i.l.approx2.1}, we can send $n$ to infinity in the first relation of the above display to get
\begin{align*}
    \la \pi,\partial_qf(t,q)\ra_{L^2} =\E \la \pi(\alpha\wedge\alpha')\cdot p(\alpha\wedge\alpha')\ra_\fR \stackrel{\text{L.\ref{l.invar}}}{=} \la \pi,p\ra_{L^2}.
\end{align*}
Varying $\pi$, we get
\begin{align}\label{e.d_pf(t,q)=p}
    \partial_q f(t,q) = p.
\end{align}
Under the additional assumption of differentiability at $t$, we can use Proposition~\ref{p.cvg_der_semi_concave} and Remark~\ref{r.cvg_der_semi_concave} to get the convergence of $\partial_t \tilde F_n(t,q)$ to $\partial_t f(t,q)$. Sending $n$ to infinity in the second relation in~\eqref{e.<pi,d_q_tF_n>=...}, we get
\begin{align*}
    \partial_t f(t,q) = \E \la \xi(p(\alpha\wedge\alpha')\ra_\fR \stackrel{\text{L.\ref{l.invar}}}{=} \int_0^1 \xi(p(r))\d r
\end{align*}
which along with~\eqref{e.d_pf(t,q)=p} implies~\eqref{e.d_tf=intxi(d_qf)}.

Now, we write $\bar F_n= \bar F^{\M_n,x_n}_{N_n+M_n,\lambda^\star_n}$, which by Lemma~\ref{l.approx2} converges pointwise to $f$. Let $\la\cdot\ra_n=\la\cdot\ra^{\M_n,x_n}_{N_n+\M_n,\lambda^\star_n}$ be the Gibbs measure associated with $\bar F_n$ which appears in Lemma~\ref{l.approx2}~\eqref{i.l.approx2.2}.
Similar to~\eqref{e.def.der.FN}, we can compute
\begin{align*}
    \la \pi,\partial_q\bar F_n(t,q)\ra_{L^2} = \E \la \pi(\alpha\wedge\alpha')\cdot R_{N_n+\M_n,\lambda^\star_n}(\sigma,\sigma')\ra_n
    \\
    \stackrel{\text{L.\ref{l.last_vector}}}{=} \E \la \pi(\alpha\wedge\alpha')\cdot R^{\circ, \M_n}_{N_n+\M_n,\lambda^\star_n}(\sigma,\sigma')\ra_n
\end{align*}
for any bounded continuous $\pi :[0,1]\to \prod_{s\in\sS}\S^{\kappa_s}$, where Lemma~\ref{l.last_vector} is applied with $N_n+\M_n$ and $\M_n$ substituted for $N$ and $\M$ therein.
Again Proposition~\ref{p.cvg_der_semi_concave} and Remark~\ref{r.cvg_der_semi_concave} together give the convergence of $\partial_q \bar F_n(t,q)$ to $\partial_t f(t,q)$. Sending $n$ to infinity and applying Lemma~\ref{l.approx2}~\eqref{i.l.approx2.2}, we get
\begin{align*}
    \la \pi,\partial_q f(t,q)\ra_{L^2} = \la \pi, \partial_q\psi_{\lambda_\infty}(q+t\nabla\xi(p))\ra_{L^2}.
\end{align*}
Varying $\pi$ and inserting~\eqref{e.d_pf(t,q)=p}, we obtain~\eqref{e.d_qf=d_qpsi(...)}.
\end{proof}

\begin{proof}[Proof of Theorem~\ref{t.crit_pt_id}]
Let $p$ and $q'$ be given in the statement. The relation~\eqref{e.d_qf=d_qpsi(...)} in Proposition~\ref{p.crit_pt_2} implies that $(q',p)$ is a critical point defined in~\eqref{e.critical_rel}. The convergence in~\eqref{e.t.crit_pt_rep} follows from Proposition~\ref{p.crit_pt_1} and~\eqref{e.rel_parisi_mcJ}.
\end{proof}

The following corresponds to~\cite[Proposition~7.3]{HJ_critical_pts}.
\begin{proposition}[Critical point representation]\label{p.crit_pt_rep}
Suppose that $(\lambda_N)_{N\in\N}$ converges to some $\lambda_\infty \in \blacktriangle_\infty$ and that $\Ll(\bar F_N\Rr)_{N\in\N}$ converges pointwise to some $f$. For every $(t,q)\in\R_+\times \mcl Q^\sS_2(\kappa)$, there is $p\in \mcl Q^\sS_{\infty,\leq \lambda_\infty}(\kappa)$ such that
\begin{align}\label{e.p.crit_pt_rep}
    f(t,q)=\sP_{\lambda_\infty,t,q}\Ll(p\Rr),\qquad p = \partial_q \psi_{\lambda_\infty}(q+t\nabla\xi(p)).
\end{align}
\end{proposition}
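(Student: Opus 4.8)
The plan is to deduce Proposition~\ref{p.crit_pt_rep} from Proposition~\ref{p.crit_pt_2} together with a density/approximation argument over the points at which $f(t,\cdot)$ is Gateaux differentiable. The point is that Proposition~\ref{p.crit_pt_2} gives the two relations in~\eqref{e.p.crit_pt_rep} (with $p=\partial_q f(t,q)$) only at differentiability points lying in $\mcl Q^\sS_{\infty,\uparrow}(\kappa)$, whereas here we want a representation valid at \emph{every} $(t,q)\in\R_+\times\mcl Q^\sS_2(\kappa)$. So the structure is: (i) first handle $q\in\mcl Q^\sS_{\infty,\uparrow}(\kappa)$ by perturbing $q$ slightly to reach a differentiability point; (ii) then handle general $q\in\mcl Q^\sS_2(\kappa)$ by approximating from within $\mcl Q^\sS_{\infty,\uparrow}(\kappa)$.

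For step (i), fix $t$ and $q\in\mcl Q^\sS_{\infty,\uparrow}(\kappa)$. By Proposition~\ref{p.reg_limit}, $f(t,\cdot)$ is Gateaux differentiable off a Gaussian null set $\mcl N_t$, and $\mcl Q^\sS_{\infty,\uparrow}(\kappa)\setminus\mcl N_t$ is dense in $\mcl Q^\sS_2(\kappa)$; moreover $\mcl Q^\sS_{\infty,\uparrow}(\kappa)$ is open in the relevant sense, so we may pick $q_n\to q$ in $L^2$ (in fact in $L^\infty$ along a suitable sequence, using Lemma~\ref{l.compact_embed_paths} for control) with $q_n\in\mcl Q^\sS_{\infty,\uparrow}(\kappa)$ and $f(t,\cdot)$ Gateaux differentiable at each $q_n$. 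Apply Proposition~\ref{p.crit_pt_2} at each $q_n$: setting $p_n=\partial_q f(t,q_n)$, we get $f(t,q_n)=\sP_{\lambda_\infty,t,q_n}(p_n)$ (via~\eqref{e.d_qf=d_qpsi(...)} and the definition~\eqref{e.sP_lambda,t,q}, noting $\sP_{\lambda_\infty,t,q_n}(p_n)=\mcl J_{\lambda_\infty,t,q_n}(q_n+t\nabla\xi(p_n),p_n)$ equals $f(t,q_n)$ once the critical-point relations hold) and $p_n=\partial_q\psi_{\lambda_\infty}(q_n+t\nabla\xi(p_n))$. By Proposition~\ref{p.F_N_smooth} and Proposition~\ref{p.reg_limit}, $p_n=\partial_q f(t,q_n)\in\mcl Q^\sS_{\infty,\leq\lambda_\infty}(\kappa)$, a set that is bounded in $L^\infty$; so by Lemma~\ref{l.compact_embed_paths} we may extract a subsequence along which $p_n\to p$ pointwise a.e.\ and in $L^1$, with $p\in\mcl Q^\sS_{\infty,\leq\lambda_\infty}(\kappa)$. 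Now pass to the limit: $f(t,q_n)\to f(t,q)$ by continuity of $f$ (Lipschitz, Proposition~\ref{p.F_N_smooth}/\ref{p.reg_limit}); $\sP_{\lambda_\infty,t,q_n}(p_n)\to\sP_{\lambda_\infty,t,q}(p)$ by joint continuity of $(\lambda,q,p)\mapsto\sP$ (from Lemmas~\ref{l.psi^vec_smooth} and~\ref{l.reg_initial}, using $\nabla\xi$ continuous and $q_n\to q$, $p_n\to p$ in $L^1$ with uniform $L^\infty$ bounds); and $\partial_q\psi_{\lambda_\infty}(q_n+t\nabla\xi(p_n))\to\partial_q\psi_{\lambda_\infty}(q+t\nabla\xi(p))$ by the Lipschitz continuity of $\partial_q\psi_{\lambda_\infty}$ in Lemma~\ref{l.psi^vec_smooth}/\ref{l.reg_initial}. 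This gives~\eqref{e.p.crit_pt_rep} for $q\in\mcl Q^\sS_{\infty,\uparrow}(\kappa)$.

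For step (ii), take general $q\in\mcl Q^\sS_2(\kappa)$. Approximate by $q_m\in\mcl Q^\sS_{\infty,\uparrow}(\kappa)$ with $q_m\to q$ in $L^2$ (density of $\mcl Q^\sS_{\infty,\uparrow}(\kappa)$, e.g.\ by adding $\tfrac1m r\,\identity$ and truncating/mollifying). Apply step (i) to each $q_m$ to obtain $p_m\in\mcl Q^\sS_{\infty,\leq\lambda_\infty}(\kappa)$ with $f(t,q_m)=\sP_{\lambda_\infty,t,q_m}(p_m)$ and $p_m=\partial_q\psi_{\lambda_\infty}(q_m+t\nabla\xi(p_m))$. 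Since $\{p_m\}$ is $L^\infty$-bounded, extract along Lemma~\ref{l.compact_embed_paths} a subsequence $p_m\to p$ pointwise a.e.\ and in $L^1$, $p\in\mcl Q^\sS_{\infty,\leq\lambda_\infty}(\kappa)$; then pass to the limit exactly as in step (i), using continuity of $f$, joint continuity of $\sP$, and continuity of $\partial_q\psi_{\lambda_\infty}$, to conclude~\eqref{e.p.crit_pt_rep} at $(t,q)$.

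The main obstacle is the passage to the limit of the critical-point relation $p_n=\partial_q\psi_{\lambda_\infty}(q_n+t\nabla\xi(p_n))$: one needs the convergence $p_n\to p$ to be in a strong enough topology ($L^1$ plus uniform $L^\infty$ bound suffices because $\partial_q\psi_{\lambda_\infty}$ is $16$-Lipschitz in $L^r$ for any $r$, by Lemma~\ref{l.psi^vec_smooth}, and $\nabla\xi$ is uniformly continuous on the relevant compact set), and one must check that $p$ is again a path in $\mcl Q^\sS_{\infty,\leq\lambda_\infty}(\kappa)$ rather than merely in $L^2$ — this is where the compact-embedding Lemma~\ref{l.compact_embed_paths} (giving an increasing-path limit) and the closedness of the pointwise norm bound $|p(r)|\le\lambda_{\infty}$-type constraint are used. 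A secondary point to be careful about is that Proposition~\ref{p.crit_pt_2} requires convergence of $\bar F_{N,\lambda_N}$ only along a subsequence, whereas here we assume full-sequence convergence, so its hypotheses are automatically met; and that $\sP_{\lambda_\infty,t,q_n}(p_n)$ genuinely equals $f(t,q_n)$, which follows by combining~\eqref{e.d_qf=d_qpsi(...)} and~\eqref{e.d_tf=intxi(d_qf)} with the identity~\eqref{e.rel_parisi_mcJ} and the fact that at a critical point $\mcl J_{\lambda_\infty,t,q}(q',p)$ with $q'=q+t\nabla\xi(p)$ reduces to $\sP_{\lambda_\infty,t,q}(p)$ — this last identification can also be extracted directly from Proposition~\ref{p.crit_pt_1}.
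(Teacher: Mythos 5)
Your proposal takes essentially the same approach as the paper (which defers its own detail to the vector-spin analogue): approximate by dense Gateaux-differentiability points in $\mcl Q^\sS_{\infty,\uparrow}(\kappa)$ (Proposition~\ref{p.reg_limit}), apply Propositions~\ref{p.crit_pt_1} and~\ref{p.crit_pt_2} at those points, then pass to the limit via the compactness of Lemma~\ref{l.compact_embed_paths} together with the Lipschitz properties of $f$, $\psi_{\lambda_\infty}$, and $\partial_q\psi_{\lambda_\infty}$.

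Two remarks on the execution. The mid-proof claim that ``$\sP_{\lambda_\infty,t,q_n}(p_n)=\mcl J_{\lambda_\infty,t,q_n}(q_n+t\nabla\xi(p_n),p_n)$ equals $f(t,q_n)$ once the critical-point relations hold'' is not a valid deduction: being a critical point of $\mcl J_{\lambda_\infty,t,q_n}$ carries no information about the \emph{value} attained there, and says nothing about its relation to $f(t,q_n)$. That equality is precisely the content of Proposition~\ref{p.crit_pt_1}, which you invoke correctly only at the very end; the intermediate justification should be discarded in favor of citing that proposition directly. Second, several auxiliary devices are unnecessary and not quite correct as stated: the two-step decomposition is redundant because Proposition~\ref{p.reg_limit} already gives density of $\mcl Q^\sS_{\infty,\uparrow}(\kappa)\setminus\mcl N_t$ in all of $\mcl Q^\sS_2(\kappa)$, so arbitrary $q$ can be approximated directly; the set $\mcl Q^\sS_{\infty,\uparrow}(\kappa)$ is not open in $L^2$ and openness is not needed; the ``upgrade to $L^\infty$ convergence'' via Lemma~\ref{l.compact_embed_paths} is not what that lemma gives, but $L^1$ convergence of $q_n$ and $p_n$ plus the uniform $L^\infty$ bound on $p_n$ suffice for the passage to the limit, given the $L^1$-Lipschitz bounds on $\psi_{\lambda_\infty}$ and $\partial_q\psi_{\lambda_\infty}$. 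Finally, to justify $\partial_q f(t,q_n)\in\mcl Q^\sS_{\infty,\leq\lambda_\infty}(\kappa)$ you should explicitly invoke Proposition~\ref{p.cvg_der_semi_concave} (convergence of derivatives, as the paper itself does), since the bound~\eqref{e.bounds.der.FN} is at finite $N$ with constant $\lambda_N$ and must be passed to the limit along a pointwise-a.e.\ convergent subsequence.
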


The proof is based on Propositions~\ref{p.crit_pt_1} and~\ref{p.crit_pt_2} together with approximation arguments. We prefer to omit the detail since the proof is exactly the same as that for~\cite[Proposition~7.3]{HJ_critical_pts}. We only mention that Propositions~5.3, 5.4, 7.1, and 7.2 in~\cite{HJ_critical_pts} used in that proof correspond to Propositions~\ref{p.reg_limit}, \ref{p.cvg_der_semi_concave}, \ref{p.crit_pt_1}, and~\ref{p.crit_pt_2} here; and Corollary~5.2 there corresponds to Lemma~\ref{l.approx2} here.

\begin{proof}[Proof of Theorem~\ref{t.crit_pt_rep}]
Let $p$ be given in Proposition~\ref{p.crit_pt_rep} and set $q'= q+t\nabla\xi (p)$. The second relation in~\eqref{e.p.crit_pt_rep} ensures that $(q',p)$ is a critical point defined in~\eqref{e.critical_rel}. The first relation in~\eqref{e.p.crit_pt_rep} and~\eqref{e.rel_parisi_mcJ} yield~\eqref{e.t.crit_pt_rep}.
\end{proof}

Recall the Gibbs measure $\la\cdot\ra_{N,\lambda_N}$ from~\eqref{e.<>_N=} associated with the original free energy.
We consider the array of conditional overlaps:
\begin{align}\label{e.cond_overlap}
    R^{l,l'}_{N,\lambda_N,\sigma|\alpha} = \E \la R_{N,\lambda_N}(\sigma^l,\sigma^{l'})\,\Big|\, \alpha^l\wedge\alpha^{l'}\ra_{N,\lambda_N},\quad\forall l,l'\in \N
\end{align}
where the conditional expectation is taken with respect to $\E\la\cdot\ra_{N,\lambda_N}$ (not $\la\cdot\ra_{N,\lambda_N}$).
Also recall $\la\cdot\ra_\fR$ from Section~\ref{s.cascade}. Also, $\sigma|\alpha$ in the subscript of $R^{l,l'}_{N,\lambda_N,\sigma|\alpha}$ is purely symbolic to indicate the conditioning.

The next result is the version of~\cite[Proposition~7.4]{HJ_critical_pts} in the multi-species setting. Recall the Gibbs measure $\la\cdot\ra_{N,\lambda_N}$ as in~\eqref{e.<>_N=}.

\begin{proposition}[Convergence of conditional overlap]\label{p.cvg_cond_overlap}
Suppose that $(\bar F_{N,\lambda_N})_{N\in\N}$ converges pointwise to some $f$ along a subsequence $(N_k)_{k\in\N}$ and let $t\in\R_+$ (here, convergence of $(\lambda_N)_{N\in\N}$ is not assumed).
If $f(t,\cdot)$ is Gateaux differentiable at some $q\in \mcl Q^\sS_{\infty,\uparrow}(\kappa)$, then $\big(R^{l,l'}_{N,\lambda_N,\sigma|\alpha}\big)_{l,l'\in\N:l\neq l'}$ under $\E \la\cdot\ra_{N_k,\lambda_{N_k}}$ (at $(t,q)$) converges in law to $\big(p(\alpha^l\wedge\alpha^{l'})\big)_{l,l'\in\N:l\neq l'}$ under $\E\la\cdot\ra_\fR$, as $k$ tends to infinity, where $p=\partial_q f(t,q)$.
\end{proposition}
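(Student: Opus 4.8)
The plan is to identify the conditional overlap with the derivative $\partial_q\bar F_{N,\lambda_N}(t,q)$ and then upgrade the $L^r$-convergence of derivatives from Proposition~\ref{p.cvg_der_semi_concave} into convergence in law of the overlap array. Fix $(t,q)$ as in the statement and write $\la\cdot\ra_N=\la\cdot\ra_{N,\lambda_N}$ for the Gibbs measure at $(t,q)$. Since for each $N$ the replicated variables $(\sigma^l,\alpha^l)_{l\in\N}$ are exchangeable under $\E\la\cdot\ra_N$, there is a single bounded Borel function $g_N:[0,1]\to\prod_{s\in\sS}\S^{\kappa_s}$ with $R^{l,l'}_{N,\lambda_N,\sigma|\alpha}=g_N(\alpha^l\wedge\alpha^{l'})$ for all $l\neq l'$; concretely $g_N(u)=\E\la R_{N,\lambda_N}(\sigma^1,\sigma^2)\mid\alpha^1\wedge\alpha^2=u\ra_N$ (recall~\eqref{e.cond_overlap}), which lies in $L^2([0,1];\prod_{s\in\sS}\S^{\kappa_s})$ because overlaps are bounded.

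The first step is to show $g_N=\partial_q\bar F_{N,\lambda_N}(t,q)$ in $L^2$. For any test path $\pi\in L^2([0,1];\prod_{s\in\sS}\S^{\kappa_s})$, the first identity in~\eqref{e.def.der.FN}, the tower property applied with the conditioning on $\alpha^1\wedge\alpha^2$ (under which $\pi(\alpha^1\wedge\alpha^2)$ is measurable), and the uniformity of $\alpha^1\wedge\alpha^2$ on $[0,1]$ under $\E\la\cdot\ra_N$ from Lemma~\ref{l.invar} give
\begin{align*}
\la\pi,\partial_q\bar F_{N,\lambda_N}(t,q)\ra_{L^2} &= \E\la\pi(\alpha^1\wedge\alpha^2)\cdot R_{N,\lambda_N}(\sigma^1,\sigma^2)\ra_N \\
&= \E\la\pi(\alpha^1\wedge\alpha^2)\cdot g_N(\alpha^1\wedge\alpha^2)\ra_N = \la\pi,g_N\ra_{L^2}.
\end{align*}
Varying $\pi$ yields the identification, which is consistent with $\partial_q\bar F_{N,\lambda_N}(t,q)\in\mcl Q^\sS_{\infty,\leq\lambda_N}(\kappa)$ from~\eqref{e.bounds.der.FN}, so $g_N$ may be taken right-continuous and increasing.

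Next, since $q\in\mcl Q^\sS_{\infty,\uparrow}(\kappa)\subset\mcl Q^\sS_\uparrow(\kappa)$ and $f(t,\cdot)$ is Gateaux differentiable at $q$, the first part of Proposition~\ref{p.cvg_der_semi_concave} gives $g_{N_k}=\partial_q\bar F_{N_k,\lambda_{N_k}}(t,q)\to\partial_q f(t,q)=p$ in $L^r$ for every $r\in[1,\infty)$, in particular in $L^2$. To transfer this into convergence in law, observe that by Lemma~\ref{l.invar} the law of $(\alpha^l\wedge\alpha^{l'})_{l,l'\in\N}$ is the same under $\E\la\cdot\ra_{N_k,\lambda_{N_k}}$ as under $\E\la\cdot\ra_\fR$, hence the law of $\big(R^{l,l'}_{N_k,\lambda_{N_k},\sigma|\alpha}\big)_{l\neq l'}=\big(g_{N_k}(\alpha^l\wedge\alpha^{l'})\big)_{l\neq l'}$ under $\E\la\cdot\ra_{N_k,\lambda_{N_k}}$ equals that of $\big(g_{N_k}(\alpha^l\wedge\alpha^{l'})\big)_{l\neq l'}$ under $\E\la\cdot\ra_\fR$. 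For each fixed pair $l\neq l'$, $\alpha^l\wedge\alpha^{l'}$ is uniform on $[0,1]$ under $\E\la\cdot\ra_\fR$, so $\E\la|g_{N_k}(\alpha^l\wedge\alpha^{l'})-p(\alpha^l\wedge\alpha^{l'})|^2\ra_\fR=\|g_{N_k}-p\|_{L^2}^2\to 0$; thus $g_{N_k}(\alpha^l\wedge\alpha^{l'})\to p(\alpha^l\wedge\alpha^{l'})$ in probability under $\E\la\cdot\ra_\fR$ for each pair, hence jointly over any finite family of pairs, and therefore $\big(g_{N_k}(\alpha^l\wedge\alpha^{l'})\big)_{l\neq l'}\to\big(p(\alpha^l\wedge\alpha^{l'})\big)_{l\neq l'}$ in law. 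Combined with the previous sentence this is exactly the claimed convergence.

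The computations above are routine; the points deserving care are the exchangeability/conditioning argument producing a single function $g_N$ with $R^{l,l'}_{N,\lambda_N,\sigma|\alpha}=g_N(\alpha^l\wedge\alpha^{l'})$, and the passage from the pairing identity~\eqref{e.def.der.FN} to the pointwise identification $g_N=\partial_q\bar F_{N,\lambda_N}(t,q)$ via Lemma~\ref{l.invar}. The only substantial input — the convergence of the $q$-derivatives along the subsequence — is Proposition~\ref{p.cvg_der_semi_concave}, which itself rests on the uniform local semi-concavity of Proposition~\ref{p.semi-concave}; thus the present argument adds essentially no new difficulty beyond bookkeeping, and notably does not require $(\lambda_N)_{N\in\N}$ to converge.
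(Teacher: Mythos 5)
Your proof is correct and follows exactly the route the paper intends (and credits to \cite[Proposition~7.4]{HJ_critical_pts}): first establish the identity $R^{l,l'}_{N,\lambda_N,\sigma|\alpha}=\partial_q\bar F_{N,\lambda_N}(t,q)(\alpha^l\wedge\alpha^{l'})$ — which the paper records as Lemma~\ref{l.rep_cond_overlap} — via~\eqref{e.def.der.FN} and Lemma~\ref{l.invar}, then apply Proposition~\ref{p.cvg_der_semi_concave} for the $L^r$-convergence of derivatives, and finally transfer to convergence in law through the invariance of cascades. The only differences are presentational: you spell out details the paper outsources to the cited reference, but no new ideas or gaps appear.
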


This result does not reply on Propositions~\ref{p.crit_pt_1} or~\ref{p.crit_pt_2}. The same proof for~\cite[Proposition~7.4]{HJ_critical_pts} works here. There, Propositions~4.8, 5.4 and display~(5.7) correspond to Lemma~\ref{l.invar}, Proposition~\ref{p.cvg_der_semi_concave}, and~\eqref{e.def.der.FN} here.

Before preceding, we extract a useful representation for $R^{l,l'}_{N,\lambda_N,\sigma|\alpha}$ (defined in~\eqref{e.cond_overlap}) from the proof of~\cite[Proposition~7.4]{HJ_critical_pts}. The following rephrases~\cite[(7.10)]{HJ_critical_pts}.

\begin{lemma}[Representation of conditional overlap]\label{l.rep_cond_overlap}
For every $N\in\N$, $\lambda_N\in \blacktriangle_N$, and $(t,q)\in\R_+\times \mcl Q^\sS_2(\kappa)$, we write $p_N=\partial_q \bar F_{N,\lambda_N}(t,q)$ as an element in $\mcl Q^\sS_\infty(\kappa)$. Then, we have
\begin{align*}
    R^{l,l'}_{N,\lambda_N,\sigma|\alpha} = p_N(\alpha^l\wedge\alpha^{l'})
\end{align*}
for every $l,l'\in\N$ with $l\neq l'$, a.s.\ under $\E\la\cdot\ra_{N,\lambda_N}$.
\end{lemma}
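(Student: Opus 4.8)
The plan is to identify $p_N(\alpha^1\wedge\alpha^2)$ with a version of the conditional expectation defining $R^{1,2}_{N,\lambda_N,\sigma|\alpha}$, and then to invoke exchangeability to handle every pair $(l,l')$ at once: under $\E\la\cdot\ra_{N,\lambda_N}$ the family $(\sigma^l,\alpha^l)_{l\in\N}$ is exchangeable, so relabelling to $(1,2)$ changes nothing. The only ingredients are the first identity in \eqref{e.def.der.FN}, the inclusion $p_N=\partial_q\bar F_{N,\lambda_N}(t,q)\in\mcl Q^\sS_\infty(\kappa)$ from \eqref{e.bounds.der.FN} (so that $p_N$ is bounded), and the invariance of cascades (Lemma~\ref{l.invar}), which makes $Q:=\alpha^1\wedge\alpha^2$ uniform on $[0,1]$ under $\E\la\cdot\ra_{N,\lambda_N}$.

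First I would combine these as follows. For every $\pi\in L^2([0,1];\prod_{s\in\sS}\S^{\kappa_s})$, \eqref{e.def.der.FN} reads $\la\pi,p_N\ra_\cH=\E\la\pi(Q)\cdot R_{N,\lambda_N}(\sigma^1,\sigma^2)\ra_{N,\lambda_N}$. On the left, uniformity of $Q$ gives $\la\pi,p_N\ra_\cH=\int_0^1\pi(r)\cdot p_N(r)\,\d r=\E\la\pi(Q)\cdot p_N(Q)\ra_{N,\lambda_N}$. On the right, $\pi(Q)$ is a function of $Q$, so the tower property yields $\E\la\pi(Q)\cdot R_{N,\lambda_N}(\sigma^1,\sigma^2)\ra_{N,\lambda_N}=\E\la\pi(Q)\cdot R^{1,2}_{N,\lambda_N,\sigma|\alpha}\ra_{N,\lambda_N}$. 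Writing $R^{1,2}_{N,\lambda_N,\sigma|\alpha}=g(Q)$ for a bounded measurable $g$ (Doob--Dynkin, using that the conditional overlap is a function of $Q$ and is bounded), subtracting the two displays gives $\E\la\pi(Q)\cdot(p_N(Q)-g(Q))\ra_{N,\lambda_N}=0$, and one more use of uniformity turns this into $\int_0^1\pi(r)\cdot(p_N(r)-g(r))\,\d r=0$ for every $\pi$. Hence $p_N=g$ Lebesgue-a.e.\ on $[0,1]$, and since $Q$ is uniform this upgrades to $R^{1,2}_{N,\lambda_N,\sigma|\alpha}=g(Q)=p_N(Q)=p_N(\alpha^1\wedge\alpha^2)$ a.s.\ under $\E\la\cdot\ra_{N,\lambda_N}$.

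Two small points require a remark rather than real work. Since \eqref{e.def.der.FN} only tests against symmetric-matrix-valued $\pi$, the argument a priori pins down only the symmetric part of the matrix $R^{1,2}_{N,\lambda_N,\sigma|\alpha}$; but exchanging the two replicas preserves $\E\la\cdot\ra_{N,\lambda_N}$ and $Q$ while transposing $R_{N,\lambda_N}(\sigma^1,\sigma^2)$, so the conditional overlap is symmetric and the identification is for the full matrix. Also, \eqref{e.def.der.FN} is quoted for $q\in\mcl Q^\sS_\infty(\kappa)$, whereas the lemma allows $q\in\mcl Q^\sS_2(\kappa)$; this is bridged by the $L^r$-Lipschitz continuity of $q\mapsto\partial_q\bar F_{N,\lambda_N}(t,q)$ in \eqref{e.continuity.der.FN} together with continuity of the associated Gibbs averages, just as in \cite{HJ_critical_pts}. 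I do not anticipate a genuine obstacle here: the statement simply transcribes \cite[(7.10)]{HJ_critical_pts} (from the proof of \cite[Proposition~7.4]{HJ_critical_pts}) to the multi-species notation, and the one load-bearing idea is the use of Lemma~\ref{l.invar} to pass between the $L^2([0,1])$ pairing and the Gibbs average and to convert ``a.e.\ on $[0,1]$'' into ``a.s.''.
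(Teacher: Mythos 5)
Your argument is correct and is exactly the one the paper delegates to \cite[(7.10)]{HJ_critical_pts}: test the derivative formula \eqref{e.def.der.FN} against arbitrary symmetric-matrix-valued $\pi$, use the invariance of cascades (Lemma~\ref{l.invar}) to convert the $L^2([0,1])$ pairing into a Gibbs average of $\pi(Q)\cdot p_N(Q)$, and identify the two via the tower property; the remarks on symmetry of the conditional overlap under replica exchange and on extending from $\mcl Q^\sS_\infty(\kappa)$ to $\mcl Q^\sS_2(\kappa)$ via \eqref{e.continuity.der.FN} correctly cover the only loose ends.
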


Next, we describe the convergence of the overlap when there is a small perturbation.

For $N\in\N$ and $\lambda_N\in \blacktriangle_N$, let $\hat H_N(\sigma)$ be the Hamiltonian with quadratic interaction that was introduced in~\eqref{e.hatH_N}.
Recall the definition of the Hamiltonian $H^{t,q}_N(\sigma)$ in~\eqref{e.H^t,q_N=}. For every $(t,\hat t,q)\in \R_+\times \R_+\times \mcl Q^\sS_\infty(\kappa)$, we consider
\begin{align}\label{e.def.FN.t.hat}
\begin{split}
    &\hat F_{N,\lambda_N}\Ll(t,\hat t, q\Rr)
    \\
    &= - \frac{1}{N}\log \iint  \exp\Ll( H^{t,q}_N(\sigma,\alpha)+ \sqrt{2\hat t}\hat H_N(\sigma) - \hat t N \Ll|R_{N,\lambda_N}(\sigma,\sigma)\Rr|^2\Rr)\d P_{N,\lambda_N}(\sigma)\d \fR(\alpha).
\end{split}
\end{align}
We denote the associated Gibbs measure by $\la\cdot\ra_{N,\lambda_N,\hat t}$, where we omit the dependence on $(t,q)$. 
Let $\lambda_\infty\in \blacktriangle_\infty$ and recall the functional $\hat{\mcl J}_{\lambda_\infty, t,\hat t, q}(q',p)$ defined for $q'\in \mcl Q^\sS_2(\kappa)$ and $p\in L^2([0,1],\prod_{s\in\sS}\S^{\kappa_s})$, which was introduced previously in~\eqref{e.hatJ=}.
The following result is an adaption of~\cite[Proposition~7.5]{HJ_critical_pts}. The most interesting is the third part.

\begin{proposition}[Convergence of overlap under perturbation]\label{p.cvg_overlap_pert}
Suppose that $(\lambda_N)_{N\in\N}$ converges to some $\lambda_\infty \in \blacktriangle_\infty$ and that $\Ll(\hat F_N\Rr)_{N\in\N}$ converges pointwise to some $f$ along a subsequence $(N_k)_{k\in\N}$. Then, for each $t\geq0$, the function $f(t,\cdot,\cdot) :\R_+\times \mcl Q^\sS_2(\kappa)\to\R$ is Gateaux differentiable (jointly in the two variables) on a subset of $\R_+\times \mcl Q^\sS_{\infty,\uparrow}(\kappa)$ that is dense in $\R_+\times \mcl Q^\sS_2(\kappa)$. Moreover, for every $\hat t\geq 0$ and every $q\in\mcl Q^\sS_{\infty,\uparrow}(\kappa)$ of Gateaux differentiability of $f(t,\hat t,\cdot)$, the following holds for $p=\partial _q f(t,\hat t,q)$ and $q'= q + t\nabla\xi(p)+2\hat t p$:
\begin{enumerate}
    \item $p=\partial_q \psi(q')$;
    \item if $(N_k)_{k\in\N}$ is the full sequence $(N)_{N\in\N}$, then
    \begin{align*}
        \lim_{N\to\infty}\hat F_{N,\lambda_N}\Ll(t,\hat t,q\Rr)= \hat{\mcl J}_{\lambda_\infty, t,\hat t, q}(q',p);
    \end{align*}
    \item if $\hat t>0$ and $f(t,\cdot,q)$ is differentiable at $\hat t$, then $\Ll(R_{N_k,\lambda_{N_k}}(\sigma^l,\sigma^{l'})\Rr)_{l,l'\in\N:l\neq l'}$ under $\E \la \cdot\ra_{N_k,\lambda_{N_k},\hat t}$ converges in law to $\Ll(p(\alpha^l\wedge\alpha^{l'})\Rr)_{l,l'\in\N:l\neq l'}$ under $\E\la\cdot\ra_\fR$, as $k$ tends to infinity.
\end{enumerate}
\end{proposition}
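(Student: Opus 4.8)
The plan is to view $\hat F_{N,\lambda_N}(t,\hat t,\cdot)$ as a free energy of exactly the same type as $\bar F_{N,\lambda_N}$, merely carrying the extra parameter $\hat t$. Writing the Gaussian part of $H^{t,q}_N$ in~\eqref{e.H^t,q_N=} together with $\sqrt{2\hat t}\,\hat H_N(\sigma)$ from~\eqref{e.def.FN.t.hat}, the combined centered Gaussian field $\sqrt{2t}H_N(\sigma)+\sqrt{2\hat t}\,\hat H_N(\sigma)$ has covariance $2N\bigl(t\,\xi(R_{N,\lambda_N}(\sigma,\sigma'))+\hat t\,|R_{N,\lambda_N}(\sigma,\sigma')|^2\bigr)$, while the terms $-tN\xi(R_{N,\lambda_N}(\sigma,\sigma))-\hat t N|R_{N,\lambda_N}(\sigma,\sigma)|^2$ are exactly the matching self-overlap correction. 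Consequently every result of Sections~\ref{s.prelim}--\ref{s.cavity_and_proofs} transfers verbatim to $\hat F_{N,\lambda_N}$: it is Lipschitz and $(\mcl Q^\sS_2(\kappa))^*$-increasing in $q$, locally semi-concave jointly in $(t,\hat t,q)$ (as in Proposition~\ref{p.semi-concave} and Remark~\ref{r.cvg_der_semi_concave}), its derivatives converge along subsequences on which it converges (Proposition~\ref{p.cvg_der_semi_concave}), and — by the computation behind~\eqref{e.def.der.FN} together with Gaussian integration by parts — one has $\partial_{\hat t}\hat F_{N,\lambda_N}(t,\hat t,q)=\E\la|R_{N,\lambda_N}(\sigma,\sigma')|^2\ra_{N,\lambda_N,\hat t}$. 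Likewise the cavity Lemmas~\ref{l.cavity1},~\ref{l.cavity2} and the approximation Lemmas~\ref{l.approx1},~\ref{l.approx2} hold for $\hat F$ with $t\int_0^1\theta(p)$ replaced by $t\int_0^1\theta(p)+\hat t\int_0^1|p|^2$, and with the additional bookkeeping that $\partial_{\hat t}$ of the relevant cavity free energy converges, along the selected subsequence, to $\int_0^1|p|^2$. Finally the density statement follows from the Gaussian-null-set part of Proposition~\ref{p.reg_limit} applied in $\R\times\R\times L^2([0,1],\prod_{s\in\sS}\S^{\kappa_s})$.

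Given this dictionary, Parts~(1) and~(2) would be immediate transcriptions. For Part~(1), $p=\partial_q\psi_{\lambda_\infty}(q')$ with $q'=q+t\nabla\xi(p)+2\hat t p$ is the $\hat t$-augmented Proposition~\ref{p.crit_pt_2}, using that the gradient of $a\mapsto t\,\xi(a)+\hat t\,|a|^2$ at $p$ equals $t\nabla\xi(p)+2\hat t p=q'-q$. For Part~(2), the $\hat t$-augmented Proposition~\ref{p.crit_pt_rep} gives, when $\hat F_{N,\lambda_N}$ converges along the full sequence, that $\hat f(t,\hat t,q)=\psi_{\lambda_\infty}(q')-t\int_0^1\theta(p)-\hat t\int_0^1|p|^2$ with $p=\partial_q\psi_{\lambda_\infty}(q')$, and a short computation as for~\eqref{e.rel_parisi_mcJ} identifies the right-hand side with $\hat{\mcl J}_{\lambda_\infty,t,\hat t,q}(q',p)$ from~\eqref{e.hatJ=}.

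For Part~(3), set $p=\partial_qf(t,\hat t,q)$. The $\hat t$-augmented Proposition~\ref{p.cvg_cond_overlap} shows that the conditional overlaps defined as in~\eqref{e.cond_overlap} but relative to $\E\la\cdot\ra_{N_k,\lambda_{N_k},\hat t}$ converge in law to $\big(p(\alpha^l\wedge\alpha^{l'})\big)_{l\neq l'}$ under $\E\la\cdot\ra_\fR$. Since the Gibbs measure is exchangeable in $(\sigma^l)_{l}$ and all overlaps are bounded, to conclude it then suffices, arguing as in the ``$\Longrightarrow$'' direction of Lemma~\ref{l.cvg_overlap}, to show $\E\la|R_{N_k,\lambda_{N_k}}(\sigma^1,\sigma^2)-p(\alpha^1\wedge\alpha^2)|^2\ra_{N_k,\lambda_{N_k},\hat t}\to0$. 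Expanding the square: the last term equals $\int_0^1|p(u)|^2\,\d u$ by Lemma~\ref{l.invar}; the cross term equals, after conditioning on $\alpha^1\wedge\alpha^2$ inside $\E\la\cdot\ra$, the expectation of the conditional overlap dotted with $p(\alpha^1\wedge\alpha^2)$, and hence converges to $\E\la|p(\alpha^1\wedge\alpha^2)|^2\ra_\fR=\int_0^1|p|^2$ by the conditional-overlap convergence and Lemma~\ref{l.invar}; and the first term equals $\partial_{\hat t}\hat F_{N_k,\lambda_{N_k}}(t,\hat t,q)$, which converges to $\partial_{\hat t}f(t,\hat t,q)$ by the differentiability assumption and Proposition~\ref{p.cvg_der_semi_concave}. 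It remains to identify $\partial_{\hat t}f(t,\hat t,q)=\int_0^1|p|^2$; this is the $\hat t$-analog of~\eqref{e.d_tf=intxi(d_qf)} and follows from the $\hat t$-augmented cavity/approximation lemmas exactly as $\partial_tf=\int_0^1\xi(\partial_qf)$ does. With all three terms equal, the $L^2$-limit vanishes and Part~(3) follows.

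The step I expect to be the main obstacle is precisely this last identification of $\partial_{\hat t}f=\int_0^1|p|^2$ in the irrational case. In the rational case it is part of the cavity computation imported from~\cite{HJ_critical_pts}, but for a general $\lambda_\infty$ one must rerun the approximation of Section~\ref{s.approx} while additionally tracking $\partial_{\hat t}\tilde F^{\M_n,x_n}_{N_n,\lambda^\star_n}(t,\hat t,q)=\E\la|R_{N_n,\lambda^\star_n}(\sigma,\sigma')|^2\ra^{\circ,\M_n,x_n}_{N_n,\lambda^\star_n}$; the key point is that the overlap-array convergence already established in the Lemma~\ref{l.approx2} analog upgrades, by boundedness, to convergence of this second moment to $\int_0^1|p|^2$. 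Everything else is a routine transcription of the arguments already carried out for $\bar F_{N,\lambda_N}$, so the details would not be repeated.
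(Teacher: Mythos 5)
Your proposal is correct and follows essentially the same route as the paper, which simply defers to the proof of \cite[Proposition~7.5]{HJ_critical_pts} with the substitution dictionary Propositions~\ref{p.reg_limit}, \ref{p.crit_pt_1}, \ref{p.crit_pt_2}, \ref{p.cvg_cond_overlap} for Propositions~5.3, 7.1, 7.2, 7.4 there; your "augmented interaction $t\xi+\hat t|\cdot|^2$" dictionary, the second-moment expansion for Part~(3), and the identification $\partial_{\hat t}f=\int_0^1|p|^2$ via the $\hat t$-analog of~\eqref{e.d_tf=intxi(d_qf)} are exactly the ingredients of that argument. You in fact spell out more of the mechanism (including the only genuinely multi-species point, namely rerunning the approximation of Section~\ref{s.approx} for the augmented model) than the paper does.
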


The proof is a straightforward adaption of that for~\cite[Proposition~7.5]{HJ_critical_pts}.
In that proof, Propositions~5.3, 7.1, 7.2, and 7.4 correspond to Propositions~\ref{p.reg_limit}, \ref{p.crit_pt_1}, \ref{p.crit_pt_2}, and~\ref{p.cvg_cond_overlap} here.

\begin{proof}[Proof of Theorem~\ref{t.main3}]
The theorem follows from Proposition~\ref{p.cvg_overlap_pert}.    
\end{proof}

The next result adapts~\cite[Proposition~1.5]{HJ_critical_pts} to the multi-species setting.
\begin{proposition}[Uniqueness of critical point at high temperature]\label{p.uniq_crit_pit_high_temp}
There exists $t_\mathrm{c}>0$ such that for every $t\in[0,t_\mathrm{c})$ and $q\in \mcl Q^\sS_2(\kappa)$, the function $\mcl J_{t,q}$ has a unique critical point in $\mcl Q^\sS_2(\kappa)\times \mcl Q^\sS_2(\kappa)$.
\end{proposition}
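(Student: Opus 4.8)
The plan is to recast ``critical point'' as a fixed-point equation in the single variable $q'$ and then invoke the Banach contraction principle for small $t$. By the definition~\eqref{e.critical_rel}, a pair $(q',p)\in\mcl Q^\sS_2(\kappa)\times\mcl Q^\sS_2(\kappa)$ is a critical point of $\mcl J_{\lambda_\infty,t,q}$ if and only if $p=\partial_q\psi_{\lambda_\infty}(q')$ and $q'=q+t\nabla\xi(p)$, i.e.\ if and only if $q'$ is a fixed point of
\[
\Phi_{t,q}(q')\ :=\ q+t\nabla\xi\big(\partial_q\psi_{\lambda_\infty}(q')\big),
\]
with $p$ then determined by $q'$. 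So it suffices to show that $\Phi_{t,q}$ has at most one fixed point in $\mcl Q^\sS_2(\kappa)$ once $t$ is below a threshold $t_{\mathrm c}$ independent of $q$. (Existence of at least one critical point is the softer half: choose any $\lambda_N\in\blacktriangle_N$ with $\lambda_N\to\lambda_\infty$, extract via Proposition~\ref{p.precompactness} a subsequence along which $\bar F_{N,\lambda_N}$ converges, relabel it, and apply Proposition~\ref{p.crit_pt_rep}, which produces a critical point in $\mcl Q^\sS_2(\kappa)\times\mcl Q^\sS_2(\kappa)$.)

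The heart of the matter is two Lipschitz estimates. First, Lemmas~\ref{l.reg_initial} and~\ref{l.psi^vec_smooth} — in particular the bound~\eqref{e.continuity.der.psi} with $r=2$ applied coordinate-wise, together with $\partial_q\psi_{\lambda_\infty}(q')=(\lambda_s\,\partial_{q_s}\psi^\vec_{\mu_s}(q'_s))_{s\in\sS}$ and $\lambda_s\in[0,1]$ — show that $\partial_q\psi_{\lambda_\infty}$ is globally $16$-Lipschitz from $(\mcl Q^\sS_2(\kappa),|\cdot|_{L^2})$ into $L^2$. Second, by Lemma~\ref{l.reg_initial} the range of $\partial_q\psi_{\lambda_\infty}$ lies in $\mcl Q^\sS_{\infty,\leq\lambda_\infty}(\kappa)$, hence consists of paths valued in the fixed compact convex set $B:=\prod_{s\in\sS}\{a\in\S^{\kappa_s}:|a|\le 1\}$. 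Since $\xi$ is $C^2$, $\nabla\xi$ is Lipschitz on $B$ with constant $L_\xi:=\sup_{a\in B}\|\nabla^2\xi(a)\|_{\mathrm{op}}<\infty$ depending only on $\xi$ and $\kappa$; applying this pointwise in $r$ and integrating gives $|\nabla\xi(p_1)-\nabla\xi(p_2)|_{L^2}\le L_\xi|p_1-p_2|_{L^2}$ for any two paths $p_1,p_2$ valued in $B$. Composing the two bounds yields, for all $q'_1,q'_2\in\mcl Q^\sS_2(\kappa)$,
\[
\big|\Phi_{t,q}(q'_1)-\Phi_{t,q}(q'_2)\big|_{L^2}\ \le\ 16\,t\,L_\xi\,\big|q'_1-q'_2\big|_{L^2}.
\]

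Setting $t_{\mathrm c}:=(16L_\xi)^{-1}$, for every $t\in[0,t_{\mathrm c})$ we have $16tL_\xi<1$, so $\Phi_{t,q}$ is a strict contraction and has at most one fixed point in $\mcl Q^\sS_2(\kappa)$; consequently any two critical points $(q'_1,p_1),(q'_2,p_2)\in\mcl Q^\sS_2(\kappa)\times\mcl Q^\sS_2(\kappa)$ satisfy $q'_1=q'_2$, whence $p_1=\partial_q\psi_{\lambda_\infty}(q'_1)=\partial_q\psi_{\lambda_\infty}(q'_2)=p_2$. The only points requiring genuine care — rather than a serious obstacle — are bookkeeping ones: verifying that the composite Lipschitz constant is uniform in $q$ and in $\lambda_\infty\in\blacktriangle_\infty$, and observing that $\nabla\xi$ is only ever evaluated on the \emph{bounded} range of $\partial_q\psi_{\lambda_\infty}$, so that mere $C^2$-regularity of $\xi$ suffices and no global Lipschitz hypothesis on $\nabla\xi$ is needed. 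All the substantive analytic input — differentiability of $\psi_{\lambda_\infty}$ and the quantitative derivative estimate~\eqref{e.continuity.der.psi}, as well as the existence of critical points via Proposition~\ref{p.crit_pt_rep} — has already been supplied earlier in the paper.
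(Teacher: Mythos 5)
Your proposal is correct and follows essentially the same route as the paper, which defers to the proof of \cite[Proposition~1.5]{HJ_critical_pts}: that argument is precisely a contraction estimate built from the two inputs you identify, namely the range bound \eqref{e.bound.der.psi} and the Lipschitz bound \eqref{e.continuity.der.psi} for $\partial_q\psi^\vec_\nu$ (combined with Lemma~\ref{l.reg_initial}), together with the Lipschitzness of $\nabla\xi$ on the resulting bounded set. The fixed-point reformulation, the uniform constant $t_{\mathrm c}=(16L_\xi)^{-1}$, and the handling of existence are all consistent with the intended proof.
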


The proof is the same as that for~\cite[Proposition~1.5]{HJ_critical_pts}. The estimates in (5.23) and (5.21) used therein correspond to those in Lemma~\ref{l.psi^vec_smooth}, which should be used together with Lemma~\ref{l.reg_initial} here. 

The next result adapts~\cite[Proposition~1.6]{HJ_critical_pts}.

\begin{proposition}[Relevant critical points must be stable]\label{p.crit_pt_stable}
Assume that $(\lambda_N)_{N\in\N}$ converges to $\lambda_\infty$.
For each $n\in\N$, let $(t_n,q_n)\in\R_+\times \mcl Q^\sS_2(\kappa)$ and let $(q'_n,p_n)\in \mcl Q^\sS_2(\kappa)\times \mcl Q^\sS_2(\kappa)$ be a critical point of $\mcl J_{\lambda_\infty,t_n,q_n}$ such that
\begin{align*}
    \lim_{N\to\infty} \bar F_{N,\lambda_N}(t_n,q_n)= \mcl J_{\lambda_\infty, t_n,q_n}(q'_n,p_n).
\end{align*}
Suppose that $(t_n,q_n)$ converges towards $(t,q)\in\R_+\times \mcl Q^\sS_2(\kappa)$. Then, $(q'_n,p_n)_{n\in\N}$ is precompact in $Q^\sS_2(\kappa)\times \mcl Q^\sS_2(\kappa)$. Moreover, any subsequential limit $(q',p)\in Q^\sS_2(\kappa)\times \mcl Q^\sS_2(\kappa)$ is a critical point $\mcl J_{\lambda_\infty,t,q}$ and is such that
\begin{align*}
     \lim_{N\to\infty} \bar F_{N,\lambda_N}(t,q)= \mcl J_{\lambda_\infty, t,q}(q',p).
\end{align*}
\end{proposition}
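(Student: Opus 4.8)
# Proof Proposal for Proposition~\ref{p.crit_pt_stable}

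The plan is to mimic the proof of \cite[Proposition~1.6]{HJ_critical_pts}, adapting the compactness and continuity ingredients to the multi-species setting. First I would extract uniform bounds on the critical points $(q_n', p_n)$. From the critical point relations $q_n = q_n' - t_n \nabla \xi(p_n)$ and $p_n = \partial_q \psi_{\lambda_\infty}(q_n')$, together with Lemma~\ref{l.reg_initial}, we get $p_n = \partial_q \psi_{\lambda_\infty}(q_n') \in \mcl Q^\sS_{\infty,\leq \lambda_\infty}(\kappa)$, so $\sup_n |p_n|_{L^\infty} < \infty$. Then $q_n' = q_n + t_n \nabla \xi(p_n)$, and since $(t_n, q_n) \to (t,q)$ with $q_n \in \mcl Q^\sS_2(\kappa)$ bounded in $L^2$ and $\nabla \xi$ is bounded on the relevant compact set (recall $C_\xi$ from Lemma~\ref{l.lambda_continuity}), we obtain $\sup_n |q_n'|_{L^2} < \infty$. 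Applying the compact embedding Lemma~\ref{l.compact_embed_paths}, along a subsequence $p_n \to p$ and $q_n' \to q'$ both pointwise a.e.\ and in $L^{r'}$ for every $r' \in [1,\infty)$, with $p \in \mcl Q^\sS_{\infty,\leq\lambda_\infty}(\kappa)$ and $q' \in \mcl Q^\sS_2(\kappa)$. This establishes precompactness.

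Next I would check that the limiting pair $(q',p)$ is a critical point of $\mcl J_{\lambda_\infty, t, q}$. Passing to the limit in $q_n = q_n' - t_n \nabla \xi(p_n)$ is immediate once we know $\nabla \xi(p_n) \to \nabla \xi(p)$ in (say) $L^1$, which follows from pointwise a.e.\ convergence of $p_n$, the smoothness of $\xi$, and dominated convergence since all $p_n$ take values in a fixed compact set; so $q = q' - t\nabla\xi(p)$. For the second relation $p = \partial_q \psi_{\lambda_\infty}(q')$, I would use the continuity of the Fréchet derivative $\partial_q \psi_{\lambda_\infty}$ furnished by Lemma~\ref{l.reg_initial} together with Lemma~\ref{l.psi^vec_smooth} (specifically the Lipschitz bound \eqref{e.continuity.der.psi}): from $q_n' \to q'$ in $L^1$ we get $\partial_q \psi_{\lambda_\infty}(q_n') \to \partial_q \psi_{\lambda_\infty}(q')$ in $L^1$, and since the left side equals $p_n \to p$, we conclude $p = \partial_q \psi_{\lambda_\infty}(q')$. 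Thus $(q',p)$ is a critical point of $\mcl J_{\lambda_\infty,t,q}$ by \eqref{e.critical_rel}.

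Finally I would pass to the limit in the value identity $\lim_{N\to\infty} \bar F_{N,\lambda_N}(t_n,q_n) = \mcl J_{\lambda_\infty, t_n, q_n}(q_n', p_n)$. For the left-hand side: the limit $f(t,\cdot)$ exists (we may invoke Proposition~\ref{p.precompactness}/Proposition~\ref{p.reg_limit} along the full sequence, or simply note that $\bar F_{N,\lambda_N}$ converges pointwise under the standing hypothesis) and is Lipschitz jointly in $(t,q)$ by Proposition~\ref{p.F_N_smooth}, so $\bar F_{N,\lambda_N}(t_n,q_n) \to f(t_n,q_n) \to f(t,q) = \lim_N \bar F_{N,\lambda_N}(t,q)$. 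For the right-hand side, I would use the explicit form \eqref{e.mcJ=}: $\mcl J_{\lambda_\infty, t_n, q_n}(q_n', p_n) = \psi_{\lambda_\infty}(q_n') + \la p_n, q_n - q_n'\ra_{L^2} + t_n \int_0^1 \xi(p_n)$. The first term converges by the Lipschitz continuity of $\psi_{\lambda_\infty}$ (Lemma~\ref{l.reg_initial}); the inner-product term converges because $p_n \to p$ and $q_n - q_n' \to q - q'$ both in $L^2$ (the $p_n$ are uniformly bounded, so $L^1$ convergence plus boundedness upgrades appropriately); and the last term converges by the $L^1$ convergence of $p_n$ together with boundedness of $\xi$ on the compact value range. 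Hence $\mcl J_{\lambda_\infty, t_n, q_n}(q_n', p_n) \to \mcl J_{\lambda_\infty, t, q}(q', p)$, and combining gives $\lim_{N\to\infty}\bar F_{N,\lambda_N}(t,q) = \mcl J_{\lambda_\infty, t, q}(q', p)$, as desired.

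The main obstacle I anticipate is bookkeeping the modes of convergence carefully: the $q_n'$ and $q_n$ live in $L^2$ (not $L^\infty$), so I must justify $L^2$-convergence of $q_n - q_n'$ rather than merely $L^1$; this should follow since $q_n \to q$ in $L^2$ by hypothesis and $q_n' - q_n = t_n\nabla\xi(p_n)$ converges in $L^2$ (indeed in $L^\infty$) because the $p_n$ are uniformly bounded and $\nabla\xi$ is uniformly continuous on the compact range. The rest is a matter of invoking the already-established regularity lemmas (Lemmas~\ref{l.psi^vec_smooth}, \ref{l.reg_initial}, Proposition~\ref{p.F_N_smooth}) and the compact embedding (Lemma~\ref{l.compact_embed_paths}); no genuinely new argument is needed beyond what appears in \cite{HJ_critical_pts}.
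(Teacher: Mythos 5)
Your proposal follows exactly the paper's route: the paper defers to the proof of \cite[Proposition~1.6]{HJ_critical_pts}, which relies on the uniform $L^\infty$ bound on $p_n$ from the second critical-point relation, the compact embedding of Lemma~\ref{l.compact_embed_paths}, the Lipschitz estimates of Lemma~\ref{l.psi^vec_smooth}/Lemma~\ref{l.reg_initial} and Proposition~\ref{p.F_N_smooth} — precisely the ingredients you deploy. The only quibbles are cosmetic (you over-claim $L^{r'}$ for all $r'<\infty$ for $q_n'$ when $L^2$ suffices and comes more directly from $q_n'=q_n+t_n\nabla\xi(p_n)$, and $\nabla\xi(p_n)\to\nabla\xi(p)$ need only be in $L^2$, not $L^\infty$); the argument is otherwise correct and matches the paper's.
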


Again, the proof is the same. Proposition~3.1, Lemma~3.4, and Corollary~5.2 used there correspond to Proposition~\ref{p.F_N_smooth} (Lipschitzness), Lemma~\ref{l.compact_embed_paths}, and Lemma~\ref{e.psi^vec=} (together with Lemma~\ref{l.reg_initial}).

\section{Results for convex models}
\label{s.convex}

We apply results in Section~\ref{s.crti_pt_and_rel_results} to the case where $\xi$ is convex. The results proved here correspond to those in~\cite[Section~8]{HJ_critical_pts}. In particular, we prove Theorems~\ref{t.convex} and~\ref{t.convex_2}.

Recall $\mcl K$ from~\eqref{e.mcl_K=} and $\sP_{\lambda,t,q}$ from~\eqref{e.sP_lambda,t,q}.
The following is the version of~\cite[Proposition~8.1]{HJ_critical_pts} for the multi-species setting.

\begin{proposition}[Parisi formula for enriched model]\label{p.mp-parisi}
Let $(\lambda_N)_{N\in\N}$ converge to some $\lambda_\infty \in\blacktriangle_\infty$. 
If $\xi$ is convex on $\prod_{s\in\sS}\S^{\kappa_s\times\kappa_s}_+$, then we have, for every $(t,q)\in\R_+\times \mcl Q^\sS_2(\kappa)$,
\begin{align}\label{e.mp_parisi}
    \lim_{N\to\infty} \bar F_{N,\lambda_N}(t,q) =\sup_{p\in \mcl Q^\sS(\kappa):\:\exists a \in \mcl K,\, a\geq p}\sP_{\lambda_\infty,t,q}(p) =\sup_{p\in \mcl Q^\sS_\infty(\kappa)}\sP_{\lambda_\infty,t,q}(p).
\end{align}
\end{proposition}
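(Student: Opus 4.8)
The plan is to prove Proposition~\ref{p.mp-parisi} by combining the convexity of $\xi$ with the critical-point representation (Proposition~\ref{p.crit_pt_rep}) already established, following the scheme of~\cite[Proposition~8.1]{HJ_critical_pts}. Since $\bar F_{N,\lambda_N}$ is precompact (Proposition~\ref{p.precompactness}), it suffices to show that \emph{every} subsequential limit $f$ equals the claimed supremum; passing to a subsequence, assume $\bar F_{N,\lambda_N}\to f$ pointwise. The proof then splits into a lower bound and a matching upper bound, and a separate step identifying the two suprema in~\eqref{e.mp_parisi}.

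First, I would establish the lower bound $f(t,q)\geq \sup_p \sP_{\lambda_\infty,t,q}(p)$, where the supremum is over $p\in\mcl Q^\sS_\infty(\kappa)$. Fix $p\in\mcl Q^\sS_\infty(\kappa)$ and set $q_p = q + t\nabla\xi(p)$. Since $\xi$ is convex on $\prod_{s}\S^{\kappa_s}_+$ and $\psi_{\lambda_\infty}$ has derivative valued in $\mcl Q^\sS_{\infty,\leq\lambda_\infty}(\kappa)\subset \prod_s\S^{\kappa_s}_+$ (Lemma~\ref{l.reg_initial}), the functional $p'\mapsto \mcl J_{\lambda_\infty,t,q}(q_p, p')$ evaluated appropriately, together with the convexity inequality $\xi(p')\geq \xi(p) + \nabla\xi(p)\cdot(p'-p)$, gives $\mcl J_{\lambda_\infty,t,q}(q_p,p)\leq \mcl J_{\lambda_\infty,t,q}(q_p,p')$ for the relevant comparison; more directly, I would use that $\sP_{\lambda_\infty,t,q}(p) = \mcl J_{\lambda_\infty,t,q}(q_p,p)$ by~\eqref{e.rel_parisi_mcJ} and invoke the Parisi-type lower bound: evaluating the Hamilton--Jacobi functional at the explicit "characteristic" associated with $p$ yields a lower bound for the limit free energy. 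Concretely, since by Proposition~\ref{p.crit_pt_rep} $f(t,q) = \sP_{\lambda_\infty,t,q}(p_\star)$ for some critical $p_\star$, and since at a critical point $\sP$ is a sup over all $p$ when $\xi$ is convex (this is the standard Parisi variational characterization: the critical point of the convex problem is the maximizer), we get $f(t,q)\geq \sP_{\lambda_\infty,t,q}(p)$ for all $p$. The cleanest route is: take the approximation sequences from Lemma~\ref{l.approx1}, whose limits $p_\pm$ satisfy $\sP_{\lambda_\infty,t,q}(p_-)\leq f(t,q)\leq \sP_{\lambda_\infty,t,q}(p_+)$, and then show under convexity that $\sP_{\lambda_\infty,t,q}(p_+) = \sup_{p} \sP_{\lambda_\infty,t,q}(p)$, because $p_+$ is a critical point (by Proposition~\ref{p.crit_pt_rep} applied after the approximation, or directly via the relation $p_+ = \partial_q\psi_{\lambda_\infty}(q+t\nabla\xi(p_+))$) and convexity of $\xi$ forces critical points of $\sP$ to be global maxima.

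The key analytic fact I would isolate as a lemma: if $\xi$ is convex on $\prod_s\S^{\kappa_s}_+$ and $p$ satisfies the fixed-point relation $p = \partial_q\psi_{\lambda_\infty}(q+t\nabla\xi(p))$ with $p\in\mcl Q^\sS_{\infty,\leq\lambda_\infty}(\kappa)$, then $\sP_{\lambda_\infty,t,q}(p)\geq \sP_{\lambda_\infty,t,q}(p')$ for every $p'\in\mcl Q^\sS_\infty(\kappa)$. This is proved by writing $\sP_{\lambda_\infty,t,q}(p') - \sP_{\lambda_\infty,t,q}(p)$, using the convexity inequality for $\xi$ (and hence for $\theta$ via $\theta = a\cdot\nabla\xi(a)-\xi(a)$, which behaves like a Legendre transform), the Lipschitz/monotonicity and concavity-type bounds on $\psi^\vec_\nu$ from Lemma~\ref{l.psi^vec_smooth}, and integration by parts against the dual cone structure $\big(\mcl Q^\sS_2(\kappa)\big)^*$. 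The main obstacle is precisely this inequality: one must handle the matrix-valued, path-valued setting carefully, making sure the convexity of $\xi$ on positive semidefinite matrices (rather than on the whole space) suffices — this is where $a\geq p$ with $a\in\mcl K$ enters, guaranteeing that the relevant increments stay in the cone where convexity holds. I expect this to mirror~\cite[Proof of Proposition~8.1]{HJ_critical_pts} with $\prod_s\S^{\kappa_s}$ replacing the single matrix space, so the adaptation is notation-heavy but structurally identical.

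Finally, I would prove the equality of the two suprema in~\eqref{e.mp_parisi}: the supremum over $\{p\in\mcl Q^\sS(\kappa):\exists a\in\mcl K, a\geq p\}$ equals the supremum over $\mcl Q^\sS_\infty(\kappa)$. One inclusion is immediate since any $p$ in the constrained set with $a\geq p$, $a\in\mcl K$ is automatically bounded, hence in $\mcl Q^\sS_\infty(\kappa)$, giving $\leq$. For the reverse, I would argue that $\sP_{\lambda_\infty,t,q}(p)$ for unbounded or "large" $p$ cannot exceed the constrained sup, because $\psi_{\lambda_\infty}$ only depends on $\nabla\xi(p)$ through the effective field which saturates: the derivative $\partial_q\psi_{\lambda_\infty}$ is valued in $\mcl Q^\sS_{\infty,\leq\lambda_\infty}(\kappa)$ and any maximizing sequence can be truncated/projected into $\{p: \exists a\in\mcl K, a\geq p\}$ without decreasing $\sP$ — this uses the $\theta$ term penalizing large $p$ together with the structure of $\mcl K$ as the closed convex hull of $\tau\tau^\intercal$. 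Combined with Lemma~\ref{l.approx1}(2), which already produces a maximizer $p_+$ lying in $\mcl Q^\sS_{\infty,\leq\lambda_\infty}(\kappa)$ and satisfying $a_+\geq p_+$ with $a_+\in\mcl K$, both suprema are attained at the same point, closing the argument. The whole proof is thus: precompactness $\Rightarrow$ reduce to a convergent subsequence; Lemma~\ref{l.approx1} $\Rightarrow$ sandwich $f(t,q)$ between $\sP(p_-)$ and $\sP(p_+)$ with $p_\pm$ critical; convexity lemma $\Rightarrow$ $\sP(p_+)=\sup_p\sP(p)=\sP(p_-)$; and the constraint-set identification finishes~\eqref{e.mp_parisi}.
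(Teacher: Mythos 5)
Your proposal takes a genuinely different route from the paper's. The paper first proves the rational case by reducing to the vector-spin model via Corollary~\ref{c.equiv_rational} and invoking the already-established vector-spin Parisi formula of~\cite[Proposition~8.1]{HJ_critical_pts}; it then handles irrational $\lambda_\infty$ by a two-sided continuity argument combining Lemma~\ref{l.lambda_continuity}, Lemma~\ref{l.psi^vec_smooth}, and Proposition~\ref{p.crit_pt_rep} applied to rational approximants $\lambda^n_\infty\to\lambda_\infty$. You instead attempt to re-derive the Parisi variational characterization from scratch in the multi-species setting.

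There are concrete gaps. The central one is your proposed convexity lemma, that a fixed point $p = \partial_q\psi_{\lambda_\infty}(q+t\nabla\xi(p))$ globally maximizes $\sP_{\lambda_\infty,t,q}$: this is precisely the analytic content of the Parisi variational principle, and it does not follow from a routine convexity inequality, since $\sP_{\lambda_\infty,t,q}$ is not a concave functional of $p$ and the max property of critical points rests on a delicate Legendre/min-max structure (which is what~\cite[Proposition~8.1]{HJ_critical_pts} actually establishes). Declaring it ``notation-heavy but structurally identical'' states a plan, not a proof, and the paper's whole design is to avoid redoing that work in the multi-species setting by importing it from the vector-spin case. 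A second error: you claim $p_+$ from Lemma~\ref{l.approx1} satisfies the critical-point relation ``directly,'' but Lemma~\ref{l.approx1} makes no such assertion; it provides only the overlap convergence and the sandwich bounds, not criticality, which is the separate content of Lemma~\ref{l.approx2} and Proposition~\ref{p.crit_pt_2} and concerns a path that need not coincide with $p_+$. Finally, the equality of the two suprema in~\eqref{e.mp_parisi} is not established by a truncation or projection; the paper obtains it by matching both to the corresponding pair of suprema in~\cite[Proposition~8.1]{HJ_critical_pts} through the block-diagonal correspondence~\eqref{e.p=(bp)}.
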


The relation $a\geq p$ under the supremum is understood as in~\eqref{e.a>p}.

\begin{proof}
We first prove this for rational $\lambda_\infty$. Suppose that there is $M\in\N$ such that $\lambda_\infty \in \blacktriangle_\M$. Then, we can use Corollary~\ref{c.equiv_rational} to match $\bar F_{N,\lambda_N}$ with $\bar F^\vec_N$ as described therein. Due to its definition in~\eqref{e.bxi=mp}, we can deduce that $\bxi$ is convex on $\S^\Delta_+$ from the convexity of $\xi$. By~\cite[Proposition~8.1]{HJ_critical_pts}, we have the Parisi formula: for every $(t,\bq)\in\R_+\times \mcl Q_2(\Delta)$,
\begin{align}\label{e.vector_parisi}
    \lim_{N\to\infty} \bar F^\vec_N(t,\bq) = \sup_{\bp\in \mcl Q(\Delta):\:\exists \ba \in \mcl K^\vec,\, \ba\geq\bp}\sP^\vec_{t,\bq}(\bp) =\sup_{\bp\in \mcl Q_\infty(\Delta)}\sP^\vec_{t,\bq}(\bp)
\end{align}
where $\sP^\vec_{t,\bq}(\bp)$ is given as in~\eqref{e.P^vec_t,q(bp)=} and $\mcl K^\vec = \overline{\mathrm{conv}}\Ll\{\bsigma\bsigma^\intercal:\: \bsigma\in \supp P^\vec_1\Rr\}$.
Now, fix any $(t,q)$ and let $\bq$ be given as in~\eqref{e.bq=mp}.
For every $\bq$ and $\ba$ appearing in~\eqref{e.vector_parisi}, let $p$ and $a$ be given as in~\eqref{e.p=(bp)} (without $\pm$).
Hence, the setup is the same as in the proof of Lemma~\ref{l.cavity1} (with $\lambda^\star$ therein substituted with $\lambda_\infty$.
By the argument in the paragraph below~\eqref{e.p=(bp)}, we have $a\geq p$ and $a\in \mcl K$. Also, we have $\sP^\vec_{t,q}(\bp) = \M \sP_{\lambda_\infty,t,q}(p)$ as verified in~\eqref{e.sP^vec_t,q=MsP}. This correspondence between $(a,p)$ and $(\ba,\bp)$ is bijective. Hence, by \eqref{e.vector_parisi} and Corollary~\ref{c.equiv_rational}, we get~\eqref{e.mp_parisi} when all entries of $\lambda_\infty$ are rational.

The general case follows from the rational case and a continuity argument.
Fix a sequence $(\lambda^n_\infty)_{n\in\N}$ in $\blacktriangle_\infty$ with rational entries such that this sequence converges to $\lambda_\infty$. For each $n$, fix any $(\lambda^n_N)_{N\in\N}$ converging to $\lambda^n_\infty$. 
For each $n$, Proposition~\ref{p.crit_pt_rep} gives $p_n \in \mcl Q^\sS_{\infty,\leq \lambda^n_\infty}(\kappa)$ such that
\begin{align}\label{e.F_n=P_n}
    \lim_{N\to\infty} \bar F_{N,\lambda^n_N}(t,q) = \sP_{\lambda^n_\infty,t,q}(p_n).
\end{align}
Notice that the sequence $(p_n)_{n\in\N}$ is bounded uniformly in $Q^\sS_\infty(\kappa)$. 
This along with the Lipschitzness of $\psi^\vec_\nu$ in Lemma~\ref{l.psi^vec_smooth} implies that $\psi^\vec_{\mu_s}((q+t\nabla\xi(p_n))_s)$ is bounded uniformly in $s$ and $n$. Hence, the definition of $\psi_\lambda$ in~\eqref{e.psi=} and that of $\sP_{\lambda,t,q}$ in~\eqref{e.sP_lambda,t,q} imply
\begin{align}\label{e.|P_n-P|<}
    \Ll|\sP_{\lambda^n_\infty,t,q}(p_n) - \sP_{\lambda_\infty,t,q}(p_n)\Rr|\leq C\Ll|\lambda^n_\infty -\lambda_\infty\Rr|,\quad\forall n\in\N,
\end{align}
for some constant $C$.
Also, Lemma~\ref{l.lambda_continuity} yields
\begin{align}\label{e.limsup|F_n-F|<}
    \limsup_{N\to\infty}\Ll|\bar F_{N,\lambda^n_N}(t,q)-\bar F_{N,\lambda^n_N}(t,q)\Rr| \leq C'\Ll|\lambda^n_\infty -\lambda_\infty\Rr|,\quad\forall n\in\N,
\end{align}
for some constant $C'$.
For $r,r'\geq \R$ and $\eps>0$, let us write $r\lesssim_\eps r'$ for $r\leq r'+\eps$. Then, for every $\eps$, we can find $p_\eps\in\mcl Q^\sS_\infty(\kappa)$ and $n\in\N$ sufficiently large such that
\begin{align*}
    \sup_{p\in \mcl Q^\sS_\infty(\kappa)}\sP_{\lambda_\infty,t,q}(p)\lesssim_\eps \sP_{\lambda_\infty,t,q}(p_\eps)\stackrel{\eqref{e.psi=},\eqref{e.sP_lambda,t,q}}{\lesssim_{\eps}}\sP_{\lambda^n_\infty,t,q}(p_\eps)
    \\
    \stackrel{\eqref{e.mp_parisi}}{\leq} \lim_{N\to\infty}\bar F_{N,\lambda^n_N}(t,q) \stackrel{\eqref{e.limsup|F_n-F|<}}{\lesssim_\eps} \liminf_{N\to\infty}\bar F_{N,\lambda_N}(t,q) 
\end{align*}
On the other hand, for every $\eps$, there is $n\in\N$ such that
\begin{align*}
    \limsup_{N\to\infty}\bar F_{N,\lambda_N}(t,q)\stackrel{\eqref{e.limsup|F_n-F|<}}{\lesssim_\eps} \lim_{N\to\infty}\bar F_{N,\lambda^n_N}(t,q) \stackrel{\eqref{e.F_n=P_n}}{=} \sP_{\lambda^n_\infty,t,q}(p_n)
    \\
    \stackrel{\eqref{e.|P_n-P|<}}{\lesssim_\eps}\sP_{\lambda_\infty,t,q}(p_n) \leq \sup_{p\in \mcl Q^\sS_\infty(\kappa)}\sP_{\lambda_\infty,t,q}(p).
\end{align*}
The above two displays yield one identity in~\eqref{e.mp_parisi} in the general case. The other identity in~\eqref{e.mp_parisi} can be deduced similarly.
\end{proof}

\begin{proof}[Proof of Theorem~\ref{t.convex}]
We can obtain~\eqref{e.t.convex} by using Proposition~\ref{p.mp-parisi} and the relation in~\eqref{e.rel_parisi_mcJ} between functionals.
\end{proof}

\begin{proof}[Proof of Theorem~\ref{t.convex_2}]

Recall the definition of $\theta$ in~\eqref{e.theta=}. 
Under the assumption that $\xi$ is convex on $\prod_{s\in\sS}\S^{\kappa_s}_+$, we can follow the same argument for (8.5) in the proof of~\cite[Proposition~8.1]{HJ_critical_pts} to get
\begin{gather}\label{e.inttheta=}
    \int_0^1\theta(p(r))\d r = \sup_{p'\in \mcl Q^\sS_\infty(\kappa)}\Ll\{\la \nabla \xi(p), p'\ra_{L^2}- \int_0^1\xi\Ll(p'(r)\Rr)\d r\Rr\},\quad\forall p \in \mcl Q^\sS_\infty(\kappa). 
\end{gather}
Recall the definition of $\sP_{\lambda_\infty,t,q}$ in~\eqref{e.sP_lambda,t,q}. Inserting~\eqref{e.inttheta=} to the right-hand side of~\eqref{e.mp_parisi}, we get
\begin{align*}
    &\limsup_{N\to\infty} \bar F_{N,\lambda_N}(t,q) \\
    &= \sup_{p\in \mcl Q^\sS_\infty(\kappa)} \inf_{p'\in \mcl Q^\sS_\infty(\kappa)}\Ll\{\psi_{\lambda_\infty}(q+t \nabla\xi(p))-\la t\nabla \xi(p), p'\ra_{L^2} + t\int_0^1\xi\Ll(p'(r)\Rr)\d r\Rr\}
    \\
    &\stackrel{\eqref{e.mcJ=}}{\leq} \sup_{q'\in q+ \mcl Q^\sS_\infty(\kappa)}\inf_{p\in \mcl Q^\sS_\infty(\kappa)}\mcl J_{\lambda_\infty,t,q}\Ll(q',p\Rr).
\end{align*}
On the other hand, \eqref{e.liminfF_N>} and~\eqref{e.f=hopf-lax} in Claim~\ref{c.free_energy_upper_bd} together yield
\begin{align}\label{e.liminfF>supinf}
    \liminf_{N\to\infty} \bar F_{N,\lambda_N}(t,q)\geq \sup_{q'\in q+ \mcl Q^\sS_\infty(\kappa)}\inf_{p\in \mcl Q^\sS_\infty(\kappa)}\mcl J_{\lambda_\infty,t,q}\Ll(q',p\Rr).
\end{align}
The two above displays then give~\eqref{e.t.convex_2}.
\end{proof}

Recall that in the statement of Theorem~\ref{t.convex_2}, we assumed Claim~\ref{c.free_energy_upper_bd}. The same claim also allows us to derive a different version of~\eqref{e.t.convex_2}, which adapts~\cite[Corollary~8.2]{HJ_critical_pts} to the current setting.
For every $a\in \prod_{s\in\sS} \R^{\kappa_s\times\kappa_s}$, we define
\begin{align}\label{e.xi^*=}
    \xi^*(a) =\sup_{b\in \prod_{s\in\sS}\S^{\kappa_s}_+}\Ll\{a\cdot b -\xi(b)\Rr\}.
\end{align}

\begin{corollary}[Alternative form of Hopf--Lax formula]
Assume that Claim~\ref{c.free_energy_upper_bd} is valid.
If $\xi$ is a convex function on $\prod_{s\in\sS}\S^{\kappa_s}_+$ and $(\lambda_N)_{N\in\N}$ converges to some $\lambda_\infty$, then for every $(t,q)\in \R_+\times \mcl Q^\sS_2(\kappa)$, we have
\begin{align}\label{e.hopf-lax_2}
    \lim_{N\to\infty} \bar F_{N,\lambda_N}(t,q) = \sup_{q'\in\mcl Q^\sS_\infty(\kappa)}\Ll\{\psi_{\lambda_\infty}(q+q')-t \int_0^1 \xi^*\Ll(t^{-1}q'\Rr)\Rr\}.
\end{align}
\end{corollary}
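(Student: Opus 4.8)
The plan is to derive \eqref{e.hopf-lax_2} from \eqref{e.t.convex_2} by a change of variables $q' = t\nabla\xi(p)$ (so that $q+q'$ plays the role of the inner variable $q'$ in Theorem~\ref{t.convex_2}) together with the Legendre duality between $\xi$ and $\xi^*$. First I would recall from the proof of Theorem~\ref{t.convex_2} that, under convexity of $\xi$ on $\prod_{s\in\sS}\S^{\kappa_s}_+$, Proposition~\ref{p.mp-parisi} and the identity \eqref{e.inttheta=} give
\begin{align*}
    \lim_{N\to\infty}\bar F_{N,\lambda_N}(t,q)
    &= \sup_{p\in \mcl Q^\sS_\infty(\kappa)}\Ll\{\psi_{\lambda_\infty}(q+t\nabla\xi(p)) - t\int_0^1\theta(p(r))\d r\Rr\}.
\end{align*}
So the task reduces to showing that the right-hand side above equals the right-hand side of \eqref{e.hopf-lax_2}.

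The key step is the pointwise Legendre relation: for convex $\xi$ on $\prod_{s\in\sS}\S^{\kappa_s}_+$ and $b$ in the interior of this cone, $\xi^*(\nabla\xi(b)) = b\cdot\nabla\xi(b) - \xi(b) = \theta(b)$, by \eqref{e.theta=} and \eqref{e.xi^*=}. Thus for any $p\in\mcl Q^\sS_\infty(\kappa)$, setting $q'(r) = t\nabla\xi(p(r))$, we have $t\int_0^1\theta(p(r))\d r = t\int_0^1\xi^*(t^{-1}q'(r))\d r$ and $q+t\nabla\xi(p) = q+q'$, which shows the supremum in the displayed formula is bounded above by the supremum in \eqref{e.hopf-lax_2} — provided $q'$ so obtained lies in $\mcl Q^\sS_\infty(\kappa)$, which follows since $\nabla\xi$ maps increasing bounded paths to increasing bounded paths when $\xi$ is smooth and convex (monotonicity of $\nabla\xi$ on the cone, plus boundedness of $\nabla\xi$ on bounded sets). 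For the reverse inequality, given any $q'\in\mcl Q^\sS_\infty(\kappa)$, I would use the fact that for convex $\xi$, $\xi^{**}=\xi$ on the cone, so that $\xi^*(t^{-1}q'(r)) = \sup_{b}\{t^{-1}q'(r)\cdot b - \xi(b)\}$; the supremum over $p$ of $\{\la \nabla\xi(p),\cdot\ra - \int\xi(p)\}$ type expressions recovers $\int\xi^*$, mirroring exactly the argument for \eqref{e.inttheta=}. Concretely, the cleanest route is to note that \eqref{e.hopf-lax_2} is simply the statement \eqref{e.f=hopf-lax}/\eqref{e.t.convex_2} rewritten: in the Hopf--Lax formula $\sup_{q'\in q+\mcl Q^\sS_\infty(\kappa)}\inf_{p\in\mcl Q^\sS_\infty(\kappa)}\mcl J_{\lambda_\infty,t,q}(q',p)$, the inner infimum over $p$ of $\la p, q - q'\ra_{L^2} + t\int_0^1\xi(p)$ is, after replacing $q'$ by $q+q'$ and $p$ by $-p$ appropriately, exactly $-t\int_0^1\xi^*(t^{-1}q')$ by the definition \eqref{e.xi^*=} and the duality computation in~\cite[Corollary~8.2]{HJ_critical_pts}.

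Thus the actual argument I would write is short: invoke Theorem~\ref{t.convex_2} to get $\lim_N \bar F_{N,\lambda_N}(t,q) = \sup_{q'\in q+\mcl Q^\sS_\infty(\kappa)}\inf_{p\in\mcl Q^\sS_\infty(\kappa)}\mcl J_{\lambda_\infty,t,q}(q',p)$, substitute $q' \mapsto q + q'$ so the outer sup runs over $q'\in\mcl Q^\sS_\infty(\kappa)$ and the inner expression becomes $\psi_{\lambda_\infty}(q+q') + \inf_{p}\{\la p,-q'\ra_{L^2} + t\int_0^1\xi(p)\}$, and then identify $\inf_{p\in\mcl Q^\sS_\infty(\kappa)}\{-\la p, q'\ra_{L^2} + t\int_0^1\xi(p)\} = -t\int_0^1\xi^*(t^{-1}q'(r))\d r$ via \eqref{e.xi^*=}, exactly as in the proof of~\cite[Corollary~8.2]{HJ_critical_pts}. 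The one technical point requiring a line of care is the interchange of the pointwise supremum defining $\xi^*$ with the integral $\int_0^1$, and the verification that the optimizing $b(r)$ can be taken to form an admissible path in $\mcl Q^\sS_\infty(\kappa)$ (and conversely that restricting $p$ to such paths does not lower the infimum); both are handled by the monotonicity and boundedness properties of $\nabla\xi$ on $\prod_{s\in\sS}\S^{\kappa_s}_+$ together with a measurable-selection argument, precisely as in~\cite{HJ_critical_pts}. Since every ingredient is a direct transcription of the vector-spin proof with $\S^\D$ replaced by $\prod_{s\in\sS}\S^{\kappa_s}$, I expect no genuine obstacle here — the main (minor) subtlety is simply bookkeeping the sign conventions in the Legendre transform and ensuring the change of variables $q'\mapsto t\nabla\xi(p)$ is a bijection between $\mcl Q^\sS_\infty(\kappa)$ and the relevant set of paths, which follows from strict monotonicity of $\nabla\xi$ when $\xi$ is strictly convex and from a limiting argument in general.
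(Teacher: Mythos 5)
Your proposal contains the same two ingredients as the paper's proof, but it packages them in a way that overclaims one step. The paper proves the two inequalities separately and asymmetrically: the bound $\mathrm{LHS}\leq\mathrm{RHS}$ comes from the Parisi formula \eqref{e.mp_parisi} together with the pointwise identity $\theta(a)=\xi^*(\nabla\xi(a))$, so that each competitor $p$ in $\sup_p\sP_{\lambda_\infty,t,q}(p)$ produces an admissible $q'=t\nabla\xi(p)$ in the right-hand supremum; the bound $\mathrm{LHS}\geq\mathrm{RHS}$ comes from \eqref{e.liminfF>supinf} together with only the \emph{easy} inequality $\sup_{p\in\mcl Q^\sS_\infty(\kappa)}\{t^{-1}\la q',p\ra_{L^2}-\int_0^1\xi(p)\}\leq\int_0^1\xi^*(t^{-1}q')$, which is immediate from the pointwise definition \eqref{e.xi^*=}. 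Your second paragraph reproduces the first of these correctly. The issue is your ``cleanest route'': you assert the exact identity $\inf_{p\in\mcl Q^\sS_\infty(\kappa)}\{-\la p,q'\ra_{L^2}+t\int_0^1\xi(p)\}=-t\int_0^1\xi^*(t^{-1}q')$ for every $q'\in\mcl Q^\sS_\infty(\kappa)$, which would let you read \eqref{e.hopf-lax_2} off from \eqref{e.t.convex_2} alone. The ``$\geq$'' half of that identity is trivial, but the ``$\leq$'' half requires the pointwise near-optimizers $b(r)$ of $\sup_b\{t^{-1}q'(r)\cdot b-\xi(b)\}$ to be realizable by an \emph{increasing, bounded} path, and convexity of $\xi$ does not give this: monotonicity of $\nabla\xi$ as an operator ($(\nabla\xi(a)-\nabla\xi(b))\cdot(a-b)\geq0$) does not imply order-preservation for the cone order on $\prod_{s\in\sS}\S^{\kappa_s}_+$, so the selection need not lie in $\mcl Q^\sS_\infty(\kappa)$. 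Fortunately this exact identity is not needed: for the lower bound you only use the trivial direction, and for the upper bound you already have the Parisi-formula route. If you restructure your write-up so that each direction invokes only the one-sided inequality it actually requires (as the paper does), the argument is complete; as written, the claimed equality of the two variational problems is an unsupported (and unnecessary) strengthening. A last minor point: the fact that $t\nabla\xi(p)\in\mcl Q^\sS_\infty(\kappa)$ is a structural property of the admissible $\xi$'s used throughout the paper, not a consequence of smoothness and convexity alone, so cite it as such rather than deriving it from convexity.
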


\begin{proof}
We fix any $(t,q)$ and denote the two sides in~\eqref{e.hopf-lax_2} by $\mathrm{LHS}$ and $\mathrm{RHS}$.
By the convexity of $\xi$ and the same argument as that for~\cite[(8.4)]{HJ_critical_pts}, it is easy to see $\theta(a)= \xi^*(\nabla(a))$ for every $a \in \prod_{s\in\sS}\S^{\kappa_s}_+$. Inserting this to the right-hand side in~\eqref{e.mp_parisi} and recalling the definition of $\sP_{\lambda_\infty,t,q}$ in~\eqref{e.sP_lambda,t,q}, we can get $\mathrm{LHS}\leq \mathrm{RHS}$.
On the other hand, by~\eqref{e.xi^*=}, we get
\begin{align*}
    \sup_{p\in\mcl Q^\sS_\infty(\kappa)}\Ll\{t^{-1}\la q',p\ra_{L^2}-\int_0^1\xi (p(r))\d r\Rr\}\leq \int_0^1 \xi^*\Ll(t^{-1} q'(r)\Rr)\d r.
\end{align*}
Inserting this to the right-hand side of~\eqref{e.liminfF>supinf} (which requires Claim~\ref{c.free_energy_upper_bd}) and using the definition of $\mcl J_{\lambda_\infty,t,q}$ in~\eqref{e.mcJ=} (together with changing $q'$ to $q+q'$), we get $\mathrm{LHS}\geq \mathrm{RHS}$.
\end{proof}

We can extract from Proposition~\ref{p.mp-parisi} a more familiar form of the Parisi formula.
For every $s\in\sS$, $\bq \in \mcl Q_\infty(\kappa_s)$, and $a\in \S^{\kappa_s}$, we define
\begin{align*}
    X_{s}(\bq,a) = \E \log \iint \exp\Ll(\bw^{\bq}(\alpha)\cdot \btau-\frac{1}{2}\bq(1)\cdot \btau\btau^\intercal + a\cdot \btau\btau^\intercal\Rr) \d \mu_s(\btau) \d \fR(\alpha)
\end{align*}
where $\bw^{\bq}$ is the $\R^{\kappa_s}$-valued process given as in~\eqref{e.E[bwbw]=}.
For every $\pi \in \mcl Q^\sS_\infty(\kappa)$ and $x\in \prod_{s\in\sS}\S^{\kappa_s}$, we define
\begin{align*}
    \sP_{\lambda_\infty}(\pi, x)= \sum_{s\in\sS}\lambda_{\infty,s} X_{s}((\nabla \xi(\pi))_s,x_s) + \frac{1}{2}\int_0^1\theta(\pi(r))\d r
\end{align*}
where $(\nabla \xi(\pi))_s$ is the $\S^{\kappa_s}_+$-valued path in the $s$-coordinate of $\nabla \xi(\pi)\in \mcl Q^\sS_\infty(\kappa)$.
Recall the Hamiltonian $H_N(\sigma)$ from~\eqref{e.EH_N(sigma)H_N(sigma')=}.
\begin{corollary}[Parisi formula for free energy with correction]

Let $(\lambda_N)_{N\to\infty}$ converge to some $\lambda_\infty\in\blacktriangle_\infty$.
If $\xi$ is convex on $\prod_{s\in\sS}\S^{\kappa_s}_+$, then,
\begin{align*}
    \lim_{N\to\infty}\frac{1}{N}\E \iint \exp\Ll(H_N(\sigma) - \frac{N}{2}\xi \Ll(R_{N,\lambda_N}(\sigma,\sigma)\Rr)\Rr) \d P_{N,\lambda_N}(\sigma)\d\fR(\alpha)
    =\inf_{\pi \in \mcl Q^\sS_\infty(\kappa)}\sP_{\lambda_\infty}(\pi,0).
\end{align*}
\end{corollary}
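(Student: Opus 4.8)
The plan is to recognize the quantity on the left as (minus) the free energy $\bar F_{N,\lambda_N}$ evaluated at a distinguished point, and then to invoke the Parisi formula of Proposition~\ref{p.mp-parisi} together with an elementary Gaussian rescaling that converts $\psi^\vec$ into the $X_s$-functionals. First I would compare the Hamiltonian $H_N(\sigma)-\frac N2\xi(R_{N,\lambda_N}(\sigma,\sigma))$ with $H^{t,q}_N(\sigma,\alpha)$ in~\eqref{e.H^t,q_N=}: the choice $t=\frac12$ (so $\sqrt{2t}=1$ and $tN=\frac N2$) together with $q=0$ (so $W^0_N\equiv 0$ and $q(1)=0$, making the self-overlap correction vanish) gives $H^{1/2,0}_N(\sigma,\alpha)=H_N(\sigma)-\frac N2\xi(R_{N,\lambda_N}(\sigma,\sigma))$. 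Hence the left-hand side of the claim (read as $\frac1N\E\log\iint\cdots$) equals $-\bar F_{N,\lambda_N}(\frac12,0)$, and since $0\in\mcl Q^\sS_\infty(\kappa)$ we may apply Proposition~\ref{p.mp-parisi} at $(t,q)=(\frac12,0)$, using convexity of $\xi$ on $\prod_{s\in\sS}\S^{\kappa_s}_+$, to obtain $\lim_{N\to\infty}\bar F_{N,\lambda_N}(\frac12,0)=\sup_{\pi\in\mcl Q^\sS_\infty(\kappa)}\sP_{\lambda_\infty,\frac12,0}(\pi)$, where $\sP_{\lambda_\infty,\frac12,0}(\pi)=\psi_{\lambda_\infty}\big(\tfrac12\nabla\xi(\pi)\big)-\tfrac12\int_0^1\theta(\pi(r))\,\d r$ by~\eqref{e.sP_lambda,t,q}.

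Next I would carry out the rescaling. Fix $\pi\in\mcl Q^\sS_\infty(\kappa)$ and $s\in\sS$, and set $\bq:=(\nabla\xi(\pi))_s\in\mcl Q_\infty(\kappa_s)$. Conditioned on $\fR$, the process $\sqrt2\,\bw^{\bq/2}$ is centered Gaussian with covariance $2\cdot\tfrac12\bq(\alpha\wedge\alpha')=\bq(\alpha\wedge\alpha')$, hence has the same conditional law as $\bw^{\bq}$; moreover the drift $-(\bq/2)(1)\cdot\btau\btau^\intercal$ appearing in $\psi^\vec_{\mu_s}$ (see~\eqref{e.psi^vec=}) is exactly $-\tfrac12\bq(1)\cdot\btau\btau^\intercal$. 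Since $\psi^\vec_{\mu_s}$ depends on the external field only through its (conditional) law, substituting gives $\psi^\vec_{\mu_s}\big(\tfrac12(\nabla\xi(\pi))_s\big)=-X_s\big((\nabla\xi(\pi))_s,0\big)$, the extra minus coming from the $-\E\log$ in~\eqref{e.psi^vec=} against the $+\E\log$ in $X_s$, and the choice $a=0$ matching the quadratic terms. Summing against the weights via~\eqref{e.psi=} yields $\psi_{\lambda_\infty}\big(\tfrac12\nabla\xi(\pi)\big)=-\sum_{s\in\sS}\lambda_{\infty,s}X_s\big((\nabla\xi(\pi))_s,0\big)$.

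Combining this with the definition of $\sP_{\lambda_\infty}(\pi,0)=\sum_{s\in\sS}\lambda_{\infty,s}X_s((\nabla\xi(\pi))_s,0)+\tfrac12\int_0^1\theta(\pi(r))\,\d r$ gives $\sP_{\lambda_\infty,\frac12,0}(\pi)=-\sP_{\lambda_\infty}(\pi,0)$ for every $\pi\in\mcl Q^\sS_\infty(\kappa)$, whence
\[
\lim_{N\to\infty}\frac1N\E\log\iint\exp\Ll(H_N(\sigma)-\frac N2\xi\Ll(R_{N,\lambda_N}(\sigma,\sigma)\Rr)\Rr)\d P_{N,\lambda_N}(\sigma)\d\fR(\alpha)=-\sup_{\pi\in\mcl Q^\sS_\infty(\kappa)}\sP_{\lambda_\infty,\frac12,0}(\pi)=\inf_{\pi\in\mcl Q^\sS_\infty(\kappa)}\sP_{\lambda_\infty}(\pi,0),
\]
which is the asserted identity.

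There is no genuine obstacle here: the result is a direct corollary of Proposition~\ref{p.mp-parisi}. The one point requiring care is the bookkeeping of the normalization constants $\sqrt2$ and $\tfrac12$ relating the external field $W^q_N$ (and $\psi^\vec$) to the process $\bw^{\bq}$ used in $X_s$, and the verification that with $q=0$ the self-overlap correction $-q(1)\cdot R_{N,\lambda_N}(\sigma,\sigma)$ disappears so that the only quadratic contribution in $X_s$ is the term $-\tfrac12\bq(1)\cdot\btau\btau^\intercal$. It is also worth noting that it is immaterial whether the supremum in Proposition~\ref{p.mp-parisi} is taken over $\mcl Q^\sS_\infty(\kappa)$ or over the constrained set $\{p:\exists a\in\mcl K,\ a\ge p\}$, since~\eqref{e.mp_parisi} asserts the two suprema coincide.
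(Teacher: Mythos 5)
Your proposal is correct and follows essentially the same route as the paper's own proof: identify the left-hand side (which is missing a $\log$ in the printed statement, as you implicitly read it) with $-\bar F_{N,\lambda_N}(\tfrac12,0)$, use the Gaussian scaling $\bw^\bq \stackrel{\d}{=}\sqrt{2}\bw^{\bq/2}$ to derive $X_s(\bq,0)=-\psi^\vec_{\mu_s}(\bq/2)$ and hence $\sP_{\lambda_\infty}(\pi,0)=-\sP_{\lambda_\infty,\frac12,0}(\pi)$, and conclude by Proposition~\ref{p.mp-parisi} at $(t,q)=(\tfrac12,0)$. Your version spells out the bookkeeping in a bit more detail but the argument is the same.
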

\begin{proof}
Due to~\eqref{e.E[bwbw]=}, we have $\bw^\bq \stackrel{\d}{=}\sqrt{2}\bw^{\bq/2}$.
By~\eqref{e.psi^vec=}, we can see $X_s(\bq,0) = -\psi^\vec_{\mu_s}(\bq/2)$, which by~\eqref{e.psi=} gives $\sum_{s\in\sS}\lambda_{\infty,s} X_{s}((\nabla \xi(\pi))_s,x_s) = - \psi_{\lambda_\infty}(\frac{1}{2}(\nabla \xi(\pi))$. Hence, we have $\sP_{\lambda_\infty}(\pi,0) = -\sP_{\lambda_\infty, \frac{1}{2},0}(\pi)$ given as in~\eqref{e.sP_lambda,t,q}. On the other hand, notice that expression after the limit on the left-hand side of the above display is equal to $-\bar F_{N,\lambda_N}(\frac{1}{2},0)$ (see~\eqref{e.F_N(t,q)=}). Therefore, this result follows from Proposition~\ref{p.mp-parisi} with $(\frac{1}{2},0)$ substituted for $(t,q)$ therein.
\end{proof}

This corollary corresponds to~\cite[Corollary~8.3]{HJ_critical_pts} and the next result to~\cite[Proposition~8.4]{HJ_critical_pts}. We can get the Parisi formula without the correction term $- \frac{N}{2}\xi \Ll(R_{N,\lambda_N}(\sigma,\sigma)\Rr)$.

\begin{proposition}[Parisi formula]
Let $(\lambda_N)_{N\to\infty}$ converge to some $\lambda_\infty\in\blacktriangle_\infty$.
If $\xi$ is convex on $\prod_{s\in\sS}\S^{\kappa_s}_+$, then,
\begin{align*}
    \lim_{N\to\infty}\frac{1}{N}\E \iint \exp\Ll(H_N(\sigma)\Rr) \d P_{N,\lambda_N}(\sigma)\d\fR(\alpha)
     =\sup_{y\in\S_+}\inf_{\pi \in \mcl Q^\sS_\infty(\kappa)}\Ll\{\sP_{\lambda_\infty}(\pi,y)-\frac{1}{2}\xi^*(2y)\Rr\}
    \\
     = \sup_{z\in\S_+} \inf_{\substack{y\in\S_+\\\pi \in \mcl Q^\sS_\infty(\kappa)}}\Ll\{\sP_{\lambda_\infty}(\pi,y)-y\cdot z+\frac{1}{2}\xi(z)\Rr\}
\end{align*}
where we used the shorthand $\S_+= \prod_{s\in\sS}\S^{\kappa_s}_+$.
\end{proposition}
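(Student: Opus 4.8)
The plan is to obtain this Parisi formula from the preceding corollary (the Parisi formula for the free energy carrying the self-overlap correction $-\tfrac N2\xi(R_{N,\lambda_N}(\sigma,\sigma))$) by a change of the reference measures. Denote by $G_N$ the normalized logarithmic free energy on the left-hand side; then $G_N$ differs from $-\bar F_{N,\lambda_N}(\tfrac12,0)$ only through the extra term $+\tfrac N2\xi(R_{N,\lambda_N}(\sigma,\sigma))$ inside the exponential, since $H^{1/2,0}_N(\sigma,\alpha)=H_N(\sigma)-\tfrac N2\xi(R_{N,\lambda_N}(\sigma,\sigma))$. For $y=(y_s)_{s\in\sS}\in\S_+$ let $\mu_s^{y_s}$ be the finite measure with $\d\mu_s^{y_s}(\tau)=e^{y_s\cdot\tau\tau^\intercal}\,\d\mu_s(\tau)$, still supported on $[-1,+1]^{\kappa_s}$; the single-site field $Ny\cdot R_{N,\lambda_N}(\sigma,\sigma)=\sum_{s\in\sS}\sum_{n\in I_{N,s}}y_s\cdot\sigma_{\bullet n}\sigma_{\bullet n}^\intercal$ is exactly the tilt turning $P_{N,\lambda_N}$ into $\otimes_{s\in\sS}\otimes_{n\in I_{N,s}}\mu_s^{y_s}$. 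Since the hypotheses of the corrected Parisi formula involve only $\xi$ and the convergence $\lambda_N\to\lambda_\infty$ (both untouched by the tilt) together with a finite measure on the cube (which $\mu_s^{y_s}$ still is), applying it with $(\mu_s^{y_s})_{s\in\sS}$ in place of $(\mu_s)_{s\in\sS}$ and noting that the tilt is absorbed into the second argument of $X_s$, i.e. $X_s(\bq,0)$ becomes $X_s(\bq,y_s)$, shows that the limit of this tilted corrected free energy equals $\inf_{\pi\in\mcl Q^\sS_\infty(\kappa)}\sP_{\lambda_\infty}(\pi,y)$.

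With this at hand, the lower bound in the first equality is the easy direction: for fixed $y\in\S_+$, the Fenchel--Young inequality $\tfrac12\xi(z)\ge y\cdot z-\tfrac12\xi^*(2y)$ (immediate from the definition~\eqref{e.xi^*=} of $\xi^*$, applied with $z=R_{N,\lambda_N}(\sigma,\sigma)\in\S_+$) bounds $e^{H_N(\sigma)}$ below by $e^{-\frac N2\xi^*(2y)}$ times the integrand of the tilted corrected model, so $\liminf_N G_N\ge\inf_\pi\sP_{\lambda_\infty}(\pi,y)-\tfrac12\xi^*(2y)$, and one takes the supremum over $y\in\S_+$. For the matching upper bound I would exploit that $R_{N,\lambda_N}(\sigma,\sigma)$ always lies in a fixed compact set $\mcl K_0$ independent of $N$, on which $\xi$ is Lipschitz: cover $\mcl K_0$ by finitely many balls $B(a_i,\eps)$, choose $y^{(i)}\in\S_+$ with $\tfrac12\xi(a_i)\le y^{(i)}\cdot a_i-\tfrac12\xi^*(2y^{(i)})+\eps$ (using the biconjugacy $\tfrac12\xi(z)=\sup_{y\in\S_+}\{y\cdot z-\tfrac12\xi^*(2y)\}$ for $z\in\S_+$, valid because $\xi$ is convex on $\prod_{s\in\sS}\S^{\kappa_s}_+$), and bound the contribution of the configurations with $R_{N,\lambda_N}(\sigma,\sigma)\in B(a_i,\eps)$ to $\iint e^{H_N}\,\d P_{N,\lambda_N}\,\d\fR$ by $e^{N\,O(\eps)}$ times the tilted corrected partition function with field $y^{(i)}$. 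Summing the finitely many pieces costs only $\tfrac1N\log m\to0$ after taking $\tfrac1N\E\log$, where standard Gaussian concentration of each tilted free energy lets one replace $\E\max_i$ by $\max_i\E$ up to $o(1)$; passing to the limit and then letting $\eps\to0$ gives $\limsup_N G_N\le\sup_{y\in\S_+}\inf_\pi\{\sP_{\lambda_\infty}(\pi,y)-\tfrac12\xi^*(2y)\}$, which with the lower bound is the first equality.

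The second equality is a convex-analytic reformulation of the first: substituting $\tfrac12\xi^*(2y)=\sup_{z\in\S_+}\{y\cdot z-\tfrac12\xi(z)\}$ (again~\eqref{e.xi^*=}) rewrites the bracket as $\inf_{\pi,z}\{\sP_{\lambda_\infty}(\pi,y)-y\cdot z+\tfrac12\xi(z)\}$, after which one exchanges the outer supremum over $y$ with the infimum over $z$ by a minimax argument, following the proof of~\cite[Proposition~8.4]{HJ_critical_pts} verbatim (the convexity of $z\mapsto\tfrac12\xi(z)-y\cdot z$ on $\S_+$ being the relevant input). I expect the delicate points to be: (i) keeping the auxiliary maximizers $y^{(i)}$ uniformly bounded as $\eps\to0$ so that the $O(\eps)$ error in the covering argument really vanishes --- this uses that $\xi$ is finite (indeed polynomial) on $\prod_{s\in\sS}\R^{\kappa_s\times\kappa_s}$, hence $\xi^*$ is superlinear and the suprema defining $\tfrac12\xi$ over the compact set $\mcl K_0$ are attained on a fixed ball; and (ii) justifying that the Legendre dualities between $\xi$ and $\xi^*$ may be taken with the dual variable restricted to the cone $\S_+$, and that the minimax exchange is legitimate over the non-compact $\S_+$ --- both of which go through exactly as in~\cite{HJ_critical_pts}, the convexity of $\xi$ on $\prod_{s\in\sS}\S^{\kappa_s}_+$ being used throughout to make these identities and the appeal to the corrected Parisi formula valid.
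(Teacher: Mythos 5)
Your route to the first equality is genuinely different from the paper's. The paper does not run a Laplace/covering argument: it treats the tilt $y$ as an extra finite-dimensional state variable, shows the limit free energy solves a Hamilton--Jacobi equation on $\R_+\times\S_+$ as in \cite[Section~5]{mourrat2020extending}, and reads off the two variational formulas as the Hopf--Lax and Hopf representations of that solution, using the well-posedness theory on cones of \cite{chen2022hamiltonCones}. Your tilting-plus-covering argument for the first equality is a sensible, more elementary alternative: the identification of the tilted corrected limit with $\inf_\pi\sP_{\lambda_\infty}(\pi,y)$ is correct, the Fenchel--Young lower bound is fine, and the covering upper bound (with the $\E\max_i$ versus $\max_i\E$ swap by concentration, and the uniform boundedness of the near-maximizers $y^{(i)}$ from superlinearity of $\xi^*$ on the cone) is a standard Laplace-principle scheme. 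What it buys is independence from the PDE uniqueness theory; what it costs is the extra concentration and covering bookkeeping.

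There is, however, a genuine gap in the duality steps. The biconjugacy $\tfrac12\xi(z)=\sup_{y\in\S_+}\{y\cdot z-\tfrac12\xi^*(2y)\}$ for $z\in\S_+$ does \emph{not} follow from convexity of $\xi$ on $\S_+$ alone: in \eqref{e.xi^*=} and in the biconjugate the dual variable is restricted to the cone, so what is needed is the Fenchel--Moreau identity on the self-dual cone $\prod_{s\in\sS}\S^{\kappa_s}_+$, which additionally uses that $\xi$ is nondecreasing for the cone order and that this product cone is a perfect cone in the sense of \cite{chen2020fenchel}. This is exactly what the paper isolates as the only new ingredient and proves via \cite{chen2020fenchel,chen2022hamiltonCones}. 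The same issue recurs, more seriously, in your second equality: the two sides are $\sup_y\inf_{z,\pi}$ and $\sup_z\inf_{y,\pi}$ of the same Lagrangian, so this is not an exchange of one supremum with one infimum and no textbook minimax theorem applies directly. The clean argument writes $g(y):=\inf_\pi\sP_{\lambda_\infty}(\pi,y)$ as its own cone-biconjugate, $g(y)=\sup_{z\in\S_+}\{y\cdot z-g^*(z)\}$, substitutes this into $\sup_y\{g(y)-\tfrac12\xi^*(2y)\}$ and swaps the two resulting suprema; this requires $g$ to be convex, lower semicontinuous and $\S_+$-nondecreasing on $\S_+$ (convexity holds because the prelimit tilted free energy is convex in $y$, not because of the representation of $g$ as an infimum over $\pi$, which by itself would suggest the opposite), together again with the Fenchel--Moreau property of the product cone. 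The ``convexity of $z\mapsto\tfrac12\xi(z)-y\cdot z$'' you invoke is not the relevant input. Once these cone-duality identities are supplied as in the paper, your argument closes.
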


\begin{proof}
The argument involves some Hamilton--Jacobi equation as in~\cite[Section~5]{mourrat2020extending} (also see~\cite[Section~5]{chen2023self}). One can follow the same steps as in the proof of~\cite[Proposition~8.4]{HJ_critical_pts}. The difference is that there the PDE is considered on $\R_+\times \S^\D_+$ for some $\D\in\N$ but here we need to adapt it to $\R_+\times \Ll(\prod_{s\in\sS}\S^{\kappa_s}_+\Rr)$. 
The only new ingredient is to show that the Hopf--Lax formula and the Hopf formula still hold on this domain. In the following, we explain how to prove this.

By~\cite[Propositions~6.3 and 6.4]{chen2022hamiltonCones}, it is sufficient to verify that the convex cone $\prod_{s\in\sS}\S^{\kappa_s}_+$ satisfies the \textit{Fenchel--Moreau property} as described in~\cite[Definition~6.1]{chen2022hamiltonCones}. Using a straightforward modification of the argument for $\S^\D_+$ in~\cite[Proposition~5.1]{chen2020fenchel}, we can verify that $\prod_{s\in\sS}\S^{\kappa_s}_+$ is also a \textit{perfect cone} described in \cite[Definition~2.1]{chen2020fenchel}. By~\cite[Corollary~2.3]{chen2020fenchel}, every perfect cone satisfies the Fenchel--Moreau property. Hence, the Hopf--Lax formula and the Hopf formula are valid in our setting.
\end{proof}

The next result adapts~\cite[Proposition~8.6]{HJ_critical_pts}.

\begin{proposition}[Differentiability of Parisi formula]\label{p.diff_parisi}
Let $(\lambda_N)_{N\to\infty}$ converge to some $\lambda_\infty\in\blacktriangle_\infty$, let $\xi$ be convex on $\prod_{s\in\sS} \S^{\kappa_s}_+$, and let $f$ be the pointwise limit of $(\bar F_{N,\lambda_N})_{N\in\N}$ given by Proposition~\ref{p.mp-parisi}.
\begin{itemize}
    \item For each $t\in\R_+$, the function $f(t,\cdot)$ is Gateaux differentiable everywhere on $\mcl Q^\sS_{\infty,\uparrow}(\kappa)$.
    \item The function $f$ is Gateaux differentiable everywhere on $(0,\infty)\times\mcl Q^\sS_{\infty,\uparrow}(\kappa)$.
\end{itemize}
\end{proposition}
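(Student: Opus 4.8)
The statement is exactly the convex-case analogue of the differentiability result that appears in \cite[Proposition~8.6]{HJ_critical_pts} for vector spins, so the plan is to reduce to that setting whenever $\lambda_\infty$ is rational and to handle the irrational case by the approximation/uniqueness machinery already developed in Sections~\ref{s.semi-concave} and~\ref{s.crti_pt_and_rel_results}. The key point is that, because $\xi$ is convex, the limit $f$ is identified with the variational formula $\sup_{p}\sP_{\lambda_\infty,t,q}(p)$ of Proposition~\ref{p.mp-parisi}, and differentiability at a point $q\in\mcl Q^\sS_{\infty,\uparrow}(\kappa)$ can be upgraded from the ``Gateaux differentiable on a dense set'' statement of Proposition~\ref{p.reg_limit} to ``Gateaux differentiable everywhere on $\mcl Q^\sS_{\infty,\uparrow}(\kappa)$'' by a convexity-plus-uniqueness argument: one shows the left and right directional derivatives agree because any subdifferential direction must be the (unique) maximizer $p$ in the Parisi formula, and that maximizer, by Proposition~\ref{p.crit_pt_rep}, satisfies the critical point relation $p=\partial_q\psi_{\lambda_\infty}(q+t\nabla\xi(p))$.

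First I would record the structural facts: (i) by Proposition~\ref{p.mp-parisi}, $\bar F_{N,\lambda_N}$ converges to $f(t,q)=\sup_{p\in\mcl Q^\sS_\infty(\kappa)}\sP_{\lambda_\infty,t,q}(p)$; (ii) by Proposition~\ref{p.reg_limit}, $f$ inherits the local semi-concavity of $\bar F_{N,\lambda_N}$ on $\R_+\times\mcl Q^\sS_{\uparrow,c}(\kappa)$ (Proposition~\ref{p.semi-concave}), and it is Lipschitz and monotone; (iii) semi-concavity means that on each $\mcl Q^\sS_{\uparrow,c}(\kappa)$ (with $t\geq c$) the function $f$ is a concave function plus a smooth quadratic, so one-sided directional derivatives $f'(q,e)$ exist for every admissible direction. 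The task is therefore to show the superdifferential of $f(t,\cdot)$ at $q$ is a singleton. For the rational case, since $\bar F_{N,\lambda_N}$ is asymptotically $\M^{-1}\bar F^\vec_{\lceil N/\M\rceil}(t,\bq)$ by Corollary~\ref{c.equiv_rational}, one transports \cite[Proposition~8.6]{HJ_critical_pts}: there $f^\vec(t,\cdot)$ is Gateaux differentiable everywhere on $\mcl Q_{\infty,\uparrow}(\Delta)$; under the linear embedding $q\mapsto\bq$ of~\eqref{e.bq=mp}, and noting that $q\in\mcl Q^\sS_{\infty,\uparrow}(\kappa)$ forces $\bq\in\mcl Q_{\infty,\uparrow}(\Delta)$ (the block-diagonal path is elliptic iff each block is, and zero off-diagonal blocks do not spoil positivity of increments), differentiability of $f^\vec$ at $\bq$ pulls back to differentiability of $f$ at $q$ along the subspace of block-diagonal directions; the derivative of $f(t,\cdot)$ is then $\partial_q f(t,q)=\big(\lambda_{\infty,s}\,(\partial_\bq f^\vec(t,\bq))_{(m,\bullet)(m,\bullet)}\big)$ suitably averaged, which is uniquely determined.

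For the irrational case I would argue directly using the critical-point representation. Suppose $f(t,\cdot)$ had two distinct superdifferential elements $p_1\neq p_2$ at $q\in\mcl Q^\sS_{\infty,\uparrow}(\kappa)$. By Proposition~\ref{p.crit_pt_rep} there is $p\in\mcl Q^\sS_{\infty,\leq\lambda_\infty}(\kappa)$ with $f(t,q)=\sP_{\lambda_\infty,t,q}(p)$ and $p=\partial_q\psi_{\lambda_\infty}(q+t\nabla\xi(p))$; and along any admissible direction $e$, the concavity of $\sP_{\lambda_\infty,t,q}$ in $q$ (which is where convexity of $\xi$ enters, via $\psi^\vec_{\mu_s}$ being convex precisely on the image of $\nabla\xi$ — one uses the Hopf--Lax structure) gives $f'(t,q;e)=\la p,e\ra_{L^2}$, forcing the superdifferential to be the singleton $\{p\}$. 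The uniqueness of the maximizer $p$ itself is the heart of the matter: this is obtained by combining the strict-concavity-type estimates underlying Propositions~\ref{p.semi-concave} and~\ref{p.uniq_crit_pit_high_temp} with the identification $\partial_q f(t,q)=p$ from Propositions~\ref{p.crit_pt_1}–\ref{p.crit_pt_2}; one shows that if $p_1,p_2$ both achieve the supremum then $\nabla\xi(p_1)=\nabla\xi(p_2)$ by strict concavity of $\psi_{\lambda_\infty}$ along the relevant segment, and then ellipticity of $q\in\mcl Q^\sS_{\uparrow,c}(\kappa)$ together with monotonicity of $\nabla\xi$ on $\prod_s\S^{\kappa_s}_+$ forces $p_1=p_2$ $\d r$-a.e. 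For the joint statement on $(0,\infty)\times\mcl Q^\sS_{\infty,\uparrow}(\kappa)$ one repeats the argument treating $t$ as an extra coordinate: $f$ is semi-concave jointly, the $t$-derivative is $\int_0^1\xi(p)$ by~\eqref{e.d_tf=intxi(d_qf)} of Proposition~\ref{p.crit_pt_2}, and $p$ being unique makes the full gradient unique, hence $f$ is Gateaux differentiable there.

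**Main obstacle.** The real work is establishing uniqueness of the Parisi maximizer $p$ at every $q\in\mcl Q^\sS_{\infty,\uparrow}(\kappa)$, not merely on a dense/generic set; this is where one must exploit both the convexity of $\xi$ on $\prod_s\S^{\kappa_s}_+$ (to get strict concavity of the relevant variational functional along segments) and the ellipticity/regularity built into $\mcl Q^\sS_{\infty,\uparrow}(\kappa)$ (so that differences of elliptic increasing paths cannot be absorbed into the degenerate directions of $\nabla\xi$). Once uniqueness is in hand the differentiability is a routine consequence of semi-concavity plus the fact that every superdifferential element equals $p$. I expect the cleanest route is in fact to quote \cite[Proposition~8.6]{HJ_critical_pts} verbatim in the rational case and then run the continuity/approximation argument of the proof of Proposition~\ref{p.mp-parisi} (using Lemmas~\ref{l.approx1}–\ref{l.approx2} and Proposition~\ref{p.crit_pt_rep}) to pass to irrational $\lambda_\infty$, exactly as that proposition's proof does, since differentiability of $f(t,\cdot)$ on $\mcl Q^\sS_{\infty,\uparrow}(\kappa)$ is a consequence of the same uniqueness phenomenon that makes the two-sided approximation in Lemma~\ref{l.approx1} produce $p_+=p_-=\partial_q f(t,q)$.
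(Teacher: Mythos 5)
Your general strategy---semi-concavity plus the Parisi formula plus the critical-point identification---matches the ingredients the paper uses, but your proposed ``cleanest route'' (quote \cite[Proposition~8.6]{HJ_critical_pts} for rational $\lambda_\infty$ and pass to irrational $\lambda_\infty$ by a continuity argument \`a la Proposition~\ref{p.mp-parisi}) has a real gap: differentiability at a given point is not stable under the $\lambda_\infty^n\to\lambda_\infty$ limit. The continuity argument works for Proposition~\ref{p.mp-parisi} because the two sides of a pointwise identity are continuous in $\lambda$; it does not transport a statement that asserts a singleton superdifferential. This is precisely the pitfall that motivates the paper (see the introduction's discussion of why Panchenko's rational-approximation argument fails for non-convex models), and the paper avoids it here too: its proof of Proposition~\ref{p.diff_parisi} is a direct re-run of the proof of \cite[Proposition~8.6]{HJ_critical_pts} in the general $\lambda_\infty$ setting, made possible because Lemma~\ref{l.approx1}, Propositions~\ref{p.reg_limit}, \ref{p.cvg_der_semi_concave}, \ref{p.crit_pt_1}, \ref{p.mp-parisi}, and Lemma~\ref{l.F-F^x} were all proved without any rationality restriction.

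On the direct argument you sketch, two claims you lean on are not established and are likely false. You invoke ``concavity of $\sP_{\lambda_\infty,t,q}$ in $q$'' and ``strict concavity of $\psi_{\lambda_\infty}$ along the relevant segment''; but $\sP_{\lambda,t,q}(p)$ depends on $q$ only through $\psi_{\lambda}(q+t\nabla\xi(p))$, and the paper explicitly remarks (in the related-works discussion of the Hopf formula~\eqref{e.hopf}) that $\psi_{\lambda_\infty}$ does not appear to be convex in general; nothing in the paper asserts it is concave either, let alone strictly concave along segments. The envelope-type step ``there is a maximizer $p$ satisfying the critical relation, hence $f'(t,q;e)=\la p,e\ra$'' is circular unless you already know the maximizer is unique, and your proposed uniqueness argument is exactly what rests on the unsupported concavity claims. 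The actual mechanism in \cite[Proposition~8.6]{HJ_critical_pts} does not go through strict concavity of $\psi$; it combines semi-concavity, the density of differentiability points (Proposition~\ref{p.reg_limit}), the cavity bounds from Lemma~\ref{l.approx1} identifying the overlap and showing $\sP(p_+)=f(t,q)$, and the convergence of derivatives (Proposition~\ref{p.cvg_der_semi_concave}) to pin down the superdifferential at every $q\in\mcl Q^\sS_{\infty,\uparrow}(\kappa)$.
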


A straightforward modification of the proof of~\cite[Proposition~8.6]{HJ_critical_pts} works here. Let us mention how to substitute results here for those in that proof. Lemma~6.4 corresponds to Lemma~\ref{l.F-F^x} here; Propositions~5.3, 5.4, 7.1, and 8.1 correspond to Propositions~\ref{p.reg_limit}, \ref{p.cvg_der_semi_concave}, \ref{p.crit_pt_1}, and~\ref{p.mp-parisi}; Corollary~5.2 and 6.11 correspond to Lemma~\ref{l.psi^vec_smooth} (to be used together with Lemma~\ref{l.reg_initial}) and Lemma~\ref{l.approx1}. Only Proposition~2.7 does not have a restatement here (which states that a Lipschitz function is differentiable ``almost everywhere'' in infinite dimensions), but it easily adapts to the setting here.

As in~\cite[Corollary~8.7]{HJ_critical_pts}, we can summarize the results in the convex case as follows.

\begin{corollary}\label{c.convex}
Let $(\lambda_N)_{N\to\infty}$ converge to some $\lambda_\infty\in\blacktriangle_\infty$ and let $\xi$ be convex on $\prod_{s\in\sS} \S^{\kappa_s}_+$.
Then, the sequence $\Ll(\bar F_{N,\lambda_N}\Rr)_{N\in\N}$ converges pointwise to some limit $f$ on $\R_+\times \mcl Q^\sS_2(\kappa)$.
At every $(t,q) \in (0,\infty)\times \mcl Q^\sS_{\infty,\uparrow}(\kappa)$, the function $f$ is Gateaux differentiable (jointly in its two variables) and satisfies
\begin{align}\label{e.pde_convex}
    \partial_t f(t,q) - \int_0^1 \xi\Ll(\dr_q f(t,q)\Rr) =0.
\end{align}
For every $t\in\R_+$, $f(t,\cdot)$ is Gateaux differentiable at every $q\in \mcl Q^\sS_{\infty,\uparrow}(\kappa)$ and the following holds for $p=\dr_q f(t,q)$ and $p_N= \dr_q \bar F_{N,\lambda_N}(t,q)$:
\begin{enumerate}
    \item \label{i.c.convex_1} $p_\infty\in \mcl Q^\sS_{\infty,\leq \lambda_\infty}(\kappa)$, $p_N\in \mcl Q^\sS_{\infty,\leq \lambda_N}(\kappa)$ for every $N\in\N$, and $(p_N)_{N\in\N}$ converges to $p$ in $L^r$ for every $r\in[1,\infty)$ as $N$ tends to infinity;
    \item \label{i.c.convex_2} $f(t,q) = \sP_{\lambda_\infty,t,q}(p)$ and $p=\dr_q \psi_{\lambda_\infty} (q+t\nabla \xi(p))$;
    \item \label{i.c.convex_3} $p_N(\alpha\wedge\alpha') = \E \la R_{N,\lambda_N}(\sigma,\sigma')\, |\, \alpha\wedge\alpha' \ra_{N,\lambda_N}$ almost surely under $\E \la\cdot\ra_{N,\lambda_N}$ for every $N$, and the overlap array $\big(p_N(\alpha^\ell\wedge\alpha^{\ell'})\big)_{\ell,\ell'\in\N:\:\ell\neq \ell'}$ under $\E \la\cdot\ra_{N,\lambda_N}$ converges in law to $\big(p(\alpha^\ell\wedge\alpha^{\ell'})\big)_{\ell,\ell'\in\N:\:\ell\neq \ell'}$ under $\E \la\cdot\ra_\fR$ as $N$ tends to infinity.
\end{enumerate}
\end{corollary}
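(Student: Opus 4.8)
The plan is to assemble Corollary~\ref{c.convex} from results already established in Sections~\ref{s.analytic_properties}, \ref{s.crti_pt_and_rel_results}, and~\ref{s.convex}; no genuinely new argument is needed. First, under the convexity hypothesis on $\xi$, Proposition~\ref{p.mp-parisi} gives the pointwise convergence of $\Ll(\bar F_{N,\lambda_N}\Rr)_{N\in\N}$ to a limit $f$ at every $(t,q)\in\R_+\times\mcl Q^\sS_2(\kappa)$ (indeed, with the explicit variational formula~\eqref{e.mp_parisi}). Then Proposition~\ref{p.diff_parisi} supplies the two differentiability statements: for each fixed $t\in\R_+$, $f(t,\cdot)$ is Gateaux differentiable at every $q\in\mcl Q^\sS_{\infty,\uparrow}(\kappa)$; and $f$ is jointly Gateaux differentiable on $(0,\infty)\times\mcl Q^\sS_{\infty,\uparrow}(\kappa)$.

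Next I would read off the Hamilton--Jacobi equation and the critical-point relations. Fixing $(t,q)\in(0,\infty)\times\mcl Q^\sS_{\infty,\uparrow}(\kappa)$, joint Gateaux differentiability in particular means $f(t,\cdot)$ is Gateaux differentiable at $q$ and $f(\cdot,q)$ is differentiable at $t$, so Proposition~\ref{p.crit_pt_2} applies: identity~\eqref{e.d_tf=intxi(d_qf)} is exactly the equation~\eqref{e.pde_convex}, and identity~\eqref{e.d_qf=d_qpsi(...)} gives $p=\partial_q\psi_{\lambda_\infty}(q+t\nabla\xi(p))$ for $p=\partial_q f(t,q)$. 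The relation $p=\partial_q\psi_{\lambda_\infty}(q+t\nabla\xi(p))$ in part~\eqref{i.c.convex_2} in fact holds at every $t\in\R_+$, since~\eqref{e.d_qf=d_qpsi(...)} requires only Gateaux differentiability of $f(t,\cdot)$ at $q$, which holds for all $t\in\R_+$ by the first bullet of Proposition~\ref{p.diff_parisi}. For the value identity $f(t,q)=\sP_{\lambda_\infty,t,q}(p)$ in part~\eqref{i.c.convex_2} I would invoke Proposition~\ref{p.crit_pt_1}, valid at every $t\in\R_+$ and every $q\in\mcl Q^\sS_{\infty,\uparrow}(\kappa)$ of Gateaux differentiability.

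For part~\eqref{i.c.convex_1}: the inclusion $p_N\in\mcl Q^\sS_{\infty,\leq\lambda_N}(\kappa)$ is the first assertion of~\eqref{e.bounds.der.FN} in Proposition~\ref{p.F_N_smooth}; the inclusion $p\in\mcl Q^\sS_{\infty,\leq\lambda_\infty}(\kappa)$ follows from the fixed-point relation $p=\partial_q\psi_{\lambda_\infty}(q+t\nabla\xi(p))$ just obtained together with the range statement $\partial_q\psi_{\lambda_\infty}(\cdot)\in\mcl Q^\sS_{\infty,\leq\lambda_\infty}(\kappa)$ from Lemma~\ref{l.reg_initial}; and the convergence $p_N\to p$ in $L^r$ for every $r\in[1,\infty)$ is Proposition~\ref{p.cvg_der_semi_concave}(1), applied along the full sequence (legitimate because $\bar F_{N,\lambda_N}$ converges and $\mcl Q^\sS_{\infty,\uparrow}(\kappa)\subset\mcl Q^\sS_\uparrow(\kappa)$). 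For part~\eqref{i.c.convex_3}: the almost-sure identity $p_N(\alpha\wedge\alpha')=\E\la R_{N,\lambda_N}(\sigma,\sigma')\mid\alpha\wedge\alpha'\ra_{N,\lambda_N}$ is Lemma~\ref{l.rep_cond_overlap} (read through the notation~\eqref{e.cond_overlap}), and feeding it into Proposition~\ref{p.cvg_cond_overlap} yields that the array $\big(p_N(\alpha^\ell\wedge\alpha^{\ell'})\big)_{\ell\neq\ell'}$ under $\E\la\cdot\ra_{N,\lambda_N}$ converges in law to $\big(p(\alpha^\ell\wedge\alpha^{\ell'})\big)_{\ell\neq\ell'}$ under $\E\la\cdot\ra_\fR$.

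Since every ingredient is in hand, there is no real obstacle; the only care needed is bookkeeping. One must ensure the propositions stated for subsequential limits (Propositions~\ref{p.cvg_der_semi_concave} and~\ref{p.cvg_cond_overlap}) are applied along the full sequence, which is permitted because Proposition~\ref{p.mp-parisi} gives convergence of the entire sequence, and one should check that $\mcl Q^\sS_{\infty,\uparrow}(\kappa)$ sits inside the domains those propositions require. It is also worth noting the degenerate case $t=0$: there $\bar F_{N,\lambda_N}(0,\cdot)=\psi_{\lambda_N}$ by Lemma~\ref{l.initial_condition}, so $f(0,\cdot)=\psi_{\lambda_\infty}$ (using Lemma~\ref{l.lambda_continuity} and $\lambda_N\to\lambda_\infty$), $p=\partial_q\psi_{\lambda_\infty}(q)$, and all three parts reduce to immediate statements about $\partial_q\psi_{\lambda_\infty}$ and the Gibbs measure associated with the initial condition, consistently with the general formulas.
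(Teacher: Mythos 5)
Your proposal matches the paper's proof essentially step for step: Proposition~\ref{p.mp-parisi} for convergence, Proposition~\ref{p.diff_parisi} for differentiability, Proposition~\ref{p.crit_pt_2} for the equation~\eqref{e.pde_convex}, Proposition~\ref{p.F_N_smooth} and Proposition~\ref{p.cvg_der_semi_concave} for Part~(1), Propositions~\ref{p.crit_pt_1} and~\ref{p.crit_pt_2} for Part~(2), and Lemma~\ref{l.rep_cond_overlap} together with Proposition~\ref{p.cvg_cond_overlap} for Part~(3). The only cosmetic deviation is in Part~(1): to obtain $p\in\mcl Q^\sS_{\infty,\leq\lambda_\infty}(\kappa)$ you argue via the fixed-point relation $p=\partial_q\psi_{\lambda_\infty}(q+t\nabla\xi(p))$ and the range statement in Lemma~\ref{l.reg_initial}, whereas the paper instead deduces it directly from $p_N\in\mcl Q^\sS_{\infty,\leq\lambda_N}(\kappa)$, $\lambda_N\to\lambda_\infty$, and an a.e.\ convergent subsequence of $(p_N)$ — both are equally short and valid, with the paper's route avoiding any dependence of Part~(1) on Part~(2).
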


\begin{proof}
The existence of $f$ is given by Proposition~\ref{p.mp-parisi}. The differentiability of $f$ follows from~Proposition~\ref{p.diff_parisi}. Proposition~\ref{p.crit_pt_2} yields~\eqref{e.pde_convex}. In Part~\eqref{i.c.convex_1}, the range for $p_N$ is due to~\eqref{e.bounds.der.FN} in Proposition~\ref{p.F_N_smooth}; the convergence of $(p_N)_{N\in\N}$ follows from Proposition~\ref{p.cvg_der_semi_concave}; the range for $p$ is a consequence of these two results (because we can extract a subsequence converging a.e.\ on $[0,1]$). Part~\eqref{i.c.convex_2} follows from Propositions~\ref{p.crit_pt_1} and~\ref{p.crit_pt_2}. Part~\eqref{i.c.convex_3} is due to Lemma~\ref{l.rep_cond_overlap} and Proposition~\ref{p.cvg_cond_overlap}.
\end{proof}

We can strengthen Part~\eqref{i.c.convex_3} to the convergence of the unconditioned overlap under the additional assumption that $\xi$ is strictly convex. The next result adapts~\cite[Proposition~8.8]{HJ_critical_pts} and the original proof can be modified easily (Propositions~4.8 and 5.4 used therein correspond to Lemma~\ref{l.invar} and Proposition~\ref{p.cvg_der_semi_concave}).

\begin{proposition}
Let $(\lambda_N)_{N\to\infty}$ converge to some $\lambda_\infty\in\blacktriangle_\infty$, let $\xi$ be strictly convex over $\prod_{s\in\sS}\R^{\kappa_s\times\kappa_s}$, let $t \in (0,\infty)$ and $q\in \mcl Q^\sS_{\infty,\uparrow}(\kappa)$, and let $p$ be as in Corollary~\ref{c.convex}. 
Then, the off-diagonal overlap array $\big(R_{N,\lambda_N}(\sigma^l,\sigma^{l'})\big)_{l,l'\in\N:\:l\neq l'}$ under $\E \la\cdot\ra_{N,\lambda_N}$ converges in law to $\big(p(\alpha^l\wedge\alpha^{l'})\big)_{l,l'\in\N:\:l\neq l'}$ under $\E\la\cdot\ra_\fR$ as $N$ tends to infinity.
\end{proposition}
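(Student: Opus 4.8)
The plan is to bootstrap Part~\eqref{i.c.convex_3} of Corollary~\ref{c.convex}, which already controls the \emph{conditional} overlap $R^{l,l'}_{N,\lambda_N,\sigma|\alpha}$, into a statement about the genuine overlap, by using strict convexity of $\xi$ to force the overlap to concentrate around its conditional expectation given the cascade overlaps. This is the multi-species analogue of~\cite[Proposition~8.8]{HJ_critical_pts}, and the only new ingredient over what is recorded above is a deterministic lemma extracting concentration from a vanishing Jensen gap.

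Concretely, set $p=\dr_q f(t,q)$ (as in Corollary~\ref{c.convex}) and $p_N=\dr_q\bar F_{N,\lambda_N}(t,q)\in\mcl Q^\sS_\infty(\kappa)$; by Corollary~\ref{c.convex}\eqref{i.c.convex_1} one has $p_N\to p$ in $L^1$, all paths taking values in the compact set $K=\prod_{s\in\sS}[-1,+1]^{\kappa_s\times\kappa_s}$. Since $t>0$ and $q\in\mcl Q^\sS_{\infty,\uparrow}(\kappa)$, Proposition~\ref{p.diff_parisi} gives joint Gateaux differentiability of $f$ at $(t,q)$, hence in particular $f(\cdot,q)$ is differentiable at $t$; then the second part of Proposition~\ref{p.cvg_der_semi_concave} yields $\dr_t\bar F_{N,\lambda_N}(t,q)\to\dr_t f(t,q)$. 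Combining the second line of~\eqref{e.def.der.FN} with~\eqref{e.pde_convex} this says $\E\la\xi(R_{N,\lambda_N}(\sigma,\sigma'))\ra_{N,\lambda_N}\to\int_0^1\xi(p(r))\,\d r$. By Lemma~\ref{l.invar}, $\alpha\wedge\alpha'$ is uniform on $[0,1]$ under $\E\la\cdot\ra_{N,\lambda_N}$, so $\int_0^1\xi(p_N(r))\,\d r=\E\la\xi(p_N(\alpha\wedge\alpha'))\ra_{N,\lambda_N}$ and similarly with $p$; using $|p_N-p|_{L^1}\to0$ and Lipschitzness of $\xi$ on $K$ we obtain that
\[
\Delta_N:=\E\la \xi\big(R_{N,\lambda_N}(\sigma,\sigma')\big)-\xi\big(p_N(\alpha\wedge\alpha')\big)\ra_{N,\lambda_N}\longrightarrow 0 .
\]
On the other hand, by Lemma~\ref{l.rep_cond_overlap} and~\eqref{e.cond_overlap} we have $p_N(\alpha\wedge\alpha')=\E\la R_{N,\lambda_N}(\sigma,\sigma')\,\big|\,\alpha\wedge\alpha'\ra_{N,\lambda_N}$ a.s., so conditional Jensen gives $\Delta_N\ge0$: $\Delta_N$ is a vanishing, nonnegative conditional Jensen gap.

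The key step is the following purely deterministic fact, which is where strict convexity of $\xi$ on $\prod_{s\in\sS}\R^{\kappa_s\times\kappa_s}$ (equivalently on $K$) enters: there is a concave modulus $\omega$ with $\omega(0^+)=0$, depending only on $\xi$ and $K$, such that for any $K$-valued random variable $X$ and any sub-$\sigma$-algebra $\mcl G$, $\ \E\big|X-\E[X\mid\mcl G]\big|\le\omega\big(\E\big(\xi(X)-\xi(\E[X\mid\mcl G])\big)\big)$. One proves this by contradiction and compactness of the space of probability measures on $K$: a sequence violating it would have a subsequential weak limit whose (conditional) Jensen gap is $0$, forcing the limiting conditional law to be a Dirac mass by the equality case of Jensen under strict convexity, which contradicts the non-vanishing $L^1$-deviation. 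Applying this with $X=R_{N,\lambda_N}(\sigma,\sigma')$ and $\mcl G$ generated by $\alpha\wedge\alpha'$ under $\E\la\cdot\ra_{N,\lambda_N}$ gives $\E\la|R_{N,\lambda_N}(\sigma,\sigma')-p_N(\alpha\wedge\alpha')|\ra_{N,\lambda_N}\le\omega(\Delta_N)\to0$, and then, by the triangle inequality and $|p_N-p|_{L^1}\to0$ together with Lemma~\ref{l.invar},
\[
\E\la\big|R_{N,\lambda_N}(\sigma^1,\sigma^2)-p(\alpha^1\wedge\alpha^2)\big|\ra_{N,\lambda_N}\longrightarrow 0 .
\]

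Finally I would pass from this single-pair $L^1$ estimate to convergence in law of the whole off-diagonal array. By exchangeability of $(\sigma^l,\alpha^l)_l$ under $\E\la\cdot\ra_{N,\lambda_N}$ the displayed limit holds for every pair $l\ne l'$; given a bounded Lipschitz test function $F$ of finitely many off-diagonal entries $(R^{l_i,l_i'}_{N,\lambda_N})_{i\le m}$, replacing each entry by $p(\alpha^{l_i}\wedge\alpha^{l_i'})$ changes $\E\la F\ra_{N,\lambda_N}$ by at most $\|F\|_{\mathrm{Lip}}\sum_i\E\la|R^{l_i,l_i'}_{N,\lambda_N}-p(\alpha^{l_i}\wedge\alpha^{l_i'})|\ra\to0$; and since by Lemma~\ref{l.invar} the law of $(\alpha^l\wedge\alpha^{l'})_{l,l'}$ under $\E\la\cdot\ra_{N,\lambda_N}$ \emph{equals} its law under $\E\la\cdot\ra_\fR$, the term $\E\la F\big((p(\alpha^{l_i}\wedge\alpha^{l_i'}))_i\big)\ra_{N,\lambda_N}$ coincides with $\E\la F\big((p(\alpha^{l_i}\wedge\alpha^{l_i'}))_i\big)\ra_\fR$ for every $N$. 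This yields the asserted convergence in law. The only genuinely delicate point is the strict-convexity modulus lemma and the bookkeeping with conditional expectations; the rest is a direct transcription of the vector-spin argument, with $K$ and $\psi_{\lambda_\infty}$, $\sP_{\lambda_\infty,t,q}$ in place of their vector-spin counterparts.
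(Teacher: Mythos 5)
Your proposal is correct and follows essentially the same route that the paper indicates (adaptation of \cite[Proposition~8.8]{HJ_critical_pts}, with Lemma~\ref{l.invar} and Proposition~\ref{p.cvg_der_semi_concave} playing the role of the corresponding vector-spin statements): exploit $p_N(\alpha\wedge\alpha')=\E\la R_{N,\lambda_N}(\sigma,\sigma')\mid\alpha\wedge\alpha'\ra_{N,\lambda_N}$ from Lemma~\ref{l.rep_cond_overlap}, show via the $\partial_t$-identity in~\eqref{e.def.der.FN}, the PDE~\eqref{e.pde_convex}, and the convergence of $\partial_t\bar F_{N,\lambda_N}$ to $\partial_t f$ that the conditional Jensen gap $\E\la\xi(R_{N,\lambda_N})-\xi(p_N(\alpha\wedge\alpha'))\ra_{N,\lambda_N}$ is nonnegative and vanishes, then upgrade this to $L^1$-concentration of $R_{N,\lambda_N}(\sigma,\sigma')$ around $p_N(\alpha\wedge\alpha')$ using a strict-convexity modulus, and finally pass to the full off-diagonal array by exchangeability and the invariance of cascades. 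The one place to be slightly careful is the modulus lemma: the compactness argument works, but it is cleanest to state it first in unconditional form (for a single $K$-valued random variable and its mean) and then deduce the conditional version by disintegration plus Markov, as you implicitly do; requiring $\omega$ concave is unnecessary.
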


\noindent
\textbf{Acknowledgments.}
The author thanks Jean-Christophe Mourrat for helpful discussions.

\noindent
\textbf{Funding.}
The author is funded by the Simons Foundation.

\noindent
\textbf{Data availability.}
No datasets were generated during this work.

\noindent
\textbf{Conflict of interests.}
The author has no conflicts of interest to declare.

\noindent
\textbf{Competing interests.}
The author has no competing interests to declare.

\small
\bibliographystyle{abbrv}
\newcommand{\noop}[1]{} \def\cprime{$'$}

\end{document}